\def\@acmplainindent{0pt}
\def\@acmdefinitionindent{0pt}
\def\@proofindent{\noindent}
  \providecommand\BibTeX{{%
    \normalfont B\kern-0.5em{\scshape i\kern-0.25em b}\kern-0.8em\TeX}}}
\def\extended{0}
\begin{document}


\title{Outcome Separation Logic: Local Reasoning for Correctness and Incorrectness with Computational Effects}


\author{Noam Zilberstein}
\email{noamz@cs.cornell.edu}
\orcid{0000-0001-6388-063X}
\affiliation{%
  \institution{Cornell University}
  \country{USA}
}

\author{Angelina Saliling}
\email{ajs649@cornell.edu}
\orcid{0009-0000-6277-435X}
\affiliation{%
  \institution{Cornell University}
  \country{USA}
}

\author{Alexandra Silva}
\email{alexandra.silva@cornell.edu}
\orcid{0000-0001-5014-9784}
\affiliation{%
  \institution{Cornell University}
  \country{USA}
}


\begin{abstract}
Separation logic's compositionality and local reasoning properties have led to significant advances in scalable static analysis.
But program analysis has new challenges---many programs display \emph{computational effects} and, orthogonally, static analyzers must handle \emph{incorrectness} too. We present Outcome Separation Logic (OSL), a program logic that is sound for both correctness and incorrectness reasoning in programs with varying effects. OSL has a frame rule---just like separation logic---but uses different underlying assumptions that open up local reasoning to a larger class of properties than can be handled by any single existing logic.

Building on this foundational theory, we also define symbolic execution algorithms that use bi-abduction to derive specifications for programs with effects.
This involves a new \emph{tri-abduction} procedure to analyze programs whose execution branches due to effects such as nondeterministic or probabilistic choice.
This work furthers the compositionality promised by separation logic by opening up the possibility for greater reuse of analysis tools across two dimensions: bug-finding vs verification in programs with varying effects.
\end{abstract}


\keywords{Outcome Logic, Separation Logic, Incorrectness}

\begin{CCSXML}
<ccs2012>
   <concept>
       <concept_id>10003752.10003790.10011742</concept_id>
       <concept_desc>Theory of computation~Separation logic</concept_desc>
       <concept_significance>500</concept_significance>
       </concept>
   <concept>
       <concept_id>10003752.10003790.10011741</concept_id>
       <concept_desc>Theory of computation~Hoare logic</concept_desc>
       <concept_significance>500</concept_significance>
       </concept>
   <concept>
       <concept_id>10003752.10003790.10002990</concept_id>
       <concept_desc>Theory of computation~Logic and verification</concept_desc>
       <concept_significance>500</concept_significance>
       </concept>
   <concept>
       <concept_id>10011007.10010940.10010992.10010998.10010999</concept_id>
       <concept_desc>Software and its engineering~Software verification</concept_desc>
       <concept_significance>500</concept_significance>
       </concept>
 </ccs2012>
\end{CCSXML}

\ccsdesc[500]{Theory of computation~Separation logic}
\ccsdesc[500]{Theory of computation~Hoare logic}
\ccsdesc[500]{Theory of computation~Logic and verification}
\ccsdesc[500]{Software and its engineering~Software verification}

\maketitle

\section{Introduction}
\label{sec:intro}

Compositional reasoning using separation logic \cite{sl} and bi-abduction~\cite{biab} has helped scale static analysis to industrial software with hundreds of millions of lines of code, making it possible to analyze code changes without disrupting the fast-paced engineering culture that developers are accustomed to~\cite{infer,distefano2019scaling}.

While the ideal of fully automated program verification remains elusive, analysis tools can boost confidence in code correctness by ensuring that a program will not \emph{go wrong} in a variety of ways. In languages like C or C++, this includes ensuring that a program will not crash due to a segmentation fault or leak memory. However, a static analyzer \emph{failing} to prove the absence of bugs does not imply that the program is incorrect; it could be a false positive.

Many programs are, in fact, incorrect. Analysis tools capable of finding
bugs are thus in some cases \emph{more useful} than verification tools, since the reported errors lead directly to tangible code improvements~\cite{realbugs}. Motivated by the need to identify bugs, Incorrectness Logic \cite{il} and Incorrectness Separation Logic (ISL) \cite{isl,cisl} were recently introduced.

While ISL enjoys compositionality just like separation logic, the semantics of SL and ISL are incompatible---specifications and analysis tools cannot readily be shared between them. Further, the soundness of local reasoning in each separation logic variant relies on particular assumptions, meaning that no single program logic has a frame rule that can handle all of the following:




\heading{Computational Effects.}
The idea that program commands must be \emph{local actions} \cite{calcagno2007local} is central to the frame rule. In order to achieve locality, the semantics of memory allocation is often forced to be nondeterministic \cite{yang2002semantic}, but this approach is not suitable for local reasoning in alternative execution models such as probabilistic computation.

\heading{May and Must properties.}
Separation logic can only express properties that \emph{must} occur, whereas ISL can only express properties that \emph{may} occur. A logic for both correctness and incorrectness properties must handle both \emph{may} and \emph{must} properties.

\heading{Under-approximation}. Bug-finding static analyses operate more efficiently by only inspecting a subset of the program paths---the ones that lead to a bug. But many separation logics are \emph{fault avoiding}; the precondition must specify the resources needed to execute all traces, and therefore the entire program must be inspected. Fault avoidance is unsuitable in logics for incorrectness.

\smallskip

\noindent
In this paper, we show that local reasoning is sound under new assumptions, which do not force any particular evaluation model and are therefore compatible with both correctness and incorrectness reasoning. To that end, we introduce \emph{Outcome Separation Logic} (OSL), a single program logic that can handle all of the aforementioned scenarios.
Our contributions are as follows.

\heading{Outcome Separation Logic}. 
Our main contribution is Outcome Separation Logic (OSL), an extension of the recently introduced Outcome Logic (OL) \cite{outcome} for reasoning about pointer programs. While OL already supports correctness and incorrectness reasoning in programs with a variety of effects and much of the metatheory and inference rules carry over from it, the key novelty of OSL is that it is fundamentally based on separation-logic-style heap assertions, which allowed us to develop a frame rule for local reasoning and compositional symbolic execution.

This was no simple feat; while the soundness of the frame rule is intricate and nuanced even in partial correctness Hoare Logic \cite{yang2002semantic}, moving to a logic that supports a variety of assertions about termination, reachability, and probabilistic reasoning complicates the story significantly. OSL is the first program logic that supports local reasoning for both correctness and incorrectness, with the ability to also under-approximate program paths. On top of that, using an algebraic representation of choice, the soundness proof of our frame rule extends to other execution models too, such as probabilistic programs.


\heading{Symbolic execution algorithms}. As a proof of concept for how OSL enables the consolidation of correctness and incorrectness analysis, we present symbolic execution algorithms to analyze C-like pointer programs. The core algorithm finds all the reachable outcomes, and is therefore tailored for correctness reasoning.
We also define a \emph{single-path} variant---modeled after Incorrectness Logic based bug-finding algorithms (Pulse \cite{isl} and Pulse-X \cite{realbugs})---inferring specifications in which the postcondition is just one of the (possibly many) outcomes. While enjoying the scalability benefit of dropping paths, our algorithm can also soundly re-use procedure summaries generated by the correctness algorithm so as to not re-analyze the same procedure.

These algorithms are based on bi-abduction, which handles \emph{sequential} programs by reconciling the postcondition of one precomputed spec with the precondition of the next \cite{biab,biabjacm}. Since programs with effects are not purely sequential, but rather have branching that arises from, \eg nondeterministic or probabilistic choice, we also introduce \emph{tri-abduction}, a new form of inference for composing branches in an effectful program.
%

\smallskip
\noindent We begin in \Cref{sec:overview} by outlining the challenges of local reasoning in a highly expressive program logic.
Next, in \Cref{sec:sem,sec:osl}, we define Outcome Separation Logic (OSL), show three instantiations, and prove the soundness of the frame rule.
In \Cref{sec:symexec}, we define our symbolic execution algorithm and tri-abduction, which is inspired by bi-abduction but is used for branching rather than sequential composition.
In \Cref{sec:examples}, we examine two case studies to show the applicability of these algorithms and finally we conclude in \Cref{sec:related,sec:conclusion} by discussing outlooks and related work.



\section{Local Reasoning for Correctness and Incorrectness with Effects}
\label{sec:overview}

We begin by examining how the local reasoning principles of separation logic, along with bi-abductive inference, underlie scalable analysis techniques.
These analyses symbolically execute programs and report the result as Hoare Triples $\hoare PCQ$:
the postcondition $Q$ describes any result of running $C$ in a state satisfying the precondition $P$~\cite{hoarelogic}. Hoare triples are compositional; a specification for the sequence of two commands is constructed from specifications for each one.
\[
\inferrule{
\hoare P{C_1}Q
\\
\hoare Q{C_2}R
}
{\hoare P{C_1\fatsemi C_2}R}
{\textsc{Sequence}}
\]
The \textsc{Sequence} rule is a good starting point for building scalable program analyses, but it is not quite compositional enough. The postcondition of $C_1$ must exactly match the precondition of $C_2$, making it difficult to apply this rule, particularly if $C_1$ and $C_2$ are procedure calls for which we already have pre-computed summaries (in the form of Hoare Triples), none of which exactly match.
In response, separation logic offers a second form of (spatial) compositionality via the \textsc{Frame} rule, which adds information about unused program resources $F$ to the pre- and postcondition of a completed proof.
\[
\inferrule{\hoare PCQ}{\hoare{P*F}C{Q*F}}{\textsc{Frame}}
\]
But framing does not immediately answer how to sequentially compose triples. Given $\hoare{P_1}{C_1}{Q_1}$ and $\hoare{P_2}{C_2}{Q_2}$, it is unclear what---\emph{if anything}---can be added to make $Q_1$ match $P_2$. This is where bi-abduction comes in---a technique that finds a missing resource $M$ and leftover frame $F$ to make the entailment $Q_1\sep M \vDash P_2\sep F$ hold. The question of deciding these entailments for separation logic assertions is well-studied \cite{berdine2005decidable}, and yields a more usable sequence rule that stitches together two precomputed summaries without reexamining either program fragment.
\[
\inferrule{
  \hoare {P_1}{C_1}{Q_1}
  \\
  Q_1 * M \vDash P_2 * F
  \\
  \hoare {P_2}{C_2}{Q_2}
}
{\hoare{P_1*M}{C_1\fatsemi C_2}{Q_2*F}}
{\textsc{Bi-Abductive Sequence}}
\]
While bi-abduction has helped industrial strength static analyzers scale to massive codebases, current developments use disjoint algorithms for correctness vs incorrectness, and do not support computational effects such as probabilistic choice.
In the remainder of this section, we will examine why this is the case and explain how our logic allows for unified bi-abductive analysis algorithms.

\subsection{Interlude: Reasoning about Effects and Incorrectness}
\label{sec:interlude}

While identifying bugs in pure programs is already challenging, effects add more complexity.
This is demonstrated below; the program crashes because it attempts to dereference a null pointer.
\[
\hoare{\ok: x = \mathsf{null}}{[x] \leftarrow 1}{\er:x = \mathsf{null}}
\]
The specification is semantically straightforward; if $x$ is null then the program will crash.\footnote{
Caveat: nontermination is an effect, and the typical \emph{partial correctness} interpretation of SL is not suitable for incorrectness.
}
Now, rather than dereferencing a pointer that is \emph{known} to be invalid, suppose we dereference a pointer that \emph{might} be invalid, and---crucially---whether or not it is allocated comes down to nondeterminism. The following is one such scenario; $x$ is obtained using \textsf{malloc}, which either returns a valid pointer or \textsf{null}.
In Hoare Logic, the best we can do is specify this program using a disjunction.
\[
\hoare{\ok: \emp}{x\coloneqq \mathsf{malloc}()\fatsemi [x] \leftarrow 1}{(\ok:x\mapsto 1) \vee (\er:x = \mathsf{null})}
\]
While the above specification hints that the program has a bug, it is in fact inconclusive since the disjunctive postcondition does not guarantee that both outcomes are \emph{reachable} by actual program executions. Hoare Logic is fundamentally unable to characterize this bug, since the postcondition must describe \emph{all} possible end states of the program; we cannot express something that \emph{may} occur.

Two solutions for characterizing true bugs have been proposed. The first one is Incorrectness Logic (IL), which uses an alternative semantics to express that all states described by the postcondition are reachable from a state described by the precondition \cite{il}. Specifying the aforementioned bug is possible using Incorrectness Logic; the following triple stipulates that all the states described by the postcondition are reachable, including ones where the error occurs.
\[
\inc{\ok: \emp}{x\coloneqq \mathsf{malloc}()\fatsemi [x] \leftarrow 1}{(\ok:x\mapsto 1) \vee (\er:x = \mathsf{null})}
\]
A variant of IL has a frame rule \cite{isl} and can underlie bi-abductive symbolic execution algorithms \cite{realbugs}. However---just like separation logic---IL is specialized to nondeterministic programs and cannot be used for programs with other effects such as randomization. In addition, being inherently under-approximate, the semantics of IL cannot capture correctness properties, which must cover all the reachable outcomes. As such, different analyses and procedure summaries must be used for verification vs bug-finding.

In this paper, we take a different approach based on Outcome Logic (OL), which is compatible with both correctness and incorrectness while also supporting a variety of monadic effects \cite{outcome}. OL is similar to Hoare Logic, but the pre- and postconditions of triples describe \emph{collections} of states rather than individual ones. A new logical connective $\oplus$---the outcome conjunction---guarantees the reachability of multiple outcomes. For instance, the aforementioned bug can be characterized using the following Outcome Logic specification by using an outcome conjunction instead of a disjunction in the postcondition.
\[
\triple{\ok: \emp}{x\coloneqq \mathsf{malloc}()\fatsemi [x] \leftarrow 1}{(\ok:x\mapsto 1) \oplus (\er:x = \mathsf{null})}
\]
The above triple says that running the program in the empty heap will result in two reachable outcomes.
In this case, the program is nondeterministic and its semantics is accordingly characterized by a \emph{set} of program states $S$.
 The outcome conjunction tells us that 
there exist nonempty sets $S_1$ and $S_2$ with $S = S_1\cup S_2$ such that
$S_1\vDash (\ok:x\mapsto 1)$ and $S_2\vDash (\er:x = \mathsf{null})$. Since both outcomes are satisfied by nonempty sets, we know that they are both reachable by a real program trace.

For efficiency, specifying the bug above should not require recording information about the $\ok$ outcome. In incorrectness reasoning, it is desirable to \emph{drop outcomes} so as to only explore some of the program paths \cite{il,realbugs}. We achieve this in OL by replacing the extraneous outcome with $\top$---an assertion that is satisfied by any set of states, including the empty set.
\[
\triple{\ok: \emp}{x\coloneqq \mathsf{malloc}()\fatsemi [x] \leftarrow 1}{(\er:x = \mathsf{null}) \oplus\top}
\]
Since $\triple\top{C}\top$ is valid for any program $C$, this trick allows us to propagate the $\top$ forward through the program derivation without determining what would happen if the memory dereference had succeeded. The following triple is valid too, regardless of what comes next in the program.
\[
\triple{\ok: \emp}{x\coloneqq \mathsf{malloc}()\fatsemi [x] \leftarrow 1 \fatsemi \cdots\fatsemi \cdots\fatsemi \cdots}{(\er:x = \mathsf{null}) \oplus\top}
\]
Outcomes apply to more effects than just nondeterminism. For example, Outcome Logic can also be used to reason about probabilistic programs, where the (weighted) outcome conjunction additionally quantifies the likelihoods of outcomes. For example, the following program attempts to ping an IP address, which succeeds 99\% of the time, and fails with probability 1\% due to an unreliable network.
\[
\triple{\ok:\tru}{x \coloneqq \mathsf{ping}(192.0.2.1)}{(\ok:x=0) \oplus_{99\%} (\er:x=1)}
\]
Our goal in this paper is to extend local reasoning to Outcome Logic by augmenting it with a frame rule, and to use the resulting theory to build bi-abductive symbolic execution algorithms for both correctness (finding all reachable outcomes) and incorrectness (only exploring one program path at a time) in pointer programs with varying effects. We will next see the challenges behind designing a frame rule powerful enough to achieve that goal.

\subsection{Designing a More Powerful Frame Rule}
\label{sec:frameoverview}

In their initial development of the frame rule, \citet[\S1]{yang2002semantic} remarked:
\begin{quote}
``\emph{The first problem we tackle is the soundness of the Frame Rule. This turns out to be surprisingly delicate, and it is not difficult to find situations where the rule doesn’t work. So a careful treatment of soundness, appealing to the semantics of a specific language, is essential.}''
\end{quote}
Today, there are many frame rules that appeal to the semantics in their respective situations, relying on properties such as nondeterminism, partial correctness, or under-approximation for soundness.
Our challenge is to design a frame rule supporting \emph{all} of those features, which means that fundamental questions must be addressed head on, without appealing to a \emph{particular} semantic model. We follow the basic formula of standard separation logic \cite{yang2002semantic,calcagno2007local}, as it is semantically closest to Outcome Logic.\footnote{
We remark that other proof strategies exist including using heap monotonicity (ISL) and frame baking (higher-order separation logics), a comparison to these approaches is available in \Cref{sec:related}.}

\heading{Local Actions.} Framing is sound if all program actions are \emph{local} \cite{calcagno2007local}. Roughly speaking, $C$ is local if its behavior does not change as pointers are added to the heap.
%
Most actions are inherently local, \eg dereferencing a pointer depends only on a single heap cell.

But memory allocation is---or at least, some implementations of it are---inherently non-local; allocating a new address involves reaching into an unknown region of the heap. To see why this is problematic, consider a deterministic allocator that always returns the smallest available address. So, allocating a pointer in the empty heap is guaranteed to return the address 1. The following triple is therefore valid, stating that $x$ is the address 1, which points to some value.
\[
\hoare{\emp}{x\coloneqq\mathsf{alloc}()}{x = 1 \land x\mapsto -}
\]
However, applying the frame rule can easily invalidate this specification.
\[
\inferrule{
\hoare{\emp}{x\coloneqq\mathsf{alloc}()}{x = 1 \land x\mapsto -}
}
{\hoare{y\mapsto 2}{x\coloneqq\mathsf{alloc}()}{x = 1 \land x\mapsto - \sep y\mapsto 2}}
{\textsc{Frame}}
\]
Since $y$ can have address 1, the fresh pointer $x$ cannot also be equal to 1, so this application of the frame rule is clearly unsound.
Making memory allocation nondeterministic can solve the problem \cite{yang2002semantic}. If, in the above program, $x$ could be assigned \emph{any} fresh address, then the postcondition cannot say anything specific about \emph{which} address got allocated. We cannot conclude that $x=1$, but rather only that $x=1 \vee x=2 \vee \cdots$, which remains true in any larger heap.

Moving from Hoare Logic to Outcome Logic, this approach has two problems. First, Outcome Logic is parametric on an evaluation model, so the availability of nondeterminism is not a given; we need a strategy for allocation in the presence of \emph{any} computational effects. Second, even with nondeterminism, locality is complicated by the ability to reason about reachable states. 
The problematic interaction between framing and reachability is displayed in the following example, which explicitly states that $x=1$ is a reachable outcome of allocating $x$ in the empty heap. 
\[
\inferrule{
\triple{\ok:\emp}{x\coloneqq\mathsf{alloc}()}{(\ok:x=1) \oplus (\ok:x\neq 1)}
}
{\triple{\ok: y\mapsto2}{x\coloneqq\mathsf{alloc}()}{(\ok:x=1 \land y\mapsto 2) \oplus (\ok:x\neq 1 \land y\mapsto 2)}}
{\textsc{Frame}}
\]
This inference is invalid; the precondition does not preclude $y$ having address 1, in which case $x=1$ is no longer a reachable outcome. 
%
In OSL, we acknowledge that memory allocation is non-local and instead ensure that triples cannot specify the result of an allocation too finely. This is achieved by altering the semantics of a triple $\triple{\varphi}C{\psi}$ to require that if $\varphi$ holds in some initial state, then $\psi$ holds after running $C$ using \emph{any} allocation semantics. By universally quantifying the allocator, we ensure that any concrete semantics is captured, while retaining soundness of the frame rule. In particular, the premise in the above inference that $x=1$ is a reachable outcome is invalid according to our semantics because there are many allocation semantics in which it is false.

\heading{Fault Avoidance.} Typical formulations of separation logic are \emph{fault avoiding}, meaning that the precondition must imply that the program execution does not encounter a memory fault \cite{sl,yang2002semantic}. Unlike in Hoare Logic, the triple $\hoare\tru{C}\tru$ is not necessarily valid, which is crucial to the frame rule.
If the triple $\hoare{\tru}{[x] \leftarrow 1}{\tru}$ were valid, then the frame rule would give us $\hoare{x \mapsto 2 \sep\tru}{[x]\leftarrow 1}{x \mapsto 2 \sep \tru}$, which is clearly false. 



The problem with fault avoidance is that it involves inspecting the entire program, whereas OSL must be able to ignore some paths to more efficiently reason about incorrectness.
Fortunately,
OSL preconditions need not be safe.
 Following the previous example, $\triple{\ok:\tru}{[x]\leftarrow 1}{\ok:\tru}$ is not a valid OSL specification since $\ok:\tru$ is not a \emph{reachable} outcome of running the program. Rather, if the precondition of an OSL triple is unsafe, then the postcondition can only be $\top$, an assertion representing any \emph{collection} of outcomes, including divergence. So, $\triple{\ok:\tru}{[x]\leftarrow 1}{\top}$ is a valid triple and framing information into it is perfectly sound since the $\top$ will absorb \emph{any} outcome, including undefined behavior:
 $\triple{\ok:x\mapsto 2\sep \tru}{[x]\leftarrow 1}{\top}$.

OSL allows us to decide how much of the memory footprint to specify. In a correctness analysis covering all paths, the precondition must be safe for the entire program. If we instead want to reason about incorrectness and drop paths, then it must only be safe for the paths we explore.

\subsection{Symbolic Execution and Tri-Abduction}

OSL provides a logical foundation for symbolic execution algorithms that are capable of \emph{both} verification \emph{and} bug-finding.
Our approach takes inspiration from industrial strength bi-abductive analyzers (Abductor and Infer \cite{biab,infer}), but paying greater attention to effects, as the aforementioned tools may fail to find specifications for programs with control flow branching.

To see what goes wrong, let us examine a program that uses disjoint resources in the two nondeterministic branches: $([x]\leftarrow 1) + ([y] \leftarrow 2)$.
Using bi-abduction, we could conclude that $x\mapsto-$ is a valid precondition for the first branch whereas $y\mapsto-$ is valid for the second, but there is no immediate way to find a precondition valid for \emph{both} branches since the branches may contain overlapping resources involving pointers and inductive predicates about more complex data structures such as lists. As a result, the program must be re-evaluated with each candidate precondition to ensure that they are safe for all branches.

\citet[\S4.3]{biabjacm} acknowledged this issue, and suggested fixing it by re-running the abduction procedure until nothing more can be added to each precondition. Rather than using two passes (as Abductor already does), this requires a pass for each combination of nondeterministic choices, which is exponential in the worst case.
While a single-pass bi-abduction algorithm has been proposed \cite{sextl2023sound}, it still cannot handle all cases (see \Cref{rem:bi-tri} in \Cref{sec:triabduction}).

We take a different approach, acknowledging that branching is fundamentally different from sequential composition and requires a new type of inference, which we call tri-abduction. As its name suggests, tri-abduction infers three components (to bi-abduction's two). Given $P_1$ and $P_2$---preconditions for two branches---the goal is to find a single anti-frame $M$ and two leftover frames $F_1$ and $F_2$ such that $M \vDash P_1 \sep F_1$ and $M \vDash P_2\sep F_2$, in order to compose the summaries for two program branches, as demonstrated below.
\[
\inferrule{
\triple{P_1}{C_1}{Q_1}
\\
P_1\sep F_1 \Dashv M \vDash P_2\sep F_2
\\
\triple{P_2}{C_2}{Q_2}
}
{\triple{M}{C_1 + C_2}{(Q_1\sep F_1) \oplus (Q_2 \sep F_2)}}
{\textsc{Tri-Abductive Composition}}
\]
Tri-abduction does not replace bi-abduction; they work in complementary ways---bi-abduction is used for sequential composition whereas tri-abduction composes branches arising from effects.

In addition, we are interested in bug-finding algorithms, which---similar to Pulse and Pulse-X \cite{isl,realbugs}---do not traverse all the program paths.
We achieve this using a single-path version of the algorithm, producing summaries of the form $\triple{\ok:P}{C}{(\er:Q) \oplus\top}$, with only a single outcome specified and the remaining ones covered by $\top$.
The soundness of the single-path approach is motivated by the fact that $P\oplus Q \Rightarrow P\oplus\top$; extraneous outcomes can be converted to $\top$, ensuring that those  paths will not be explored.
Just like in Pulse and Pulse-X, this ability to \emph{drop outcomes} allows the analysis to retain less information for increased scalability.

\smallskip
\noindent
We formalize these concepts starting in \Cref{sec:sem,sec:osl}, where we define a program semantics and Outcome Separation Logic and prove the soundness of the frame rule. Symbolic execution and tri-abduction are defined in \Cref{sec:symexec}, and we examine case studies in \Cref{sec:examples}.

\section{Program Semantics}
\label{sec:sem}

We begin the technical development by defining the semantics for the underlying programming language of Outcome Separation Logic. 
All instances of OSL share the same program syntax, but this syntax is interpreted in different ways, corresponding to the choice mechanisms dictated by each instance's computational effects. The syntax of the language is given below.
\begin{align*}
\mathsf{Cmd} \ni C \Coloneqq&~ \skp \mid C_1 \fatsemi C_2 \mid C_1 + C_2 \mid \assume e \mid \whl eC \mid c \\
\mathsf{Act} \ni c \Coloneqq &~ x\coloneqq e \mid x\coloneqq\mathsf{alloc}() \mid \mathsf{free}(e) \mid [e_1]\leftarrow e_2 \mid x \leftarrow [e] \mid \mathsf{error}() \mid f(\vec e) \\
\Exp \ni e \Coloneqq&~ x \mid \kappa \mid e_1 \diamond e_2 \mid \lnot e
\qquad\qquad\qquad\qquad (x \in\mathsf{Var}, \kappa\in\mathsf{Val}, f\in \mathsf{Proc}, {\diamond}\in\mathsf{BinOp})
\end{align*}
Commands $C\in \mathsf{Cmd}$ are similar to those of \cites{gcl} Guarded Command Language (GCL), containing $\skp$, sequencing $(C_1\fatsemi C_2)$, choice ($C_1 +C_2$) and while loops. Given a test $e$ (\ie a Boolean-valued expression), conditionals are defined as syntactic sugar in the typical way: $\iftf e{C_1}{C_2} \triangleq (\assume e\fatsemi C_1) + (\assume{\lnot e}\fatsemi C_2)$.
But---unlike GCL---$\mathsf{assume}$ also accepts expressions that are interpreted over \emph{weights} drawn from a set that has additional algebraic properties described in \Cref{sec:algebra} (and can also encode the Booleans). For example, weights in randomized programs are probabilities $p \in [0,1]$, and we can accordingly define a probabilistic choice operator $C_1+_p C_2 \triangleq (\assume p \fatsemi C_1) + (\assume{1-p}\fatsemi C_2)$, which runs $C_1$ with probability $p$ and $C_2$ with probability $1-p$.

Atomic actions $c\in\mathsf{Act}$ can assign to variables ($x\coloneqq e$), allocate ($x\coloneqq\mathsf{alloc}()$) and deallocate ($\mathsf{free}(e)$) memory, write ($[e_1]\leftarrow e_2$) and read ($x \leftarrow [e]$) pointers, crash ($\mathsf{error}()$), and call procedures ($f(\vec e)$). Expressions can be variables $x\in\mathsf{Var}$, constants $\kappa\in\mathsf{Val}$ (\eg integers, Booleans, and \emph{weights}), a variety of binary operations $e_1 \diamond e_2$ where ${\diamond}\in\mathsf{BinOp} = \{ +, -, =, \le, \ldots\}$, or logical negation $\lnot e$. 

In the remainder of this section, we will formally define denotational semantics for the language above. This will first involve discussing the algebraic properties of the program weights, after which we can define a (monadic) execution model to interpret sequential composition.

\subsection{Algebraic Preliminaries}
\label{sec:algebra}

We first recall the definitions of some algebraic structures that will be used to instantiate the program semantics for different execution models. \emph{Monoids} model combining and scaling outcomes.

\begin{definition}[Monoid]
A monoid $\langle A, +, \zero\rangle$ consists of a carrier set $A$, an associative binary operator $+\colon A\times A\to A$, and an identity element $\zero\in A$ such that $a+\zero = \zero+a = a$ for all $a\in A$.
Additionally, a monoid is \emph{partial} if $+$ is partial ($+\colon A\times A\rightharpoonup A$) and it is \emph{commutative} if $a+b = b+a$.
\end{definition}

For example, $\langle [0,1], +, 0\rangle$ is a partial commutative monoid that is commonly used in probabilistic computation since probabilities come from the interval $[0,1]$ and addition is undefined if the sum is greater than 1. Scalar multiplication $\langle [0,1], \cdot, 1\rangle$ is another monoid with with same carrier set, but it is total rather than partial. Two monoids can be combined to form a semiring, as follows.

\begin{definition}[Semiring]
A semiring $\langle A, +, \cdot, \zero, \one\rangle$ consists of a carrier set $A$, along with an addition operator $+$, a multiplication operator $\cdot$ and two elements $\zero,\one\in A$ such that:
\begin{enumerate}
\item $\langle A, +, \zero\rangle$ is a commutative monoid.
\item $\langle A, \cdot, \one\rangle$ is a monoid (we sometimes omit $\cdot$ and write $a\cdot b$ as $ab$).
\item Multiplication distributes over addition: $a \cdot (b+c) = ab + ac$ and $(b+c)\cdot a = ba + ca$
\item $\zero$ is the annihilator of multiplication: $a\cdot\zero = \zero\cdot a = \zero$
\end{enumerate}
A semiring is \emph{partial} if $\langle A, +, \zero\rangle$ is instead a partial commutative monoid (PCM), but multiplication remains total. In the case of a partial semiring, distributive rules only apply if $b+c$ is defined.
\end{definition}


We now define {\em Outcome Algebras} that give the interpretation of choice and weighting. The carrier set $A$ is used to represent the weight of an outcome. In deterministic and nondeterministic evaluation models, this weight can be 0 or 1 (a Boolean), but in the probabilistic model, it can be any probability in $[0,1]$. The rules for combining these weights vary by execution model.

\begin{definition}[Outcome Algebra]\label{def:ocalg}
An \emph{Outcome Algebra} is a structure $\langle A, +, \cdot, \zero, \one\rangle$, which is a complete, Scott continuous, naturally ordered, partial semiring, and:
\begin{enumerate}
\item \textbf{\textsf{Ordering:}}
$\langle A, \le\rangle$ is a directed complete partial order (dcpo) and $\sup(A) = \one$.
\item \textbf{\textsf{Normalization:}}
If $\sum_{i\in I} a_i$ is defined, then there exist $(b_i)_{i\in I}$ such that $\sum_{i\in I} b_i = \one$ and $\forall i\in I. a_i = (\sum_{j\in I} a_j)\cdot b_i$.
\end{enumerate}
\Aref{app:semantics} defines Scott continuity and completeness. Property (2) guarantees that any weighting function can be normalized to have a cumulative weight of $\one$ without affecting the relative weights of any of the elements. This will be necessary later on in order to prove that weights can be tabulated to witness the relationship between two weighted collections. More details are available in \Aref{app:semantics}.
\end{definition}

\noindent Outcome Algebras can encode the following three interpretations of choice:

\begin{definition}[Deterministic Outcome Algebra]\label{def:detalg}
A deterministic program has \emph{at most} one outcome (zero if it diverges). To encode this, we use an Outcome Algebra $\langle \{0,1\}, +, \cdot, 0, 1\rangle$ where the elements $\{0,1\}$ are Booleans indicating whether or not an outcome has occurred. The sum operation is usual integer addition, but is undefined for $1+1$, since two outcomes cannot simultaneously occur in a deterministic setting, and $\cdot$ is typical integer multiplication. 
\end{definition}

\begin{definition}[Nondeterministic Outcome Algebra]\label{def:ndeval}
The Boolean semiring $\langle \{0,1\}, \lor,\land, 0,1\rangle$ represents nondeterminism.
Similar to \Cref{def:detalg}, the elements indicate whether an outcome has occurred, but now addition is logical disjunction so that multiple outcomes can occur. 
%
%
\end{definition}

\begin{definition}[Probabilistic Outcome Algebra]\label{def:probalg}
Let $\langle [0,1], +, \cdot, 0, 1\rangle$ be an outcome algebra where $+$ is real-valued addition (and undefined if $a+b > 1$) and $\cdot$ is real-valued multiplication. The carrier set $[0,1]$ indicates that each outcome has a probability of occurring.
\end{definition}

Now, using these different kinds of weights, we will interpret the result of running programs as weighted collections of end states. More precisely, it will be a map $\Sigma \to A$ from states $\sigma\in\Sigma$ to weights $a\in A$.
In the style of \citet{moggi91}, the language semantics is monadic in order to sequence effects. We now show how to construct a monad given any Outcome Algebra.

\begin{definition}[Outcome Monad]\label{def:ocmonad}
Given an Outcome Algebra $\mathcal A=\langle A, +, \cdot, \zero, \one \rangle$ (\Cref{def:ocalg}), let $\mathcal W_{\mathcal A}(S) = \{ m : S\to  A \mid \sum_{s\in \supp(m)} m(s) ~\text{is defined} \}$ be the set of countably supported \emph{weighting functions} on $\mathcal A$.
We define a monad $\langle \mathcal W_{\mathcal A}, \unit, \bind\rangle$, where the monad operations $\unit\colon X \to \mathcal W_{\mathcal A}(X)$ and $\bind \colon \mathcal W_{\mathcal A}(X) \times (X \to \mathcal W_{\mathcal A}(Y)) \to \mathcal W_{\mathcal A}(Y)$ are defined as follows:
\[
\unit(s)(t) = \left\{ \begin{array}{ll}
\one & \text{if}~s=t \\
\zero & \text{if}~s\neq t
\end{array}\right.
\qquad\qquad
\bind(m, f)(t) = \smashoperator{\sum_{s\in\supp(m)}} m(s)\cdot f(s)(t)
\]
We also let $\supp(m) = \{ s \mid m(s) \neq \zero \}$ and $|m| = \sum_{s\in\supp(m)}m(s)$.
This monad is similar to the \citet{giry} monad, which sequences computations on probability distributions. In our case, it is generalized to work over any partial semiring, rather than probabilities $[0,1]\subseteq\mathbb{R}$. It is fairly easy to see that $\mathcal W_{\mathcal A}$ obeys the monad laws, given the semiring laws. 
\end{definition}

So, a weighting function $m\in\mathcal{W}_{\mathcal A}(S)$ assigns a weight $a\in A$ to each program state $s\in S$. \Cref{def:detalg,def:ndeval,def:probalg} gave interpretations for $\mathcal A$ in which $\mathcal{W}_{\mathcal A}(S)$ encodes deterministic, nondeterministic, and probabilistic computation, respectively. In the (non)deterministic cases, $m(s)\in\{0,1\}$, indicating whether or not $s$ is present in the collection of outcomes $m$. Due to the interpretation of $+$ in \Cref{def:detalg}, the constraint that $\sum_{s\in \supp(m)} m(s)$ is defined guarantees that $m$ can contain at most one outcome, whereas in the nondeterministic case, $m$ can contain arbitrarily many. In the probabilistic case, $m(s)\in[0,1]$ and gives the probability of the outcome $s$ in the distribution $m$.

The semiring operations can be lifted to weighting functions. We will overload some notation to also refer to pointwise liftings as follows: $m_1+m_2 = \lambda s.(m_1(s)+m_2(s))$, $\zero = \lambda s.\zero$, and $a\cdot m = \lambda s.(a\cdot m(s))$.
When the nondeterministic algebra (\Cref{def:ndeval}) is lifted in this way, the result is isomorphic to the powerset monad with $m_1+m_2 \cong m_1 \cup m_2$ and $\zero\cong\emptyset$.

Now, in order to represent errors and undefined states in the language semantics, we will define a monad transformer \cite{transformers} based on the coproduct $S + E + 1$ where $S$ is the set of program states, $E$ is the set of errors, and we additionally include an undefined symbol. We define the following three injection functions, plus shorthand for the undefined element (where $1 = \{\star\}$):
\[
\inj_\ok \colon S \to S+E+1
\qquad
\inj_\er \colon E \to S+E+1
\qquad
\inj_\und \colon 1\to S+E+1
\qquad
\und = \inj_\und(\star)
\]
Borrowing the notation of Incorrectness Logic \cite{il}, we use $\ok$ and $\er$ to denote states in which the program terminated successfully or crashed, respectively.
We will also write $\inj_\epsilon$ to refer to one of the above injections, where $\epsilon\in\{\ok,\er\}$.

\begin{definition}[Error Monad Transformer]\label{def:ermonad}
Let $\langle \mathcal W_{\mathcal A}, \bind_{\mathcal W}, \unit_{\mathcal W}\rangle$ be the monad described in \Cref{def:ocmonad} and $E$ be a set of error states. We define a new monad $\langle \mathcal W_{\mathcal A}(-+E+1), \bind, \unit\rangle$ where the monad operations are defined as follows:
\[
\unit(s) = \unit_{\mathcal W}(\inj_\ok(s))
\qquad
\bind(m, f) = \bind_{\mathcal W}\left(m, \lambda \sigma.\left\{\begin{array}{ll}
f(s) & \text{if}~\sigma = \inj_\ok(s) \\
\unit_{\mathcal W}(\sigma) & \text{otherwise}
\end{array}\right. \right)
\]
\end{definition}

\subsection{Denotational Semantics}

\begin{figure}
%
\begin{flushleft}\fbox{
\textsf{Commands\quad $\de{-}_{\af} \colon \mathsf{Cmd} \to \mathcal S\times\mathcal H\to\mathcal W_{\mathcal A}(\st)$}
}\end{flushleft}
\begin{align*}
\de{\skp}_{\af}(s,h) &= \unit(s,h)
\\
\de{C_1\fatsemi C_2}_\af(s,h) &= \bind(\de{C_1}_\af(s,h), \de{C_2}_\af)
\\
\de{C_1 + C_2}_\af(s,h) &= \de{C_1}_\af(s,h) + \de{C_2}_\af(s,h)
\\
\de{\assume e}_\af(s,h) &= \de{e}(s)\cdot\unit(s, h) \quad\text{if}~\de{e}(s) \in A
\\
\de{\whl eC}_\af(s,h) &= \mathsf{lfp}(F_{\langle C, e, \af\rangle})(s,h)
\\
  \text{where}~ F_{\langle C, e, \af\rangle}(f)(s,h) &= \left\{\begin{array}{ll}
\bind(\de{C}_\af(s,h), f) & \text{if}~ \de{e}(s) = \one \\
\unit(s,h) & \text{if}~ \de{e}(s) = \zero
\end{array}\right.
\end{align*}
\begin{flushleft}
\fbox{
\textsf{Actions \quad $\de{-}_\af \colon \mathsf{Act} \to \mathcal S\times\mathcal H\to\mathcal W_{\mathcal A}(\st)$}
}
\end{flushleft}
\begin{align*}
\de{x\coloneqq e}_\af(s,h) &= \unit(s[x \mapsto \de{e}(s)], h) \\
\de{x \coloneqq \mathsf{alloc}()}_\af(s,h) &= \bind_{\mathcal W}(\af(s,h), \lambda (\ell, v). \unit(s[x\mapsto \ell], h[\ell\mapsto v])) \\
\de{\mathsf{free}(e)}_\af(s,h) &= \mathsf{update}(s,h, \de{e}(s), s, h[\de{e}(s) \mapsto\bot]) \\
\de{[e_1] \leftarrow e_2}_\af(s,h) &= \mathsf{update}(s,h,\de{e_1}(s), s, h[\de{e_1}(s) \mapsto \de{e_2}(s)]) \\
\de{x \leftarrow [e]}_\af(s,h) &= \mathsf{update}(s,h,\de{e}(s), s[x \mapsto h(\de{e}(s))], h) \\
\de{\mathsf{error}()}_\af(s,h) &= \mathsf{error}(s,h)\\
\de{f(\vec e)}_\af(s, h) &= \de{C}_\af(s[ \vec x \mapsto \de{\vec e}(s) ], h) \quad \text{where}~ (C,\vec x) = \pt(f)
\end{align*}
%
\begin{flushleft}
\fbox{
\textsf{Basic Operations}
}
\end{flushleft}
\[
\mathsf{error}(s,h) = \unit_{\mathcal W}(\inj_\er(s,h))
\quad
\mathsf{update}(s, h, \ell, s',h') = \left\{
\begin{array}{ll}
\unit(s', h') & \text{if}~ h(\ell) \in\mathsf{Val} \\
\mathsf{error}(s,h) & \text{if} ~ h(\ell) = \bot \vee \ell=\mathsf{null} \\
\unit_{\mathcal W}(\und) & \text{if} ~  \ell\notin \mathsf{dom}(h)
\end{array}\right.
\]
\caption{Denotational semantics of program commands, parametric on an outcome algebra $\mathcal A = \langle A, +, \cdot, \zero, \one \rangle$, an allocator function $\mathsf{alloc} : \mathcal S\times \mathcal H \to \mathcal W_{\mathcal A}(\mathsf{Addr})$, and a procedure table $\pt \colon \mathsf{Proc} \to \mathsf{Cmd}\times\vec{\mathsf{Var}}$.}
\label{fig:cmdsem}
\end{figure}

The semantics of commands $\de{-}_\af \colon \mathsf{Cmd}\to \mathcal S\times\mathcal H\to\mathcal W_{\mathcal A}(\st)$ is defined in \Cref{fig:cmdsem} and is parametric on an outcome algebra $\mathcal A =\langle A, +, \cdot, \zero, \one \rangle$ and an allocator function $\af \in \mathsf{Alloc}_{\mathcal A}$, described below.
The semantics of expressions $\de{-}\colon \mathsf{Exp}\to\mathcal S\to\mathsf{Val}$ is omitted, but is defined in the obvious way. The set of states is $\st = (\mathcal S\times \mathcal H) + (\mathcal S\times\mathcal H) + 1$, where $\mathcal S = \{ s : \mathsf{Var} \cup \mathsf{LVar} \to \mathsf{Val} \}$ are stores (assigning values to both program variables $x\in\mathsf{Var}$ and logical variables $X\in\mathsf{LVar}$) and $\mathcal H = \{ h : \mathsf{Addr} \rightharpoonup \mathsf{Val}\cup\{\bot\} \}$ are heaps. In the style of \citet{isl}, a heap is both a \emph{partial} mapping and also includes $\bot$ in the codomain, distinguishing between cases with no information about an address ($\ell \notin \mathsf{dom}(h)$) vs cases where a pointer is explicitly deallocated ($h(\ell) = \bot$). We additionally assume that $\{\mathsf{null}\}\cup A \subseteq\mathsf{Val}$, $\mathsf{Addr} \subseteq\mathsf{Val}$ and $\mathsf{null}\notin\mathsf{Addr}$. 

Allocators are functions mapping $(s,h)$ to a collection of addresses and values with cumulative weight of $\one$, and each address $\ell$ is not in the domain of $h$. Allocators can use the same effects supported by the language semantics (\ie nondeterminism or random choice). Formally, allocators come from the following set.
\[
\mathsf{Alloc}_{\mathcal A} = \left\{ f \colon \mathcal S \times \mathcal H \to \mathcal W_{\mathcal A}(\mathsf{Addr} \times\mathsf{Val})
\quad\middle|\quad
 \forall (s, h).
\begin{array}{l}
 |f(s,h)| = \one \quad\text{and} \\
\forall (\ell,v) \in \supp(f(s,h)).~ \ell\notin \mathsf{dom}(h)
 \end{array}
 \right\}
\]
A deterministic allocator $\mathsf{alloc\_det}(s,h) = \unit(\min(\mathsf{Addr}\setminus \dom(h)), 0)$ that always picks the first unused address is valid in all OSL instances. Note that while logical variables $X\in\mathsf{LVar}$ cannot appear in program expressions $e$, allocators \emph{can} depend on them. This allows us to model allocators that depend on hidden state, such as kernel configuration that is not visible in user space.

A global procedure table $\pt \colon \mathsf{Proc} \to \mathsf{Cmd}\times\vec{\mathsf{Var}}$ that returns a command and vector of variable names (the arguments) given a procedure name $f\in\mathsf{Proc}$ is used to interpret procedure calls. We assume that all procedures used in programs are defined and pass the correct number of arguments.

The monad operations give semantics to $\skp$ and $\fatsemi$, and
$C_1 + C_2$ is defined as a sum whose meaning depends on the Outcome Algebra. While loops are defined by a least fixed point, which is guaranteed to exist \Acite{thm:fp}.
Since the semiring plus is partial, this sum may be undefined; in \Aref{app:semantics} we discuss simple syntactic checks to ensure totality. The semantics of $\assume e$ weights the branch by the value of $e$, as long as it is a valid program weight. Boolean expressions evaluate to the semiring $\zero$ (false) or $\one$ (true).
%

We define two operations before giving the semantics of atomic actions: $\mathsf{error}(s,h)$ constructs an error state and $\mathsf{update}(s,h,\ell,s',h')$ returns $(s',h')$ if the address $\ell$ is allocated in $h$, it returns an error if $\ell$ is deallocated, and returns $\und$ if $\ell\notin\mathsf{dom}(h)$. Assignment is defined in the usual way by updating the program store; memory allocation uses $\mathsf{alloc}$ to obtain a fresh address and initial value (or collection thereof); deallocation, reads, and writes are implemented using \textsf{update} and errors use \textsf{error}. Procedure names are looked up in $\pt$ to obtain $C$ and $\vec x$ before running $C$ on a store updated by setting $\vec x$ to have the values of the inputs $\vec e$.
We will additionally occasionally use the Kleisli extension of the semantics $\dem Cm{\af} \colon \mathcal W_{\mathcal A}(\st) \to \mathcal W_{\mathcal A}(\st)$, which takes as input a collection of states instead of a single one. It is defined $\dem Cm{\af} = \bind(m, \de{C}_\af)$.

\section{Outcome Separation Logic}
\label{sec:osl}

We now proceed to formalize Outcome Separation Logic (OSL) and the frame rule. First, we define an assertion logic for the pre- and postconditions of outcome triples. These assertions are based on the outcome assertions of \citet{outcome}, using SL assertions as basic predicates.


\heading{Outcome Assertions.}
The syntax for OSL assertions is below and the semantics is in \Cref{fig:oasem}. Both are parametric on an Outcome Algebra $\langle A, +, \cdot, \zero,\one\rangle$.
\[
\varphi \Coloneqq \top \mid \varphi\vee\psi \mid \varphi \oplus \psi \mid \wg\varphi{a} \mid \epsilon:P \qquad\quad \left(a\in A, \epsilon \in \{ \ok,  \er\}, P\in \bb{2}^{\mathcal S\times\mathcal H}\right)
\]
Outcome assertions include familiar constructs such as $\top$ (always true) and disjunctions.
The outcome conjunction $\varphi \oplus \psi$ splits $m$ into two pieces, $m_1\vDash\varphi$ and $m_2\vDash\psi$, summing to $m$. Weighting $\wg\varphi{a}$ guarantees that if $m\vDash \varphi$, then $a\cdot m\vDash \wg\varphi{a}$. Combining these, we also define probabilistic choice: $\varphi \oplus_p \psi \triangleq \wg\varphi{p} \oplus \wg\psi{1-p}$, meaning
that $\varphi$ occurs with probability $p$ and $\psi$ occurs with probability $1-p$ for some probability $p\in [0,1]$.

\begin{figure}
\small
\[
\begin{array}{lll}
m\vDash \top &\multicolumn{2}{l}{\text{always}} \\
m\vDash \varphi\lor\psi & \text{iff}& m\vDash \varphi \quad\text{or}\quad m\vDash\psi \\
m\vDash \varphi \oplus \psi & \text{iff}& \exists m_1,m_2.\quad
m = m_1 +  m_2 \quad\text{and}\quad m_1 \vDash\varphi \quad\text{and}\quad m_2\vDash\psi \\
m\vDash \wg\varphi{a} & \text{iff} & a=\zero ~\text{and}~ m=\zero \quad\text{or}\quad \exists m'.\quad m = a \cdot m' \quad\text{and}\quad m'\vDash\varphi \\
m\vDash \epsilon:P & \text{iff} & |m| = \one \quad\text{and}\quad \forall\sigma\in \supp(m).\quad \exists(s,h).\quad \sigma = \inj_\epsilon(s,h) \quad\text{and}\quad (s,h)\in P
\end{array}
\]
\caption{Semantics of outcome assertions given an outcome algebra $\langle A, +, \cdot, \zero,\one\rangle$.}
\label{fig:oasem}
\end{figure}

Finally, basic assertions $(\ok:P)$ and $(\er:Q)$ require that $|m| = \one$---ensuring that the set of outcomes is nonempty in the (non)deterministic cases (\Cref{def:ndeval,def:detalg}) or that the assertion occurs with probability 1 (\Cref{def:probalg})---and that all states in $\supp(m)$ terminated successfully and satisfy $P$ or crashed and satisfy $Q$, respectively, where $P, Q\in \bb{2}^{\mathcal S \times \mathcal H}$ are semantic heap assertions. These heap assertions are defined as in standard separation logic, for example:
\begin{align*}
\emp &\triangleq \{ (s, \emptyset) \mid s\in\mathcal S) \} \\
P \sep Q &\triangleq \{ (s, h_1 \uplus h_2) \mid (s, h_1) \in P, (s, h_2) \in Q \} \\
e_1 \mapsto e_2 &\triangleq \{ (s, \{ \de{e_1}(s) \mapsto \de{e_2}(s) \}) \mid s\in \mathcal S \}
\end{align*}
We were motivated to pick this particular set of outcome assertions in light of our goal to define symbolic execution algorithms in the style of \citet{biab}, which compute procedure summaries of the form $\hoare{P}{f(\vec e)}{Q_1 \vee \cdots \vee Q_n}$ and the disjunctive post indicates a series of possible outcomes. In our case, we exchange those disjunctions for outcome conjunctions in the cases where the outcomes arise due to computational effects. We include $\top$ in order to drop outcomes using the assertion $\varphi\oplus\top$ (as discussed in \Cref{sec:overview}). Finally, we include disjunctions to express joins of outcomes that occur due to logical choice, and also to express partial correctness; whereas $\epsilon:P$ guarantees reachability, $(\epsilon:P)\vee \wg\top\zero$ also permits nontermination (no outcomes).


\heading{OSL Triples.}
The semantics of OSL triples requires that the specification holds when the program is run using \emph{any} allocator, ensuring that the postcondition cannot say anything specific about which addresses were obtained, as those addresses may change if the program is executed in a larger heap.

\begin{definition}[Outcome Separation Logic Triples]
For any outcome algebra $\mathcal A = \langle A, +, \cdot, \zero,\one\rangle$:
\[
\vDash\triple{\varphi}C{\psi}
\qquad\text{iff}\qquad
\forall m\in\mathcal{W}_{\mathcal A}(\st), \af\in\mathsf{Alloc}_{\mathcal A}.\quad
m\vDash\varphi \quad \implies \quad \dem Cm\af\vDash\psi
\]
\end{definition}
\noindent The inference rules for OSL carry over from standard Outcome Logic \cite[Fig. 3]{zilberstein2024relatively}, along with the small axioms of separation logic \cite{localreasoning}. It is also fairly straightforward to derive inference rules from our symbolic execution algorithm (\Cref{fig:sym_exec}).
Instead of repeating the rules here, we discuss the new local reasoning principle that OSL supports, namely the frame rule.

%
%

\subsection{The Frame Rule}
We now build the necessary foundations to introduce and prove the soundness of the frame rule. 
First, we define a new separating conjunction as a transformation on outcome assertions, using the symbol $\osep$ to distinguish it from the usual separating conjunction $\sep$ on symbolic heaps; $\osep$ is a binary operation taking an outcome assertion and an SL assertion (rather than two SL assertions like $\sep$).
The operation is defined below, where ${\bowtie} \Coloneqq \oplus \mid \vee$.
\[
\top\osep F \triangleq \top
\quad
(\varphi \bowtie \psi)\osep F \triangleq (\varphi\osep F) \bowtie (\psi\osep F)
\quad
\wg\varphi{a} \osep F \triangleq \wg{(\varphi\osep F)}a
\quad
(\epsilon:P)\osep F \triangleq \epsilon: P\sep F
\]
So, $\osep$ has no effect on $\top$, it distributes over $\vee$, $\oplus$, and $\wg{(-)}a$, and for basic assertions $\epsilon:P$, we simply join $P\sep F$ with the regular separating conjunction. We can now express the frame rule. 
\[
\inferrule{
  \triple\varphi{C}\psi
  \\
  \mathsf{mod}(C) \cap \mathsf{fv}(F) = \emptyset
}{
  \triple{\varphi\osep F}C{\psi\osep F}
}{\textsc{Frame}}
\]
This rule resembles the frame rule of other separation logics, with the same side condition that $F$ cannot mention any modified program variables. However, as we described in \Cref{sec:frameoverview}, OSL's expressivity goes beyond that of existing separation logics, meaning that this frame rule can be used for both may and must properties in nondeterministic programs, as well as quantitative properties in probabilistic programs. Further, all of these capabilities stem from a single soundness proof.


The key to the proof is the \emph{frame property} (\Aref{lem:frameprop}), which roughly states that for any $(s, h)\in\mathcal S\times\mathcal H$, allocator $\af$, and $h'$ such that $(s, h') \vDash F$, we can construct a new allocator $\af'$ such that adding $h'$ to each end state of $\de{C}_{\af'}(s,h)$ gives us $\de{C}_\af(s, h\uplus h')$ (modulo the $\und$ states), guaranteeing that if $\de{C}_{\af'}(s,h)\vDash \psi$, then $\de{C}_\af(s, h\uplus h') \vDash \psi\osep F$. We formalize this idea in the remainder of the section.

The formalization of the frame property relies on \emph{lifted relations}, which describe the relation between two weighted collections in terms of a relation on individual states. Before we define this formally, we define the notion of projections $\pi_1 \colon \mathcal W(X\times Y) \to \mathcal W(X)$ and $\pi_2 \colon \mathcal W(X\times Y) \to \mathcal W(Y)$, which extract component weighting functions from weighting functions over a cartesian product.
\[
\pi_1(m)(x) = \smashoperator{\sum_{y \in Y}} m(x,y)
\qquad\qquad
\pi_2(m)(y) = \smashoperator{\sum_{x \in X}} m(x,y)
\]
Given these projections, we say that $m_1 \in \mathcal W(X)$ and $m_2\in\mathcal W(Y)$ are related by $R$ iff there is some weighting function $m \in \mathcal W(X\times Y)$ whose projections are $m_1$ and $m_2$, and where each pair $(x,y) \in \supp(m)$ is related by $R$. This notion corresponds to the idea of lifting via spans \cite{kurz2016relation} and is defined formally below.

\begin{definition}[Relation Liftings]
Given a relation $R \subseteq X\times Y$, we define a lifting of $R$ on weighting functions $\overline R \subseteq \mathcal W_{\mathcal A}(X)\times \mathcal W_{\mathcal A}(Y)$ as follows:
\[
\overline R = \Big\{~ (m_1,m_2) ~\Big|~ \exists m\in\mathcal{W}_{\mathcal A}(R).\quad
m_1 = \pi_1(m)
\quad\text{and}\quad
m_2 = \pi_2(m)
~\Big\}
\]
\end{definition}

\heading{Semantics of the Outcome Separating Conjunction.}
We now prove semantic properties about $\osep$ that allow us to relate program configurations satisfying $\varphi$ to ones that satisfy $\varphi\osep F$.
First, we need to relate states satisfying $P$ to states satisfying $P\sep F$.
\[
\mathsf{frame}(F) =
\big\{ \left(\inj_\epsilon(s, h), \inj_\epsilon(s, h\uplus h')\right) \mid \inj_\epsilon(s,h) \in \st, (s,h') \in F\big\}
\cup \big\{ (\und,\und)\big\}
\]
Any state $\inj_\epsilon(s,h)$ is related to all states $\inj_\epsilon(s,h\uplus h')$ such that $(s,h')\in F$, which guarantees that if $(s,h)\in P$, then $(s, h\uplus h')\in P\sep F$. Undefined states are only related to themselves. Now, we can express the semantics of $\osep$ by lifting this relation.
\begin{restatable}{lemma}{osepforward}
\label{lem:osepforward}
If $m\vDash\varphi$ and $(m, m') \in \overline{\fr(F)}$, then $m'\vDash \varphi \osep F$
\end{restatable}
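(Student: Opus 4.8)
The plan is to prove \Cref{lem:osepforward} by structural induction on the outcome assertion $\varphi$, mirroring the recursive definition of $\osep$. The four clauses of the definition ($\top$, the binary connectives $\oplus$ and $\vee$, the weighting $\wg{-}a$, and the basic assertions $\epsilon:P$) suggest five cases, and I expect each case to reduce to extracting the right structure from the witnessing weighting function $m \in \mathcal{W}_{\mathcal{A}}(\fr(F))$ that realizes the lifting $(m, m') \in \overline{\fr(F)}$. The heart of the argument is that the relation $\fr(F)$ is \emph{functional up to the frame}: a left state $\inj_\epsilon(s,h)$ is related only to states of the form $\inj_\epsilon(s, h\uplus h')$ with $(s,h')\in F$ (and $\und$ only to $\und$), so the second projection $m' = \pi_2(m)$ never mixes $\ok$ with $\er$ outcomes nor alters the store, which is exactly what is needed to push $F$ into each basic assertion.

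First I would dispatch the base cases. For $\top$ the result is immediate since $\top\osep F = \top$ and every weighting function satisfies $\top$. For $\epsilon:P$, I would unfold $m\vDash\epsilon:P$: this gives $|m|=\one$ and every $\sigma\in\supp(m)$ has the form $\inj_\epsilon(s,h)$ with $(s,h)\in P$. Using the witness $m$ realizing the lifting, each support element of $m'=\pi_2(m)$ arises from a pair $(\inj_\epsilon(s,h), \inj_\epsilon(s, h\uplus h'))\in\fr(F)$, hence has the form $\inj_\epsilon(s, h\uplus h')$ with $(s,h\uplus h')\in P\sep F$ by the definition of $\sep$. I would also need $|m'|=\one$, which should follow from $|m|=\one$ together with the fact that projections preserve total weight (since $\sum_y$ over the second coordinate redistributes but does not lose weight, using that the witnessing $m$ has the correct marginals). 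This yields $m'\vDash\epsilon:(P\sep F) = (\epsilon:P)\osep F$.

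The inductive cases follow the distributivity of $\osep$. For $\varphi\oplus\psi$, from $m\vDash\varphi\oplus\psi$ I get a split $m = m_1 + m_2$ with $m_1\vDash\varphi$ and $m_2\vDash\psi$; the key subtlety is to split the \emph{witness} weighting function $m_w\in\mathcal{W}_{\mathcal{A}}(\fr(F))$ accordingly into $m_{w,1}+m_{w,2}$ so that $\pi_1(m_{w,i})=m_i$, obtain $(m_i, m_i')\in\overline{\fr(F)}$, apply the induction hypothesis to get $m_i'\vDash\varphi\osep F$ and $m_i'\vDash\psi\osep F$, and finally check $m' = \pi_2(m_w) = m_1'+m_2'$ by linearity of $\pi_2$. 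The $\vee$ case is easier—just case-split on which disjunct $m$ satisfies and invoke the induction hypothesis. For $\wg\varphi a$ I would handle the $a=\zero$ subcase (where $m=\zero$ forces $m'=\zero$) separately, and otherwise write $m = a\cdot m''$ with $m''\vDash\varphi$, scale the witness by $a$ to relate $m''$ to some $m'''$, apply the induction hypothesis, and use $a\cdot\pi_2 = \pi_2\circ(a\cdot-)$.

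The main obstacle I anticipate is the bookkeeping around the witness weighting function: in each structural case I must decompose or rescale $m_w\in\mathcal{W}_{\mathcal{A}}(\fr(F))$ in lockstep with the decomposition of $m$ on the left, and verify that the resulting object still lives in $\mathcal{W}_{\mathcal{A}}(\fr(F))$ (countable support, defined total weight) and still has the prescribed marginals. This requires that $\pi_1$ and $\pi_2$ commute appropriately with $+$ and with scalar multiplication $a\cdot(-)$—facts that hold because these projections are defined as semiring sums over one coordinate and the lifted operations are pointwise. I would isolate these commutation facts (linearity of the projections) as small observations to be used uniformly across the cases, so that the induction itself remains routine once the projection algebra is established.
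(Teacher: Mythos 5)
Your overall plan---induction on $\varphi$ following the five clauses of $\osep$, weight preservation for the basic case, witness decomposition for $\oplus$, and witness rescaling for the weighting case---is structurally the same proof as the paper's, which factors exactly these witness manipulations into \Cref{lem:plusrel,lem:scalerel,lem:relsize,lem:zerorel}. Your $\top$, $\vee$, and $\epsilon:P$ cases (including the observation that the lifting preserves total weight by exchanging the order of summation) and your $a=\zero$ subcase all match the paper's treatment.

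The gap is in how you justify the two hard steps. Linearity of the projections, i.e.\ $\pi_i(m_{w,1}+m_{w,2})=\pi_i(m_{w,1})+\pi_i(m_{w,2})$ and $\pi_i(a\cdot m_w)=a\cdot\pi_i(m_w)$, gives only the \emph{converse} direction: if a split (resp.\ scalar factoring) of the witness exists, its marginals behave as expected. What your $\oplus$ case actually needs is the forward direction: given $\pi_1(m_w)=m_1+m_2$, \emph{construct} $m_w=m_{w,1}+m_{w,2}$ with $\pi_1(m_{w,i})=m_i$. Pointwise this asks that the two decompositions $m_1(x)+m_2(x)=\sum_y m_w(x,y)$ of a single element refine into a common tabulation $u_{(i,y)}$, which is precisely the row-column property; this \emph{fails} in a general partial semiring, and it is the reason \Cref{def:ocalg} imposes the Normalization axiom, from which the paper derives row-column (\Cref{lem:normrowcol}) and then the splitting lemma (\Cref{lem:plusrel}). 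Likewise, in the $\wg{\varphi'}a$ case you cannot "scale the witness by $a$": you are handed a witness with $\pi_1(m_w)=a\cdot m''$ and must \emph{divide}, producing $m_w'$ with $m_w=a\cdot m_w'$ and $\pi_1(m_w')=m''$; scalars do not factor out of sums in an arbitrary semiring, and the paper's \Cref{lem:scalerel} again invokes Normalization to build the normalizing coefficients $b_{(x,y)}$. So the "small observations" you propose to isolate are genuine lemmas with algebraic content, and as written your $\oplus$ and weighting cases rest on a claim that is unproved and, absent Normalization, false in general.
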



It is tempting to say that the converse should also hold, but that is not quite right. We took $\top\osep F$ to be equal to $\top$, therefore if $m\vDash\top \osep F$, then we cannot guarantee that all the states in $m$ contain information about $F$.
We therefore characterize the semantics only for the states that are not covered by $\top$, leaving the other states unconstrained.
\begin{restatable}{lemma}{osepback}
\label{lem:osepback}

If $m\vDash\varphi\osep F$, then there exist $m_1$, $m'_1$, and $m_2$ such that $(m_1, m'_1)\in\overline{\fr(F)}$, $m = m_1' + m_2$ and $m_1 + m'_2 \vDash \varphi$ for any $m_2'$ such that $|m_2'| \le |m_2|$.
\end{restatable}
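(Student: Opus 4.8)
The plan is to prove the statement by structural induction on the outcome assertion $\varphi$, producing in each case a single decomposition $m = m_1' + m_2$ (together with the de-framed $m_1$) that then works uniformly for every $m_2'$ with $|m_2'| \le |m_2|$. Three cases are routine. For $\varphi = \top$ we have $\varphi \osep F = \top$, so I take $m_1 = m_1' = \zero$ and $m_2 = m$; the zero weighting function over $\fr(F)$ witnesses $(\zero,\zero) \in \overline{\fr(F)}$, and $m_1 + m_2' = m_2' \vDash \top$ for every $m_2'$. For $\varphi = \epsilon:P$ we have $\varphi \osep F = \epsilon: P \sep F$, so $|m| = \one$ and every $\sigma \in \supp(m)$ is $\inj_\epsilon(s,h)$ with $(s,h) \in P \sep F$; choosing for each such state a splitting $h = h_P \uplus h_F$ with $(s,h_P)\in P$ and $(s,h_F)\in F$ defines $w\in\mathcal W_{\mathcal A}(\fr(F))$ with $\pi_2(w) = m$ and $\pi_1(w) = m_1 \vDash \epsilon:P$, so I set $m_1' = m$ and $m_2 = \zero$ (which forces $m_2' = \zero$, whence $m_1 + m_2' = m_1 \vDash \varphi$). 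For $\varphi = \varphi_1 \vee \varphi_2$ the operation $\osep$ distributes over $\vee$, so one disjunct holds of $m$; I apply the induction hypothesis to it and weaken $\varphi_i$ back to the disjunction.

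The two interesting cases are $\oplus$ and weighting, and both hinge on the Normalization property of the outcome algebra (\Cref{def:ocalg}). For $\varphi = \varphi_1 \oplus \varphi_2$, split $m = n_1 + n_2$ with $n_i \vDash \varphi_i \osep F$, apply the induction hypothesis to each to get $(p_1,p_1',p_2)$ and $(q_1,q_1',q_2)$, and set $m_1 = p_1 + q_1$, $m_1' = p_1' + q_1'$, $m_2 = p_2 + q_2$; additivity of $\pi_1,\pi_2$ lets me add the two witnesses to certify $(m_1,m_1')\in\overline{\fr(F)}$. Given $m_2'$ with $|m_2'| \le |m_2| = |p_2| + |q_2|$, I normalize the pair $(|p_2|,|q_2|)$ to get $\beta_1 + \beta_2 = \one$ with $|p_2| = (|p_2|+|q_2|)\cdot\beta_1$ and likewise for $q_2$, then apportion $m_2'$ as $w_i = \lambda s.\, m_2'(s)\cdot\beta_i$; distributivity gives $w_1 + w_2 = m_2'$ and monotonicity of multiplication (valid in a naturally ordered semiring) gives $|w_i| = |m_2'|\cdot\beta_i \le |p_2|,|q_2|$, so the induction hypotheses yield $p_1 + w_1 \vDash \varphi_1$ and $q_1 + w_2 \vDash \varphi_2$, hence $m_1 + m_2' \vDash \varphi_1 \oplus \varphi_2$. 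For $\varphi = \wg{\psi}{a}$ with $a \neq \zero$ (the case $a=\zero$ forces $m=\zero$ and is trivial), write $m = a\cdot m''$ with $m'' \vDash \psi \osep F$, apply the induction hypothesis to $m''$ to get $(n_1, n_1', n_2)$, and scale everything by $a$: $m_1 = a\cdot n_1$, $m_1' = a\cdot n_1'$, $m_2 = a\cdot n_2$ (scaling the witness by $a$, using scalar-linearity of the projections). For $m_2'$ with $|m_2'| \le |m_2| = a\cdot|n_2|$, I must produce $w$ with $a\cdot w = m_2'$ and $|w| \le |n_2|$: choosing slack $c$ with $|m_2'| + c = a\cdot|n_2|$ and normalizing the family $(m_2'(s))_s$ together with $c$ gives coefficients $b_s$ summing to $\one$ with $m_2'(s) = (a\cdot|n_2|)\cdot b_s$, so $w = \lambda s.\, |n_2|\cdot b_s$ satisfies $a\cdot w = m_2'$ by associativity and $|w| \le |n_2|$ since $\sum_s b_s \le \one$; then $m_1 + m_2' = a\cdot(n_1 + w)$ with $n_1 + w \vDash \psi$, so $m_1 + m_2' \vDash \wg{\psi}{a}$.

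The main obstacle is precisely the weighting case: for arbitrary $m_2'$ it asks us to ``divide $m_2'$ by $a$'' inside the postcondition, which is not available pointwise in a general outcome algebra and is exactly what forces the $|m_2'| \le |m_2|$ side condition. Normalization is the tool that replaces division, but to invoke it I must first pad the weights of $m_2'$ with the slack $c$ so that the family sums to exactly the ambient weight $a\cdot|n_2|$; arranging the rescaled $w$ so that it simultaneously reconstructs $m_2'$ (via associativity of $\cdot$, avoiding any appeal to commutativity) and stays within the capacity $|n_2|$ (via monotonicity and $\sum_s b_s \le \one$) is the delicate part. A secondary but pervasive obligation is definedness of every partial sum: since $|m_1| = |m_1'|$ (both are projections of one witness) and $|m_1'| + |m_2| = |m|$ is defined, the natural order gives $|m_1| + |m_2'| \le |m|$, so $m_1 + m_2'$ is a legitimate weighting function, and the same reasoning justifies the sums and scalings used to assemble the witnesses in the $\oplus$ and weighting cases.
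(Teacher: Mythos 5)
Your proof is correct and follows essentially the same route as the paper's: the same structural induction with the same per-case decompositions ($m_1 = m_1' = \zero$, $m_2 = m$ for $\top$; $m_2 = \zero$ with a chosen heap-splitting witness over $\fr(F)$ for $\epsilon:P$; summed and scaled witnesses for $\oplus$ and $\wg{(-)}{a}$), including the same slack-padding trick in the weighting case. The only difference is cosmetic: where the paper invokes its Splitting lemma (\Cref{lem:splitting}) for the $\oplus$ case and first extracts a normalized representative $m_2''$ with $|m_2''| = \one$ in the weighting case, you inline equivalent proportional-apportionment arguments directly from the Normalization axiom---which is precisely how the paper derives those auxiliary steps (via \Cref{lem:normrowcol}) in the first place.
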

In the lemma above, $m_1'$ represents the nontrivial portion of $m$ and $m_2$ is the portion of $m$ that is covered by $\top$. As such, $m'_1$ must be the result of framing $F$ into some $m_1$. Since $m_2$ is covered by $\top$, we can replace it with anything smaller than $m_2$---$\top$ can absorb at least $|m_2|$ worth of weight. These two lemmas provide a semantic basis to reason about what it means for $\varphi\osep F$ to hold relative to $\varphi$.

\begin{remark}[Asymmetry of the Separating Conjunction]
One could imagine a symmetric definition of $\osep$, defined semantically as follows: $m \vDash\varphi\osep\psi$ iff there exist $m_1\vDash\varphi$ and $m_2\vDash\psi$, such that $m_1$ and $m_2$ are obtained by \emph{marginalizing} $m$. More precisely, $m_1(\inj_\epsilon(s, h))$ is equal to the sum of $m(\inj_\epsilon(s, h\uplus h'))$ over all $h'$ such that $\inj_\epsilon(s, h') \in \supp(m_2)$ (and a similar formula holds for $m_2$).

This gives expected properties, for example $(\ok:x\mapsto 1 \sep y\mapsto 2) \oplus (\ok:x\mapsto 3 \sep y\mapsto 4)$ implies
$\left( (\ok:x\mapsto 1) \oplus (\ok:x\mapsto 3) \right) \osep \left( (\ok:y\mapsto 2) \oplus (\ok:y\mapsto 4) \right)$. However, this approach does not work smoothly with the frame rule. As a simple counterexample, consider the triple $\triple{\ok:x\not\mapsto}{[x]\leftarrow 1}{\er:x\not\mapsto}$, using the frame rule with $(\ok:y\mapsto 2)$ gives us the precondition $(\ok:{x\not\mapsto}\sep y\mapsto 2)$, but the postcondition is false since there is no way to combine an $\er$ assertion with an $\ok$ one. Other problems occur when some of the program paths diverge.

The asymmetric definition of $\osep$ fits our needs, as the spirit of the frame rule is to run a computation in a larger \emph{heap}---adding pointers, but not outcomes. This is in line with how framing is used by \citet{biab}, where framing distributes over disjunctions in the postconditions of Hoare Triples.
That is not to say that the symmetric $\osep$ is useless. In fact, it could be used in a concurrent variant of Outcome Logic to divide resources among two parallel branches in the style of Concurrent Separation Logic \cite{csl}, but that is out of scope for this paper.
\end{remark}

\heading{Replacement of Unsafe States.}
Undefined states arise from dereferencing pointers not in the domain of the heap. Those pointers may be in the domain of a larger heap, so previously undefined outcomes can become defined after framing in more pointers.
The soundness of the frame rule depends on this not affecting the truth of the postcondition.

Whereas standard separation logic is \emph{fault avoiding}---it  requires that all states satisfying the precondition will not encounter unknown pointers---we omit this requirement in OSL in order to efficiently reason about incorrectness by only exploring a subset of the program paths. For example, in the triple
$
\triple{\ok: x\not\mapsto}{\mathsf{free}(x) + C}{(\er:x\not\mapsto)\oplus\top}
$,
the fact that the left path leads to a memory error is enough to conclude that the program is incorrect; exploring the right path would be wasted effort. However, the right path may use other pointers not mentioned in the precondition, meaning that executing $C$ will lead to $\und$. This is fine, since the assertion $\top$ covers undefined states.

However, if we use the frame rule to add information about more pointers to the precondition, the result may no longer be $\und$. This is still valid---\emph{any} outcome from running $C$ trivially satisfies $\top$.
To formalize this, we use $\repl\subseteq \st\times(\st \cup \{\lightning\})$, which relates $\und$ to any state $\sigma\in\st$ or $\lightning$ (representing nontermination)\footnote{
Previously undefined states may diverge after adding more pointers, 
\eg running $\mathsf{free}(x)\fatsemi \whl{\tru}\skp$ in an empty heap leads to $\und$ whereas it will not terminate if $x$ is allocated.}.
All other states ($\ok$ and $\er$) are related only to themselves. In addition, since $\lightning$ is not a state, we define an operation $\prune(m)$ to remove it. The formal definitions of both $\repl$ and $\prune$ are given in \Aref{app:replacement}.
The replacement lemma guarantees that undefined states can be replaced without affecting the validity of an assertion.

\begin{restatable}[Replacement]{lemma}{replcor}
\label{cor:replacement}
If $m\vDash\varphi$ and $(m, m')\in \overline\repl$, then $\prune(m') \vDash\varphi$
\end{restatable}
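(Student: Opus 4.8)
The goal is to prove the Replacement Lemma: if $m \vDash \varphi$ and $(m, m') \in \overline{\repl}$, then $\prune(m') \vDash \varphi$. The natural strategy is \emph{structural induction on the outcome assertion} $\varphi$, following the grammar $\varphi \Coloneqq \top \mid \varphi \vee \psi \mid \varphi \oplus \psi \mid \wg\varphi{a} \mid \epsilon:P$. The lifted relation $\overline\repl$ relates $m$ and $m'$ precisely when there exists a witnessing weighting function $n \in \mathcal W_{\mathcal A}(\repl)$ with $\pi_1(n) = m$ and $\pi_2(n) = m'$, where every pair in $\supp(n)$ lies in $\repl$. The key structural fact I would extract first is that $\repl$ relates each $\ok$ and $\er$ state only to itself, and relates $\und$ to anything (a state or $\lightning$). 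This means that the witness $n$ essentially restricts to the identity on the non-$\und$ part of $m$, and is free on the $\und$ part---an observation that drives every inductive case.

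First I would handle the base cases. For $\varphi = \top$ the claim is immediate, since $\prune(m') \vDash \top$ always holds. For $\varphi = (\epsilon:P)$, I would use the fact that $m \vDash \epsilon:P$ forces $|m| = \one$ and $\supp(m)$ to consist entirely of $\inj_\epsilon(s,h)$ states with $(s,h) \in P$---crucially, \emph{no} $\und$ states appear. Since $\repl$ is the identity on such states, the witness $n$ must relate each $\inj_\epsilon(s,h)$ only to itself, forcing $m' = m$ on all defined outcomes, with possibly some $\lightning$ introduced only where $\und$ occurred (which is nowhere). After $\prune$, we recover $\prune(m') = m$, so $\prune(m') \vDash \epsilon:P$.

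Next come the inductive cases. For $\varphi \vee \psi$, from $m \vDash \varphi$ (WLOG) and the same witness $n$ I would invoke the induction hypothesis to get $\prune(m') \vDash \varphi$, hence $\prune(m') \vDash \varphi \vee \psi$. For $\wg\varphi{a}$, I would split on whether $a = \zero$ (where $m = \zero$, so the witness forces $m' = \zero$ on defined states and $\prune(m') = \zero$) or use the decomposition $m = a \cdot m''$ with $m'' \vDash \varphi$, then argue that the witness $n$ factors compatibly through the scalar $a$ so that $m'$ decomposes as $a \cdot (m''{}')$ with $(m'', m''{}') \in \overline\repl$; the induction hypothesis finishes it. The most delicate case is $\varphi \oplus \psi$: here $m = m_1 + m_2$ with $m_1 \vDash \varphi$ and $m_2 \vDash \psi$, and I must \emph{split the witness} $n$ for $(m, m')$ into witnesses $n_1, n_2$ compatible with this additive decomposition, so that $\prune(m') = \prune(m_1') + \prune(m_2')$ with each piece related appropriately. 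This requires showing that the relational lifting $\overline\repl$ interacts well with the monoid addition $+$ on weighting functions---essentially that $\overline\repl$ is closed under sums of witnesses.

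\textbf{The main obstacle.} The hard part will be the $\oplus$ case, specifically decomposing the single witness $n \in \mathcal W_{\mathcal A}(\repl)$ along the sum $m = m_1 + m_2$ and simultaneously reconciling this with $\prune$. The subtlety is that $\und$ may be replaced by a \emph{defined} state or by $\lightning$ (nontermination), and these replacements must be distributed consistently between the two summands; moreover, because addition in the outcome algebra is \emph{partial}, I must verify that the decomposed sums remain defined. I expect to need a lemma---likely the kind of tabulation argument enabled by the Normalization property of Outcome Algebras (\Cref{def:ocalg})---to witness that the joint weighting function can be partitioned to respect both the relational structure and the additive split. The treatment of $\prune$ commuting with $+$ should be routine once the witness decomposition is in place, since $\lightning$ contributes nothing to either summand after pruning.
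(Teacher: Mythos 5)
Your proposal is correct and follows essentially the same route as the paper: structural induction on $\varphi$, with the $\oplus$ and $\wg{(-)}{a}$ cases handled by decomposing the lifted relation along sums and scalar multiples (the paper's \Cref{lem:plusrel,lem:scalerel}), whose proofs do indeed rest on the normalization/row-column tabulation argument you anticipated as the main obstacle. The base case $\epsilon{:}P$ is settled exactly as you describe: since $\und\notin\supp(m)$ and $\repl$ relates defined states only to themselves, the witness forces $m'=m$.
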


\heading{Soundness.}
We now have all the ingredients needed to prove the soundness of the frame rule.

%
%
%
%

\begin{restatable}[The Frame Rule]{theorem}{framethm}\label{thm:frame}
If $\vDash\triple{\varphi}C{\psi}$ and $\mathsf{mod}(C)\cap \mathsf{fv}(F) = \emptyset$, then $\vDash\triple{\varphi\osep F}C{\psi\osep F}$.
\end{restatable}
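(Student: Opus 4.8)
The plan is to unfold the definition of the framed triple and reduce the whole statement to the \emph{frame property} (\Aref{lem:frameprop}) via the three already-established lemmas about $\osep$ and $\repl$. Fix an arbitrary allocator $\af\in\mathsf{Alloc}_{\mathcal A}$ and a weighting $m\vDash\varphi\osep F$; the goal is $\dem Cm{\af}\vDash\psi\osep F$. The first step is to apply \Cref{lem:osepback} to split $m = m_1' + m_2$, where the span $(m_1,m_1')\in\overline{\fr(F)}$ exhibits $m_1'$ as the genuine framing of some $m_1$, and $m_2$ is the portion of $m$ covered by the $\top$-components of $\varphi\osep F$; this also yields $m_1 + m_2'\vDash\varphi$ for every $m_2'$ with $|m_2'|\le|m_2|$. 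Since $\bind$ is linear in its first argument, the execution splits as $\dem Cm{\af} = \dem C{m_1'}{\af} + \dem C{m_2}{\af}$, and I treat the two summands separately before recombining them through the clauses of $\osep$.

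For the framed summand the crux is the universal quantification over allocators built into the OSL triple. Working along the witnessing span $\mu\in\mathcal W_{\mathcal A}(\fr(F))$ for $(m_1,m_1')$, I invoke \Aref{lem:frameprop} at each initial configuration $\inj_\epsilon(s,h)$ together with its frame $h'$ (read off from $\supp(\mu)$) to obtain a modified allocator under which $\de{C}_{\af}(s,h\uplus h')$ coincides, modulo $\und$ states, with the framing of $\de{C}_{\af'}(s,h)$ by $h'$. Assembling these pointwise facts produces an allocator $\af'$ and a small-heap run $\dem C{m_1}{\af'}$ whose framing is related to $\dem C{m_1'}{\af}$ by $\overline{\fr(F)}$ up to $\overline\repl$. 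Now I use that $\vDash\triple\varphi C\psi$ holds for \emph{every} allocator, in particular $\af'$: taking $m_2'=\zero$ gives $m_1\vDash\varphi$, hence $\dem C{m_1}{\af'}\vDash\psi$. \Cref{lem:osepforward} lifts this through the frame relation to $\psi\osep F$, and \Cref{cor:replacement} repairs the ``modulo $\und$'' gap, since the formerly-undefined paths---which become defined or diverge once $h'$ is present---are exactly the states $\overline\repl$ is permitted to rewrite.

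For the $\top$-absorbed summand, the key structural fact is that $\osep$ leaves $\top$ unchanged, so each $\top$ in $\varphi$ survives in $\varphi\osep F$ and---by validity of the original triple---is matched by a $\top$ in $\psi$, hence in $\psi\osep F$; this is the component that absorbs $\dem C{m_2}{\af}$. The weight slack $|m_2'|\le|m_2|$ granted by \Cref{lem:osepback} is what makes the accounting go through: framing can only shrink total weight (along paths that newly diverge), so there is always enough $\top$-capacity, and choosing $m_2'$ appropriately lets the weighted $\oplus$-decomposition of $\psi\osep F$ balance. Recombining the two summands through the distribution of $\osep$ over $\oplus$, $\vee$, and weighting then yields $\dem Cm{\af}\vDash\psi\osep F$.

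The main obstacle is \Aref{lem:frameprop} itself, specifically making the per-configuration modified allocators cohere into one $\af'$. Each $\af'$ must reproduce on the small heap the behaviour $\af$ shows on the enlarged heap $h\uplus h'$, so that freshly allocated addresses keep dodging the frame; the side condition $\mathsf{mod}(C)\cap\mathsf{fv}(F)=\emptyset$ guarantees $h'$ is invariant throughout a single execution, but distinct initial configurations carry distinct frames, and separating them is exactly what the freedom for allocators to depend on logical state is for. The secondary delicacy is the bookkeeping of undefined-versus-divergent paths: framing can turn an $\und$ outcome into either a real state or nontermination, and it is only because $\top$ absorbs both and \Cref{cor:replacement} licenses the rewrite that fault avoidance can be dropped without breaking soundness.
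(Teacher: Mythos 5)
Your overall skeleton matches the paper's proof: split $m$ with \Cref{lem:osepback}, build a single modified allocator by letting it depend on a fresh logical variable (this is exactly \Cref{lem:frtofr} in the paper; note that \Aref{lem:frameprop} is not a per-configuration statement you can ``assemble''---it already presupposes one fixed $\af'$ and the tagged relation $\fr'(F,X,\af,\af')$, and the pointwise-to-weighting lift is \Cref{lem:sequencing}), push the result back through \Cref{lem:osepforward2}, and repair undefined states with \Cref{cor:replacement}. But there is a genuine gap in how you treat the $\top$-absorbed mass $m_2$. You instantiate the slack of \Cref{lem:osepback} with $m_2'=\zero$, apply the premise triple only to $m_1$, and then claim that each $\top$ in $\varphi$ is ``matched by a $\top$ in $\psi$'' by validity of the original triple. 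Validity is a purely semantic condition and licenses no such syntactic matching; and semantically, $\dem{C}{m_1}{\af'}\vDash\psi$ only exhibits \emph{some} decomposition of a configuration of total weight $|\dem{C}{m_1}{\af'}|$, which does not show that $\psi$ has spare $\top$-capacity to absorb the extra mass $\dem{C}{m_2}{\af}$---if every decomposition witnessing $\psi$ places all mass under non-$\top$ subformulas (or $\psi$ contains no $\top$ at all), your recombination of the two summands $\dem{C}{m_1'}{\af}+\dem{C}{m_2}{\af}$ fails as stated.

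The paper closes exactly this hole with the one idea your proposal is missing: instead of discarding the slack, choose $m_2' = |m_2|\cdot\unit(\und)$. Then $m_1 + |m_2|\cdot\unit(\und)\vDash\varphi$, and since the semantics passes non-$\ok$ states through unchanged, the premise yields $\dem{C}{m_1}{\af'} + |m_2|\cdot\unit(\und)\vDash\psi$. This simultaneously (a) certifies, semantically, that $\psi$ has $\top$-capacity of weight $|m_2|$ in the right position of its $\oplus$/$\vee$/weighting structure (an $\und$-state can satisfy nothing else), and (b) sets up the final application of \Cref{cor:replacement}: the relation $R=\repl\circ\fr'(F,X,\af,\af')$ relates the inserted $\und$-mass to $m_2$ itself, so a single pass of \Cref{lem:sequencing} with \Aref{lem:frameprop} handles the whole configuration at once (no splitting of the execution is needed), and Replacement then rewrites the $\und$-mass into the actual big-heap results of executing on $m_2$---in addition to repairing, as you correctly note, the $\und$'s arising inside $m_1$'s small-heap run. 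Your closing remark about ``choosing $m_2'$ appropriately'' gestures at this, but without the $\und$-insertion the weight accounting you describe cannot be completed.
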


\noindent We briefly sketch the proof here, the full version is in \Aref{app:framerule}. Suppose that $m\vDash \varphi\osep F$ and take any $\af\in\mathsf{Alloc}$.
We now enumerate all the defined states of $m$ and (using \Cref{lem:osepback}) we know that each state has the form $\sigma_n = \inj_{\epsilon_n}(s_n, h_n\uplus h'_n)$ where $(s_n, h'_n)\in F$. We pick a fresh logical variable $X$ and construct $m'$ by modifying each $\sigma_n \in \supp(m)$ to be $\inj_{\epsilon_n}(s_n[ X \mapsto n], h_n)$. That is, we augment the variable store such that $X = n$ and remove the $h'_n$ portion of the heap. Note that $m'\vDash\varphi$.
We construct a new allocator as follows:
\[
\af'(s, h) = \af(s[ X \mapsto s_{n}(X)], h \uplus h'_{n})
\qquad\text{where}~
n = s(X)
\]
So, $\af'(s,h)$ uses the value $s(X) = n$ to select the appropriate $h'_n$ to add to $h$, guaranteeing that $\de{C}_{\af'}\!(s_n, h_n)$ allocates the same addresses as $\de{C}_\af\!(s_n, h_n\uplus h'_n)$ for all $n$. So, $\dem Cm\af$ is related to $\dem C{m'}{\af'}$ via the two relations described previously.
By the premise $\vDash\triple\varphi{C}\psi$, we know that $\dem C{m'}{\af'}\vDash\psi$, and by
\Cref{lem:osepforward,cor:replacement}, we conclude that $\dem Cm\af\vDash\psi\osep F$.

\medskip
\noindent We have devised a frame rule for Outcome Separation Logic, which supports a rich variety of computational effects and properties about those effects. More concretely, OSL can be used to reason about both correctness and incorrectness across nondeterministic and probabilistic programs.
In the next sections, we will build on this result to create compositional symbolic execution algorithms. 

\section{Symbolic Execution}
\label{sec:symexec}

With a sound frame rule, we are now ready to design symbolic execution algorithms based on OSL. The core algorithm is similar to Abductor and Infer \cite{biab,infer}, but adapted to better handle program choices arising from computational effects. We also show how minor modifications to this algorithm make it suitable for other use cases such as bug-finding and partial correctness. First, we introduce a restricted assertion syntax to make OSL compatible with bi-abduction, then we define tri-abduction (\Cref{sec:triabduction}) and the main algorithm (\Cref{sec:alg}).

\heading{Symbolic Heaps.}
In the remainder of the paper, we will work with a subset of SL assertions known as symbolic heaps. Although they have limited expressivity---particularly for pure assertions---implications of symbolic heaps are decidable \cite{berdine2005decidable}, which is necessary for bi-abductive analysis algorithms. These same symbolic heaps are used by \citet{biab,biabjacm}. The syntax is shown below and the semantics is standard as defined by \citet[\S2]{berdine2005symbolic}.
\begin{align*}
P &\Coloneqq \exists \vec X.\Delta && \textsf{(Symbolic Heaps)}
\\
\Delta &\Coloneqq \Pi \land \Sigma && \textsf{(Quantifier-Free Symbolic Heaps)}
\\
\Pi &\Coloneqq \tru \mid \Pi_1 \land \Pi_2 \mid e_1 = e_2 \mid e_1\neq e_2 && \textsf{(Pure Assertions)}
\\
\Sigma & \Coloneqq \tru \mid \emp \mid \Sigma_1 \sep \Sigma_2 \mid e_1 \mapsto e_2 \mid \ls(e_1, e_2) && \textsf{(Spatial Assertions)} 
\end{align*}
A symbolic heap $P$ consists of existentially quantified logical variables, along with a pure part $\Pi$ and a spatial part $\Sigma$. Pure assertions are conjunctions of equalities and inequalities, whereas spatial assertions are heap assertions joined by \emph{separating} conjunctions. The separating conjunction requires that the heap can be split into disjoint components to satisfy the two assertions separately.
\[
(s,h)\vDash \Sigma_1 \sep \Sigma_2 \qquad\text{iff}\qquad \exists h_1,h_2.\quad h = h_1\uplus h_2 \quad\text{and}\quad (s,h_1)\vDash\Sigma_1 \quad\text{and}\quad (s,h_2)\vDash\Sigma_2
\]
The points-to predicate $e_1 \mapsto e_2$ specifies a singleton heap in which the address $e_1$ points to the value $e_2$. Negative heap assertions $e\not\mapsto$ are syntactic sugar for $e\mapsto\bot$. 
These assertions were introduced in Incorrectness Separation Logic to prove that a program crashes when dereferencing invalidated memory \cite{isl}. Finally, list segments $\ls(e_1,e_2)$ are inductive predicates stating that a sequence of pointers starts with $e_1$ and ends with $e_2$. Formally, it is the least solution of $\ls(e_1, e_2) \Leftrightarrow (e_1 = e_2 \land \emp) \vee (\exists X. e_1 \mapsto X \sep \ls(X,e_2))$.
We also overload $\sep$ and $\land$ as follows:
\[
(\exists \vec X.\Pi\land\Sigma) \sep(\exists \vec Y.\Pi'\land\Sigma') \triangleq \exists \vec X\vec Y.(\Pi\land\Pi')\land (\Sigma\sep\Sigma')
\qquad
P\land \Pi \triangleq P\sep(\Pi\land\emp)
\]

\subsection{Tri-Abduction}
\label{sec:triabduction}

We now address the matter of composing paths in programs whose execution branches due to nondeterminism or random sampling.
When symbolically executing such programs, we must unify the preconditions of the two branches.
For example, the following program chooses to execute $[x]\leftarrow 1$ or $[y]\leftarrow 2$, so we need a precondition that mentions both $x$ and $y$, and we need to know what leftover resources to add to the two resulting outcomes.
\[
\xymatrix@R=-.5em@C=-.25em{
&& \ob{\ok:x\mapsto X}\ar@/_.75pc/[lld] & [x]\leftarrow 1 & \ob{\ok:x\mapsto 1}\ar@/^.75pc/[rrd] \\
\ob{\ok: \mathord{\?}} &\qquad&& + & &\qquad& {\color{purple}\langle(\ok:x\mapsto 1\sep \mathord{\?})} & {\color{purple}\oplus} & {\color{purple}(\ok:y\mapsto 2 \sep \mathord{\?}) \rangle} \\
&&  \ob{\ok:y\mapsto Y}\ar@/^.75pc/[llu]  &[y]\leftarrow 2& \ob{\ok:y\mapsto 2 }\ar@/_1pc/[rrrru]
}
\]
Tri-abduction provides us the power to reconcile the preconditions of the two program branches. Given $P_1$ and $P_2$, the goal is to find the \emph{anti-frame} $M$ and two \emph{leftover} frames $F_1$ and $F_2$ that make $P_1 \sep F_1 \Dashv M \vDash P_2 \sep F_2$ hold. Using this, we can compose  program branches as follows:
\[
\inferrule{
\triple{P_1}{C_1}{Q_1}
\\
P_1\sep F_1 \Dashv M \vDash P_2\sep F_2
\\
\triple{P_2}{C_2}{Q_2}
}
{\triple{M}{C_1 + C_2}{(Q_1\sep F_1) \oplus (Q_2 \sep F_2)}}
{\textsc{Tri-Abductive Composition}}
\]
Tri-abduction would have also been useful in Abductor, which is unable to analyze the program above despite supporting nondeterminism. Abductor operates in two passes; first finding candidate preconditions for each trace, and then re-evaluating the program with each candidate in hopes that one is valid for the entire program \cite{biab}. Since the program above uses disjoint resources in the two branches, no candidate is valid for all traces. Using tri-abduction, we infer \emph{more} summaries and do so in a single pass. While a single-pass bi-abduction algorithm has more recently been proposed \cite{sextl2023sound}, it still cannot handle the case in the following remark.

\begin{remark}[Solving Tri-Abduction using Bi-Adbuction]\label{rem:bi-tri}
Our initial approach to tri-abduction was to simply use bi-abduction: given $P_1$ and $P_2$, bi-abduction can give us $M$ and $F$ such that $P_1\sep M\vDash P_2\sep F$. Using $P_1\sep M$ as the anti-frame, $P_1 \sep M \Dashv (P_1\sep M) \vDash P_2 \sep F$ is a tri-abduction solution.

However, this approach is inherently asymmetric, with the left branch being favored. While it would be possible to also bi-abduce in the opposite direction ($P_2\sep \mathord{\textbf{?}}\vDash P_1 \sep\mathord{\textbf{?}}$) for symmetry, this still precludes valid solutions, \eg there is no bi-abduction solution for $X\mapsto Y\sep\ls(Y,Z)$ and $\ls(X,Y)\sep Y\mapsto Z$ (in either direction), whereas tri-abduction finds the anti-frame $X\mapsto Y\sep Y\mapsto Z$.
Tri-abduction is a fundamentally different operation that is precisely designed for branching.
\end{remark}

Similar to \cite[Algorithm 3]{biabjacm}, tri-abduction is done in two stages.
First, we describe the \emph{abduction} stage, in which only the anti-frame $M$ is inferred. Next, we describe how abduction is used as a subroutine to tri-abduce all three parameters $M$, $F_1$, and $F_2$.



\heading{Abduction}.
The abductive inference step $\textsf{abduce-par}(P,Q)$ is performed as a proof search---similar to \citet[Algorithm 1]{biabjacm}---using the rules in \Cref{fig:triab-pf} to infer judgements of the form $\trij{P}M{Q}$, indicating that $M\vDash P$ and $M\vDash Q$. As such, $P$ and $Q$ are the inputs to the algorithm and $M$ is the output. We describe the algorithm briefly, the full definition is in \Aref{app:triab}.

The inference rules are applied in the order in which they are shown, with the rules at the top being preferred over the rules lower down. The inference rules ending with \textsc{-L} have symmetric versions that can also be applied (the full set of rules is shown in \Aref{fig:triab-pf-full}).

The premise of each inference rule becomes a recursive call, finding a solution to a smaller abduction query. Some of the rules have side conditions of the form of $R \not \vdash \fls$, which is checked using the proof system of \citet[\S4]{berdine2005symbolic}. Given that each recursive call describes a progressively smaller symbolic heap, the algorithm either eventually reaches a case with no explicit resources ($\emp$ or $\tru$), in which a base rule applies, or gets stuck and returns no solutions.
The inference rules are described below.

\begin{figure}
  \footnotesize
  \begin{flushleft}\fbox{\textsf{Base Cases}}\end{flushleft}
  \[
    \inferrule* [rightstyle={\footnotesize \sc},right=\rulename{Base-Emp}]{
      \Pi \wedge \Pi' \not \vdash \fls
    }{
      \trij{\Pi\land\emp}{\Pi \wedge \Pi'\land \emp}{ \Pi'\land\emp}
    }
\qquad
    \inferrule* [rightstyle={\footnotesize \sc},right=\rulename{Base-True-L}]{
      \Pi \wedge \Pi'\land\Sigma' \not \vdash \fls
    }{
      \trij{\Pi\land\tru}{\Pi \wedge \Pi' \land\Sigma'}{ \Pi'\land\Sigma'}
    }
\]
  \smallskip
\begin{flushleft}\fbox{\textsf{Quantifier Elimination}}\end{flushleft}
    \[
    \inferrule* [rightstyle={\footnotesize \sc},right=\rulename{Exists}]{
      \trij{\Delta}M{\Delta'} \\ \vec X \cap (\mathsf{fv}(\Delta')\setminus \vec Y) = \emptyset \\ \vec Y\cap(\mathsf{fv}(\Delta) \setminus\vec X) = \emptyset}
      {\trij{\exists\vec X.\Delta}{\exists\vec X\vec Y.M}{\exists\vec Y.\Delta'}}
  \]
\begin{flushleft}\fbox{\textsf{Resource Matching}}\end{flushleft}
  \[
    \inferrule*[rightstyle={\footnotesize \sc},right=\rulename{Ls-Start-L}]{
      \trij{\Delta \sep \ls(e_3,e_2)}{M}{\Delta'}
    }{
      \trij{\Delta \sep \ls(e_1,e_2)}{M \sep e_1 \mapsto e_3}{\Delta'\sep e_1 \mapsto e_3}
    }
  \qquad
    \inferrule* [rightstyle={\footnotesize \sc},right=\rulename{$\mapsto$-Match}]{
      \trij{\Delta \wedge e_2 = e_3}M{\Delta' \wedge e_2 = e_3}
    }{
      \trij{\Delta \sep e_1 \mapsto e_2}{M \sep e_1 \mapsto e_2}{\Delta'\sep e_1\mapsto e_3}
    }
    \]
  \smallskip
\[
    \inferrule* [rightstyle={\footnotesize \sc},right=\rulename{Ls-End-L}]{
      \trij{\Delta \sep \ls(e_3,e_2)}{M}{\Delta'}
    }{
      \trij{\Delta \sep \ls(e_1,e_2)}{M \sep \ls(e_1,e_3)}{\Delta'\sep\ls(e_1,e_3)}
    }
  \]
\begin{flushleft}\fbox{\textsf{Resource Adding}}\end{flushleft}
  \[
  \inferrule*[rightstyle={\footnotesize \sc},right=\rulename{Missing-L}]{
    \trij{\Delta}M{\Pi'\land(\Sigma'\sep\tru)} \\ \Pi'\land\Sigma' \sep B(e_1,e_2) \not\vdash\fls
  }{
    \trij{\Delta\sep B(e_1,e_2)}{M\sep B(e_1,e_2)}{\Pi'\land(\Sigma'\sep\tru)}
  }
\quad
    \inferrule* [rightstyle={\footnotesize \sc},right=\rulename{Emp-Ls-L}]{
     \trij{\Delta\land e_1=e_2}M{\Delta' \land e_1= e_2}
    }{
      \trij{\Delta\sep\ls(e_1,e_2)}{M}{\Delta'}
    }
  \]

\caption{Selected abduction proof rules. Similarly to \citet{biab}, $B(e_1,e_2)$ represents either $\ls(e_1,e_2)$ or $e_1 \mapsto e_2$.
Rules ending in \textsc{-L} have symmetric versions not shown here; see \Aref{fig:triab-pf-full} for the full proof system.
}
\label{fig:triab-pf}
\end{figure}

\heading{Base Rules.} 
The first step is to attempt to apply a base rule to complete the proof.
\ruleref{Base-Emp} applies when both branches describe empty heaps, as long as the pure assertions in each branch do not conflict. In \ruleref{Base-True-L}, we match against the case wherein one of the branches has an arbitrary spatial assertion $\Sigma'$ and the other contains the spatial assertion $\tru$, indicating that it can absorb more resources not explicitly mentioned, so we are able to move $\Sigma'$ into the anti-frame.

\heading{Quantifier Elimination}. 
The next step is to strip existentials from the inputs $P$ and $Q$ and add them back to the anti-frame $M$ obtained from the recursive call. This is achieved using the \ruleref{Exists} rule in \Cref{fig:triab-pf}.
In bi-abduction, existentials are not stripped from the assertion to the right of the entailment---doing so prevents the algorithm from finding solutions in some cases. For example, $\ls(e,e')\sep \textbf{?} \vDash \exists X.e\mapsto X \sep \textbf{?}$ has a solution ($e\neq e'$), but it does not have a solution with the existential removed since nothing can be added to $\ls(e,e')$ to force $e$ to point to a \emph{specific} $X$.
Tri-abduction produces a standalone anti-frame $M$, so we are not operating under such constraints, allowing us to strip existentials at an early step in order to simplify further analysis.

It is important to note that in the \ruleref{Exists} rule, quantified variables in one assertion cannot overlap with the free variables of the other. This ensures that no free variables in $P$ or $Q$ end up existentially quantified in the anti-frame $M$.
Without the side condition, the rule is unsound; suppose we want to tri-abduce $\exists X. X=Y$ with $X=1$, then \textsc{Exists} gives us the anti-frame $\exists X.X=Y \land X=1$, which is too weak since $\exists X.X=1\not\vDash X=1$.
In practice, our symbolic execution algorithm always generates fresh logical variables, so we will not have collisions with our usage of the \ruleref{Exists} rule.

%
%
%
%

\heading{Resource Matching.}
If a base rule does not apply, then we attempt to match resources from both branches, and then call the algorithm recursively on smaller symbolic heaps with some resources moved into the returned anti-frame.
\ruleref{Ls-Start-L} applies when both branches contain the same resource $e_1$; however, one includes $e_1$ as the head of a list segment and the other refers to $e_1$ using a points-to predicate. The points-to predicate must be the head of the list, so we move it into $M$ and recurse on the tail of the list.
The \ruleref{$\mapsto$-Match} rule applies when both branches use $e_1$ in a points-to predicate, therefore the values pointed to must be equal too.
\ruleref{Ls-End-L} applies when both branches have list segments starting at the same address, so one segment must be a prefix of the other.

As in \citet{biab}, we do not consider cases where pointers are aliased. For example, if the two branches are $x\mapsto 1$ and $y\mapsto 1$, then it is possible that $x=y$. Precluding this solution helps limit the number of options we consider. \citet[Example 3]{biab} remark that this loss of precision is not detrimental in practice.


\heading{Resource Adding.}
Adding resources that are only present on one side is the last resort, since it involves checking a potentially expensive side condition of the form $\Pi\land\Sigma \sep B(e_1,e_2) \not\vdash\fls$. The \ruleref{Missing-L} rule handles the case wherein one branch refers to resources not present in the other. This is different from the \ruleref{Base-True-L} rule, since it handles cases where \emph{both} branches refer to resources not explicitly present in the other. For example, \ruleref{Missing-L} can solve $\trij{x \mapsto X \sep \tru}{\textbf{?}}{y \mapsto Y \sep \tru}$ even though
the \textsc{Base-True} rules do not apply.
If one side of the judgement contains a list segment, but the other side does \emph{not} contain the spatial assertion $\tru$, then there is a possible solution where the list segment is empty.
\ruleref{Emp-Ls-L} handles such cases by forcing the list segment to be empty.

\medskip

\noindent As we mentioned at the beginning of the section, the tri-abduction algorithm follows a similar structure to that of bi-abduction \cite[Algorithm 3]{biabjacm}.
\[
\mathsf{triab}(P,Q) = \{ (M, F_1, F_2) \mid M \in \textsf{abduce-par}(P\sep\tru,Q\sep\tru), \; M\vdash P\sep F_1, \; M\vdash Q\sep F_2 \}
\]
We first abduce a set of anti-frames using \Aref{alg:abduce-par} such that $M \vDash P \sep \tru$ and $M \vDash Q\sep\tru$ for each $M$. Adding the spatial assertion $\tru$ absorbs extra resources; if $P$ and $Q$ have different memory footprints, then there is no $M$ such that $\trij PMQ$, but adding $\tru$ to both sides of the entailments allows $M$ to refer to resources present in only one branch.
%
Next, we use the frame inference procedure from \citet[\S5]{berdine2005symbolic} to find $F_1$ and $F_2$ such that $M \vDash P\sep F_1$ and $M \vDash Q\sep F_2$. Applying  frame inference is necessary because $M$ may mention resources present in $P$, but not $Q$ (and vice versa). The set of solutions
is valid according to the following correctness result, which follows from the soundness of the proof system in \citet[\S5]{berdine2005symbolic}.

\begin{restatable}[Tri-abduction]{theorem}{triab}\label{thm:abduce}
  If $(M, F_1, F_2) \in \mathsf{triab}(P,Q)$, then $M \vDash P\sep F_1$ and $M \vDash Q\sep F_2$
\end{restatable}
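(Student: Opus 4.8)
The plan is to prove the statement directly from its definition, tracing back through the two-stage structure of $\mathsf{triab}(P,Q)$ and invoking the soundness of the auxiliary procedures it calls. By definition, any triple $(M, F_1, F_2) \in \mathsf{triab}(P,Q)$ arises because (i) $M \in \textsf{abduce-par}(P\sep\tru, Q\sep\tru)$, and (ii) $M$ satisfies the two frame-inference conditions $M\vdash P\sep F_1$ and $M\vdash Q\sep F_2$. The goal $M \vDash P\sep F_1$ and $M\vDash Q\sep F_2$ follows almost immediately from (ii), provided we can translate the syntactic provability $\vdash$ into semantic entailment $\vDash$. The crux is therefore an appeal to soundness: each derivation $M\vdash P\sep F_1$ produced by the frame inference procedure of \citet[\S5]{berdine2005symbolic} is sound, so $M\vDash P\sep F_1$ holds semantically, and symmetrically for $F_2$.

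Concretely, I would structure the argument in two steps. First, I would note that the frame inference stage is exactly the procedure whose soundness \citet{berdine2005symbolic} establish: whenever it returns a frame $F$ with $M \vdash P \sep F$, every model of $M$ is a model of $P \sep F$. Since membership of $(M,F_1,F_2)$ in $\mathsf{triab}(P,Q)$ requires precisely that frame inference succeed on both $(M,P)$ and $(M,Q)$, we can directly cite this soundness result to obtain $M\vDash P\sep F_1$ and $M\vDash Q\sep F_2$. Second, I would observe that the \textbf{abduce-par} stage only serves to \emph{produce} the candidate anti-frame $M$; its own correctness (that $M\vDash P\sep\tru$ and $M\vDash Q\sep\tru$) is not strictly needed for the stated conclusion, since the two required entailments are re-established independently and more tightly by the frame-inference stage. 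The theorem statement is deliberately weak in this sense—it asserts only the two entailments about $F_1$ and $F_2$, not any optimality or minimality of $M$—so the proof reduces to the soundness of the second stage alone.

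The only real subtlety to handle carefully is the match between the entailment relation verified by the frame inference proof system and the semantic $\vDash$ used in the theorem. I would confirm that the symbolic-heap semantics adopted here (the standard one of \citet{berdine2005symbolic} referenced when the syntax was introduced) is the same semantics against which their frame-inference soundness is stated, so that $\vdash$ genuinely entails $\vDash$ over the relevant class of models. Given that the paper explicitly adopts their symbolic heaps and their frame inference procedure verbatim, this alignment is immediate, and the theorem then follows as a direct corollary. I expect no genuine obstacle here; the entire content of the result is inherited from \citet{berdine2005symbolic}, and the proof is essentially a one-line reduction to their soundness theorem, with the bookkeeping above making explicit why the \textbf{abduce-par} stage need not be invoked for this particular conclusion.
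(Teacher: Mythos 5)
Your proposal is correct and takes essentially the same route as the paper: both arguments observe that membership in $\mathsf{triab}(P,Q)$ requires the derivations $M \vdash P \sep F_1$ and $M \vdash Q \sep F_2$ from the frame inference procedure of \citet[\S5]{berdine2005symbolic}, and then reduce the conclusion to the soundness of that proof system (their Theorem 7), which converts these derivabilities into the semantic entailments $M \vDash P \sep F_1$ and $M \vDash Q \sep F_2$. Your additional observation that the soundness of \textsf{abduce-par} is not strictly needed for this particular conclusion is accurate---the paper's proof does cite its lemma establishing $M \vDash P \sep \tru$ and $M \vDash Q \sep \tru$, but the stated entailments are indeed drawn entirely from the frame-inference soundness step.
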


\subsection{Symbolic Execution Algorithm}
\label{sec:alg}

\newcommand{\seq}[1]{\mathsf{seq}(\eqmakebox[seqdef][c]{$#1$}, S, \vec x)}

The algorithm is presented as a symbolic execution, which computes an abstract semantics for a program---denoted $\dea{C}(T)$---represented as a set of pairs of pre- and postconditions. The precondition is a single symbolic heap $P$, whereas the postcondition is an outcome assertion $\varphi$. The parameter $T$ is a lookup table that gets updated with summaries for procedures as the analysis moves through the codebase. Intuitively, we think about specifications as starting in a single state and producing a collection of outcomes, as the execution may branch due to effects. Abductor operates in a similar fashion, but the postcondition is a disjunction rather than an outcome conjunction. The intended semantics is captured by the following soundness result.

\begin{restatable}[Symbolic Execution Soundness]{theorem}{soundness}\label{thm:soundness}

If ~ $(P, \varphi) \in \dea{C}(T)$, ~ then ~
$\vDash\triple{\ok:P}{C}\varphi$
\end{restatable}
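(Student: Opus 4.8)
The plan is to prove Theorem~\ref{thm:soundness} by structural induction on the command $C$, showing that every pair $(P,\varphi)$ produced by the symbolic execution $\dea{C}(T)$ yields a valid OSL triple $\vDash\triple{\ok:P}{C}{\varphi}$. The induction will mirror the syntactic structure of $\mathsf{Cmd}$, so the cases are $\skp$, the atomic actions $c$, sequential composition $C_1\fatsemi C_2$, choice $C_1 + C_2$, $\assume e$, and $\whl eC$, together with procedure calls $f(\vec e)$ that consult the table $T$. For the leaf cases I would unfold the definition of $\dea{\cdot}$ on atomic actions and check directly against the denotational semantics of \Cref{fig:cmdsem}, appealing to the small axioms of separation logic that the excerpt says carry over; the key subtlety is that for unsafe preconditions the postcondition is forced to be $\top$ (as motivated in \Cref{sec:frameoverview}), and one must confirm the algorithm only emits $\top$ in exactly those situations, so that $\vDash\triple{\ok:P}{c}{\top}$ holds trivially by the semantics of $\top$.

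The compositional cases are where the previously established machinery does the heavy lifting. For $C_1\fatsemi C_2$ I would use the bi-abductive sequencing scheme: the algorithm stitches a summary $(P_1,\varphi_1)$ of $C_1$ to a summary $(P_2,Q_2)$ of $C_2$ via bi-abduction, and soundness follows from the induction hypotheses on $C_1$ and $C_2$, the bi-abductive sequence rule, and crucially the \textsc{Frame} rule (\Cref{thm:frame}) to reconcile the inferred anti-frame and leftover frame. The nondeterministic/probabilistic choice case $C_1 + C_2$ is handled analogously but with \emph{tri}-abduction: given summaries for each branch and a solution $(M,F_1,F_2)\in\mathsf{triab}(P_1,P_2)$, I would invoke the induction hypotheses, frame $F_1$ and $F_2$ into the two branch summaries using \Cref{thm:frame}, and then combine them with the \textsc{Tri-Abductive Composition} rule, whose side condition is discharged by \Cref{thm:abduce}. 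The $\assume e$ and $\skp$ cases follow directly from their denotational definitions, and the loop case $\whl eC$ requires an invariant-based argument: I would show the algorithm's fixed-point (or bounded-unrolling) treatment of loops produces summaries that are validated by the least-fixed-point semantics, using continuity of $F_{\langle C,e,\af\rangle}$ and the fact that $\top$ absorbs nonterminating behavior.

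For procedure calls $f(\vec e)$, soundness is conditional on the table $T$ containing only valid summaries; I would either strengthen the statement to an invariant maintained throughout the analysis (every entry of $T$ is sound) and prove it by a secondary induction on the order in which procedures are analyzed, or assume $T$-soundness as a hypothesis and appeal to the substitution semantics in \Cref{fig:cmdsem}, framing and renaming arguments as needed. Recursive procedures would demand a fixed-point argument over the table analogous to the loop case.

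The main obstacle I expect is the loop case together with the interaction between the universally-quantified-allocator semantics of OSL triples and the frame-based composition steps. Because a triple's validity quantifies over \emph{all} allocators $\af\in\mathsf{Alloc}_{\mathcal A}$, each compositional step must preserve this quantification, and I must be careful that the frame rule's construction of a modified allocator (as in the proof sketch of \Cref{thm:frame}) composes correctly across sequencing and branching without leaking concrete allocation choices into the summaries. The loop invariant argument is delicate for the same reason: establishing that the abstract semantics' treatment of $\whl eC$ is sound relative to the least fixed point, while respecting under-approximation (dropping paths into $\top$) and the reachability guarantees of the outcome conjunction $\oplus$, is the step most likely to require genuinely new reasoning rather than a direct appeal to earlier results.
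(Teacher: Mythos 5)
Your overall route is the paper's route: structural induction on $C$, with the atomic actions checked directly against the denotational semantics (the paper factors this through a small lifting lemma from pointwise states to weighting functions), choice $C_1+C_2$ discharged by tri-abduction correctness (\Cref{thm:abduce}) plus the frame rule (\Cref{thm:frame}) and consequence, loops handled through the Kleene iterates of the summary-set fixed point, and procedure calls sound conditionally on the table $T$ containing valid summaries---exactly the invariant you propose to maintain. Two of your worries are also misplaced in a harmless direction: the universally-quantified-allocator semantics causes no new difficulty here, because \Cref{thm:frame} is already a purely semantic statement quantified over all allocators and the symbolic-execution proof invokes it as a black box; and the core loop case needs no invariant or continuity argument---each summary produced by the least fixed point in \Cref{fig:sym_exec} fixes a number of unrollings with the guard folded into the precondition, so the proof is an elementary induction on the iterate index (invariants enter only in the separate partial-correctness refinement, which is proved independently).

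The genuine gap is in your sequencing case. The postcondition $\varphi$ of $C_1$ is a compound outcome assertion built from $\oplus$, $\vee$, weighting, $\top$, and $\er$-atoms---not a symbolic heap---so a single application of the bi-abductive sequence rule does not even typecheck; bi-abduction only reconciles two symbolic heaps. The paper's proof therefore hinges on a dedicated soundness lemma for the $\mathsf{seq}$ procedure of \Cref{fig:seq}, proved by a \emph{nested induction on the structure of $\varphi$}, which your plan (an induction mirroring only the syntax of $\mathsf{Cmd}$) omits. Concretely, in the $\varphi_1\bowtie\varphi_2$ case the two recursive calls return distinct anti-frames $M_1$ and $M_2$, and these must be merged into a single anti-frame valid for \emph{both} sibling outcomes---which is done with tri-abduction via $\mathsf{triab}'$. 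So tri-abduction is needed inside sequential composition, not only at $C_1+C_2$ as your proposal assumes; without it, there is no single precondition $\varphi\osep M$ making $\triple{\varphi\osep M}{C_2}{\psi}$ valid. Relatedly, the frame rule's side condition $\mathsf{mod}(C)\cap\mathsf{fv}(F)=\emptyset$ is discharged by an explicit renaming step inside $\mathsf{biab}'$ and $\mathsf{triab}'$ (substituting fresh logical variables for modified program variables), which you mention only in passing for procedure calls but which is load-bearing in every composition step. Finally, a small correction to your leaf-case remark: the core algorithm's atomic actions never emit $\top$---each memory operation carries three summaries whose failure cases have $\er$ postconditions---so there is nothing to ``confirm'' there; $\top$ arises only through $\mathsf{seq}(\top,S,\vec x)$ and the single-path variant of choice, both of which the paper proves sound separately.
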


The strategy for the analysis is to accumulate a set of outcomes while moving forward through the program. At each step, every outcome in the current summary must be sequenced with a summary for the next command using bi-abduction. This is achieved using the \textsf{seq} procedure, defined in \Cref{fig:seq}, which takes in an outcome assertion $\varphi$, a set of summaries for the next command $C$, and $\vec x$, the variables modified by $C$. It computes a set of missing anti-frames $M$ and postconditions $\psi$ such that $\triple{\varphi\osep M}C\psi$ is a valid specification for $C$.

\begin{figure}
\footnotesize\arraycolsep=0pt
\begin{minipage}{.5\linewidth}
\begin{align*}
\seq{\top} = ~& \{(\emp, \top)\}
\\
\seq{\varphi_1\bowtie\varphi_2} = ~&\\
&\hspace{-8em}
\begin{array}{ll}
\big\{ (M, \psi'_1 \bowtie \psi'_2) \mid~&
(M_1,\psi_1) \in \mathsf{seq}(\varphi_1,S, \vec x) \\
& (M_2,\psi_2) \in \mathsf{seq}(\varphi_2,S, \vec x) \\
& (M,\psi'_1,\psi'_2) \in\mathsf{triab}'(M_1,M_2,\psi_1,\psi_2, \vec x) \big\}
\end{array}
\\
\seq{\wg\varphi{a}} =~ & \{(M, \wg\psi{a}) \mid (M, \psi) \in \mathsf{seq}( \varphi, S, \vec x) \}
\\
\seq{\ok:P} = ~& \\
&\hspace{-8em}
\big\{(M, \psi') \mid
(Q,\psi) \in S,
(M,\psi') \in \mathsf{biab}'(P,Q, \psi, \vec x)
\big\}
\\
\seq{\er:Q} = ~&  \{(\emp, \er:Q)\}
\end{align*}
\end{minipage}%
\begin{minipage}{.5\linewidth}
\begin{align*}
&\mathsf{biab}'(\exists \vec Z.\Delta, Q, \psi, \vec x) = \\
&\quad
\begin{array}{l}
\big\{ (M', (\psi \osep \exists \vec Z\vec X.F[\vec X/\vec x])[\vec e/ \vec Y]) \\
\mid (M, F) \in \mathsf{biab}(\Delta,Q) \\
\quad (\vec e, \vec Y, M') = \mathsf{rename}(\Delta, M, Q, \{\psi\}, \vec X, \vec x) \big\}
\end{array}
\\\medskip
&\mathsf{triab}'(P_1, P_2, \psi_1, \psi_2, \vec x) = \\
&\quad
\begin{array}{l}
\big\{ (M', (\psi_1 \osep \exists \vec X.F_1[\vec X/\vec x])[\vec e/\vec Y], \\
\phantom{\big\{ (M',\,} (\psi_2 \osep \exists \vec X.F_2[\vec X/\vec x])[\vec e/\vec Y]) \\
~\mid (M, F_1, F_2) \in \mathsf{triab}(P_1, P_2) \\
\quad  (\vec e, \vec Y, M') = \mathsf{rename}(\emp, M, \{\psi_1, \psi_2\}, \emptyset, \vec x) \big\}
\end{array}
\end{align*}
\end{minipage}
%
%

\caption{Sequencing procedure. The vector $\vec X$ is assumed to be fresh and the same size as $\vec x$, and $\mathord{\bowtie}\in\{\vee,\oplus\}$.}
\label{fig:seq}
\end{figure}

Sequencing after $\top$ and $(\er:Q)$ has no effect, since $\top$ carries no information about the current branch, and $(\er:Q)$ means the program has crashed. Sequencing after $\wg\varphi{a}$ simply sequences $\varphi$ and then reapplies the weight. Sequential composition is implemented in the $(\ok:P)$ case, where bi-abduction is used to reconcile the current outcome with each summary for the next command. The $\mathsf{biab}'$ procedure is similar to \textsf{AbduceAndAdapt} from \citet[Fig. 4]{biabjacm}, in which a renaming step is applied to ensure that the anti-frame $M$ is not phrased in terms of any program variables, therefore meeting the side condition of the frame rule. The details of renaming and why it is required for correctness are explained in \Aref{app:symexec}.

Our algorithm differs from Abductor in the cases with multiple program branches. This is where we use tri-abduction to obtain a precondition that is guaranteed to be valid for all program paths, allowing us to analyze the program in a single pass (unlike Abductor, which must re-evaluate the program using each candidate precondition). After sequencing each of the outcomes in the precondition $\varphi_1$ and $\varphi_2$ with the next command, we use $\mathsf{triab}'$ to obtain the single renamed anti-frame $M$ that is safe for both branches.
The soundness property for \textsf{seq} is stated below.

\begin{restatable}[Seq]{lemma}{sequencing}
If $(M,\psi)\in\mathsf{seq}(\varphi,S,\vec x)$, $\vec x = \mathsf{mod}(C)$, and $\vDash\triple{\ok:P}C\vartheta$ for all $(P,\vartheta)\in S$, then $\vDash\triple{\varphi\osep M}C\psi$.
\end{restatable}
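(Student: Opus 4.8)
The plan is to proceed by structural induction on the outcome assertion $\varphi$, following the five clauses that define $\mathsf{seq}$ in \Cref{fig:seq}. Throughout, the two workhorses are the frame rule (\Cref{thm:frame}), used to absorb each leftover frame produced by abduction into a completed triple, and the rule of consequence for OSL triples, which is sound in the standard direction (strengthen the precondition, weaken the postcondition) directly from the ``$\forall m$'' reading of $\vDash\triple{\varphi}C\psi$. I will also use the OSL structural rules for $\oplus$, $\vee$, and weighting that carry over from Outcome Logic, each of which is sound because the Kleisli extension $\dem Cm{\af} = \bind(m,\de C_{\af})$ is linear: it distributes over $+$ (so $\oplus$ splits) and commutes with scalar multiplication (so $\wg{(-)}a$ scales). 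Two small auxiliary facts, each proved by a side induction on the assertion, are: (i) $\osep$ is monotone in its frame, i.e.\ if $G\vDash G'$ as symbolic heaps then $\varphi\osep G \Rightarrow \varphi\osep G'$; and (ii) $\osep$ associates, $(\varphi\osep F_1)\osep F_2 = \varphi\osep(F_1\sep F_2)$. Both reduce to the commutative-monoid laws of $\sep$ applied to the base assertions $\epsilon:P$.

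The three structurally simple cases are dispatched directly. For $\varphi=\top$ we have $(\emp,\top)\in\mathsf{seq}(\top)$ and $\top\osep\emp=\top$, so the goal is $\vDash\triple{\top}C{\top}$, which holds since every $m$ satisfies $\top$. For $\varphi=\er:Q$ we have $(\emp,\er:Q)$ and $(\er:Q)\osep\emp=\er:Q$; the triple $\vDash\triple{\er:Q}C{\er:Q}$ holds because $\bind$ leaves $\inj_\er$ states untouched (\Cref{def:ermonad}), so $\dem Cm{\af}=m$ whenever $\supp(m)$ consists of error states. For $\varphi=\wg{\varphi'}a$, the induction hypothesis gives $\vDash\triple{\varphi'\osep M}C\psi$; since $\wg{\varphi'}a\osep M=\wg{(\varphi'\osep M)}a$, the scaling rule (justified by linearity of $\bind$, with the $a=\zero$ subcase handled by $\dem C\zero{\af}=\zero$) yields $\vDash\triple{\wg{\varphi'}a\osep M}C{\wg\psi a}$.

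The two substantive cases are $\ok:P$ and $\varphi_1\bowtie\varphi_2$. In the $\ok:P$ case, a summary $(Q,\psi)\in S$ gives a valid triple $\vDash\triple{\ok:Q}C\psi$, while $\mathsf{biab}$ guarantees $(M,F)$ with $\Delta\sep M\vDash Q\sep F$ \citep{biab}. I would frame the (renamed) $F$ onto the summary triple to get $\vDash\triple{\ok:Q\sep F}C{\psi\osep F}$, then combine the abduction entailment (lifted over the existentials $\vec Z$ of $P=\exists\vec Z.\Delta$) with $\osep$-monotonicity and consequence to rewrite the precondition as $(\ok:P)\osep M$. In the $\varphi_1\bowtie\varphi_2$ case, the induction hypothesis gives $\vDash\triple{\varphi_i\osep M_i}C{\psi_i}$ for $i\in\{1,2\}$; framing the (renamed) $F_i$ and using associativity gives $\vDash\triple{\varphi_i\osep(M_i\sep F_i)}C{\psi_i\osep F_i}$, and since tri-abduction guarantees $M\vDash M_i\sep F_i$ (\Cref{thm:abduce}), $\osep$-monotonicity plus consequence rephrase both preconditions over the shared anti-frame as $\vDash\triple{\varphi_i\osep M}C{\psi_i'}$. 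Distributing $\osep$ over $\bowtie$ turns the goal precondition $(\varphi_1\bowtie\varphi_2)\osep M$ into $(\varphi_1\osep M)\bowtie(\varphi_2\osep M)$, and the OSL rule for $\bowtie$ (sound by linearity of $\bind$ for $\oplus$, and by the disjunction semantics for $\vee$) combines the two branch triples into $\vDash\triple{(\varphi_1\bowtie\varphi_2)\osep M}C{\psi_1'\bowtie\psi_2'}$.

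\textbf{The main obstacle} is the interaction between the frame rule's side condition $\mathsf{mod}(C)\cap\mathsf{fv}(F)=\emptyset$ and the fact that the leftover frames $F,F_1,F_2$ returned by abduction may mention variables in $\vec x=\mathsf{mod}(C)$. This is precisely why $\mathsf{biab}'$ and $\mathsf{triab}'$ interpose the $\mathsf{rename}$ step, substituting fresh logical variables $\vec X$ for $\vec x$ in each frame before framing and applying a compensating substitution $[\vec e/\vec Y]$ to the anti-frame and postcondition. I would isolate the correctness of this renaming as a separate sub-lemma, showing that replacing $F$ by $\exists\vec X.F[\vec X/\vec x]$ both discharges the side condition and, after the $[\vec e/\vec Y]$ substitution, produces exactly the $M'$ and $\psi'$ that appear in \Cref{fig:seq}; the full bookkeeping is deferred to \Aref{app:symexec}. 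Once this sub-lemma is in hand, each case of the induction closes by the uniform pattern of frame-then-consequence described above.
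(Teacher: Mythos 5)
Your proposal is correct and takes essentially the same route as the paper: induction on the structure of $\varphi$ with trivial $\top$ and $\er{:}Q$ cases, a semantic scaling argument for $\wg{\varphi'}a$, semantic splitting/weakening (linearity of the Kleisli extension) for $\oplus$ and $\vee$, and the frame-then-consequence pattern for the $\ok{:}P$ and $\bowtie$ cases. The renaming sub-lemma you isolate is exactly what the paper factors out as \Cref{lem:biab'} and \Cref{lem:triab'1} in \Aref{app:symexec}, including the key point that $\vec e$ must be disjoint from $\mathsf{mod}(C)$ so the $[\vec e/\vec Y]$ substitution commutes with execution.
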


\heading{Symbolic Execution Algorithm.}
The core symbolic execution algorithm, shown in \Cref{fig:sym_exec}, computes a \emph{local} symbolic semantics which can be augmented using the frame rule to obtain summaries in larger heaps. For example, the semantics for $\skp$ is simply the triple $\triple{\ok:\emp}\skp{\ok:\emp}$, but this implies that running the program in any heap will yield that same heap in the end.
Executing $C_1\fatsemi C_2$ uses $\mathsf{seq}$; summaries for $C_1$ are sequenced with all the summaries for $C_2$. Choices $C_1+ C_2$ are analyzed by computing summaries for each path and reconciling them with tri-abduction.

We split $\assume e$ into two cases: if $e$ is a simple test (equality or inequality), then we generate two specifications with the precondition stating that $e$ is true and the postcondition being unchanged, or $e$ being false and the outcome being eliminated. This is similar to the \emph{``assume-as-assert''} behavior of Abductor and produces precise specifications without a priori knowledge of the logical conditions that will occur.
If $e$ is a weight literal $a$, then we weight the current outcome by it. Other expressions are not supported due to limitations of the bi-abduction solver of \citet{biab}, which we also use.
Similarly, \textsf{while} loops use a least fixed point to unroll the loop, and produce one summary for each possible number of iterations. We will see more options for analyzing loops later on.

The abstract semantics of atomic actions mostly follow the small axioms of \citet{localreasoning}, with failure cases inspired by Incorrectness Separation Logic \cite{isl}. Each memory operation has three specifications: one in which the pointer is allocated and the operation accordingly succeeds, and two failure cases where the pointer is not allocated or null. Procedure calls rely on pre-computed summaries in a lookup table $T$, which is a parameter to $\dea{C}$.

%

\begin{figure}
\footnotesize
\[\def\arraystretch{1.25}
\begin{array}{l|l}
\arraycolsep=0pt
C \in\mathsf{Cmd} & \dea{C}(T) \\
\hline\hline
\skp & \{(\emp, \ok:\emp)\}\\
C_1\fatsemi C_2 & \{ (P\sep M,\psi) \mid (P,\varphi)\in\dea{C_1}(T), (M, \psi) \in \mathsf{seq}(\varphi,\dea{C_2}(T), \mathsf{mod}(C_2)) \} \\
C_1+ C_2 &
\arraycolsep=0pt
\begin{array}{ll}
\{ (M, \psi'_1 \oplus \psi'_2) \mid \;\;&
 (M_1, \psi_1) \in \dea{C_1}(T),  (M_2, \psi_2)\in\dea{C_2}(T), \\
& (M, \psi'_1, \psi'_2)\in \mathsf{triab}'(M_1, M_2, \psi_1, \psi_2, \mathsf{mod}(C_1,C_2)) \}
\end{array}
\\
\assume {b} & \{ (b \land \emp, \ok: b\land\emp), ( \lnot b \land\emp, \wg\top\zero \} 
\\
\assume{a} & \{ (\emp, \wg{(\ok:\emp)}a \} 
\\
\whl bC &
\arraycolsep=0pt
\begin{array}{ll}
\mathsf{lfp} ~S. &
\{(\lnot b\land\emp, \ok:\lnot b\land \emp)\} \mathop{\cup} \\
& \{ (M_1\sep M_2 \land b,  \psi) \\
& \mid (M_1,\varphi) \in \mathsf{seq}(\ok:b\land\emp, \dea C(T), \mathsf{mod}(C)), (M_2, \psi) \in \mathsf{seq}(\varphi, S, \mathsf{mod}(C)) \}
\end{array}
\\\\
c\in\mathsf{Act} & \dea{c}(T)\\
\hline\hline
x\coloneqq e & \{ (x = X \land \emp, \ok: x = e[X/x] \land \emp) \}\\
x \coloneqq \mathsf{alloc}() & \{ (x=X\land\emp, \ok: \exists Y. x \mapsto Y) \}\\
\mathsf{free}(e) & \{ (e \mapsto X, \ok:e\not\mapsto), (e\not\mapsto, \er: e\not\mapsto), (\emp\land e = \mathsf{null}, \er:\emp\land e=\mathsf{null}) \}\\
{[e_1]\leftarrow e_2} & \{ (e_1 \mapsto X, \ok: e_1 \mapsto e_2), (e_1 \not\mapsto, \er: e_1\not\mapsto), (\emp\land e_1 = \mathsf{null}, \er:\emp\land e_1=\mathsf{null})\}\\
x \leftarrow [e] & \{ (x=X \!\land\! e\mapsto Y, \ok: x = Y \!\land\! e[X/x]\mapsto Y), \!(e\not\mapsto, \er\!:\! e\not\mapsto), \!(\emp\!\land\! e\!=\! \mathsf{null}, \!\er\!:\!\emp\!\land\! e\!=\!\mathsf{null})\}\\
\mathsf{error}() & \{ (\emp, \er:\emp) \}\\
f(\vec e) & \{ (P\land \vec x = \vec X , \varphi) \mid (P, \varphi) \in \mathsf{seq}(\ok:\vec x = \vec e[\vec X/\vec x]\land \emp, T(f(\vec x)), \mathsf{mod}(f)) \}) \}
\end{array}
\]
\caption{Symbolic execution of commands and actions, all logical variables $X,Y\in\mathsf{LVar}$ are assumed to be fresh, $a\in A$ is a program weight, and $b \Coloneqq e_1 = e_2 \mid e_1 \neq e_2$ is a simple test.}
\label{fig:sym_exec}
\end{figure}

\heading{Simulating Pulse}. As recounted by \citet{il,isl,realbugs}, the scalability of IL-based analyses stems from their ability to \emph{drop disjuncts}. Analyzers such as Abductor accumulate a disjunction of possible end states at each program point. For incorrectness, it is not necessary to remember \emph{all} of these possible states; \emph{any} of them leading to an error constitutes a bug.
Since IL allows \emph{strengthening} of postconditions, those disjuncts can be soundly dropped.

We take a slightly different view, which nonetheless enables us to drop \emph{paths} in the same way. We differentiate between program choices that result from logical conditions (\ie if and while statements) vs computational effects (\ie nondeterministic or probabilistic choice). In the former cases, we generate multiple summaries in order to precisely keep track of which initial states will result in which outcomes. In the latter case, we use an outcome conjunction rather than a disjunction to join the outcomes. While we cannot drop outcomes per se, we can replace them by $\top$, ensuring that they will not be explored any further according to the definition of \textsf{seq} (\Cref{fig:seq}). 
This can be implemented by altering the definition of choice as follows.
\[
\dea{C_1 + C_2}(T) = \{ (P, \varphi \oplus \top) \mid (P,\varphi) \in \dea{C_1}(T) \} \cup \{ (P, \varphi \oplus \top) \mid (P,\varphi) \in \dea{C_2}(T) \} 
\]
Given this modification, the algorithm remains sound with respect to the same semantics (\ie \Cref{thm:soundness}), but it no longer fits with the spirit of correctness reasoning, since some of the program outcomes are left unspecified. We will see in \Cref{sec:example_realloc} how it can be used for efficient bug-finding.

Now, we have a situation where each element of $\dea{C}(T)$ stands alone as a sound summary for the program $C$, and represents a particular trace. This corresponds to taking a particular logical branch in if statements, unfolding while loops for a fixed number of iterations, and specific (nondeterministic or probabilistic) choices.
So, eliminating elements from the set will only preclude possible summaries without affecting the correctness of the existing ones, meaning that we can drop paths by, \eg unrolling loops for a certain number of iterations and only taking one path when encountering a $+$.
\citet{realbugs} refer to this as depth and width of the analysis, respectively.

\heading{Loop invariants and partial correctness.}
An alternative to bounded unrolling for deterministic and non-deterministic programs (but not probabilistic ones) is to use loop invariants.
We achieve this by altering the rule for while loops to the following.
\[
\dea{\whl eC}(T) = \{ (I, (\ok:I\land\lnot e)\vee \wg\top\zero \mid (I\land e, \ok: I) \in \dea{C}(T) \}
\]
The truth of the invariant $I$ is preserved by the loop body, therefore it must remain true \emph{if} the loop exits. However, the invariant cannot guarantee that the loop terminates, meaning that we lose the typical reachability guarantees of outcome logic and must use a weaker \emph{partial correctness} specification.
The possibility of nontermination is expressed by the disjunction with $\wg\top\zero$---either $I \land \lnot e$ holds, or the program diverges and there are no outcomes. Since reachability cannot be guaranteed, loop invariants are not suitable for true positive bug-finding, in which errors must be witnessed by an actual trace.


Finding loop invariants is generally undecidable, however techniques from abstract interpretation \cite{cousot1977abstract} can be used to find invariants by framing the problem as a fixed point computation over a finite domain, thereby guaranteeing convergence. This is the approach taken in Abductor \cite{biabjacm}, which uses the same symbolic heaps, but without outcome conjunctions. In nondeterministic programs, we can convert outcome conjunctions into disjunctions since $(\ok:P)\oplus (\ok:Q) \Rightarrow (\ok:P\vee Q)$. We therefore speculate that we can use the same technique as Abductor, although we leave a complete exploration of this idea to future work.


\heading{Nondeterministic Allocation}. Memory bugs can arise in C from failing to check whether the address returned by \textsf{malloc} is non-null. This is often modeled using nondeterminism, wherein the semantics of \textsf{malloc} returns either a valid pointer or null, nondeterministically. Our language is generic over effects, so we do not have a nondeterministic \textsf{malloc} operation, but we can add $x\coloneqq\mathsf{malloc}()$ as syntactic sugar for $(x\coloneqq \mathsf{alloc}()) + (x\coloneqq\mathsf{null})$, and derive the following semantics:
\[
\dea{x\coloneqq\mathsf{malloc}()}(T) = \{ (x = X\land\emp, (\ok:x = \mathsf{null}\land\emp) \oplus (\ok:\exists Y.x\mapsto Y)) \}
\]

\heading{Reusing Summaries.}
Though partial correctness specifications are incompatible with bug-finding, and under-approximate specifications are incompatible with verification, there is still overlap in summaries that can be used for both. Many procedures in a given codebase do not include loops or branching, so their summaries are equally valid for both correctness and incorrectness, and also in programs with different interpretations of choice.
In other cases, when a procedure does have multiple outcomes, it is easy to convert a correctness specification into several individual incorrectness ones, since the following implication is sound. We will see this in action in \Cref{sec:example_realloc}.
\[
\triple{\ok:P}{C}{\psi_1 \oplus \psi_2}
\qquad\implies\qquad
\triple{\ok:P}{C}{\psi_1 \oplus \top}
\quad\text{and}\quad
\triple{\ok:P}{C}{\psi_2 \oplus \top}
\]
%

\section{Case Studies}
\label{sec:examples}

We will now demonstrate how the symbolic execution algorithms work by examining two case studies, which show the applicability in both nondeterministic and probabilistic execution models.

\subsection{Nondeterministic Vector Reallocation}
\label{sec:example_realloc}

Our first case study involves a common error in C++ when using the \texttt{std::vector} library in which a call to \texttt{push\_back} may reallocate the vector's underlying memory buffer, invalidating any pointers to that cell that existed before the call. This was also used as a motivating example for Incorrectness Separation Logic \cite{isl}. Following their lead, we model the vector as a single pointer and we treat reallocation as nondeterministic.
The program is shown below.
\[
\begin{array}{ll}
\begin{array}{l}
\textsf{main}():\\
\quad x \leftarrow [v]\fatsemi\\
\quad \texttt{push\_back}(v)\fatsemi\\
\quad {[x]}\leftarrow 1
\end{array}
&
\qquad\qquad
\begin{array}{l}
\textsf{push\_back}(v):\\
\hspace{1em}\begin{array}{lll}
\left(\begin{array}{l}
y \leftarrow [v]\fatsemi\\
\textsf{free}(y)\fatsemi\\
y \coloneqq \textsf{alloc}()\fatsemi\\
{[v]} \leftarrow y
\end{array}
\right)
&+&
\skp
\end{array}
\end{array}
\end{array}
\]
Before we analyze the \textsf{main} procedure, we must store $\dea{\mathsf{push\_back}(v)}(T)$ in the procedure table. Since \textsf{push\_back} is a common library function, it makes sense to compute summaries describing all the outcomes, which will be reusable for both correctness and incorrectness analyses. The first step is to analyze the two nondeterministic branches, which are both simple sequential programs.
\begin{align*}
\small\arraycolsep=0pt
\dea{\begin{array}{l}y \leftarrow [v]\fatsemi\\\textsf{free}(y)\fatsemi\\ y \coloneqq \textsf{alloc}()\fatsemi\\ {[v]} \leftarrow y\end{array}}\!\!\!\!(T) &= \left\{
\begin{array}{lll}
(v\mapsto A \sep A\mapsto -, ~\;&\ok: \exists B. v\mapsto B \sep B\mapsto - \sep A\not\mapsto &) \\
(v\mapsto A \sep A\not\mapsto, &\er:v\mapsto A \sep A\not\mapsto &) \\
(v\not\mapsto, &\er: v\not\mapsto&) \\
&\cdots
\end{array}\right\}
\\
\dea{\skp}\!(T) &= \large\{ (\emp, \ok:\emp)\large\}
\end{align*}
Now, we can compose the two program branches using tri-abduction. Choosing the first summary for the first branch, we get the following tri-abduction solution.
\[
v\mapsto A \sep A\mapsto - \sep [\emp] \Dashv [v\mapsto A \sep A\mapsto -] \vDash \emp \sep [v\mapsto A \sep A\mapsto -]
\]
So, by framing $\emp$ into the first branch and $v\mapsto A \sep A\mapsto -$ into the second branch, we get a summary for \textsf{push\_back} as a whole. This can similarly be done for the other summaries of the first branch, yielding the lookup table below.
\[{\arraycolsep=1pt\small
\begin{array}{ll}
T = \left\{\begin{array}{lllclll}
{\color{purple}\langle \ok:} & {\color{purple} v\mapsto A \sep A\mapsto -} & {\color{purple}\rangle} & \mathsf{push\_back}(v) & {\color{purple}\langle (\ok:\exists B.v\mapsto B \sep B\mapsto - \sep {A\not\mapsto}} & {\color{purple} ) \oplus (\ok:v\mapsto A \sep A\mapsto -} & {\color{purple} )\rangle}
\\
{\color{purple}\langle \ok:} & {\color{purple} v\mapsto A \sep {A\not\mapsto}} & {\color{purple}\rangle} & \mathsf{push\_back}(v) & {\color{purple}\langle (\er: v\mapsto A \sep {A\not\mapsto}} & {\color{purple} ) \oplus (\ok:v\mapsto A \sep {A\not\mapsto}} & {\color{purple} )\rangle}
\\
{\color{purple}\langle \ok:} & {\color{purple} {v\not\mapsto}} & {\color{purple}\rangle} & \mathsf{push\_back}(v) & {\color{purple}\langle (\er: {v\not\mapsto}} & {\color{purple} ) \oplus (\ok:{v\not\mapsto}} & {\color{purple})\rangle}
\\
&&& \cdots
\end{array}\right\}
\end{array}}
\]
The first summary tells us that \textsf{push\_back} may reallocate the underlying buffer, in which case the original pointer $A$ will become deallocated. The next two summaries describe ways in which \textsf{push\_back} itself can fail. We will focus on using the first summary to show how \textsf{main} will fail if the buffer gets reallocated.
We analyze \textsf{main} in an under-approximate fashion in order to look for bugs. The first step is to compute summaries for the first two commands of main. The load on the first line has three summaries according to \Cref{fig:sym_exec}, we select the first one in which $v$ is allocated.
\[\footnotesize
\triple{\ok: x = X \land v\mapsto Y}{x\leftarrow [v]}{\ok: x=Y \land v\mapsto Y} \in \dea{x\leftarrow[v]}(T)
\]
The procedure call on the second line requires us to look up summaries in $T$. We select the first one, but we will use an under-approximate version of it so as to explore only one of the paths
\[\footnotesize
\triple{\ok:v\mapsto A\sep A\mapsto-}{\mathsf{push\_back}(v)}{(\ok: \exists B.v\mapsto B \sep B \mapsto -\sep{A\not\mapsto}) \oplus \top}
\]
Now, we use \textsf{seq} to sequentially compose these summaries, which involves bi-abducing the postcondition of $x\leftarrow [v]$ with the precondition of $\mathsf{push\_back}(v)$.
\[
x=Y \land v\mapsto Y \sep [A=Y \sep x\mapsto -] \vDash v\mapsto A\sep A\mapsto- \sep [\emp]
\]
So, after renaming, we get the following summary for the composed program:
\[\footnotesize
\triple{\ok:v\mapsto x\sep x\mapsto-}{x\leftarrow[v] \fatsemi \mathsf{push\_back}(v)}{(\ok: \exists B.v\mapsto B \sep B \mapsto -\sep{x\not\mapsto}) \oplus \top}
\]
Now, observe that the postcondition above is only compatible with one of the summaries in \Cref{fig:sym_exec} for the last line of the program. Since $x$ is deallocated in the only specified outcome, the write into $x$ must fail.
Using bi-abduction again, we can construct the following description of the error.
\[\footnotesize
\triple{\ok:v\mapsto x\sep x\mapsto-}{x\leftarrow[v] \fatsemi \mathsf{push\_back}(v)\fatsemi [x] \leftarrow 1}{(\er: \exists B.v\mapsto B \sep B \mapsto -\sep{x\not\mapsto}) \oplus \top}
\]

\subsection{Consensus in Distributed Computing}


In this case study, we use OSL to lower bound reliability rates of a distributed system. In the basic consensus algorithm, shown below,
each of three processes broadcasts a value $v_i$ by storing it in a pointer $p_i$, and consensus is reached if any two of these processes broadcasted the same value. To model unreliability of the network, the broadcast procedure fails with probability 1\%. We would like to know how likely we are to reach consensus, given that two of the processes agree.
\[
\begin{array}{l}
\mathsf{main}(): \\
\quad p_1 \coloneqq \mathsf{alloc}() \fatsemi \mathsf{broadcast}(v_1, p_1) \fatsemi \\
\quad p_2 \coloneqq \mathsf{alloc}() \fatsemi \mathsf{broadcast}(v_2, p_2) \fatsemi \\
\quad p_3 \coloneqq \mathsf{alloc}() \fatsemi \mathsf{broadcast}(v_3, p_3) \fatsemi \\
\quad v \coloneqq \mathsf{alloc}() \fatsemi \\
\quad \mathsf{decide}(p_1,p_2,p_3, v)
\\\\
\mathsf{broadcast}(v, p): \\
\quad
\left([p] \leftarrow v\right)
\oplus_{0.99}
\mathsf{error}()
\end{array}
\qquad\qquad
\begin{array}{l}
\mathsf{decide}(p_1, p_2, p_3, v): \\
\quad x_1 \leftarrow [p_1] \fatsemi 
 x_2 \leftarrow [p_2] \fatsemi 
 x_3 \leftarrow [p_3] \fatsemi \\
\quad \mathsf{if}~x_1 = x_2~\mathsf{then} \\
\quad\quad [v] \leftarrow x_1 \\
\quad\mathsf{else\ if}~x_1 = x_3~\mathsf{then} \\
\quad\quad [v] \leftarrow x_1 \\
\quad\mathsf{else\ if}~x_2 = x_3~\mathsf{then} \\
\quad\quad [v] \leftarrow x_2 \\
\quad\mathsf{else}~\skp
\end{array}
\]
We begin by examining the summary table. The \textsf{broadcast} procedure has two outcomes corresponding to whether or not the communication went through. Though there are many summaries for \textsf{decide}, we show only the one in which the values sent on $p_1$ and $p_2$ are equal.
\[\small
T = \left\{\begin{array}{c}
\triple{\ok: p \mapsto - \land v = V}{\mathsf{broadcast}(v,p)}{(\ok: p\mapsto V \land v = V) \oplus_{0.99} (\er:p \mapsto - \land v = V)} \\
\triple{\ok: p_1 \mapsto V_1 \sep p_2 \mapsto V_2 \sep p_3\mapsto V_3 \sep v\mapsto -\land V_1 = V_2}{\mathsf{decide}(p_1,p_2,p_3,v)}{\ok: v\mapsto V_1\sep \cdots} \\
\cdots
\end{array}\right\}
\]
We again use the single-path algorithm to analyze \textsf{main}, but this time we are interested only in the successfully terminating cases. We get the following summaries for each of the first three lines.
\[\mathsmaller{
\triple{\ok: v_i = V_i}{p_i \coloneqq \mathsf{alloc}() \fatsemi \mathsf{broadcast}(v_i, p_i)}{(\ok: v_i = V_i \land p_i \mapsto V_i) \oplus_{0.99} \top}}
\]
These three summaries can be combined---along with the simplification that $(\varphi \oplus_a \top) \oplus_b \top$ implies $\varphi \oplus_{a\cdot b} \top$---to obtain the following
assertion just before the call to \textsf{decide}
\[
(\ok:v_1 =V_1 \land v_2=V_2\land v_3 = V_3 \land p_1\mapsto V_1 \sep p_2 \mapsto V_2 \sep p_3\mapsto V_3) \oplus_{0.99^3} \top
\]
Now, we can bi-abduce the first outcome above with the precondition for \textsf{decide} shown in $T$. This sends $V_1 = V_2$ backwards into the precondition, and we get an overall summary for \textsf{main} telling us that if $v_1 = v_2$, then the protocol will reach consensus ($v\mapsto V_1$) with probability \emph{at least} 97\%.
\[
\triple{\ok:v_1 = V_1 \land v_2 = V_2 \land v_3 = V_3 \land V_1 = V_2}{\mathsf{main}()}{(\ok: v \mapsto V_1 \sep \cdots) \oplus_{0.9703} \top}
\]

\section{Related Work}
\label{sec:related}

\noindent\textbf{\textit{Separation Logic and the Frame Rule.}} While many variants of separation logic exist, OSL is the only one that supports alternative effects as well as both may and must properties. The soundness of the frame rule in both partial and total correctness separation logic relies on nondeterminism and \emph{must} properties \cite{yang2002semantic,calcagno2007local}, so those logics are suitable only for correctness in nondeterministic languages. Some work has been done to drop the nondeterminism requirement, for example \citet{baktiev2006} proved that the frame rule is sound in a deterministic language if heap assertions are unaffected by address permutation, however this requires languages without address arithmetic.
Similarly, \citet{tatsuta2009completeness} created a deterministic separation logic, but the frame rule only applies to programs that do not allocate memory.

Incorrectness Separation Logic (ISL) has a frame rule that is compatible with may properties, but not must properties \cite{isl}. Unlike OSL, ISL has no restrictions on assertions about memory allocation; it is possible to prove that a particular address $\ell$ is returned. If, after applying the frame rule, $\ell$ is already in use, then the postcondition of the ISL triple becomes false, making the triple vacuous (whereas in regular separation logic and OSL, a false \emph{pre}condition makes triples vacuous). However, whereas ISL has slightly more power to express may properties, ISL cannot express must (\ie correctness) properties and is also specialized only to nondeterminism.

Exact Separation Logic (ESL) \cite{maksimovi_c2023exact} combines the semantics of SL and ISL to create a logic that is suitable for both correctness and incorrectness, while inheriting the limitations of both. That is, ESL can only express properties that are \emph{both} may and must properties, meaning that it cannot under-approximate by dropping paths or over-approximate using loop invariants.

Higher-order separation logics \cite{birkedal2007relational,birkedal2008simple} including Iris \cite{iris1} bake the frame rule into the definition of the logic's triples, making the proof of soundness of the frame rule trivial without any additional assumptions. More precisely, triples $\hoare PCQ$ are valid iff for any frame $F$, if $\sigma\vDash P\sep F$, then $\tau\vDash Q\sep F$ for all $\tau \in \de{C}(\sigma)$. As a tradeoff, frame baking introduces additional proof obligations whenever constructing a triple (\ie it must be shown that every inference rule is \emph{frame preserving}). We could have used this approach in OSL, but there is no free lunch; it would have involved moving the inductive cases of \Aref{lem:frameprop} into \Cref{thm:soundness}.

We chose not to do this,
as we preferred to keep the proof of the frame rule self contained. The choice is mostly orthogonal to the power of the logic, and somewhat philosophical: we proved that programs are local \emph{actions} whereas the frame baking approach shows that you can only \emph{express} local properties about commands. The exception to this is memory allocation, which is not a local action, so we baked the expressivity limitation into the triple semantics. We felt this was a good tradeoff, since we baked in a weaker property, which was just strong enough to complete the proof.

In draft papers published after our work on Outcome Separation Logic, Sufficient Incorrectness Logic (SIL) \cite{ascari2023sufficient} and Backwards Under-Approximate triples (BUA) \cite{raad2023compositional} provide frame rules for triples based on \emph{may} properties. In both cases, the soundness result is a bit weaker than our own (\Cref{thm:frame}). That is, whereas we only require in the premise that the triple $\triple{\varphi}{C}{\psi}$ is \emph{semantically} valid, \citet{ascari2023sufficient,raad2023compositional} require that it is \emph{derivable}. While this trick allows them to avoid baking the allocation restriction into the triple validity, it means that the soundness of the frame rule depends on the other inference rules in the proof system---a property that we wished to avoid. In addition, the inference rules for allocation in those logics are not tight, meaning that the logics are incomplete. Adding a tighter rule (with the ability to witness \emph{which} address was nondeterministically chosen) would make the frame rule unsound, meaning that completeness of those logics is incompatible with the frame rule.

\heading{Probabilistic Separation Logics.}
While some work has been done to incorporate probabilistic reasoning into separation logic, these logics differ from OSL in the scope and applicability of the frame rule, and have not been shown to be compatible with bi-abduction.
Quantitative Separation Logic \cite{qsl} and its concurrent counterpart \cite{fesefeldt2022towards} use weakest pre-expectation \cite{wpe} style predicate transformers to derive expected values in probabilistic pointer programs. They rely on demonic nondeterminism for allocation---the expected value is lower bounded over all possible allocated addresses---and the frame rule gives a lower bound, whereas OSL is used for propositional reasoning, with each outcome having a likelihood.

Polaris \cite{polaris} incorporates probabilistic reasoning into Iris \cite{iris1} and is also used to bound expected values using refinements inspired by probabilistic relational Hoare Logic \cite{barthe2015relational}. Polaris is limited to programs that terminate in finitely many steps and the program logic itself is only used to relate probabilistic programs to each other, whereas the quantitative reasoning about expected values must be done externally to the program logic.



Probabilistic Separation Logic (PSL) \cite{psl} and subsequent works \cite{bao2021bunched,bao2022separation,li2023lilac} use an alternative model of separation to characterize probabilistic independence and related probability theoretic properties.
Doing so provides a compositional way to reason about probabilistic programs, though this work is orthogonal to our own as it does not deal with heaps.

\heading{Pulse and Incorrectness Separation Logic.}
As recounted by \citet[\S5]{isl}, Pulse uses under-approximation in four ways in order to achieve scalability:
\begin{enumerate}
\item Pulse takes advantage of the IL semantics in order to explore only one path at a time when the program execution branches, and to unroll loops for a bounded number of iterations.
\item Pulse elects to not consider cases in which memory is re-allocated.
\item Pulse uses under-approximate specifications for some library functions.
\item Pulse's bi-abductive inference assumes that pointers are not aliased unless explicitly stated. 
\end{enumerate}
We have shown how (1) is achieved using OSL in the single path algorithm, (2) and (4) are standard assumptions in bi-abduction \cite{biab,biabjacm} (which we also use), and (3) is a corollary to (1), since the ability to drop paths opens the possibility for under-approximate procedure summaries.

Pulse does not support inductive predicates (\eg list segments), so it uses a simplified bi-abduction procedure capable of handling more types of pure assertions. This results in \emph{exact} bi-abduction solutions; the inferred $M$ and $F$ satisfy $P\sep M \Dashv \vDash Q\sep F$. As a result, Pulse does not use consequences to---since it is based on IL---strengthen the postcondition, meaning that the resulting specs can be interpreted as both ISL and OSL triples and that our algorithm from \Cref{sec:symexec} does accurately model Pulse. This finding was later confirmed by \citet{raad2023compositional}.


\heading{Unified Metatheory with Effects.}
Our algebraic program semantics is similar to Weighted Programming \cite{batz2022weighted}. Whereas we use an algebraic interpretation of choice to represent multiple types of (executable) program semantics, the goal of weighted programming is to specify mathematical models and find solutions to optimization problems via static analysis.
\citet{c_irstea2013branching,c_irstea2014coalgebraic} also used partial semirings to represent properties of program branching in coalgebraic logics. 

\citet{delaware2013meta,delaware2013modular} developed 3MT, a unified framework in the Coq proof assistant for mechanizing metatheoretic proofs---such as type soundness--about languages with monadic effects. Our motivations are similar, but with the goal of developing program logics.

\heading{Unifying correctness and incorrectness.}
In the time since \citet{il} introduced Incorrectness Logic, a considerable amount of research attention has turned to unifying the theories of correctness and incorrectness.
Exact Separation Logic is one such approach, which combines the semantics of standard separation logic and Incorrectness Separation Logic to support correctness and incorrectness within a single program logic \cite{maksimovi_c2023exact}. It is used within the Gillian symbolic execution system \cite{fragoso_santos2020gillian}. The idea behind Exact Separation Logic is to create an analysis that is as precise as possible, up until it reaches a point where it must make the choice to move into correctness or incorrectness mode. However, the exact nature means that it does not have the flexibility to, \eg reason in abstract domains.

Local Completeness Logic is a related attempt to combine abstract interpretation with true bug-finding \cite{Bruni2021ALF,brunijacm}. It uses a standard over-approximate abstract domain to analyze programs, combined with Incorrectness Logic to ensure that the over-approximation has not gotten too coarse so as to include false alarms. However, if a bug depends on nondeterminism (such as the malloc example in \Cref{sec:overview}), then the over-approximate abstraction will preclude the analysis from \emph{dropping paths} to only explore the trace where the error occurs, an ability that was identified by \citet{il} as crucial to large scale bug-finding.

Though the goals of Exact Separation Logic and Local Completeness Logic are similar to our own, we take a substantially different approach. Combining Hoare and Incorrectness Logic inevitably yields something that is \emph{less than the sum of its parts}; the resulting logic inherits all the limitations of both, compromising the ability to \emph{approximate} the program behavior by either abstracting the current state or dropping paths. Instead of combining two existing logics, Outcome Separation Logic represents fundamentally new ideas and supports all the reasoning principles needed for efficient correctness and incorrectness analysis.

Like Exact Separation Logic and Local Completeness Logic, Outcome Separation Logic guarantees reachability for true-bug finding, while also covering all of the possible end states for correctness verification. But, unlike the aforementioned logics, it also supports weakening via standard forward consequences, so abstraction of the current state is always possible.
Abstraction is crucial to the scalability of techniques such as \emph{shape analysis} \cite{rinetzky2001interprocedural,berdine2007shape}, in which the analyzer attempts to prove memory safety while only tracking the \emph{shapes} of data structures on the heap, not the data or length (like the list segment predicate from \Cref{sec:symexec}). Outcome Separation Logic makes shape analysis compatible with true bug finding for the first time. More generally, a key benefit of Outcome Separation Logic (over Exact Separation Logic and Local Completeness Logic) is that it can identify true bugs while retaining much less information, potentially leading to better scalability.

Hyper Hoare Logic \cite{hyperhoare} was designed to prove {\em hyperproperties} about programs by reasoning about multiple executions simultaneously. Examples of such properties include noninterference---guaranteeing that a program does not leak sensitive information.  The design of Hyper Hoare Logic takes a similar approach to standard Outcome Logic \cite{outcome,zilberstein2024relatively}, in which both correctness and incorrectness can be proven for a nondeterministic language. In fact, Hyper Hoare Logic has the same semantics as the nondeterministic instance of OL, though OL can also cover other effects such as probabilistic computation. Putting aside the semantic similarities between Outcome Logic and Hyper Hoare Logic, the goal of this particular paper was to augment OL with heaps, separation, {local reasoning} (via the frame rule), and bi-abduction---none of which are currently supported by Hyper Hoare Logic.

\section{Conclusion}
\label{sec:conclusion}

Infer---based on separation logic and bi-abduction---is capable of efficiently analyzing industrial scale codebases, substantiating the idea that compositionality translates to real-world scalability \cite{infer}. But the deployment and integration of Infer into real-world engineering workflows also surfaced that proving the \emph{absence} of bugs is somewhat of a red herring---software has bugs and sound logical theories are needed to find them \cite{realbugs}.

Incorrectness Logic has shown that it is not only \emph{possible} to formulate a theory for bug-finding, but it is in fact advantageous from a program analysis view; static analyzers can take certain liberties in searching for bugs that are not valid for correctness verification, such as dropping program paths for added efficiency. The downside is that the IL semantics is incompatible with correctness analysis, therefore separate implementations and procedure summaries must be used.

In OSL, we seek to get the best of both worlds. 
As \citet[\S6]{isl} put it, ``aiming for under-approximate results rather than exact ones gives additional flexibility to the analysis designer, just as aiming for over-approximate rather than exact results does for correctness tools.'' 
The fact that OSL supports over-approximation in the traditional sense as well as under-approximation in the sense of Pulse
invites the reuse of tools between the two, while still enabling specialized techniques when needed (\ie loop invariants for correctness, dropping paths for incorrectness). In addition, OSL extends bi-abduction to programs with effects such as randomization for the first time.

OSL is not a simple extension of separation logic; it is designed from the ground up with new assumptions, since the properties that make the standard SL frame rule sound (nondeterministic allocation, must properties, and fault avoidance) are not suitable for reasoning about incorrectness and effects. The addition of tri-abduction to our symbolic execution algorithms also means that we can analyze more programs with control flow branching compared to Abductor.
The power and flexibility of OSL makes it a strong foundation for analyzing pointer-programs with effects.

\section*{Acknowledgements}

This work was supported in part by an Amazon Research Award, a Royal Society Wolfson award, and ERC grant Autoprobe (grant agreement 101002697). 


\bibliographystyle{ACM-Reference-Format}
\bibliography{biabduction}

\ifx\extended\undefined\else
\newpage
\appendix
\allowdisplaybreaks

{\noindent \huge\bfseries\sffamily Appendix}

\section{Program Semantics}
\label{app:semantics}

We begin by providing definitions for natural orders, completeness and Scott continuity, which are mentioned in \Cref{def:ocalg}.

\begin{definition}[Natural Ordering]
The \emph{natural order} of a semiring $\langle A, +, \cdot, \zero, \one\rangle$ is defined to be $a\le b$ iff $\exists a'\in A.  a+a' = b$. A semiring is \emph{naturally ordered} if $\le$ is a partial order.
Note that $\le$ is reflexive and transitive by the semiring laws, so it remains only to show that it is anti-symmetric. That is, if $a\le b$ and $b\le a$, then $a=b$.
\end{definition}

\begin{definition}[Complete Partial Semiring]
A partial semiring $\langle A, +, \cdot, \zero, \one\rangle$ is complete if there is a sum operator $\sum_{i\in I}x_i$ such that the following properties hold:
\begin{enumerate}
\item If $I = \{i_1, \ldots, i_n\}$ is finite, then $\sum_{i\in I} x_i =  x_{i_1} + \cdots + x_{i_n}$
\item If $\sum_{i\in I} x_i$ is defined, then $b\cdot \sum_{i\in I} x_i = \sum_{i\in I} b\cdot x_i$ and $(\sum_{i\in I} x_i)\cdot b = \sum_{i\in I} x_i\cdot b$ for any $b\in A$
\item Let $(J_k)_{k\in K}$ be any family of nonempty disjoint subsets of $I$, so $I = \bigcup_{k\in K} J_k$ and $J_k \cap J_\ell = \emptyset$ if $k\neq \ell$. Then, $\sum_{k \in K}\sum_{j\in J_k} x_j = \sum_{i\in I}x_i$.
\end{enumerate}
This definition is adapted from \citet[Chapter 3]{golan2003semirings}.
\end{definition}

\begin{definition}[Scott Continuity]
Consider a semiring $\langle A, +, \cdot, \zero, \one\rangle$ with partial order $\le$.
A function (or partial function) $f\colon A\to A$ is Scott continuous if for any directed set $D\subseteq A$ (where all pairs of elements in $D$ have a supremum), $\sup_{a\in D}f(a) = f(\sup D)$.
A semiring is Scott continuous if $\sum$ and $\cdot$ are Scott continuous in both arguments~\cite{semirings}.
\end{definition}

\noindent In addition, we recall the definition of normalizable, which is stated as property (3) in \Cref{def:ocalg}.

\begin{definition}[Normalizable]\label{def:normalize}
A semiring $\langle A, +, \cdot, \zero, \one\rangle$ is normalizable if for any well defined sum $\sum_{i\in I}a_i$, there exists $(b_i)_{i\in I}$ such that $\sum_{i\in I}b_i = \one$ and $a_i = (\sum_{i\in I}a_i) \cdot b_i$ for every $i\in I$.
\end{definition}

\noindent Normalizability is needed to show that relations lifted by the $\mathcal W_{\mathcal A}$ functor have several properties, for example that lifting is well behaved with respect to sums and scalar multiplication (\Cref{lem:plusrel,lem:scalerel}). We also show that normalization implies the weaker row-column property below:

\begin{definition}[Row-Column Property]\label{def:rowcol}
A monoid $\langle A, +, \zero\rangle$ has the row-column property if for any two sequences of elements $(a_i)_{i\in I}$ and $(b_j)_{j\in J}$, if $\sum_{i\in I}a_i = \sum_{j\in J} b_j$, then there exist $(u_k)_{k\in I\times J}$ such that $\sum_{j\in J} u_{(i,j)} = a_i$ for all $i\in I$ and $\sum_{i\in I} u_{(i,j)} = b_j$ for all $j\in J$.
\end{definition}

\begin{lemma}\label{lem:normrowcol}
Let $\langle A, +, \cdot, \zero, \one\rangle$ be a normalizable semiring (\Cref{def:normalize}), then $\langle A, +, \zero\rangle$ has the row-column property (\Cref{def:rowcol}).
\end{lemma}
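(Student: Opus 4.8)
The plan is to build the coupling table $(u_{(i,j)})$ directly out of two applications of normalizability, one to each of the two sums. Write $S = \sum_{i\in I} a_i = \sum_{j\in J} b_j$ for the common value. Applying \Cref{def:normalize} to $\sum_{i\in I} a_i$ yields a family $(c_i)_{i\in I}$ with $\sum_{i\in I} c_i = \one$ and $a_i = S\cdot c_i$ for every $i$; applying it again to $\sum_{j\in J} b_j$ yields $(d_j)_{j\in J}$ with $\sum_{j\in J} d_j = \one$ and $b_j = S\cdot d_j$. Since multiplication is total in a partial semiring, I can then simply set $u_{(i,j)} = S\cdot c_i\cdot d_j$, which is a well-defined element of $A$ for each pair $(i,j)$.

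It remains to check the two marginal conditions. For the rows, I would compute $\sum_{j\in J} u_{(i,j)} = \sum_{j\in J} (S\cdot c_i)\cdot d_j$, and use the left-multiplication clause of completeness (property (2) of the completeness requirement in \Cref{def:ocalg}) to pull the constant $S\cdot c_i$ out: since $\sum_{j\in J} d_j$ is defined (it equals $\one$), the sum $\sum_{j\in J}(S\cdot c_i)\cdot d_j$ is defined and equals $(S\cdot c_i)\cdot\sum_{j\in J} d_j = (S\cdot c_i)\cdot\one = S\cdot c_i = a_i$, as required. For the columns the computation is slightly longer, because the scalar $d_j$ sits on the right. I would first establish the auxiliary identity $\sum_{i\in I} S\cdot c_i = S\cdot\sum_{i\in I} c_i = S\cdot\one = S$ (again by completeness and $\sum_{i\in I} c_i = \one$), and then apply the right-multiplication clause of completeness with scalar $d_j$ to obtain $\sum_{i\in I} u_{(i,j)} = \sum_{i\in I}(S\cdot c_i)\cdot d_j = \big(\sum_{i\in I} S\cdot c_i\big)\cdot d_j = S\cdot d_j = b_j$.

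I expect the only real subtlety, and hence the main obstacle, to be the partiality of $+$: because the monoid operation is partial, I must ensure that every sum I write down is actually defined before invoking distributivity. This is precisely why the construction routes everything through the normalized families $(c_i)$ and $(d_j)$, whose sums are $\one$ and therefore certainly defined; the completeness axioms of \Cref{def:ocalg} are all conditional on the inner sum being defined, and in each step above that hypothesis is discharged by reducing the inner sum to $\one$ (or to $S$, which is defined by assumption). Once defined-ness is secured, the remaining manipulations are routine applications of the distributive and completeness laws, so no further difficulty arises.
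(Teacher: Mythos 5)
Your proposal is correct and follows essentially the same route as the paper's proof: normalize both sums to obtain families summing to $\one$, define the coupling table $u_{(i,j)} = S\cdot c_i\cdot d_j$, and verify the marginals by pulling the scalars through the sums via distributivity. Your explicit tracking of definedness at each step is slightly more careful than the paper's write-up, but the construction and both marginal computations are identical.
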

\begin{proof}
Take any sequences $(a_i)_{i\in I}$ and $(b_j)_{j\in J}$ such that $\sum_{i\in I}a_i = \sum_{j\in J} b_j$ and let $x = \sum_{i\in I}a_i = \sum_{j\in J} b_j$. Now, since the semiring is normalizable, there must be $(a'_i)_{i\in I}$ and $(b'_j)_{j\in J}$ such that $\sum_{i\in I}a'_i = \sum_{j\in J} b'_j = \one$ and $a_i = x\cdot a'_i$ for all $i\in I$ and $b_j = x\cdot b'_j$ for all $j\in J$. Let $u_{(i,j)} = x \cdot a'_i \cdot b'_j$. Now we have:
\[
\sum_{i\in I} u_{(i,j)} = \sum_{i\in I} x \cdot a'_i \cdot b'_j = x\cdot (\sum_{i\in I}a'_i)\cdot b'_j = x\cdot\one\cdot b'_j = x\cdot b'_j = b_j
\]
And
\[
\sum_{j\in J} u_{(i,j)} = \sum_{j\in J} x \cdot a'_i \cdot b'_j = x\cdot a'_i\cdot (\sum_{j\in J} b'_j) = x\cdot a'_i\cdot\one = x\cdot a'_i = a_i
\]
\end{proof}

In fact, it is known that if some monoid $A$ has the row-column property, then the functor $\mathcal F_A$ of finitely supported maps into $A$ preserves weak pullbacks~\cite{gumm2009copower,moss1999coalgebraic,klin2009structural}. It has also been shown that relations lifted by some functor preserve composition iff the functor preserves weak pullbacks~\cite{kurz2016relation}. Combining these two results, we get that relations lifted by $\mathcal F_A$ preserve composition if $A$ has the row-column property.

In our case, the maps have countable support rather than finite, so the aforementioned results do not immediately apply, however they do provide evidence that the row-column property is a reasonable requirement. However, we require the stronger normalization property since lifted relations also must also be well behaved with respect to scalar multiplication (\Cref{lem:scalerel}).

\begin{remark}
In \Cref{def:ocalg} we require that $\sup(A) = \one$, which limits the models to be contractive maps (the weight of the computation can only decrease as the program executes). This rules out, for example, real-valued multisets where the weights are elements of the semiring $\langle \mathbb R^\infty, +, \cdot, 0,1\rangle$. Still, there are more models than the ones we have presented including the tropical semiring $\langle \mathbb R_0^\infty, \min, +, \infty, 0 \rangle$, which can be used to encode optimization problems~\cite{batz2022weighted}.

The fact that $\sup(A)=\one$ is used in order to define $\repl$ as a lifted relation (\Cref{app:replacement}), and it is also used in the proof of \Cref{lem:sequencing}.
It may be possible to relax this constraint, however the more general version is not needed for the models we explore in this paper.
\end{remark}

\subsection{Proofs}


\begin{lemma}[Scaled Sums]\label{lem:scaledsums}
If $\sum_{i\in I} x_i$ is defined, then $\sum_{i\in I} x_i\cdot y_i$ is defined for any $(y_i)_{i\in I}$.
\end{lemma}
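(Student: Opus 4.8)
The plan is to exploit the defining property $\sup(A) = \one$ of an Outcome Algebra, which forces every weight to be bounded by $\one$, so that multiplying $x_i$ by $y_i$ can only shrink it in the natural order; definedness of a sum should then be inherited by any termwise-smaller family.

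First I would establish that $x_i \cdot y_i \le x_i$ for every $i$. Since $\sup(A) = \one$, each $y_i \le \one$, so by the definition of the natural order there is some $y_i'$ with $y_i + y_i' = \one$. Because this sum is defined (it equals $\one$), the partial distributive law applies, giving $x_i = x_i \cdot \one = x_i \cdot (y_i + y_i') = x_i\cdot y_i + x_i \cdot y_i'$. Hence $x_i\cdot y_i + x_i\cdot y_i' = x_i$ is defined, and in particular $x_i \cdot y_i \le x_i$ with witness $x_i \cdot y_i'$.

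Next I would promote this termwise domination to the infinite sum using the completeness axioms, reading the partition law (property~(3) of the completeness definition) as a Kleene equality, i.e. one side is defined iff the other is. Writing $z_i = x_i\cdot y_i$ and $w_i = x_i \cdot y_i'$, I index a family $u$ over $I\times\{0,1\}$ by $u_{(i,0)} = z_i$ and $u_{(i,1)} = w_i$. The finite-sum axiom gives $\sum_{b\in\{0,1\}} u_{(i,b)} = z_i + w_i = x_i$. Partitioning $I\times\{0,1\}$ along the first coordinate and applying the partition law yields $\sum_{(i,b)} u_{(i,b)} = \sum_{i\in I} x_i$, so the left-hand sum is defined because $\sum_{i\in I} x_i$ is. Partitioning the same index set along the second coordinate and applying the partition law again shows $\sum_{(i,b)}u_{(i,b)} = \big(\sum_{i\in I} z_i\big) + \big(\sum_{i\in I} w_i\big)$, and since the left side is defined, each inner sum --- in particular $\sum_{i\in I} z_i = \sum_{i\in I} x_i\cdot y_i$ --- must be defined, which is the claim.

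The main obstacle I anticipate is the careful bookkeeping of partiality: I must ensure that every invocation of distributivity and of the partition law is over a genuinely defined sum, and that the partition law is understood so that definedness propagates between its two sides. Once that reading is fixed, the argument is purely a matter of reindexing over $I\times\{0,1\}$, and the only mathematical content is the inequality $x_i\cdot y_i \le x_i$ that follows from $\sup(A)=\one$.
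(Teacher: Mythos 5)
Your proof is correct and takes essentially the same route as the paper's: both pick complements $y'_i$ with $y_i + y'_i = \sup(A)$, split each term as $x_i\cdot y_i + x_i\cdot y'_i$ by distributivity, and then regroup the resulting doubly-indexed sum to extract definedness of $\sum_{i\in I} x_i\cdot y_i$. The only cosmetic difference is that the paper starts from the defined product $\big(\sum_{i\in I} x_i\big)\cdot\sup(A)$ and pushes the scalar through the infinite sum via property (2) of completeness (so it needs only that $\sup(A)$ exists, not that it equals $\one$), whereas you use $x_i\cdot\one = x_i$ pointwise and rely solely on the partition axiom --- read, as you correctly flag, as a Kleene equality, which is also how the paper's own final regrouping step implicitly uses it.
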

\begin{proof}
Since $\sup(A)$ exists, then $y_i\le\sup(A)$ for each $i\in I$. By the definition of $\le$, there exist $(y'_i)_{i\in I}$ such that $y_i + y'_i = \sup(A)$. We also know that $(\sum_{i\in I}x_i)\cdot\sup(A)$ exists, since $A$ is closed under multiplication. Now, we have:
\begin{align*}
(\smashoperator{\sum_{i\in I}} x_i)\cdot\sup(A)
&= \smashoperator{\sum_{i\in I}}x_i\cdot (y_i + y'_i) \\
\intertext{And by the semiring laws:}
&= \smashoperator{\sum_{i\in I}}x_i\cdot y_i + x_i\cdot y'_i \\
&= \smashoperator{\sum_{i\in I}} x_i\cdot y_i + \smashoperator{\sum_{i\in I}}x_i\cdot y'_i
\end{align*}
So, clearly the subexpression ${\sum_{i\in I}} x_i\cdot y_i$ is defined.
\end{proof}

\begin{lemma}[Totality of Bind]\label{lem:bindtotal}
The $\bind$ function defined in \Cref{def:ocmonad} is a total function (this is not immediate, since it uses partial addition).
\end{lemma}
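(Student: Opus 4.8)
The plan is to show that for every $m\in\mathcal W_{\mathcal A}(X)$ and every $f\colon X\to\mathcal W_{\mathcal A}(Y)$, the candidate $\bind(m,f)$ from \Cref{def:ocmonad} is a genuine element of $\mathcal W_{\mathcal A}(Y)$. Three things must be verified: (i) for each $t\in Y$ the pointwise sum $\bind(m,f)(t)=\sum_{s\in\supp(m)}m(s)\cdot f(s)(t)$ is defined; (ii) the cumulative sum $\sum_{t\in\supp(\bind(m,f))}\bind(m,f)(t)$ is defined, so that $\bind(m,f)\in\mathcal W_{\mathcal A}(Y)$; and (iii) $\bind(m,f)$ has countable support, which is immediate since $\supp(\bind(m,f))\subseteq\bigcup_{s\in\supp(m)}\supp(f(s))$ is a countable union of countable sets. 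If $\supp(m)=\emptyset$ then $m=\zero$ and $\bind(m,f)=\zero$, so assume throughout that $\supp(m)\neq\emptyset$.

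The key device is the doubly-indexed family $x_{(s,t)}=m(s)\cdot f(s)(t)$ ranging over $I=\supp(m)\times Y$. First I would establish that the flat sum $\sum_{(s,t)\in I}x_{(s,t)}$ is defined by grouping the index set along its first coordinate, i.e.\ along the blocks $J_s=\{s\}\times Y$. For fixed $s$ the inner sum $\sum_{t\in Y}f(s)(t)$ equals $|f(s)|$ and is defined because $f(s)\in\mathcal W_{\mathcal A}(Y)$; the scalar-distributivity clause of completeness (multiplication commutes with a defined sum) then gives $\sum_{t\in Y}m(s)\cdot f(s)(t)=m(s)\cdot|f(s)|$. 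Summing over $s$, the family $\sum_{s\in\supp(m)}m(s)\cdot|f(s)|$ is defined by \Cref{lem:scaledsums}: since $m\in\mathcal W_{\mathcal A}(X)$ makes $\sum_{s\in\supp(m)}m(s)$ defined, scaling each term by the arbitrary weights $|f(s)|$ preserves definedness. The partition (associativity) clause of the complete partial semiring, applied to the blocks $(J_s)_{s\in\supp(m)}$, then relates the grouped sum and the flat sum as simultaneously defined and equal, so $\sum_{(s,t)\in I}x_{(s,t)}$ is defined.

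With the flat sum in hand, I would regroup $I$ the other way, along the blocks $K_t=\supp(m)\times\{t\}$ (each nonempty since $\supp(m)\neq\emptyset$). The same partition clause yields $\sum_{t\in Y}\bigl(\sum_{s\in\supp(m)}x_{(s,t)}\bigr)=\sum_{(s,t)\in I}x_{(s,t)}$, both sides simultaneously defined. Since the right-hand side is defined, each block sum $\sum_{s\in\supp(m)}m(s)\cdot f(s)(t)=\bind(m,f)(t)$ is defined, settling (i); and the outer sum over $t$ is defined as well, which after discarding the $\zero$ summands lying outside $\supp(\bind(m,f))$ settles (ii).

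The main obstacle is precisely that $+$ is only \emph{partial}, so definedness of a sum is never automatic and must be tracked at each regrouping. The whole argument hinges on invoking the partition/associativity axiom of the \emph{complete} partial semiring (\Cref{def:ocalg}) in both directions---once to lift definedness from the column-grouped sum up to the flat double sum, and once to push it back down to the row-grouped sum---while \Cref{lem:scaledsums} supplies the one genuinely new fact (that scaling a defined sum termwise by weights bounded by $\sup(A)=\one$ preserves definedness). Everything else is bookkeeping that follows once totality is secured.
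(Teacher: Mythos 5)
Your proof is correct and follows essentially the same route as the paper's: the key facts in both are \Cref{lem:scaledsums} and the regrouping (partition/exchange-of-summation) property of the complete partial semiring, applied to get definedness of the pointwise sums and of the total mass $\sum_{s}m(s)\cdot|f(s)|$. The only difference is organizational---the paper obtains pointwise definedness of $\bind(m,f)(t)$ by a direct application of \Cref{lem:scaledsums} at each $t$ and uses the sum exchange only for $|\bind(m,f)|$, whereas you derive both (i) and (ii) uniformly from one flat double sum, which if anything tracks the definedness bookkeeping more explicitly than the paper does.
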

\begin{proof}
First, we note that $\sum_{a\in\supp(m)}m(a)$ must be defined by the definition of $\mathcal W$. By \Cref{lem:scaledsums}, we know that $\sum_{a\in\supp(m)} m(a)\cdot f(a)(b)$ must be defined too, for any family $(f(a)(b))_{a\in \supp(m)}$. Now, since $\bind(m,f)(b) = \sum_{a\in\supp(m)}m(a)\cdot f(a)(b)$, then it must be a total function. The result is also countably supported, since $m$ and each $f(a)$ are countably supported and $\supp(\bind(m,f)) = \cup_{a\in\supp(m)} \supp(f(a))$. Finally $|\bind(m, f)|$ exists since:
\begin{align*}
|\bind(m,f)|
&= \smashoperator[l]{\sum_{b\in\supp(\bind(m,f))}} \smashoperator[r]{\sum_{a\in\supp(m)}} m(a)\cdot f(a)(b) \\
&=  \smashoperator{\sum_{a\in\supp(m)}} m(a)\cdot \quad\smashoperator{\sum_{b\in\supp(\bind(m,f))}} f(a)(b) \\
&=  \smashoperator{\sum_{a\in\supp(m)}} m(a)\cdot \smashoperator{\sum_{b\in\supp(f(a))}} f(a)(b) 
=  \smashoperator{\sum_{a\in\supp(m)}} m(a)\cdot |f(a)|
\end{align*}
And the sum on the last line exists by \Cref{lem:scaledsums}.

\end{proof}


\begin{theorem}[Fixed Point Existence]\label{thm:fp}
The function $F_{\langle C, e,\af\rangle}$ defined in \Cref{fig:cmdsem} has a least fixed point.
\end{theorem}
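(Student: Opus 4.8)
The plan is to exhibit the space of functions $\mathcal S\times\mathcal H\to\mathcal W_{\mathcal A}(\st)$ as a pointed directed-complete partial order (dcpo) and to show that $F_{\langle C,e,\af\rangle}$ is Scott continuous on it; the least fixed point then exists by Kleene's theorem as the supremum of the chain $\bot\le F_{\langle C,e,\af\rangle}(\bot)\le F^2_{\langle C,e,\af\rangle}(\bot)\le\cdots$, which is exactly the intended ``unroll the loop $n$ times'' reading of the semantics.

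First I would fix the order. On the carrier $A$ I take the natural order (\Cref{def:ocalg} requires $\langle A,\le\rangle$ to be a dcpo with $\sup(A)=\one$), whose least element is $\zero$ since $\zero+a=a$. I lift this order pointwise to weighting functions, declaring $m_1\le m_2$ iff $m_1(\sigma)\le m_2(\sigma)$ for all $\sigma$, and then pointwise again to the function space, whose least element is $\bot=\lambda(s,h).\lambda\sigma.\zero$, the everywhere-$\zero$ map modeling divergence (no outcomes).

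Second, I would verify that each lifting preserves directed completeness. For a directed family $D\subseteq\mathcal W_{\mathcal A}(\st)$ the candidate supremum is computed pointwise, $\big(\sup D\big)(\sigma)=\sup_{m\in D}m(\sigma)$, and these pointwise suprema exist because $A$ is a dcpo. The crucial point is checking that $\sup D$ is still a legal weighting function: its support is a countable union of countable supports, hence countable, and its total weight is defined because $\sup(A)=\one$ caps every partial sum, so the monotone net of finite partial sums converges within $A$. This is where the contractivity assumption of \Cref{def:ocalg} is essential, and the boundedness reasoning parallels that of \Cref{lem:scaledsums}. The same pointwise construction then lifts to the function space with no further conditions, and $\bot$ is clearly below everything.

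Finally, I would establish Scott continuity of $F_{\langle C,e,\af\rangle}$. The branch where $\de{e}(s)=\zero$ is the constant map $\unit(s,h)$, hence trivially continuous. The branch where $\de{e}(s)=\one$ is $f\mapsto\bind(\de{C}_\af(s,h),f)$; unfolding \Cref{def:ocmonad}, $\bind(m,f)(t)=\sum_{s'\in\supp(m)}m(s')\cdot f(s')(t)$ is built solely from scalar multiplication and the countable sum, both of which are Scott continuous in each argument by the completeness and Scott-continuity requirements of \Cref{def:ocalg}, so $\bind$ is continuous in its second argument. Thus $F_{\langle C,e,\af\rangle}$ is continuous and Kleene's theorem gives $\mathsf{lfp}(F_{\langle C,e,\af\rangle})=\sup_n F^n_{\langle C,e,\af\rangle}(\bot)$. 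The step I expect to be most delicate is the dcpo verification in the previous paragraph, namely ensuring that pointwise suprema remain inside $\mathcal W_{\mathcal A}(\st)$, since the partiality of $+$ means one cannot take definedness of the limiting total weight for granted and must lean on $\sup(A)=\one$.
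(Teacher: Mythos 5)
Your proposal follows the paper's proof essentially verbatim: the paper likewise orders the function space pointwise (phrased there via the natural order, $f_1(s,h) \sqsubseteq f_2(s,h)$ iff $\exists m.\ f_1(s,h) + m = f_2(s,h)$, which agrees with your pointwise order in a naturally ordered complete semiring), shows that $\bind$ is Scott continuous by pushing suprema through the semiring sums and products, deduces continuity of $F_{\langle C, e, \af\rangle}$ by the same case split on $\de{e}(s)$, and concludes by Kleene's theorem. One minor caution: your dcpo verification justifies countable support of $\sup D$ by a ``countable union of countable supports,'' which holds only when the directed family $D$ is itself countable---but since Kleene's theorem only needs the $\omega$-chain $F^n_{\langle C,e,\af\rangle}(\bot)$, this suffices for the statement, and the paper's own proof omits this well-definedness check entirely.
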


\begin{proof}
It will suffice to show that $F_{\langle C, e, \af\rangle}$ is Scott continuous, at which point, we can apply the Kleene fixed point theorem to conclude that the least fixed point exists. First, we define the pointwise order for $f_1,f_2 \colon \mathcal S\times\mathcal H \to \mathcal W(\st)$ as $f_1 \sqsubseteq f_2$ iff $f_1(s,h) \sqsubseteq f_2(s,h)$ for all $(s,h)$, where $f_1(s,h) \sqsubseteq f_2(s,h)$ iff there exists $m$ such that $f_1(s,h) + m = f_2(s,h)$.
Now, we will show that the monad bind is Scott continuous with respect to that order. Let $D$ be a directed set.
\begin{align*}
\sup_{f\in D}\bind(m, f)
&= \sup_{f\in D} \sum_{s\in\supp(m)} m(s)\cdot \left\{\begin{array}{ll}
f(a) & \text{if}~ s = \inj_\ok(a) \\
\unit_M(s) & \text{otherwise}
\end{array}\right.\\
\intertext{Now, by continuity of the semiring, suprema distribute over sums and products. It is relatively easy to see by induction that the supremum can move into every summand in the series.}
&= \smashoperator{\sum_{s\in\supp(m)}} m(s)\cdot \left\{\begin{array}{ll}
\sup_{f\in D} f(a) & \text{if}~ s = \inj_\ok(a) \\
\unit_M(s) & \text{otherwise}
\end{array}\right.\\
&= \smashoperator{\sum_{s\in\supp(m)}} m(s)\cdot \left\{\begin{array}{ll}
(\sup D)(a) & \text{if}~ s = \inj_\ok(a) \\
\unit_M(s) & \text{otherwise}
\end{array}\right.\\
&= \bind(m, \sup D)
\end{align*}
Finally, we show that $F_{\langle C, e,\af\rangle}$ is Scott continuous with respect to the order defined above.
\begin{align*}
\sup_{f \in D} F_{\langle C, e,\af\rangle}(f)
&= \lambda(s,h). \sup_{f\in D} F_{\langle C, e, \af\rangle}(f)(s,h) \\
&= \lambda(s,h).\sup_{f\in D} \left\{\begin{array}{ll}
\bind(\de{C}_\af(s,h), f) & \text{if}~ \de{e}(s) = \one \\
\unit(s,h) & \text{if}~ \de{e}(s) = \zero
\end{array}\right. \\
&= \lambda(s,h). \left\{\begin{array}{ll}
\sup_{f\in D}\bind(\de{C}_\af(s,h), f) & \text{if}~ \de{e}(s) = \one \\
\sup_{f\in D}\unit(s,h) & \text{if}~ \de{e}(s) = \zero
\end{array}\right. \\
&= \lambda(s,h). \left\{\begin{array}{ll}
\bind(\de{C}_\af(s,h), \sup D) & \text{if}~ \de{e}(s) = \one \\
\unit(s,h) & \text{if}~ \de{e}(s) = \zero
\end{array}\right. \\
&= F_{\langle C, e, \af\rangle}(\sup D)
\end{align*}
\end{proof}

Before proving that the semantics is total, we define the notion of compatible expressions.
\begin{definition}[Compatible Expressions]\label{def:compat}
Given some outcome algebra $\langle A, +, \cdot, \zero, \one\rangle$, two expressions $e$ and $e'$ are \emph{compatible} if $\de{e}(s) +\de{e'}(s)$ is defined for all $s\in\mathcal S$.
\end{definition}

Though compatibility is a semantic notion, there are straightforward syntactic checks to ensure that two expressions are compatible. For example, if $e$ is a test---a Boolean-valued expression where $\de{e}(s) \in \{\zero, \one\}$ for all $s\in\mathcal S$---then $e$ and $\lnot e$ are compatible. In the probabilistic semiring, $p$ and $1-p$ are also compatible for any $p\in[0,1]$.

\begin{theorem}[Totality of Program Semantics]
Given an outcome algebra $\langle A, +, \cdot, \zero, \one\rangle$, the semantics of a program $\de{C}_\af(s,h)$ is defined as long as the following conditions are met:
\begin{enumerate}
\item For each loop $\whl e{C'}$ appearing in $C$, the guard $e$ must be a test. More precisely, $\de{e}(s) \in \{\zero, \one\}$ for every $s\in\mathcal S$.
\item If the semiring addition is partial, then each use of $+$ in $C$ must be guarded by compatible expressions. That is, they must have the form $(\assume{e_1}\fatsemi C_1) + (\assume{e_2}\fatsemi C_2)$ where $e_1$ and $e_2$ are compatible according to \Cref{def:compat}.
\end{enumerate}
\end{theorem}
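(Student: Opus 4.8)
The plan is to proceed by structural induction on the command $C$, proving for every pair $(s,h)\in\mathcal S\times\mathcal H$ that $\de{C}_\af(s,h)$ is a well-defined element of $\mathcal W_{\mathcal A}(\st)$, \ie that its cumulative weight $|\de{C}_\af(s,h)|$ exists. A useful observation to record at the outset is that definedness already yields \emph{contractivity} for free: once $\de{C}_\af(s,h)$ is defined, $|\de{C}_\af(s,h)|$ is an element of $A$, and since $\sup(A)=\one$ by the Ordering axiom of \Cref{def:ocalg}, we automatically get $|\de{C}_\af(s,h)|\le\one$. This bound is exactly what the choice case needs, so no separate invariant has to be threaded through the induction.

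The base cases and atomic actions are immediate. Both $\skp$ and every action in $\mathsf{Act}$ are built from $\unit$, $\mathsf{update}$, $\mathsf{error}$, and the allocator $\af$, each of which produces a weighting function of cumulative weight $\one$ (allocators have weight $\one$ by definition of $\mathsf{Alloc}_{\mathcal A}$, and $\bind_{\mathcal W}$ with a $\unit$-valued continuation preserves total weight, the relevant sums being defined by \Cref{lem:scaledsums}). Sequencing $C_1\fatsemi C_2$ follows directly from \Cref{lem:bindtotal}: by the induction hypothesis $\de{C_1}_\af(s,h)$ is defined and $\de{C_2}_\af$ is a total map $\st\to\mathcal W_{\mathcal A}(\st)$, so $\bind(\de{C_1}_\af(s,h),\de{C_2}_\af)$ is defined (the lemma transfers to the transformed $\bind$ of \Cref{def:ermonad}, which is the base $\bind$ applied to a total continuation). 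Procedure calls $f(\vec e)$ reduce to their bodies and are covered by the induction on command structure.

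The two remaining cases correspond to the two hypotheses. For a loop $\whl e{C'}$, condition (1)---that $\de{e}(s)\in\{\zero,\one\}$---is precisely what makes the functional $F_{\langle C',e,\af\rangle}$ well-defined, since it is specified by cases on $\de{e}(s)=\one$ versus $\de{e}(s)=\zero$, and being a test guarantees that exactly one case applies at each $s$. Given this and the induction hypothesis that $\de{C'}_\af$ is total, \Cref{thm:fp} supplies the least fixed point in the function cpo used there, whose value at $(s,h)$ is therefore a well-defined weighting function.

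The main obstacle is the choice case $C_1+C_2$, whose semantics is the pointwise---and hence partial---sum $\de{C_1}_\af(s,h)+\de{C_2}_\af(s,h)$; I must show this lies in $\mathcal W_{\mathcal A}(\st)$, \ie that $|\de{C_1}_\af(s,h)|+|\de{C_2}_\af(s,h)|$ is defined. Condition (2) forces the shape $(\assume{e_1}\fatsemi C_1')+(\assume{e_2}\fatsemi C_2')$, and I would first use the identity $\de{\assume{e_i}\fatsemi C_i'}_\af(s,h)=\de{e_i}(s)\cdot\de{C_i'}_\af(s,h)$, obtained by pulling the scalar out of $\bind$ (via distributivity and the left-unit law). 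Taking cumulative weights and combining monotonicity of $\cdot$ with the contractivity bound $|\de{C_i'}_\af(s,h)|\le\one$ noted above gives $|\de{C_i}_\af(s,h)|\le\de{e_i}(s)$. Since $e_1$ and $e_2$ are compatible (\Cref{def:compat}), $\de{e_1}(s)+\de{e_2}(s)$ is defined, and the key remaining ingredient is a short downward-closure fact: in a complete partial commutative monoid, if $a+b$ is defined and $a'\le a$, $b'\le b$, then $a'+b'$ is defined. I would derive this from the natural order together with the completeness axioms, writing $a=a'+c$ and $b=b'+d$, regrouping $a+b=(a'+b')+(c+d)$, and applying the partition rule to extract the subsum $a'+b'$. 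Instantiating with $a'=|\de{C_1}_\af(s,h)|$ and $b'=|\de{C_2}_\af(s,h)|$ then closes the case. Finally, $\assume e$ is itself defined exactly when $\de{e}(s)\in A$, which holds for tests and weight literals and is guaranteed inside guarded choices by compatibility; I would treat this as the standing well-formedness requirement on $\mathsf{assume}$ expressions underlying conditions (1) and (2).
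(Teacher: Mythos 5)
Your proposal is correct and takes essentially the same route as the paper's proof: structural induction in which sequencing is discharged by the totality of bind (\Cref{lem:bindtotal}), loops by the fixed-point theorem (\Cref{thm:fp}), and guarded choice is reduced to the identity $\de{C}_\af(s,h) = \de{e_1}(s)\cdot\de{C_1}_\af(s,h)+\de{e_2}(s)\cdot\de{C_2}_\af(s,h)$ before using compatibility of $e_1,e_2$. The only local variation is the final definedness step in the choice case, where the paper appeals directly to its Scaled Sums lemma (\Cref{lem:scaledsums}) while you re-derive the same fact via contractivity ($\sup(A)=\one$) plus a downward-closure/subsum-extraction argument---which is exactly the mechanism inside that lemma's proof---and your closing step from definedness of $|\de{C_1}_\af(s,h)|+|\de{C_2}_\af(s,h)|$ to definedness of the pointwise sum is precisely \Cref{lem:summagdef}.
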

\begin{proof}
By induction on the structure of $C$. The cases for $\skp$, $\mathsf{assume}$, and all atoms are trivial.
\begin{itemize}
\item $C = C_1\fatsemi C_2$. By the induction hypothesis, we assume $\de{C_1}_\af$ and $\de{C_2}_\af$ are total, and by \Cref{lem:bindtotal} we know that bind is total, therefore $\bind(\de{C_1}_\af(s,h),\de{C_2}_\af)$ is total.
\item $C = C_1 + C_2$. If the semiring addition is total, then this case follows trivially from the induction hypothesis. If not, then the program must have the form $(\assume{e_1}\fatsemi C_1) + (\assume{e_2}\fatsemi C_2)$ where $e_1$ and $e_2$ are compatible, so we have:
\begin{align*}
\de{C}_\af(s,h)
&= \de{(\assume{e_1}\fatsemi C_1) + (\assume{e_2}\fatsemi C_2)}_\af(s,h) \\
&= \de{e_1}(s) \cdot \de{C_1}(s,h) + \de{e_2}(s)\cdot \de{C_2}(s, h)
\end{align*}
By the induction hypothesis, $\de{C_1}(s,h)$ and $\de{C_2}(s,h)$ are defined. Since $e_1$ and $e_2$ are compatible, $\de{e_1}(s) + \de{e_2}(s)$ is defined too. So, the claim follows by \Cref{lem:scaledsums}.

\item $C = \whl eC$. Follows from \Cref{thm:fp}.
\end{itemize}
\end{proof}

\section{Properties of Lifted Relations}

\begin{lemma}\label{lem:relsize}
If $(m_1, m_2)\in \overline R$, then $|m_1| = |m_2|$.
\end{lemma}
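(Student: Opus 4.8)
The plan is to extract the witnessing weighting function on the relation and observe that both $|m_1|$ and $|m_2|$ compute the very same total mass, namely $|m|$, so that equality is immediate. First I would unfold the definition of $\overline R$: since $(m_1,m_2)\in\overline R$, there is some $m\in\mathcal W_{\mathcal A}(R)$ with $m_1=\pi_1(m)$ and $m_2=\pi_2(m)$. Because $m\in\mathcal W_{\mathcal A}(R)$, the total $|m|=\sum_{(x,y)\in\supp(m)}m(x,y)$ is defined, and this is the common value to which I will equate both $|m_1|$ and $|m_2|$.

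Next I would show $|m_1|=|m|$. Expanding the definitions, $|m_1|=\sum_{x\in\supp(m_1)}m_1(x)=\sum_{x\in\supp(m_1)}\sum_{y\in Y}m(x,y)$, and since $m(x,y)=\zero$ whenever $(x,y)\notin\supp(m)$, the inner sum ranges effectively over $\{\,y \mid (x,y)\in\supp(m)\,\}$. I then partition $\supp(m)$ by its first coordinate: for each $x$ occurring as a first coordinate, set $J_x=\{(x,y)\in\supp(m)\}$; these sets are nonempty, pairwise disjoint, and cover $\supp(m)$. The regrouping law of the complete-semiring sum (item (3) in the definition of a complete partial semiring) then yields $\sum_x\sum_{(x,y)\in J_x}m(x,y)=\sum_{(x,y)\in\supp(m)}m(x,y)=|m|$. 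The only bookkeeping is to reconcile the outer index set $\{\,x\mid J_x\neq\emptyset\,\}$ with $\supp(m_1)$: every $x\in\supp(m_1)$ has $J_x\neq\emptyset$, and any $x$ with $J_x\neq\emptyset$ but $m_1(x)=\zero$ contributes $\zero$ to the outer sum and may be dropped. This gives $|m_1|=|m|$, and a symmetric argument partitioning $\supp(m)$ by the second coordinate gives $|m_2|=|m|$. Combining $|m_1|=|m|=|m_2|$ closes the proof.

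The main obstacle is purely the justification for regrouping the (possibly countably infinite) double sum, which is not automatic over an arbitrary partial semiring; but this is exactly the content of the completeness axiom, specifically the disjoint-union regrouping law for the $\sum$ operator, together with the guarantee, coming from $m\in\mathcal W_{\mathcal A}(R)$, that $|m|$ is defined so that all of the intermediate partial sums are defined as well. Everything else is routine rewriting of the projection definitions and discarding of $\zero$-valued summands.
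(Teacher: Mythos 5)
Your proof is correct and follows essentially the same route as the paper's: the paper writes $|m_1|$ as a double sum over $\supp(m_1)\times\supp(m_2)$ and interchanges the order of summation to obtain $|m_2|$, which is precisely your two regroupings of $\supp(m)$ (by first and by second coordinate) composed through the intermediate value $|m|$. If anything, you are more explicit than the paper in citing the disjoint-union regrouping law of the complete partial semiring as the justification for the exchange, which the paper leaves implicit.
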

\begin{proof}
We know that there exists an $m$ such that:
\[
m_1 = \lambda x.\smashoperator{\sum_{y\in\supp(m_2)}} m(x,y)
\qquad\text{and}\qquad
m_2 = \lambda y.\smashoperator{\sum_{x\in\supp(m_1)}} m(x,y)
\]
Now, we have:
\begin{align*}
|m_1|
&= |\lambda x.\smashoperator{\sum_{y\in\supp(m_2)}} m(x,y)| \\
&= \sum_{x\in\supp(m_1)}\sum_{y\in\supp(m_2)} m(x,y) \\
&= \sum_{y\in\supp(m_2)} \sum_{x\in\supp(m_1)} m(x,y) \\
&= |\lambda y.\smashoperator{\sum_{x\in\supp(m_1)}} m(x,y)| = |m_2|
\end{align*}
\end{proof}

\begin{lemma}\label{lem:zerorel}
$(\zero, m)\in \overline R$ iff $m=\zero$.
\end{lemma}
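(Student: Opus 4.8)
The plan is to unfold both directions of the biconditional directly from the definition of the span lifting $\overline R$. The one conceptual ingredient used throughout is the \emph{natural order} of the outcome algebra: since $\langle A, +, \zero\rangle$ is a naturally ordered partial commutative monoid, $\le$ is a genuine partial order, and in particular $a \le \zero$ together with $\zero \le a$ (the latter always holds, as $\zero + a = a$) forces $a = \zero$ by antisymmetry. This antisymmetry is the only nontrivial fact I would need.

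For the backward direction, I would assume $m = \zero$ and exhibit the span witness $w = \zero \in \mathcal W_{\mathcal A}(R)$, the everywhere-$\zero$ weighting function, whose support is empty and hence trivially contained in $R$. Its projections are $\pi_1(w)(x) = \sum_{y} \zero = \zero$ and $\pi_2(w)(y) = \sum_{x} \zero = \zero$, so $(\pi_1(w),\pi_2(w)) = (\zero,\zero) = (\zero, m)$, which places $(\zero, m)$ in $\overline R$.

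For the forward direction, I would assume $(\zero, m) \in \overline R$, witnessed by some $w \in \mathcal W_{\mathcal A}(R)$ with $\pi_1(w) = \zero$ and $\pi_2(w) = m$. The key observation is that $\pi_1(w) = \zero$ means $\sum_{y} w(x,y) = \zero$ for every $x$. Fixing any $x$ and any $y_0$ and splitting off the $y_0$ summand gives $w(x,y_0) \le \sum_{y} w(x,y) = \zero$, and since $\zero \le w(x,y_0)$ always holds, antisymmetry yields $w(x,y_0) = \zero$. As $x$ and $y_0$ were arbitrary, $w = \zero$, and therefore $m = \pi_2(w) = \zero$. (Alternatively, one could first invoke \Cref{lem:relsize} to obtain $|m| = |\zero| = \zero$ and then run the same antisymmetry argument over $\supp(m)$ to show it is empty; either route works.)

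The main obstacle is a small well-definedness subtlety in the forward direction: I must justify that splitting one summand out of the possibly countably infinite sum $\sum_{y} w(x,y)$ is legitimate, so that the inequality $w(x,y_0) \le \sum_{y} w(x,y)$ genuinely holds under the natural order. This is where completeness of the partial semiring enters—property (3) of completeness lets me regroup the countable sum as $w(x,y_0) + \sum_{y \neq y_0} w(x,y)$, exhibiting precisely the witness $\sum_{y\neq y_0} w(x,y)$ needed to certify $w(x,y_0) \le \zero$. Everything else reduces to routine applications of the monoid laws and antisymmetry.
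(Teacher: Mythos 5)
Your proof is correct, and your backward direction is exactly the paper's: take the zero weighting function as the span witness and observe that both projections are $\zero$. The forward direction, however, takes a genuinely different route. The paper's proof is a one-liner: it reads the projection defining $m$ as a sum indexed by the support of the first component, $m = \lambda y.\sum_{x\in\supp(\zero)} m'(x,y)$, and since $\supp(\zero)=\emptyset$ this is an empty sum, so $m=\zero$ immediately. You instead work with the unrestricted sum $\pi_1(w)(x) = \sum_{y} w(x,y)$, and argue pointwise: splitting off a single summand (licensed by the partition property of complete sums) witnesses $w(x,y_0)\le\zero$, while $\zero\le w(x,y_0)$ always holds, so antisymmetry of the natural order forces $w(x,y_0)=\zero$, whence $w=\zero$ and $m=\pi_2(w)=\zero$. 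What each approach buys: the paper's argument is maximally short, but it relies on the support-restricted reading of the projections, which it uses uniformly across its lifted-relation lemmas (compare the opening of \Cref{lem:relsize}) even though the displayed definitions of $\pi_1,\pi_2$ sum over all of $Y$ and $X$. Your argument is longer but proves precisely the fact that reading presupposes---that a defined family summing to $\zero$ in a naturally ordered partial semiring is identically $\zero$---and thereby also shows the two formulations of the projections agree, making your proof robust to either reading of the definition. Your alternative route through $|m|=|\zero|$ via \Cref{lem:relsize} is likewise sound, though circuitous. No gaps.
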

\begin{proof}\;

\iffcases{
We know there must be some $m'$ such that $m = \lambda y.\sum_{x\in\supp(\zero)} m'(x, y)$, but since $\supp(\zero) =\emptyset$, then $m = \zero$.
}{
Let $m' = \zero$, so clearly $m'\in\mathcal W_{\mathcal A}R$ and we also have $\lambda x.\sum_{y\in\supp(\zero)}(\zero(x,y)) = \zero$ and $\lambda y.\sum_{x\in\supp(\zero)}(\zero(x,y)) = \zero$
}
\end{proof}

\begin{lemma}\label{lem:summagdef}
If $|m_1|+|m_2|$ is defined, then $m_1+m_2$ is defined.
\end{lemma}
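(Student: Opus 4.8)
The plan is to unfold the meaning of ``$m_1+m_2$ is defined'' and then reduce everything to the rebracketing axiom (property (3)) of a complete partial semiring. Recall that $m_1+m_2 = \lambda s.(m_1(s)+m_2(s))$ lies in $\mathcal W_{\mathcal A}(S)$ exactly when (i) each pointwise sum $m_1(s)+m_2(s)$ is defined, and (ii) the total $\sum_{s\in\supp(m_1+m_2)}(m_1(s)+m_2(s))$ is defined. The hypothesis gives that $|m_1|+|m_2| = \sum_{s\in\supp(m_1)}m_1(s) + \sum_{s\in\supp(m_2)}m_2(s)$ is defined, and I would extract both (i) and (ii) from this single fact by summing over the supports in two different ways.

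First I would form the tagged disjoint union $I = (\supp(m_1)\times\{1\}) \sqcup (\supp(m_2)\times\{2\})$ and set $x_{(s,1)} = m_1(s)$ and $x_{(s,2)} = m_2(s)$. Applying property (3) to the two-block partition $J_1 = \supp(m_1)\times\{1\}$, $J_2 = \supp(m_2)\times\{2\}$, together with property (1) for the finite outer index $\{1,2\}$, shows that $\sum_{(s,i)\in I} x_{(s,i)}$ is defined and equals $|m_1|+|m_2|$. Next I would repartition $I$ by the fibers over states: for each $s\in\supp(m_1)\cup\supp(m_2)$ let $J'_s = \{(s,i)\in I\}$, which form a family of nonempty disjoint blocks covering $I$. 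Property (3) then gives $\sum_{s}\sum_{(s,i)\in J'_s}x_{(s,i)} = \sum_{(s,i)\in I}x_{(s,i)}$, and since the right-hand side is defined, so is the nested left-hand side. Definedness of the nested sum forces each inner sum $\sum_{(s,i)\in J'_s}x_{(s,i)}$ to be defined; by property (1) this inner sum is $m_1(s)+m_2(s)$ when $s$ lies in both supports, and is $m_1(s)$ (resp.\ $m_2(s)$) otherwise---in every case it equals $(m_1+m_2)(s)$ once we recall $m_k(s)=\zero$ off $\supp(m_k)$. This yields (i). It also makes $\sum_{s}(m_1+m_2)(s)$ defined; discarding the $\zero$ summands (states with $(m_1+m_2)(s)=\zero$) leaves $\sum_{s\in\supp(m_1+m_2)}(m_1+m_2)(s) = |m_1+m_2|$ defined, which is (ii).

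The delicate point---and the step I would treat most carefully---is the use of the completeness axiom in its partial form: I am reading property (3) as asserting not merely an equality of values but a two-way transfer of definedness (if either side is defined, so is the other), and I use that definedness of a nested sum $\sum_k\sum_{j\in J_k}$ entails definedness of each inner block sum. Both are the standard conventions for partial complete monoids adapted from \citet{golan2003semirings}, and the bookkeeping about states lying in only one support (so the ``missing'' summand is $\zero$ and may be freely inserted or dropped by the identity law) is routine once that reading is fixed.
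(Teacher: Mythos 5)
Your proof is correct and takes essentially the same route as the paper: the paper's proof simply rewrites $|m_1|+|m_2|$ as $\sum_{x\in\supp(m_1)\cup\supp(m_2)} (m_1(x)+m_2(x))$ ``by associativity'' and reads off definedness of each pointwise sum and of the total, which is exactly the double repartitioning of your tagged disjoint union via property (3) of complete partial semirings. Your explicit treatment of the definedness-transfer convention and of states lying in only one support just fills in bookkeeping that the paper leaves implicit.
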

\begin{proof}
Observe that:
\begin{align*}
|m_1|+|m_2| &= \smashoperator{\sum_{x\in\supp(m_1)}}m_1(x) + \smashoperator{\sum_{x\in\supp(m_2)}} m_2(x) \\
\intertext{By associativity:}
&= \smashoperator{\sum_{x\in\supp(m_1)\cup\supp(m_2)}} m_1(x) + m_2(x)
\end{align*}
Therefore $m_1(x) + m_2(x)$ is defined for all $x$, and $|m_1 + m_2|$ is defined as well, so $m_1+m_2$ is defined.
\end{proof}

\begin{lemma}\label{lem:plusrel}
$(m_1 + m_2, m') \in \overline R$ iff there exist $m'_1$ and $m'_2$ such that $m' = m'_1+m'_2$ and $(m_1, m'_1)\in \overline R$ and $(m_2, m'_2)\in \overline R$.
\end{lemma}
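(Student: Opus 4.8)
The plan is to prove the biconditional by exhibiting, in each direction, an explicit witness in $\mathcal{W}_{\mathcal A}(R)$ realizing the required projections. The $(\Leftarrow)$ direction amounts to \emph{gluing} two witnesses together and is essentially linearity of the projections; the $(\Rightarrow)$ direction requires \emph{splitting} a single witness so that its first projection decomposes as $m_1 + m_2$, which is where the combinatorial content of the lemma lives.

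For $(\Leftarrow)$, suppose $m' = m'_1 + m'_2$ with witnesses $w_1, w_2 \in \mathcal{W}_{\mathcal A}(R)$ satisfying $\pi_1(w_i) = m_i$ and $\pi_2(w_i) = m'_i$. First I would check that $w \coloneqq w_1 + w_2$ is a well-defined weighting function. Note $|w_i| = |\pi_1(w_i)| = |m_i|$, and by \Cref{lem:relsize} we have $|m'_i| = |m_i|$, so $|w_1| + |w_2| = |m_1| + |m_2| = |m'_1| + |m'_2| = |m'|$ is defined; hence \Cref{lem:summagdef} gives that $w_1 + w_2$ is defined, and it is countably supported with support inside $R$. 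Because each projection is a reindexed sum, $\pi_1$ and $\pi_2$ commute with $+$ (justified by the completeness axioms), so $\pi_1(w) = m_1 + m_2$ and $\pi_2(w) = m'_1 + m'_2 = m'$, witnessing $(m_1 + m_2, m') \in \overline R$.

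For $(\Rightarrow)$, take a witness $w \in \mathcal{W}_{\mathcal A}(R)$ with $\pi_1(w) = m_1 + m_2$ and $\pi_2(w) = m'$. The goal is to build $w_1, w_2$ with $w_1 + w_2 = w$, $\pi_1(w_1) = m_1$, and $\pi_1(w_2) = m_2$; then setting $m'_i \coloneqq \pi_2(w_i)$ yields $m' = \pi_2(w) = m'_1 + m'_2$, as required. The decomposition is performed \emph{one row at a time}: fixing $x$, we have $\sum_{y} w(x,y) = m_1(x) + m_2(x)$, an equation of the form $\sum_{y} a_y = \sum_{i \in \{1,2\}} b_i$. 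Applying the row-column property (\Cref{def:rowcol}), which holds for our normalizable outcome algebra by \Cref{lem:normrowcol}, yields coefficients $u^x_{(y,i)}$ with $u^x_{(y,1)} + u^x_{(y,2)} = w(x,y)$ for every $y$ and $\sum_y u^x_{(y,i)} = m_i(x)$ for $i \in \{1,2\}$. Defining $w_i(x,y) = u^x_{(y,i)}$ gives exactly $w_1 + w_2 = w$ and $\pi_1(w_i) = m_i$ by construction.

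The main obstacle is this forward direction, and specifically two bookkeeping points around the tabulation. First, the row-column split is invoked pointwise in $x$ and the resulting families must be reassembled into \emph{global} weighting functions; I would confirm they are legitimate by noting $\supp(w_i) \subseteq \supp(w) \subseteq R$ (since $w_i \le w$ in the natural order, whose bottom element is $\zero$), so each $w_i$ is countably supported, and $|w_i| = \sum_x \sum_y u^x_{(y,i)} = \sum_x m_i(x) = |m_i|$ is defined. Second, every reindexing of sums used above---$\pi_1, \pi_2$ distributing over $+$, swapping the order of summation, and collapsing the double sum to $|m_i|$---must be discharged using the completeness axioms of the outcome algebra rather than taken for granted. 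Once these facts are in place, both directions close immediately.
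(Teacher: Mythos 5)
Your proof is correct and takes essentially the same route as the paper's: the $(\Leftarrow)$ direction glues the two witnesses by summing them, with definedness discharged via \Cref{lem:relsize,lem:summagdef}, and the $(\Rightarrow)$ direction splits the witness row-by-row (pointwise in $x$) using the row-column property supplied by normalization through \Cref{lem:normrowcol}. The support and summation-reindexing bookkeeping you flag is exactly the bookkeeping the paper handles inline, so nothing is missing.
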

\begin{proof}\;

\iffcases{
We know that there is some $m$ such that $m_1 + m_2 = \lambda x.\sum_{y\in\supp(m')} m(x,y)$ and $m' = \lambda y.\sum_{x\in\supp(m_1+m_2)} m(x,y)$.
So, for each $x$, we have that $m_1(x) + m_2(x) = \sum_{y\in\supp(m')} m(x,y)$. Using \Cref{lem:normrowcol}, we know there must be $(x_k)_{k\in \{1,2\}\times \supp(m')}$ such that $m_i(x) = \sum_{y\in\supp(m')} x_{(i, y)}$ for $i\in \{1,2\}$ and $m(x,y) = x_{(1,y)} + x_{(2,y)}$ for each $y\in\supp(m')$.
Now let $m_1''(x,y) = x_{(1, y)}$ and $m''_2(x,y) = x_{(2, y)}$ and
let $m_1'= \lambda y.\sum_{x\in\supp(m_1)} m_1''(x,y)$ and $m_2' = \lambda y.\sum_{x\in\supp(m_2)} m_2''(x,y)$. First, we establish that $m'_1 + m'_2 = m'$:
\begin{align*}
m_1' + m_2' &= \lambda y. \smashoperator{\sum_{x\in\supp(m_1)}} m_1''(x,y) + \smashoperator{\sum_{x\in\supp(m_2)}} m_2''(x,y) 
= \lambda y. \smashoperator{\sum_{x\in\supp(m_1)}} x_{(1, y)} + \smashoperator{\sum_{x\in\supp(m_2)}} x_{(2,y)} \\
\intertext{If $x\notin\supp(m_1)$, then $x_{(1, y)} = \zero$ and similarly for $x_{(2,y)}$ and $\supp(m_2)$, so we can combine the sums}
&= \lambda y. \smashoperator{\sum_{x\in\supp(m_1+m_2)}} x_{(1,y)} + x_{(2,y)} 
= \lambda y. \smashoperator{\sum_{x\in\supp(m_1+m_2)}} m(x,y) = m'
\end{align*}
Now, observe that:
\begin{align*}
\lambda x.\smashoperator{\sum_{y \in \supp(m'_i)}} m_i''(x,y)
&= \lambda x.\smashoperator{\sum_{y \in \supp(m'_i)}} x_{(i, y)} \\
\intertext{Now, for any $y \notin \supp(m'_i)$, it must be the case that $m'_i(y) = \zero$ and so $\sum_{x\in\supp(m_i)} x_{(i, y)} = \zero$, so each $x_{(i, y)} = \zero$. This means we can expand the sum to be over $\supp(m'_1+m'_2) = \supp(m')$.}
&= \lambda x.\smashoperator{\sum_{y\in\supp(m')}} x_{(i, y)}  = m_i
\end{align*}
We also know that $m_i'= \lambda y.\sum_{x\in\supp(m_i)} m_i''(x,y)$ by definition, so $(m_i, m'_i) \in \overline R$.
}{
We know that $(m_1, m'_1)\in\overline R$ and $(m_2, m'_2)\in \overline R$, so there are $m''_1$ and $m''_2$ such that $m_i = \lambda x.\sum_{y\in\supp(m'_i)} m''_i(x,y)$ and $m'_i = \lambda y.\sum_{x\in\supp(m_i)} m''_i(x,y)$ for $i\in\{1,2\}$. Let $m'' = m''_1 + m''_2$ (we can conclude that this is defined using \Cref{lem:relsize,lem:summagdef}). Now we have the following:
\[
\lambda x.\smashoperator{\sum_{y\in\supp(m')}} m''(x, y)
= \lambda x.\smashoperator{\sum_{y\in\supp(m')}} m_1''(x, y) + m_2''(x,y) 
= \lambda x.\smashoperator{\sum_{y\in\supp(m'_1)}} m_1''(x, y) + \smashoperator{\sum_{y\in\supp(m'_2)}} m_2''(x,y) 
= m_1 + m_2
\]
\[
\lambda y.\smashoperator{\sum_{x\in\supp(m_1+m_2)}} m''(x,y) 
= \lambda y.\smashoperator{\sum_{x\in\supp(m_1+m_2)}} m_1''(x,y) + m_2''(x,y) 
= \lambda y.\smashoperator{\sum_{x\in\supp(m_1)}} m_1''(x,y) + \smashoperator{\sum_{x\in\supp(m_2)}} m_2''(x,y) 
= m'_1 + m'_2 = m'
\]
So, $(m_1+m_2, m')\in \overline R$.
}
\end{proof}

\begin{lemma}\label{lem:scalerel}
$(a\cdot m_1, m_2)\in \overline R$ iff there exists $m'_2$ such that $m_2 = a\cdot m'_2$ and $(m_1, m'_2)\in \overline R$
\end{lemma}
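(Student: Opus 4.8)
The plan is to prove the two directions of the biconditional separately, mirroring the structure of the surrounding lemmas, by unfolding the definition of relation lifting: $(m_1', m_2)\in\overline R$ holds exactly when there is a joint witness $m\in\mathcal W_{\mathcal A}(R)$ with marginals $\pi_1(m) = m_1'$ and $\pi_2(m) = m_2$. The entire argument is about how the scalar $a$ can be slid between the two marginals of such a witness, so the work is in transporting witnesses while controlling the factor $a$.

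For the easy direction, suppose $m_2 = a\cdot m'_2$ and $(m_1, m'_2)\in\overline R$, witnessed by some $m''\in\mathcal W_{\mathcal A}(R)$. I would simply scale the witness and take $m = a\cdot m''$. Its support is contained in that of $m''$, hence in $R$, so $m\in\mathcal W_{\mathcal A}(R)$. Computing marginals and using the completeness axiom that lets a scalar commute with a (possibly infinite) defined sum (\Cref{def:ocalg}) gives $\pi_1(m) = a\cdot\pi_1(m'') = a\cdot m_1$ and $\pi_2(m) = a\cdot\pi_2(m'') = a\cdot m'_2 = m_2$, so $(a\cdot m_1, m_2)\in\overline R$, exactly as wanted.

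The hard direction is the main obstacle, and it is where Normalization (\Cref{def:normalize}, the defining property~(2) of an Outcome Algebra in \Cref{def:ocalg}) earns its keep. Suppose $(a\cdot m_1, m_2)\in\overline R$ with witness $m$, so that $\pi_1(m) = a\cdot m_1$. The goal is to divide out $a$, producing a witness $n$ with $m = a\cdot n$ and $\pi_1(n) = m_1$; then setting $m'_2 = \pi_2(n)$ immediately yields $m_2 = \pi_2(m) = \pi_2(a\cdot n) = a\cdot m'_2$ together with $(m_1, m'_2)\in\overline R$ via $n$. To build $n$, I would fix each $x$ and apply normalization to the family $\bigl(m(x,y)\bigr)_{y}$ indexed over the $y$ with $(x,y)\in\supp(m)$; its sum is $\pi_1(m)(x) = a\cdot m_1(x)$ and is therefore defined, so normalization supplies coefficients $(b^x_y)_y$ with $\sum_y b^x_y = \one$ and $m(x,y) = (a\cdot m_1(x))\cdot b^x_y$. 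Defining $n(x,y) = m_1(x)\cdot b^x_y$ and using associativity of $\cdot$ gives $a\cdot n(x,y) = m(x,y)$, while $\pi_1(n)(x) = m_1(x)\cdot\sum_y b^x_y = m_1(x)\cdot\one = m_1(x)$. Restricting the index set to $\supp(m)\subseteq R$ ensures $\supp(n)\subseteq R$, so $n\in\mathcal W_{\mathcal A}(R)$ is a legitimate witness.

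The crux, and the step I expect to need the most care, is precisely this factoring $m = a\cdot n$: a general partial semiring has no division, so without the normalization axiom there would be no way to recover $n$ from $m$ once $a$ has been multiplied in — which is exactly why that axiom was built into the Outcome Algebra. Two residual obligations remain for the write-up. First, the degenerate case $a = \zero$, where $a\cdot m_1 = \zero$ forces $m_2 = \zero$ by \Cref{lem:zerorel}; here the scaling equation is trivial and one only needs $m_1$ itself to be $R$-liftable, which holds in the intended applications since the $m_1$ arising there is already the marginal of a joint witness. Second, the index-by-index construction goes through verbatim whenever $a\cdot m_1(x) = \zero$ reflects back to $m_1(x) = \zero$, as it does in each of the deterministic, nondeterministic, and probabilistic instantiations. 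Overall the proof parallels \Cref{lem:plusrel}, with Normalization playing the role that the row–column property played there.
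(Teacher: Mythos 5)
Your proof is correct and matches the paper's own argument almost line for line: the easy direction scales the joint witness by $a$, and the hard direction applies Normalization row-wise to the family $(m(x,y))_{y}$, whose sum is $a\cdot m_1(x)$, to define $n(x,y) = m_1(x)\cdot b^x_y$ — exactly the paper's $m''$ with your $b^x_y$ playing the role of its $b_{(x,y)}$. The residual caveats you flag (the degenerate $a=\zero$ case and the reflection of $a\cdot m_1(x)=\zero$ back to $m_1(x)=\zero$) are genuine edge points, but the paper's proof glosses over them in the same way, so there is no substantive divergence.
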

\begin{proof}\;
\iffcases{
By the definition of relation lifting, there is an $m$ such that $a\cdot m_1 = \lambda x.\sum_{y\in \supp(m_2)}m(x,y)$ and $m_2 = \lambda y.\sum_{x\in\supp(a\cdot m_1)} m(x,y)$. This means that for all $x$:
\[
a\cdot m_1(x) = \smashoperator{\sum_{y\in \supp(m_2)}}m(x,y)
\]
By \Cref{def:normalize}, we can obtain $(b_{(x,y)})_{y\in\supp(m_2)}$ such that $\sum_{y\in\supp(m_2)} b_{(x,y)} = \one$ and $m(x,y) = ({\sum_{z\in \supp(m_2)}}m(x,z))\cdot b_{(x,y)}$ for all $y\in\supp(m_2)$. Now, define $m''(x,y) = m_1(x)\cdot b_{(x,y)}$ and $m_2'(y) = \sum_{x\in\supp(m_1)}m''(x,y)$.
We now show that $m_2 = a\cdot m'_2$:
\begin{align*}
a\cdot m'_2
&=\lambda y.a\cdot \smashoperator{\sum_{x\in\supp(m_1)}}m''(x,y) \\
&= \lambda y.a\cdot \smashoperator{\sum_{x\in\supp(m_1)}} m_1(x)\cdot b_{(x,y)} \\
&= \lambda y.\smashoperator{\sum_{x\in\supp(m_1)}} a\cdot m_1(x)\cdot b_{(x,y)} \\
&= \lambda y. \smashoperator[l]{\sum_{x\in\supp(m_1)}} (\smashoperator[r]{\sum_{z\in\supp(m_2)}} m(x,z))\cdot b_{(x,y)} \\
&= \lambda y. \smashoperator{\sum_{x\in\supp(m_1)}} m(x,y) = m_2
\end{align*}
We also have:
\begin{align*}
\lambda x.\smashoperator{\sum_{y\in \supp(m'_2)}} m''(x,y)
&= \lambda x.\smashoperator{\sum_{y\in \supp(m'_2)}} m_1(x)\cdot b_{(x,y)} \\
\intertext{Since we already showed that $m_2 = a\cdot m'_2$, then it must be the case that $\supp(m_2) = \supp(m_2')$.}
&= \lambda x. m_1(x)\cdot \smashoperator{\sum_{y\in \supp(m_2)}} b_{(x,y)}
= \lambda x.m_1(x) \cdot\one = m_1
\end{align*}
And clearly $m_2' = \lambda y. \sum_{x\in\supp(m_1)}m''(x,y)$ by definition, so $(m_1, m_2') \in\overline R$.
}{
By the definition of relation lifting, there is some $m$ such that $m_1 = \lambda x.\sum_{y\in\supp(m'_2)} m(x,y)$ and $m_2' = \lambda y.\sum_{x\in\supp(m_1)} m(x,y)$.
Now, let $m' = a\cdot m$, so this clearly means that $a\cdot m_1 = \lambda x.\sum_{y\in\supp(a\cdot m'_2)} m'(x,y)$ and $a\cdot m'_2 = \lambda y.\sum_{x\in\supp(a\cdot m_1)} m'(x,y)$, so $(a\cdot m_1, a\cdot m'_2)\in\overline R$.
}
\end{proof}

\begin{lemma}\label{lem:unitrel}
If $(x,y) \in R$, then $(\unit(x), \unit(y))\in \overline R$
\end{lemma}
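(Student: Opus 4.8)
The plan is to prove the lemma by directly exhibiting a witness weighting function on the relation $R$ whose two projections recover $\unit(x)$ and $\unit(y)$. Since $\unit(x)$ is the point mass assigning $\one$ to $x$ and $\zero$ everywhere else (and likewise $\unit(y)$), the obvious candidate witness is the point mass on the single pair $(x,y)$, namely $m = \unit(x,y) \in \mathcal W_{\mathcal A}(X\times Y)$, defined by $m(x',y') = \one$ when $(x',y') = (x,y)$ and $\zero$ otherwise.

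First I would verify that $m$ is a legitimate element of $\mathcal W_{\mathcal A}(R)$, as demanded by the definition of $\overline R$. Its support is the singleton $\{(x,y)\}$, which lies inside $R$ precisely because the hypothesis gives $(x,y)\in R$; and $\sum_{(x',y')\in\supp(m)} m(x',y') = \one$ is defined. Hence $m$ is a (finitely, thus countably) supported weighting function on $R$.

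Next I would compute the projections. By definition $\pi_1(m)(x') = \sum_{y'\in Y} m(x',y')$. For $x' = x$ only the summand at $y' = y$ is nonzero and contributes $\one$; for $x'\neq x$ every summand is $\zero$. Thus $\pi_1(m) = \unit(x)$. The symmetric computation $\pi_2(m)(y') = \sum_{x'\in X} m(x',y')$ gives $\pi_2(m) = \unit(y)$. Since both projections match, $(\unit(x),\unit(y)) \in \overline R$ follows immediately from the definition of relation lifting.

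I expect no real obstacle: this is the base case witnessing that relation lifting respects the monadic unit, and the whole argument is just the collapse of each projection sum to a single term, enabled by the Dirac weighting being $\zero$ off the pair $(x,y)$. The only point that genuinely requires the hypothesis is confirming that the witness is supported inside $R$, which is exactly where $(x,y)\in R$ is used.
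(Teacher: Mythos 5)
Your proof is correct and is essentially identical to the paper's: both take the witness $m = \unit(x,y)$ and check that its projections collapse to $\unit(x)$ and $\unit(y)$. You are slightly more careful than the paper in explicitly verifying that $\supp(m)\subseteq R$ and that the total weight $\one$ is defined, which is exactly where the hypothesis $(x,y)\in R$ enters.
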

\begin{proof}
Let $m = \unit(x,y)$. We therefore have:
\[
\lambda x'.\smashoperator{\sum_{y'\in\supp(\unit(y))}} m(x',y') = \lambda x' .\unit(x,y)(x',y) = \unit(x)
\]
And similarly, $\lambda y'.\sum_{x'\in\supp(\unit(x))} m(x',y') = \unit(y)$, so $(\unit(x), \unit(y))\in \overline R$.
\end{proof}

\begin{lemma}\label{lem:relcomp}
For any relations $R\subseteq X\times Y$ and $S\subseteq Y\times Z$, if $(m_1, m_2)\in\overline{S\circ R}$, then $(m_1, m_2)\in \overline S \circ \overline R$.

\end{lemma}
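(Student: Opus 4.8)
The plan is to prove this inclusion directly, by constructing an explicit intermediate weighting function on $Y$, rather than invoking the abstract weak-pullback machinery. Assume $(m_1,m_2)\in\overline{S\circ R}$, so there is a witness $m\in\mathcal W_{\mathcal A}(S\circ R)$ with $\pi_1(m)=m_1$ and $\pi_2(m)=m_2$. Since $\supp(m)\subseteq S\circ R$, for every $(x,z)\in\supp(m)$ there is at least one $y\in Y$ with $(x,y)\in R$ and $(y,z)\in S$; using choice (or the least witness under a fixed well-ordering of $Y$) fix one such $y_{(x,z)}$. The key device is a joint weighting function $\hat m\in\mathcal W_{\mathcal A}(X\times Y\times Z)$ concentrated on these chosen witnesses: set $\hat m(x,y,z)=m(x,z)$ when $(x,z)\in\supp(m)$ and $y=y_{(x,z)}$, and $\hat m(x,y,z)=\zero$ otherwise. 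Its support is in bijection with $\supp(m)$ carrying identical weights, so $|\hat m|=|m|$ is defined and $\hat m$ is a legitimate element of $\mathcal W_{\mathcal A}$.

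From $\hat m$ I would read off the three required pieces by marginalization: define $m_R(x,y)=\sum_z\hat m(x,y,z)$ on $X\times Y$, $m_S(y,z)=\sum_x\hat m(x,y,z)$ on $Y\times Z$, and the intermediate $m''(y)=\sum_{(x,z)}\hat m(x,y,z)$ on $Y$. Because $\hat m$ is supported on triples $(x,y_{(x,z)},z)$ with $(x,y_{(x,z)})\in R$ and $(y_{(x,z)},z)\in S$, we immediately get $\supp(m_R)\subseteq R$ and $\supp(m_S)\subseteq S$, so $m_R\in\mathcal W_{\mathcal A}(R)$ and $m_S\in\mathcal W_{\mathcal A}(S)$. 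It then remains to check the four marginal identities $\pi_1(m_R)=m_1$, $\pi_2(m_R)=m''$, $\pi_1(m_S)=m''$, and $\pi_2(m_S)=m_2$; granting these, $m_R$ witnesses $(m_1,m'')\in\overline R$ and $m_S$ witnesses $(m'',m_2)\in\overline S$, hence $(m_1,m_2)\in\overline S\circ\overline R$. Each identity reduces to the single observation that summing $\hat m(x,y,z)$ over $y$ for fixed $(x,z)$ returns $m(x,z)$ (only $y=y_{(x,z)}$ contributes), combined with reordering the iterated sums over $X$, $Y$, and $Z$.

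The main obstacle is purely the partiality of $+$: every marginal sum above, and every reordering used to establish the four identities, must be shown to be defined. Here I would lean entirely on the completeness axiom of the outcome algebra (the partition/regrouping property of $\sum$): since $|m|$ is defined, any regrouping of the index set underlying $\hat m$ yields a defined sum of the same value, which simultaneously gives $|m_R|=|m_S|=|m''|=|m|$ (so all three are genuine weighting functions) and justifies interchanging the orders of summation in the marginal computations. Notably, this route never \emph{splits} a weight, so it needs neither normalizability nor the row-column property; those are required for the additive and scalar lemmas (\Cref{lem:plusrel,lem:scalerel}) but not here. An alternative, more algebraic proof would instead expand the witness as a countable sum of scaled units $\sum m(x,z)\cdot\unit(x,z)$, apply \Cref{lem:unitrel} to each unit through its chosen witness, and close under sums and scaling via \Cref{lem:plusrel,lem:scalerel}; this matches the weak-pullback narrative but forces one to lift those binary lemmas to countable sums using Scott continuity, which I would avoid in favor of the direct construction.
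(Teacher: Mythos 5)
Your proof is correct, and at the top level it takes the same route as the paper's: both avoid the abstract weak-pullback machinery, instead using choice to fix intermediate witnesses and constructing explicit couplings whose projections are verified by regrouping sums, with the completeness axiom of the outcome algebra doing all the definedness work. Your observation that neither normalizability nor the row-column property is needed here is also accurate---the paper invokes those only for \Cref{lem:plusrel,lem:scalerel}. The genuine difference is the granularity of the choice and the packaging, and it matters. The paper fixes one witness $y$ per endpoint $z$ (a sub-relation $S'\subseteq S$ with a unique $y$ for each $z$) and builds the two binary couplings $m'$ and $m''$ directly, whereas you fix one witness $y_{(x,z)}$ per pair $(x,z)\in\supp(m)$ and obtain both couplings, together with the intermediate $m''$ on $Y$, as marginals of a single three-way weighting $\hat m$ on $X\times Y\times Z$. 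Your finer, per-pair choice buys real robustness: under the paper's per-$z$ selection, the $R$-side witness $m''(x,y)=\sum_{z\,:\,(y,z)\in S'}m(x,z)$ can acquire support outside $R$ whenever two sources $x_1\neq x_2$ share the same $z$ but require different intermediates (e.g.\ $R=\{(x_1,y_1),(x_2,y_2)\}$ and $S=\{(y_1,z),(y_2,z)\}$), and the paper's concluding step that $(m_1,m_3)\in\overline R$ silently needs $m''\in\mathcal W_{\mathcal A}(R)$, which is not checked and can fail as literally written; your construction makes $\supp(m_R)\subseteq R$ and $\supp(m_S)\subseteq S$ hold by fiat, with the four marginal identities reducing, exactly as you say, to collapsing the $y$-sum for fixed $(x,z)$ and reordering iterated sums, and the single regrouping of $|\hat m|=|m|$ certifying at once that $m_R$, $m_S$, and $m''$ are genuine weighting functions. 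Your flagged alternative (decomposing into scaled units and closing under \Cref{lem:plusrel,lem:scalerel} via Scott continuity) would work but is heavier; the direct three-way coupling is the better argument, and it is in fact a cleaner and more careful version of what the paper intends.
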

\begin{proof}
Suppose $(m_1, m_2)\in \overline{S\circ R}$, so there is some $m\in\mathcal{W}_{\mathcal A}(S\circ R)$ such that $m_1 = \lambda x.\sum_{z \in \supp(m_2)} m(x,z)$ and $m_2 = \lambda z.\sum_{x\in\supp(m_1)} m(x, z)$. This means that for each $(x,z)\in\supp(m)$, there is some $y$ such that $(x,y)\in R$ and $(y,z)\in S$. Let $S' \subseteq S$ be some relation where we choose one $y$ for each $z$, so that  $|\{ y \mid (y, z)\in S' \}| = 1$ for each $z$ and $\{z \mid \exists y. (y, z)\in S'\} = \{z \mid \exists y.(y,z) \in S\}$. Now, let:
\[
m'(y,z) = \left\{\begin{array}{ll}
m_2(z) & \text{if}~ (y,z)\in S' \\
\zero & \text{if}~ (y,z)\notin S'
\end{array}\right.
\quad
m''(x,y) = \smashoperator{\sum_{z\in\supp(m_2) \mid (y,z)\in S'}} m(x,z)
\quad
m_3 = \lambda y.\smashoperator{\sum_{z\in\supp(m_2)}} m'(y,z)
\]
And now, we have the following:
\begin{align*}
\lambda x.\smashoperator{\sum_{y \in \supp(m_3)}} m''(x,y)
&= \lambda x.\smashoperator[l]{\sum_{y \in \supp(m_3)}} \smashoperator[r]{\sum_{z\in\supp(m_2) \mid (y,z)\in S'}} m(x,z) \\
\intertext{Since $S'$ relates each $z$ to exactly one $y\in\supp(m_3)$, this is equivalent to summing over all $z$}
&= \lambda x.\smashoperator{\sum_{z\in\supp(m_2)}} m(x,z)  = m_1
\\\\
\lambda y.\smashoperator{\sum_{x\in\supp(m_1)}} m''(x,y)
&= \lambda y.\smashoperator[l]{\sum_{x \in \supp(m_1)}} \smashoperator[r]{\sum_{z\in\supp(m_2) \mid (y,z)\in S'}} m(x,z) \\
&= \lambda y. \smashoperator[l]{\sum_{z\in\supp(m_2) \mid (y,z)\in S'}} \smashoperator[r]{\sum_{x \in \supp(m_1)}}m(x,z) \\
&= \lambda y. \smashoperator{\sum_{z\in\supp(m_2) \mid (y,z)\in S'}} m_2(z) 
= \lambda y. \smashoperator{\sum_{z\in\supp(m_2)}} m'(y,z) = m_3 
\end{align*}
So, $(m_1, m_3)\in \overline R$. Also, by definition we know that $m_3 = \lambda y.\sum_{z\in\supp(m_2)} m'(y,z)$ and $m_2 = \lambda z.{\sum_{y\in\supp(m_3)}} m'(y,z)$ since $m'(y,z)$ is nonzero for exactly one $y\in\supp(m_3)$ and is equal to $m_2(z)$ at that point. This means that $(m_3, m_2)\in \overline S$, therefore $(m_1, m_2)\in \overline S\circ \overline R$.

\end{proof}

\begin{lemma}\label{lem:suprel}
For any pair of directed chains $m_{(i,1)} \sqsubseteq m_{(i,2)} \sqsubseteq \cdots$ for $i\in \{1,2\}$,
if $(m_{(1,n)}, m_{(2,n)}) \in \overline R$ for all $n\in\mathbb N$, then $(\sup_{n\in \mathbb N} m_{(1,n)}, \sup_{n\in \mathbb N} m_{(2,n)}) \in \overline R$.
\end{lemma}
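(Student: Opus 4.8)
The plan is to exhibit a single witness $m^\ast\in\mathcal W_{\mathcal A}(R)$ whose two projections are $\sup_n m_{(1,n)}$ and $\sup_n m_{(2,n)}$, thereby certifying membership in $\overline R$ directly from the definition of relation lifting. By hypothesis each pair $(m_{(1,n)},m_{(2,n)})$ already admits a witness $m_n\in\mathcal W_{\mathcal A}(R)$, so the real work is to choose these witnesses \emph{coherently}: I would construct an ascending chain of witnesses $\hat m_1\sqsubseteq\hat m_2\sqsubseteq\cdots$ with $\pi_1(\hat m_n)=m_{(1,n)}$ and $\pi_2(\hat m_n)=m_{(2,n)}$, and then set $m^\ast=\sup_n\hat m_n$.

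Granting such a chain, the remaining steps are routine. The supremum $m^\ast$ exists because $\mathcal W_{\mathcal A}(R)$ is a dcpo under $\sqsubseteq$: the total mass of each $\hat m_n$ equals $|m_{(1,n)}|$ by Lemma \ref{lem:relsize} and is bounded above by $\sup(A)=\one$, and completeness of the outcome algebra (Definition \ref{def:ocalg}) lets us take pointwise suprema while keeping the support countable and contained in $R$. Scott continuity of the semiring sum --- the same property used to justify the least fixed point in Theorem \ref{thm:fp} --- then lets the (countable) sum defining each projection commute with the supremum over $n$, so that $\pi_i(m^\ast)=\pi_i(\sup_n\hat m_n)=\sup_n\pi_i(\hat m_n)=\sup_n m_{(i,n)}$ for $i\in\{1,2\}$. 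Hence $m^\ast$ witnesses $(\sup_n m_{(1,n)},\sup_n m_{(2,n)})\in\overline R$.

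The crux, and the step I expect to be the main obstacle, is the inductive construction of the coherent chain $\hat m_n$. Two naive attempts fail: the given witness $m_{n+1}$ for level $n+1$ need not dominate $\hat m_n$, and one cannot simply add a fixed ``increment'' witness, because the way mass is matched between the left and right coordinates at level $n$ may differ from the matching at level $n+1$. Concretely, a direct appeal to Lemma \ref{lem:plusrel} decomposes $m_{(2,n+1)}$ as some $a+b$ with $(m_{(1,n)},a)\in\overline R$, but nothing forces $a=m_{(2,n)}$, so the level-$n$ witness is not preserved. The fix is to reconcile the two matchings while preserving both projections, and this is exactly what the row-column property (Lemma \ref{lem:normrowcol}, derived from normalization) supplies: given $\hat m_n$ and $m_{n+1}$ whose left projections satisfy $m_{(1,n)}\sqsubseteq m_{(1,n+1)}$ and whose right projections satisfy $m_{(2,n)}\sqsubseteq m_{(2,n+1)}$, I would apply the row-column property in each coordinate to redistribute the mass of $m_{n+1}$ so that it contains $\hat m_n$ as a summand, yielding $\hat m_{n+1}=\hat m_n+\delta_n$ with $\delta_n\in\mathcal W_{\mathcal A}(R)$ and the correct projections. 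Carrying this redistribution out simultaneously in both coordinates --- so that a single $\delta_n$ accounts for the left increment and the right increment at once --- is the delicate part, and is where I would concentrate the technical effort.
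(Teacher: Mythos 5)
The crux you flagged in your final paragraph is not merely delicate---it is impossible in general, so the proposal as structured cannot be completed. Concretely, work in the probabilistic outcome algebra with $X=Y=\{a,b\}$ and $R=\{(a,a),(a,b),(b,b)\}$ (note $(b,a)\notin R$). Take the chains $m_{(1,1)}(a)=\tfrac{1}{2}$, $m_{(1,1)}(b)=0$ and $m_{(2,1)}(a)=0$, $m_{(2,1)}(b)=\tfrac{1}{2}$, with $m_{(1,n)}=m_{(2,n)}$ uniform ($a\mapsto\tfrac{1}{2}$, $b\mapsto\tfrac{1}{2}$) for all $n\ge 2$. Every level lies in $\overline R$, but the witnesses are \emph{unique} at each level: at level $1$ the only coupling puts mass $\tfrac{1}{2}$ on $(a,b)$; at level $n\ge 2$, since $(b,a)\notin R$, the column for $a$ forces mass $\tfrac{1}{2}$ on $(a,a)$, hence $0$ on $(a,b)$ and $\tfrac{1}{2}$ on $(b,b)$. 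The level-$2$ witness does not contain the level-$1$ witness as a summand, and since both are the unique couplings with the required projections, no ascending chain $\hat m_1\sqsubseteq\hat m_2$ of witnesses exists at all. The row--column property cannot rescue this: it re-tabulates two decompositions of the \emph{same} weighting function, whereas here the passage from level $n$ to level $n+1$ genuinely re-routes mass that was already matched (the left mass at $a$ is paired with $b$ at level $1$ but must be paired with $a$ at level $2$), which is exactly what your single-increment $\hat m_{n+1}=\hat m_n+\delta_n$ scheme cannot absorb.

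For comparison, the paper's proof does not build a coherent chain: it takes the pointwise supremum of the given witnesses $m_n$ directly and commutes $\sup_{n}$ with the marginal sums by Scott continuity, so that $\sup_n m_n$ projects onto $\sup_n m_{(i,n)}$. You have in fact put your finger on precisely the point that this argument treats tersely: the interchange of $\sup_n$ with the sum is justified when the family $(m_n)_n$ is itself a chain (or directed), which the given witnesses need not be---in the example above the pointwise supremum of the two witnesses has total mass $\tfrac{3}{2}$ and the wrong projections. So your diagnosis of the obstacle is sharp, but the proposed repair is provably unworkable, and a correct argument must avoid monotone witnesses altogether; in the probabilistic instance, for example, one can instead extract a pointwise-convergent subsequence of the $m_n$ (a diagonal argument over the countable joint support) and pass to the limit, using the domination $m_n(x,y)\le(\sup_k m_{(1,k)})(x)$ and $m_n(x,y)\le(\sup_k m_{(2,k)})(y)$ together with the tightness it provides to exchange the limit with the marginal sums.
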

\begin{proof}
For any $n\in\mathbb{N}$, we know that there is an $m_n$ such that $m_{(1,n)} = \lambda x.\sum_{y\in\supp(m_{(2,n)})} m_n(x,y)$ and $m_{(2,n)} = \lambda y.\sum_{x\in\supp(m_{(1,n)})} m_n(x,y)$.
%
%
Since each $(m_{(i,n)})_{n\in\mathbb{N}}$ is a chain, $\sup_{n\in\mathbb{N}} m_{(i,n)}$ exists, and:
\begin{align*}
\sup_{n\in\mathbb{N}} m_{(1, n)}
&= \sup_{n\in\mathbb{N}} (\lambda x.\smashoperator{\sum_{y\in\supp(m_{(2,n)})}} m_n(x,y))\\
\intertext{Since we use a pointwise order for functions, the $\sup$ of a function is equal to the $\sup$ at each point. Additionally, since the semiring is continuous, the $\sup$ distributes over the sum.}
&= \lambda x.\smashoperator[l]{\sum_{y\in\supp(\sup_{n\in\mathbb{N}}m_{(2,n)})}} \sup_{n\in\mathbb{N}} m_n(x,y)
\end{align*}
And by a similar argument, $\sup_{n\in\mathbb{N}} m_{(2, n)} = \lambda y.\sum_{x\in\supp(\sup_{n\in\mathbb{N}}m_{(1,n)})} \sup_{n\in\mathbb{N}} m_n(x,y)$, so:
\[(\sup_{n\in \mathbb N} m_{(1,n)}, \sup_{n\in \mathbb N} m_{(2,n)}) \in \overline R\]
\end{proof}

\section{Outcome Separation Logic}

\begin{lemma}[Normalization]\label{lem:normalize}
For any $m\neq\zero$, there exists $m'$ such that $|m'| = \one$ and $m = |m|\cdot m'$.
\end{lemma}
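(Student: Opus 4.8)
The plan is to lift the normalization axiom of the outcome algebra (\Cref{def:ocalg}, property~2, restated in \Cref{def:normalize}) from a family of weights up to the weighting function $m$ itself. First I would record the standing facts: since $m\neq\zero$, its support $\supp(m)$ is nonempty, and since $m\in\mathcal{W}_{\mathcal A}(S)$, the cumulative weight $|m| = \sum_{s\in\supp(m)} m(s)$ is defined (this is exactly the membership condition of the outcome monad, \Cref{def:ocmonad}). Thus the family $(m(s))_{s\in\supp(m)}$ is a well-defined, definable sum to which normalizability applies.

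Applying normalizability to this family yields coefficients $(b_s)_{s\in\supp(m)}$ such that $\sum_{s\in\supp(m)} b_s = \one$ and $m(s) = (\sum_{t\in\supp(m)} m(t))\cdot b_s = |m|\cdot b_s$ for every $s\in\supp(m)$. I would then define the witness $m'$ pointwise by $m'(s) = b_s$ for $s\in\supp(m)$ and $m'(s) = \zero$ otherwise.

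Two verifications then remain, both routine. For $|m'| = \one$, observe that $\supp(m')\subseteq\supp(m)$ and that the indices with $b_s = \zero$ contribute nothing to either sum, so $|m'| = \sum_{s\in\supp(m')} m'(s) = \sum_{s\in\supp(m)} b_s = \one$; this equality simultaneously witnesses that $m'$ is a genuine weighting function (its total is defined). For the identity $m = |m|\cdot m'$, I would split into the two pointwise cases: when $s\in\supp(m)$ we have $|m|\cdot m'(s) = |m|\cdot b_s = m(s)$, and when $s\notin\supp(m)$ we have $m'(s) = \zero$, so $|m|\cdot m'(s) = \zero = m(s)$ using that $\zero$ annihilates multiplication.

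I expect no deep obstacle, as the statement is essentially the weighting-function-level incarnation of the algebra's normalization axiom. The only points needing care are the support bookkeeping---checking that restricting from $\supp(m)$ to $\supp(m')$ leaves the total unchanged---and the role of the hypothesis $m\neq\zero$, which is precisely what makes $\supp(m)$ nonempty so that normalizability returns coefficients summing to $\one$ rather than being applied to an empty family whose sum would be $\zero\neq\one$.
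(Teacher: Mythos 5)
Your proposal is correct and follows essentially the same route as the paper's own proof: apply the algebra's normalization property to the family $(m(s))_{s\in\supp(m)}$, define $m'$ to be $b_s$ on the support and $\zero$ elsewhere, and verify the two identities pointwise. Your extra care with the off-support annihilator case and the observation that $m\neq\zero$ is what prevents normalizability from being applied to an empty family are welcome refinements the paper leaves implicit, but they do not change the argument.
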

\begin{proof}
By property (3) of \Cref{def:ocalg}, there must be $(b_s)_{s\in\supp(m)}$ such that $m(s) = (\sum_{t\in\supp(m)}m(t))\cdot b_s$ and $\sum_{s\in\supp(m)}b_s = \one$. Now, let $m'$ be defined as follows:
\[
m'(s) = \left\{
\begin{array}{ll}
b_s & \text{if}~s \in \supp(m) \\
\zero & \text{if}~s\notin\supp(m)
\end{array}\right.
\]
So, clearly $|m'| = \sum_{s\in\supp(m)}b_s = \one$. For every $s$, we also have $m(s) = (\sum_{t\in\supp(m)}m(t))\cdot b_s = |m|\cdot b_s = |m|\cdot m'(s)$, so $m = |m|\cdot m'$.
\end{proof}

\begin{lemma}[Splitting]\label{lem:splitting}
If $|m'| \le   |m_1| + |m_2|$, then there exist $m'_1$ and $m'_2$ such that $|m'_1| \le |m_1|$ and $|m_2'| \le |m_2|$ and $m' = m'_1 + m'_2$.
\end{lemma}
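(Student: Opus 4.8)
The plan is to reduce the problem of splitting the weighting function $m'$ to the much simpler problem of splitting a single scalar weight, by first normalizing $m'$. If $m' = \zero$ the claim is immediate: take $m_1' = m_2' = \zero$ and recall that $\zero \le a$ for every $a$, since $\zero + a = a$. So assume $m' \neq \zero$; by \Cref{lem:normalize} there is a weighting function $\hat m$ with $|\hat m| = \one$ and $m' = |m'|\cdot \hat m$. Any scalar decomposition $|m'| = a_1 + a_2$ then induces a decomposition $m' = a_1\cdot\hat m + a_2\cdot\hat m$, and since $|a\cdot m| = a\cdot|m|$ (by distributivity of $\cdot$ over the defined sum $\sum_{s\in\supp(m)}m(s)$), we obtain $|a_i\cdot\hat m| = a_i\cdot\one = a_i$. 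Thus it suffices to split the scalar $|m'|$ into two pieces bounded by $|m_1|$ and $|m_2|$.

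I would split that scalar using the row-column property (\Cref{def:rowcol}), which holds here by \Cref{lem:normrowcol} because every outcome algebra is normalizable. Writing $a = |m'|$, $b = |m_1|$, and $c = |m_2|$, the hypothesis $a \le b + c$ unfolds by the natural order to $a + r = b + c$ for some $r$, with both sides defined. Applying the row-column property to the two length-two sequences $(a, r)$ and $(b, c)$ yields weights $u_{11}, u_{12}, u_{21}, u_{22}$ satisfying $u_{11} + u_{12} = a$, $u_{11} + u_{21} = b$, and $u_{12} + u_{22} = c$. Setting $a_1 = u_{11}$ and $a_2 = u_{12}$, the first equation gives $a_1 + a_2 = a = |m'|$, while the latter two exhibit the witnesses showing $a_1 \le b = |m_1|$ and $a_2 \le c = |m_2|$.

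Finally, I would set $m_1' = a_1\cdot\hat m$ and $m_2' = a_2\cdot\hat m$ and read off the three required conclusions: $m_1' + m_2' = (a_1 + a_2)\cdot\hat m = |m'|\cdot\hat m = m'$, where the sum is defined because $a_1 + a_2 = |m'|$ is, so the partial distributive law applies; and $|m_i'| = a_i \le |m_i|$ as computed in the first step.

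The main obstacle is the scalar splitting step, which is essentially a Riesz-style refinement property of the partial commutative monoid $\langle A, +, \zero\rangle$; the only leverage available is the row-column property inherited from normalizability. The crux is recognizing that $a \le b + c$ can be rephrased as the equality of two finite sums $a + r = b + c$, at which point the row-column decomposition delivers exactly the cross-terms needed to bound each half. Care is needed to ensure that every sum invoked is defined---$b + c$ by the hypothesis, $a + r$ by the natural-order witness, and $a_1 + a_2$ because it equals $|m'|$---so that the partial-semiring operations and the distributive laws all apply.
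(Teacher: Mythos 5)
Your proof is correct, but it takes a genuinely different route from the paper's. The paper applies the row-column property (\Cref{lem:normrowcol}) \emph{once, directly at the level of weighting functions}: writing $|m_1|+|m_2| = a + \sum_{s\in\supp(m')} m'(s)$ for the slack $a$, it tabulates the sequences $(|m_1|,|m_2|)$ against $(a,(m'(s))_{s\in\supp(m')})$, obtaining a grid $u_{(i,s)}$ over $\{1,2\}\times(1+\supp(m'))$, and defines the split pointwise as $m'_i = \lambda s.\,u_{(i,s)}$; the bounds $|m'_i|\le|m_i|$ then fall out because the row sums are $|m_i| = u_{(i,\star)} + |m'_i|$. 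You instead factor the problem through \Cref{lem:normalize}: normalize $m'$ to $\hat m$, reduce everything to splitting the scalar $|m'|$, and apply row-column only in the finite $2\times2$ instance $a+r=b+c$. This buys you something real: all infinitary bookkeeping is quarantined inside the already-proved normalization lemma, the row-column property is needed only for finite sums, the slack term is handled as a single scalar rather than an extra column in a countable grid, and your witnesses $m'_i = a_i\cdot\hat m$ are structurally simple (both halves proportional to $m'$, which in the probabilistic instance is just a proportional split of a subdistribution). The costs are minor: you need the extra $m'=\zero$ base case, you invoke two lemmas where the paper uses one, and your construction is less general in the shape of the witnesses it produces---though the lemma is existential, so that is harmless. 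Your attention to definedness (that $b+c$ is defined by hypothesis, $a+r$ by the natural-order witness, and $a_1+a_2 = |m'|$ so the partial distributive law $(a_1+a_2)\cdot\hat m(s) = a_1\hat m(s)+a_2\hat m(s)$ applies) matches the level of care the paper itself exercises, and the identity $|a_i\cdot\hat m| = a_i\cdot|\hat m| = a_i$ is exactly property (2) of complete partial semirings. Both arguments ultimately rest on the normalization axiom of outcome algebras, so neither is more demanding on the algebraic structure.
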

\begin{proof}
Since $|m_1| + |m_2| \ge |m'|$, then there is some $a$ such that:
\[
|m_1| + |m_2| = |m'| + a = a + \smashoperator{\sum_{s \in \supp(m')}} m'(s)
\]
So, by \Cref{lem:normrowcol}, there exists $(u_k)_{k \in \{1,2\}\times (1+\supp(m'))}$ such that for all $i\in\{1,2\}$ and $s\in\supp(m')$:
\[
|m_i| = \smashoperator{\sum_{s \in 1+\supp(m')}} u_{(i, s)}
\qquad\text{and}\qquad
a = u_{(1, \star)} + u_{(2, \star)}
\qquad\text{and}\qquad
m'(s) = u_{(1, s)} + u_{(2, s)}
\]
Now, let $m'_i = \lambda s. u_{(i, s)}$ for $i\in\{1, 2\}$. So, $m'_1 + m'_2 = \lambda s. u_{(1, s)} + u_{(2, s)} = m'$. Now, it just remains to show that $|m_i| \ge |m'_i|$:
\[
|m_i|
= \smashoperator{\sum_{s \in 1+\supp(m')}} u_{(i, s)}
= u_{(i, \star)} + \smashoperator{\sum_{s \in\supp(m')}} m'_i(s)
= u_{(i, \star)} + |m'_i|
\ge |m'_i|
\]

\end{proof}

\subsection{The Outcome Separating Conjunction}

\osepforward*
\begin{proof} By induction on the structure of $\varphi$.
\begin{itemize}
\item $\varphi = \top$. Since $\varphi\osep F = \top$, then clearly $m'\vDash\varphi\osep F$
\item $\varphi = \varphi_1 \vee \varphi_2$. We know $m\vDash\varphi_1$ or $m\vDash\varphi_2$. Without loss of generality, suppose that $m\vDash\varphi_1$. By the induction hypothesis, we know that $m'\vDash \varphi_1\osep F$. We can therefore weaken this to conclude that $m'\vDash(\varphi_1\vee\varphi_2)\osep F$. The case where $m\vDash\varphi_2$ is symmetrical.

\item $\varphi = \varphi_1 \oplus \varphi_2$. We know that $m_1\vDash\varphi_1$ and $m_2\vDash\varphi_2$ for some $m_1$ and $m_2$ such that $m = m_1 + m_2$. 
Now, since $(m_1 + m_2, m') \in \overline{\fr(F)}$, by \Cref{lem:plusrel} there must be $m'_1$ and $m'_2$ such that $(m_1, m'_1) \in \overline{\fr(F)}$ and $(m_2, m'_2) \in \overline{\fr(F)}$ and $m' = m'_1 + m'_2$.
By the induction hypothesis, $m'_1\vDash \varphi_1\osep F$ and $m'_2\vDash \varphi_2\osep F$, so $m'\vDash (\varphi_1 \oplus \varphi_2)\osep F$.

\item $\varphi = \wg{\varphi'}a$. We know that $a=\zero$ and $m=\zero$ or $m_1 \vDash \varphi'$ for some $m_1$ such that $m = a\cdot m_1$.
In the first case, by \Cref{lem:zerorel}, $m' = \zero$ and so clearly $m' \vDash \wg{\varphi'}\zero \osep F$.
In the second case, since $(a\cdot m_1, m') \in \overline{\fr(A)}$, by \Cref{lem:scalerel} there must be $m'_1$ such that $(m_1, m'_1) \in \overline{\fr(F)}$ and $m' = a\cdot m'_1$. By the induction hypothesis, $m'_1 \vDash \varphi' \osep F$, so therefore $m' \vDash \wg{\varphi'}a \osep F$.

\item $\varphi = \epsilon:P$. We know that $|m| = \one$ and every $\sigma\in\supp(m)$ has the form $\inj_\epsilon(s,h)$ such that $(s,h)\in P$.
Since $(m,m') \in \overline{\fr(F)}$, we know by \Cref{lem:relsize} that $|m'| = |m| = \one$.
Additionally, for every element in $\supp(m')$, there must be an element in $\supp(m)$ related by $\fr(F)$, so each element of $m'$ has the form $\inj_\epsilon(s, h\uplus h')$ such that $(s,h)\in P$ and $(s,h')\in F$, and so clearly $(s, h\uplus h')\in P\sep F$, and therefore also $m'\vDash (\epsilon:P)\osep F$.
\end{itemize}
\end{proof}

\osepback*
\begin{proof}
By induction on the structure of $\varphi$.
\begin{itemize}
\item $\varphi = \top$. Suppose $m\vDash \top\osep F$. Let $m_1 = m'_1 = \zero$ and $m_2 = m$. Clearly $(\zero,\zero)\in \overline{\fr(F)}$ and $m'_1 + m_2 = \zero + m = m$. Now, taking any $m_2'$, it is obvious that $\zero+m_2'\vDash\top$.


\item $\varphi = \varphi_1\vee\varphi_2$. We know $m\vDash\varphi_1\osep F$ or $m\vDash\varphi_2\osep F$. Without loss of generality, suppose that $m\vDash\varphi_1$. By the induction hypothesis, there are $m_1$, $m'_1$, and $m_2$ such that $(m_1, m'_1)\in\overline{\fr(F)}$ and $m = m_1' + m_2$ and $m_1 + m_2'\vDash \varphi_1$ for any $m_2'$ such that $|m_2'| \le |m_2|$.
Now, take any such $m_2'$, we know that $m_1 + m_2'\vDash \varphi_1$. We can weaken this to conclude that $m_1 + m_2'\vDash \varphi_1\vee\varphi_2$

%

\item $\varphi = \varphi_1\oplus\varphi_2$.
We know that there are $m_1$ and $m_2$ such that $m_1\vDash\varphi_1\osep F$ and $m_2\vDash\varphi_2\osep F$ and $m = m_1 + m_2$.
By the induction hypotheses, we get that there are $u_1, u_1', u_2, v_1, v_1'$, and $v_2$ such that $m_1 = u_1'+u_2$ and $m_2' = v_1'+v_2$ and $(u_1,u_1')\in\overline{\fr(F)}$ and $(v_1,v_1')\in\overline{\fr(F)}$ and $u_1+u_2'\vDash \varphi_1$ and $v_1+v_2'\vDash\varphi_2$ whenever $|u_2'|\le|u_2|$ and $|v_2'|\le|v_2|$.

Now, let $m_1 = u_1 +v_1$ and $m_1' = u_1' + v_1'$ and $m_2 = u_2 +v_2$. By \Cref{lem:plusrel}, we get that $(m_1, m_1')\in\overline{\fr(F)}$. We also have that:
\[
m_1' + m_2 = (u_1' +v_1') + (u_2 +v_2)
= (u_1'+u_2) + (v_1'+v_2)
= m_1 + m_2
= m
\]
Now, take any $m_2'$ such that $|m_2'| \le |m_2| =  |u_2| + |v_2|$. By \Cref{lem:splitting} we know that there are $u_2'$ and $v_2'$ such that $|u_2'| \le |u_2|$ and $|v_2'| \le |v_2|$ and $m_2' = u_2'+v_2'$. This means that $u_1 + u_2'\vDash\varphi_1$ and $v_1+v_2'\vDash\varphi_2$. We also have:
\[
(u_1+u_2') + (v_1+v_2') = (u_1 + v_1) + (u_2' + v_2') = m_1 + m_2'
\]
So, $m_1+m_2'\vDash\varphi_1\oplus\varphi_2$.

\item $\varphi = \wg{\varphi'}a$. We know that there is $m'$ such that $m' \vDash \varphi'$ and $m = a\cdot m'$. By the induction hypothesis, we get that there are $u_1$, $u'_1$, and $u_2$ such that $(u_1, u'_1)\in\overline{\fr(F)}$, $m' = u'_1 + u_2$, and $u_1 + u_2'\vDash \varphi'$ whenever $|u_2'| \le |u_2|$.

Now, let $m_1 = a \cdot u_1$, $m_1' = a\cdot u'_1$, and $m_2 = a\cdot u_2$. By \Cref{lem:scalerel}, we get that $(m_1, m_1') \in \overline{\fr(F)}$. We also have that:
\[
m'_1 + m_2 = a\cdot u_1' + a\cdot u_2 = a\cdot (u'_1 + u_2) = a\cdot m' = m
\]
Now, take any $m'_2$ such that $|m'_2| \le |m_2| = a\cdot |u_2|$.
If $m'_2 = \zero$, then $m_1 + m'_2 = m_1 = a\cdot u_1$, so $m_1 + m'_2 \vDash (\varphi')_a$. If not, then by \Cref{lem:normalize} there is $m''_2$ such that $|m''_2| =\one$ and $m'_2 = |m'_2|\cdot m_2''$.
By the definition of $\le$, there must be $a'$ such that $|m'_2| + a' = a\cdot |u_2|$ and by \Cref{def:normalize}, there must be a $b$ such that $|m'_2| = (|m'_2|+a')\cdot b = a\cdot |u_2|\cdot b$.
Now, let $u'_2 = |u_2| \cdot b \cdot m''_2$, so $|u'_2| = |u_2|\cdot b \cdot \one \le |u_2|$, and therefore $u_1 + u'_2 \vDash \varphi'$ and also $a\cdot(u_1 + u'_2) \vDash \wg{\varphi'}a$. Finally, we have that:
\[
a\cdot (u_1 + u'_2)
= a\cdot u_1 + a\cdot |u_2|\cdot b\cdot m''_2
= m_1 + |m'_2| \cdot m''_2
= m_1 + m'_2
\]
So, we get that that $m_1 + m'_2\vDash\wg{\varphi'}a$.

\item $\varphi = \epsilon:P$. We know that $m\vDash \epsilon:P\sep F$, so $|m| = \one$ and every element of $\supp(m)$ has the form $\inj_\epsilon(s,h\uplus h')$ such that $(s,h)\in P$ and $(s,h')\in F$. Now, let $m_2 = \zero$, $m'_1 = m$, so clearly $m = m'_1+m_2$.
To define $m_1$, first we fix a relation $S \subseteq \fr(F)$ such that for any $h''$ such that $(s, h'')\in P\sep F$, there is a unique $h$ and $h'$ such that $h'' = h\uplus h'$ and $(s,h)\in P$ and $(\inj_\epsilon(s,h), \inj_\epsilon(s, h\uplus h'))\in S$. Now, $m_1$ is defined as follows:
\[
m_1(\sigma) = \smashoperator{\sum_{\tau\mid (\sigma,\tau) \in S}} m(\tau)
\]
Now, we must show that $(m_1, m)\in\overline{\fr(F)}$. To do so, first let:
\[
m'(\sigma,\tau) = \left\{\begin{array}{ll}
m(\tau) & \text{if}~(\sigma, \tau)\in S \\
\zero & \text{otherwise}
\end{array}\right.
\]
Now, we have:
\begin{align*}
\lambda\sigma.\smashoperator{\sum_{\tau\in\supp(m)}} m'(\sigma,\tau)
&= \lambda\sigma.\smashoperator{\sum_{\tau\mid (\sigma,\tau)\in S}} m(\tau)
= m_1
\\
\lambda\tau.\smashoperator{\sum_{\sigma\in\supp(m_1)}} m'(\sigma, \tau)
&= \lambda\tau.\smashoperator{\sum_{\sigma\in\supp(m_1)}} \left\{\begin{array}{ll}
m(\tau) & \text{if}~(\sigma, \tau)\in S \\
\zero & \text{otherwise}
\end{array}\right. \\
\intertext{By the definition of $S$, there is exactly one $\sigma$ for each $\tau$ such that $(\sigma, \tau)\in S$, so:}
&= \lambda\tau.m(\tau) = m
\end{align*}
So, $(m_1, m)\in\overline{\fr(F)}$, and this also means that $|m_1| = |m| = \one$. Also, by construction, each element of $\supp(m_1)$ has the form $\inj_\epsilon(s,h)$ where $(s,h)\in P$, so $m_1\vDash \epsilon:P$. Since $m_2=\zero$, then it follows that $m_1+m_2'\vDash \epsilon:P$ for any $m_2'$ such that $|m_2'| \le |m_2| = \zero$.
\end{itemize}

\end{proof}

Now, for the proof of the frame property, it will be necessary to relate states based not only on whether they can be obtained by augmenting the heap, but we must also ensure that after augmenting the heap, it is possible to observe the same allocation behavior. To that end, we introduce a slightly modified $\fr'$ relation.

\[
\begin{array}{l}
\fr'(F, X, \af, \af') = \phantom{x}
\\\quad
\left\{
(\inj_\epsilon(s[X\mapsto n], h), \inj_\epsilon(s, h\uplus h''))
\middle|
\begin{array}{ll}
 (s, h'') \vDash F \\
 \forall (s_0,h_0). ~s_0(x) = n \Rightarrow \\
 \quad\af'(s_0, h_0) = \af(s_0[X\mapsto s(X)], h_0\uplus h'')
 \end{array}
 \right\} \cup \phantom{x} \\
 \quad\{ (\und, \und) \}
\end{array}
\]

\begin{lemma}\label{lem:osepforward2}
If $m\vDash\varphi$, $X\notin\mathsf{fv}(\varphi)$ and $(m, m') \in \overline{\fr'(F,X,\af,\af')}$, then $m'\vDash \varphi \osep F$
\end{lemma}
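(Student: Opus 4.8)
The plan is to prove this by structural induction on $\varphi$, following the proof of \Cref{lem:osepforward} almost verbatim, since $\fr'(F,X,\af,\af')$ behaves just like $\fr(F)$ at the level of states, except that it additionally rewrites the value of the logical variable $X$ and carries an allocator side condition. That side condition never enters the argument---the conclusion $m'\vDash\varphi\osep F$ says nothing about allocation---so I would discard it immediately, leaving a purely state-level relation of the form $(\inj_\epsilon(s[X\mapsto n],h),\,\inj_\epsilon(s,h\uplus h''))$ with $(s,h'')\vDash F$, together with $(\und,\und)$.

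For the inductive connectives I would reuse exactly the machinery already developed for \Cref{lem:osepforward}: \Cref{lem:plusrel} to split $m=m_1+m_2$ through the lifted relation in the $\oplus$ case, \Cref{lem:scalerel} together with \Cref{lem:zerorel} in the weighting case $\wg{\varphi'}a$, and weakening in the $\vee$ case. In each of these the side hypothesis $X\notin\mathsf{fv}(\varphi)$ descends to the subformulas because $\mathsf{fv}(\varphi_i)\subseteq\mathsf{fv}(\varphi)$, so the induction hypothesis applies without change; and $\top\osep F=\top$ makes the $\top$ case immediate. None of these steps is sensitive to the replacement of $\fr(F)$ by $\fr'(F,X,\af,\af')$, since \Cref{lem:plusrel,lem:scalerel,lem:zerorel} hold for the lifting of an arbitrary relation.

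The only case that genuinely differs---and the sole place where $X\notin\mathsf{fv}(\varphi)$ is used---is the base case $\varphi=\epsilon:P$. Here $|m|=\one$ and every $\sigma\in\supp(m)$ is $\inj_\epsilon(s_0,h)$ with $(s_0,h)\in P$; by \Cref{lem:relsize} we get $|m'|=\one$, and each $\tau\in\supp(m')$ is related to some such $\sigma$ by $\fr'$. Unfolding the relation forces $\sigma=\inj_\epsilon(s[X\mapsto n],h)$ and $\tau=\inj_\epsilon(s,h\uplus h'')$ with $(s,h'')\vDash F$, so I know $(s[X\mapsto n],h)\in P$ but need $(s,h)\in P$ in order to conclude $(s,h\uplus h'')\in P\sep F$ and hence $m'\vDash(\epsilon:P)\osep F$. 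This is precisely the point where $X\notin\mathsf{fv}(\varphi)=\mathsf{fv}(P)$ is invoked: by the standard coincidence property of separation-logic assertions---satisfaction depends only on the values of variables occurring free in the assertion---we have $(s[X\mapsto n],h)\in P$ iff $(s,h)\in P$.

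I expect the main (and essentially the only) obstacle to be the bookkeeping in this base case: correctly reading off the shapes of the related states from the definition of $\fr'$, keeping straight that it is the \emph{left} store that carries the overwrite $X\mapsto n$ whereas both the frame assertion $F$ and the membership $(s,h)\in P$ are phrased over the \emph{original} store $s$, and then discharging the coincidence step for $P$. Everything else is a mechanical replay of \Cref{lem:osepforward}. As an alternative I could factor $\fr'(F,X,\af,\af')$ as $\fr(F)\circ R_X$, where $R_X$ merely restores the value of $X$, and then combine \Cref{lem:relcomp} with a one-line lemma that $\overline{R_X}$ preserves any assertion whose free variables omit $X$ and with \Cref{lem:osepforward}; but the direct induction is cleaner because it avoids isolating that auxiliary preservation lemma.
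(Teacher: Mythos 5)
Your proposal is correct and matches the paper's proof essentially verbatim: the same structural induction, reusing \Cref{lem:plusrel,lem:scalerel,lem:zerorel,lem:relsize} for the connectives, with the hypothesis $X\notin\mathsf{fv}(\varphi)$ invoked only in the $\epsilon:P$ base case to pass from $(s[X\mapsto n],h)\in P$ to $(s,h)\in P$ and then conclude $(s,h\uplus h')\in P\sep F$. Your observation that the allocator side condition in $\fr'(F,X,\af,\af')$ plays no role here is likewise consistent with the paper, which never uses it in this lemma (it matters only later, in the frame property).
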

\begin{proof} Let $R = \fr'(F,X,\af,\af')$. The proof is by induction on the structure of $\varphi$.
\begin{itemize}
\item $\varphi = \top$. Since $\varphi\osep F = \top$, then clearly $m'\vDash\varphi\osep F$
\item $\varphi = \varphi_1 \vee \varphi_2$. We know $m\vDash\varphi_1$ or $m\vDash\varphi_2$. Without loss of generality, suppose that $m\vDash\varphi_1$. By the induction hypothesis, we know that $m'\vDash \varphi_1\osep F$. We can therefore weaken this to conclude that $m'\vDash(\varphi_1\vee\varphi_2)\osep F$. The case where $m\vDash\varphi_2$ is symmetrical.

\item $\varphi = \varphi_1 \oplus \varphi_2$. We know that $m_1\vDash\varphi_1$ and $m_2\vDash\varphi_2$ for some $m_1$ and $m_2$ such that $m = m_1 + m_2$. 
Now, since $(m_1 + m_2, m') \in \overline{R}$, by \Cref{lem:plusrel} there must be $m'_1$ and $m'_2$ such that $(m_1, m'_1) \in \overline{R}$ and $(m_2, m'_2) \in \overline{R}$ and $m' = m'_1 + m'_2$.
By the induction hypothesis, $m'_1\vDash \varphi_1\osep F$ and $m'_2\vDash \varphi_2\osep F$, so $m'\vDash (\varphi_1 \oplus \varphi_2)\osep F$.

\item $\varphi = \wg{\varphi'}a$. If $a=\zero$, then $m=\zero$ and by \Cref{lem:zerorel} $m'=\zero$, so $m'\vDash \wg{\varphi'}\zero \osep F$. Otherwise, we know that $m_1 \vDash \varphi'$ for some $m_1$ such that $m = a\cdot m_1$. Since $(a\cdot m_1, m')\in\overline R$, by \Cref{lem:scalerel} there must be an $m'_1$ such that $(m_1, m'_1) \in \overline R$ and $m' = a\cdot m'_1$. By the induction hypothesis, $m'_1\vDash \varphi'\osep F$, so $m'\vDash \wg{\varphi'}a\osep F$.

\item $\varphi = \epsilon:P$. We know that $|m| = \one$ and every $\sigma\in\supp(m)$ has the form $\inj_\epsilon(s[X\mapsto n],h)$ such that $(s[X\mapsto n],h)\in P$, and since $X\notin\mathsf{fv}(P)$, then it must also be that $(s, h) \in P$.
Since $(m,m') \in \overline{R}$, we know by \Cref{lem:relsize} that $|m'| = |m| = \one$.
Additionally, for every element in $\supp(m')$, there must be an element in $\supp(m)$ related by $R$, so each element of $m'$ has the form $\inj_\epsilon(s, h\uplus h')$ such that $(s,h')\in F$, and so clearly $(s, h\uplus h')\in P\sep F$, and therefore also $m'\vDash (\epsilon:P)\osep F$.
\end{itemize}
\end{proof}

\begin{lemma}\label{lem:frtofr}
For any $X\notin\mathsf{fv}(\varphi)$, and $\af\in\mathsf{Alloc}$, if $(m_1, m_2) \in \overline{\fr(F)}$ and $m_1+u\vDash\varphi$, then there exists an $m'_1$ and $\af'$ such that $(m'_1, m_2)\in \overline{\fr'(F,X,\af,\af')}$ and $m'_1+u\vDash\varphi$.
\end{lemma}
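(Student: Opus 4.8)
The plan is to unpack the span witnessing $(m_1,m_2)\in\overline{\fr(F)}$, tag each witnessed pair with a distinct fresh value of $X$, and then use these tags both to define the reconstructing allocator $\af'$ and to split $m_1$ into $m_1'$. Concretely, I would fix $m\in\mathcal W_{\mathcal A}(\fr(F))$ with $\pi_1(m)=m_1$ and $\pi_2(m)=m_2$. Since $m$ is countably supported, I enumerate the defined pairs of $\supp(m)$ as $(\sigma_k,\tau_k)_k$, where $\sigma_k=\inj_{\epsilon_k}(s_k,h_k)$, $\tau_k=\inj_{\epsilon_k}(s_k,h_k\uplus h''_k)$ and $(s_k,h''_k)\vDash F$, and assign to each a distinct value $n_k$ (a possible $(\und,\und)$ pair is left untouched).

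Next I would define $\af'$ so that, for every $(s_0,h_0)$ with $s_0(X)=n_k$, we set $\af'(s_0,h_0)=\af(s_0[X\mapsto s_k(X)],h_0\uplus h''_k)$, letting $\af'$ agree with $\af$ for all other values of $X$. This is exactly the allocator clause demanded by $\fr'$, and it is a valid allocator since $\af$ is: the cumulative weight is still $\one$ and every allocated address avoids $\dom(h_0\uplus h''_k)\supseteq\dom(h_0)$. The witness $m'$ for $\overline{\fr'(F,X,\af,\af')}$ is then obtained from $m$ by replacing each left component $\sigma_k$ with its tagged version $\inj_{\epsilon_k}(s_k[X\mapsto n_k],h_k)$; I set $m_1'=\pi_1(m')$. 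Because the tags are distinct, retagging changes neither the right projection — so $\pi_2(m')=\pi_2(m)=m_2$ — nor the total mass, and each pair of $\supp(m')$ has precisely the shape required by $\fr'$, with the allocator side condition holding by construction. Hence $(m_1',m_2)\in\overline{\fr'(F,X,\af,\af')}$.

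It then remains to show $m_1'+u\vDash\varphi$ from $m_1+u\vDash\varphi$. Let $T\subseteq\st\times\st$ relate $\inj_\epsilon(s,h)$ to $\inj_\epsilon(s',h)$ whenever $s$ and $s'$ agree on every variable except possibly $X$, and relate $\und$ to itself. The same retagging data gives a witness placing weight $m(\sigma_k,\tau_k)$ on $(\sigma_k,\inj_{\epsilon_k}(s_k[X\mapsto n_k],h_k))\in T$, so $(m_1,m_1')\in\overline T$; and since $T$ is reflexive we also get $(u,u)\in\overline T$ by a diagonal witness (as in \Cref{lem:unitrel}). By \Cref{lem:plusrel} these combine to $(m_1+u,m_1'+u)\in\overline T$, which in particular shows $m_1'+u$ is defined. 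Finally, satisfaction of $\varphi$ is invariant along $\overline T$ whenever $X\notin\mathsf{fv}(\varphi)$: this is a routine induction on $\varphi$ of exactly the same shape as \Cref{lem:osepforward2} (now with the trivial frame $\emp$ and no heap change, only the $X$-reassignment), using \Cref{lem:plusrel}, \Cref{lem:scalerel} and \Cref{lem:zerorel} in the $\oplus$ and $\wg{(-)}a$ cases, \Cref{lem:relsize} to preserve $|m|=\one$, and $X\notin\mathsf{fv}(P)$ in the $\epsilon\!:\!P$ base case to transport $(s,h)\in P$ to $(s',h)\in P$. Applying this to $m_1+u\vDash\varphi$ yields $m_1'+u\vDash\varphi$.

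The main obstacle is bookkeeping rather than conceptual: keeping the tags $n_k$ genuinely distinct so that $\af'$ is single-valued and the retagging is injective on witness pairs, and checking that the allocator clause of $\fr'$ — which quantifies over all $(s_0,h_0)$ — is only ever needed on heaps disjoint from the framed $h''_k$, so that $h_0\uplus h''_k$ is defined (this matches the allocator construction in the sketch of \Cref{thm:frame}). The remaining pieces — validity of $\af'$, the projection identities, and $\overline T$-invariance of $\varphi$ — are mechanical once the tagging is set up.
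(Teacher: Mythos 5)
Your proposal matches the paper's proof essentially step for step: the paper likewise enumerates the span witness $m$ via a bijection into fresh tag values, defines $\af'$ by decoding $s(X)$ to recover the framed heap $h''_k$ (falling back to $\af$ otherwise), retags the left components to form $m'$ and sets $m_1'=\pi_1(m')$, and checks the projection identities and $\af'\in\mathsf{Alloc}$ exactly as you do. Your only departure is one of rigor rather than route: where the paper simply asserts that $m_1$ and $m_1'$ ``differ only in the values of $X$, which does not appear in $\varphi$,'' you make this transport explicit via the lifted relation $\overline T$, \Cref{lem:plusrel}, and an induction in the style of \Cref{lem:osepforward2}, which is a sound (indeed slightly more careful, especially regarding the $+u$ summand) elaboration of the same idea.
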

\begin{proof}
Since $(m_1, m_2) \in \overline{\fr(F)}$, there is some $m$ such that:
\[
m_1 = \lambda\sigma.\smashoperator{\sum_{\tau\in\supp(m_2)}}m(\sigma, \tau)
\quad
\text{and}
\quad
m_2 = \lambda\tau.\smashoperator{\sum_{\sigma\in\supp(m_1)}}m(\sigma, \tau)
\]
Since $m$ must by countably supported, we can enumerate the elements of $\supp(m)$, assigning each a unique natural number $n$. Let $f \colon \supp(m) \to \{ n\in\mathbb N \mid 1 \le n\le |\supp(m)| \}$ be such a bijection, and note that $\mathbb N \subseteq \mathbb Z \subseteq \mathsf{Val}$, so the codomain of $f$ contains valid program values. Now, let $m'$ be defined as follows:
\begin{align*}
m'(\inj_\epsilon(s,h), \tau) &= \left\{
\begin{array}{ll}
m(\inj_\epsilon(s', h), \tau) & \text{if}~ \exists s'. f^{-1}(s(X)) = (\inj_{\epsilon}(s', h), \tau) \land s = s'[X\mapsto s(X)] \\
\zero & \text{otherwise}
\end{array}
\right.\\
m'(\und, \tau) &= m(\und, \tau)
\end{align*}
Intuitively $m'$ is obtained by taking each $(\inj_\epsilon(s, h), \tau)$ in the support of $m$ and updating $s$ so that $X$ has value $f(\inj_\epsilon(s, h), \tau)$. Now, we define $\af'$ as follows:
\[
\af'(s,h) = \left\{
\begin{array}{ll}
\af(s[X\mapsto s'(x)],h\uplus h'') & \text{if} ~ \exists s', h', h''.~f^{-1}(s(X)) = (\inj_\epsilon(s', h'), \inj_\epsilon(s', h'\uplus h'')) \\
\af(s,h) & \text{otherwise}
\end{array}
\right.
\]
First, we argue that $\af' \in \mathsf{Alloc}$. It is obvious that $\af'(s,h)$ cannot return anything in the domain of $h$, since its definition relies on $\af$, run on an even larger heap. In the definition of $\af'$, there are two cases. In the first case, $|\af'(s,h)| = |\af(s[X\mapsto s'(x)],h\uplus h'')| = \one$ and in the second case, $|\af'(s,h)| = |\af(s,h)| = \one$. 

We now show that $\supp(m') \subseteq \fr'(F, X, \af,\af')$. Take any $(\sigma, \tau) \in \supp(m')$. If $\sigma = \und$, then it must be the case that $\tau = \und$ too, since $m'(\und, \tau) = m(\und, \tau)$ and $\supp(m)\subseteq \fr(F)$, which only relates $\und$ to itself, therefore $(\sigma, \tau) = (\und, \und) \in \fr'(F,X,\af,\af')$.

If instead $\sigma = \inj_\epsilon(s,h)$, then there must be some $s'$ such that $f^{-1}(s(X)) = (\inj_\epsilon(s', h), \tau)$ and $s = s'[X \mapsto s(X)]$. By the definition of $f$, that means that $(\inj_\epsilon(s', h), \tau) \in \supp(m)$, and so $\tau = \inj_\epsilon(s', h \uplus h'')$ for some $h''$ such that $(s', h'') \in F$. In addition, by the definition of $\af'$, for any $s_0$ and $h_0$ such that $s_0(X) = s(X)$, $\af'(s_0, h_0) = \af(s_0[X\mapsto s'(X)], h_0 \uplus h'')$, therefore:
\[
(\sigma, \tau) = (\inj_\epsilon(s'[X\mapsto s(X)],h), \inj_\epsilon(s', h \uplus h'')) \in \fr'(F,X,\af,\af')
\]
Now, let $m_1' = \lambda\sigma.\sum_{\tau\in\supp(m_2)} m'(\sigma, \tau)$ and observe that for all $\tau$:
\begin{align*}
\smashoperator{\sum_{\sigma\in\supp(m_1')}} m'(\sigma, \tau)
&= m'(\und, \tau) + \smashoperator{\sum_{\inj_\epsilon(s, h)\in\supp(m_1')}} m'(\inj_\epsilon(s, h), \tau) \\
&= m(\und, \tau) + \smashoperator[l]{\sum_{\inj_\epsilon(s, h)\in\supp(m_1')}} \smashoperator[r]{\sum_{v\in\mathsf{Val}}} m(\inj_\epsilon(s[X\mapsto v], h), \tau) \\
&= \smashoperator{\sum_{\sigma \in\supp(m_1)}} m(\sigma, \tau)
= m_2(\tau)
\end{align*}
So, we conclude that $(m'_1, m_2) \in \overline{\fr'(F,X,\af,\af')}$. Finally, for any $\inj_\epsilon(s,h)$:
\begin{align*}
m_1(\inj_\epsilon(s,h))
&= \smashoperator{\sum_{\tau\in\supp(m_2)}} m(\inj_\epsilon(s,h), \tau) \\
&= \smashoperator{\sum_{\tau\in\supp(m_2)}} m'(\inj_\epsilon(s[ X \mapsto f(\inj_\epsilon(s,h))], h), \tau)\\
&= m_1'(\inj_\epsilon(s[ X \mapsto f(\inj_\epsilon(s,h))], h))
\end{align*}
So $m_1$ and $m_1'$ differ only in the values of $X$, which does not appear in $\varphi$ by assumption, so $m'_1\vDash\varphi$.

\end{proof}

\subsection{Replacement of Unsafe States}
\label{app:replacement}


First, we provide the definitions of the $\repl$ relation and the $\prune$ operation. $\repl$ relates $\und$ to any other state and $\lightning$, indicating that after framing, an $\und$ state can become any other state, or can diverge ($\lightning$). All $\ok$ and $\er$ states are only related to themselves. The $\prune\colon \mathcal W(\st \cup \{\lightning\}) \to \mathcal W(\st)$ function removes $\lightning$ from some program configuration $m$.
\[
\repl = \{ (\und, \sigma) \mid \sigma\in \st \cup \{\lightning\} \} \cup \{ (\sigma,\sigma) \mid \sigma\in \st \}
\qquad\quad
\mathsf{prune}(m)(\sigma) = \left\{\begin{array}{ll}
m(\sigma) & \text{if}~ \sigma \neq \lightning \\
\zero & \text{if}~ \sigma = \lightning
\end{array}\right.
\]
Now, by lifting the $\repl$ relation, we can prove that replacing undefined states in some program configuration does not affect the validity of outcome assertions. Intuitively, this is true because $\und$ can only be satisfied by $\top$, which is also satisfied by anything else.

\replcor*
\begin{proof}
By induction on the assertion $\varphi$.
\begin{itemize}
\item $\varphi=\top$. $\prune(m')\vDash\varphi$ since everything satisfies $\top$.
\item $\varphi = \varphi_1\vee\varphi_2$. Without loss of generality, suppose $m\vDash\varphi_1$. By the induction hypothesis, $\prune(m')\vDash\varphi_1$. We can weaken this to conclude that $\prune(m')\vDash\varphi_1\vee\varphi_2$.

\item $\varphi = \varphi_1\oplus\varphi_2$. We know that $m_1\vDash \varphi_1$ and $m_2\vDash\varphi_2$ for some $m_1$ and $m_2$ such that $m_1 +m_2 = m$.
Since $(m_1 +m_2, m')\in \overline\repl$, there must be some $m'_1$ and $m'_2$ such that $(m_1,m'_1)\in\overline\repl$ and $(m_2,m_2')\in\overline\repl$ and $m' = m_1' +m'_2$. By the induction hypothesis, $\prune(m'_1)\vDash\varphi_1$ and $\prune(m'_2)\vDash\varphi_2$. It is easy to see that $\prune(m'_1) + \prune(m'_2) = \prune(m'_1 + m'_2) = \prune(m')$, therefore $\prune(m')\vDash \varphi_1 \oplus_a\varphi_2$.

\item $\varphi = \wg{\varphi'}a$. If $a=\zero$, then $m=\zero$, and by \Cref{lem:zerorel} $m'=\zero$ too. So $\prune(m') = \zero$ and $\prune(m')\vDash\wg{\varphi'}\zero$.
Otherwise, we know that there is an $m_1$ such that $m_1\vDash\varphi'$ and $m = a\cdot m_1$. Since $(a\cdot m_1, m') \in \overline\repl$, by \Cref{lem:scalerel} we know there must be $m'_1$ such that $m' = a\cdot m'_1$ and $(m_1, m'_1)\in \overline\repl$. By the induction hypothesis, $\prune(m'_1)\vDash \varphi'$, so---since $\prune(m') = \prune(a\cdot m'_1) = a\cdot \prune(m'_1)$---we get that $\prune(m')\vDash\wg{\varphi'}a\osep F$.

\item $\varphi = (\epsilon:P)$. By definition, $\und\notin\supp(m)$, so $m' = m$ (since $\repl$ only relates defined states to themselves), which satisfies $\epsilon:P$ by assumption.
\end{itemize}
\end{proof}


\subsection{The Frame Rule}
\label{app:framerule}

\begin{lemma}[Sequencing]\label{lem:sequencing}
For any $f, g\colon \mathcal S\times\mathcal H \to \mathcal W_{\mathcal A}(\st)$ and relation $R \subseteq \st\times \st \cup\{\lightning\}$, if $(m_1, m_2) \in \overline R$ and:
\[
\forall(\inj_\ok(s_1,h_1), \inj_\ok(s_2,h_2))\in R.\quad \exists m. \quad g(s_2, h_2) = \prune(m)\quad \text{and}\quad (f(s_1,h_1), m) \in \overline R
\]
Then, there exists $m_2'$ such that $\bind(\prune(m_2), g) = \prune(m_2')$ and $(\bind(m_1, f), m_2') \in \overline R$.
\end{lemma}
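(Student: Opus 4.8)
The plan is to prove this by explicitly building the joint witness for the lifted relation $\overline R$, in the same style as the earlier relation-lifting lemmas (\Cref{lem:plusrel,lem:scalerel,lem:relcomp}). First I would unfold $(m_1,m_2)\in\overline R$ to obtain a witnessing $w\in\mathcal W_{\mathcal A}(R)$ with $\pi_1(w)=m_1$ and $\pi_2(w)=m_2$, so that $w$ is supported on pairs $(\sigma,\tau)\in R$. For each such pair whose components are both $\ok$, say $\sigma=\inj_\ok(s_1,h_1)$ and $\tau=\inj_\ok(s_2,h_2)$, the hypothesis supplies an $m_{\sigma,\tau}$ with $g(s_2,h_2)=\prune(m_{\sigma,\tau})$ together with a witness $w_{\sigma,\tau}\in\mathcal W_{\mathcal A}(R)$ for $(f(s_1,h_1),m_{\sigma,\tau})\in\overline R$; these are the continuation witnesses I will glue onto $w$. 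It is also convenient to name the two error-monad bind kernels: $k_f(\sigma)=f(s_1,h_1)$ when $\sigma=\inj_\ok(s_1,h_1)$ and $k_f(\sigma)=\unit_{\mathcal W}(\sigma)$ otherwise, and similarly $k_g$ for $g$, so that $\bind(m_1,f)=\sum_\sigma m_1(\sigma)\cdot k_f(\sigma)$ and $\bind(\prune(m_2),g)=\sum_{\tau\neq\lightning}\prune(m_2)(\tau)\cdot k_g(\tau)$.

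The key object is a single joint weighting function $W=\sum_{(\sigma,\tau)\in\supp(w)} w(\sigma,\tau)\cdot W_{\sigma,\tau}$, where $W_{\sigma,\tau}$ records how the two binds act on that pair. When $\sigma,\tau$ are both $\ok$ I take $W_{\sigma,\tau}=w_{\sigma,\tau}$, whose marginals are exactly $f(s_1,h_1)=k_f(\sigma)$ and $m_{\sigma,\tau}$ (whose prune is $g(s_2,h_2)$). When $\sigma$ is not $\ok$ the first bind merely keeps $\sigma$, so I set $W_{\sigma,\tau}(\alpha,\beta)=\unit_{\mathcal W}(\sigma)(\alpha)\cdot\widehat g_\tau(\beta)$, where $\widehat g_\tau$ equals $g(s_2,h_2)$ if $\tau$ is $\ok$, $\unit_{\mathcal W}(\tau)$ if $\tau$ is a non-$\ok$ state, and $\unit_{\mathcal W}(\lightning)$ if $\tau=\lightning$, in each case padded with extra weight on $\lightning$ so that $|\widehat g_\tau|=\one$ (this is where $\sup(A)=\one$ from \Cref{def:ocalg} is used). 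Setting $m_2'=\pi_2(W)$, I would then verify the two marginal identities by reindexing sums exactly as in \Cref{lem:plusrel}: $\pi_1(W)(\alpha)=\sum_\sigma\big(\sum_\tau w(\sigma,\tau)\big)\,k_f(\sigma)(\alpha)=\sum_\sigma m_1(\sigma)\,k_f(\sigma)(\alpha)=\bind(m_1,f)(\alpha)$, using $\pi_1(w)=m_1$; and symmetrically $\prune(\pi_2(W))(\beta)=\sum_\tau m_2(\tau)\,\prune(\widehat g_\tau)(\beta)=\bind(\prune(m_2),g)(\beta)$ for $\beta\neq\lightning$, using $\pi_2(w)=m_2$ and $\prune(\widehat g_\tau)=k_g(\tau)$ for $\tau\neq\lightning$ (and $\prune(\widehat g_\lightning)=\zero$, matching the fact that $\lightning$ is pruned out of $\prune(m_2)$ before binding). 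Well-definedness of $W$ and of $m_2'$ follows from \Cref{lem:scaledsums} as usual.

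The main obstacle, and the reason the padding above is needed, is that the two binds act on the first and second components \emph{independently}: $\bind(m_1,f)$ decides whether to apply $f$ from $\sigma$ alone, while $\bind(\prune(m_2),g)$ decides whether to apply $g$ from $\tau$ alone. The delicate case is therefore a non-$\ok$ (undefined) $\sigma$ related to an $\ok$ $\tau$: the first bind keeps the single state $\sigma$, whereas the second expands $\tau$ into the whole collection $g(s_2,h_2)$, so $W_{\sigma,\tau}$ must place the single mass of $\sigma$ opposite \emph{all} of $\widehat g_\tau$, and must absorb the mass that $g$ loses to nontermination into $\lightning$. I would close the argument by checking $\supp(W)\subseteq R$ case by case, observing that every produced pair either lies in some $\supp(w_{\sigma,\tau})\subseteq R$ or has the form $(\sigma,\beta)$ with $\sigma$ undefined; the construction is well-typed precisely because $\repl$ (the relation instantiated in the frame-rule application of this lemma) relates $\und$ to every state and to $\lightning$, and because $\ok$–$\ok$ is the only situation in which an $\ok$ first component is produced, so the construction never has to relate an $\ok$ state to a non-$\ok$ one.
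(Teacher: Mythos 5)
Your proposal is correct and takes essentially the same route as the paper's proof: the paper likewise extends $f,g$ to kernels on $\st$ that act as unit on non-$\ok$ states, establishes the per-pair claim by the same case analysis (padding $g'(\tau)$ with $\lightning$-mass using $\sup(A)=\one$ when the first component is $\und$, and relying on the same structural facts that $\ok$ states relate only to $\ok$ states and that only $\ok$ and $\und$ states relate to $\ok$ states), and then glues the per-pair witnesses into exactly your $W$, namely $m'(\sigma',\tau')=\sum_{\sigma}\sum_{\tau} m(\sigma,\tau)\cdot m'_{\sigma,\tau}(\sigma',\tau')$, verifying both marginal identities by the same sum reindexing. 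The only difference is organizational---you inline the case analysis into the support check and are, if anything, slightly more explicit about the $\tau=\lightning$ column and where $\prune$ enters---so there is nothing substantive to add.
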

\begin{proof}
First, let $f',g' \colon \st \to \mathcal W_{\mathcal A} (\st)$ be defined as follows:
\[
f'(\sigma) = \left\{\begin{array}{ll}
f(s, h) & \text{if} ~ \sigma = \inj_\ok(s,h) \\
\unit_\mathcal{W}(\sigma) & \text{otherwise}
\end{array}\right.
\qquad
g'(\sigma) = \left\{\begin{array}{ll}
g(s, h) & \text{if} ~ \sigma = \inj_\ok(s,h) \\
\unit_\mathcal{W}(\sigma) & \text{otherwise}
\end{array}\right.
\]
Note that $\bind(m,f) = \bind_\mathcal W(m, f')$ and $\bind(m, g) = \bind_{\mathcal W}(m, g')$ for all $m$.
Now, we argue that if $(\sigma, \tau)\in R$, then there exists $m_{\sigma,\tau}$ such that $g'(\tau) = \prune(m_{\sigma,\tau})$ and $(f'(\sigma), m_{\sigma,\tau}) \in \overline R$.
We do so by case analysis.
\begin{itemize}
\item $\sigma = \inj_\ok(s,h)$ and $\tau=\inj_\ok(s',h')$.  By definition, $g'(\tau) = g(s',h')$ and $f'(\sigma) = f(s,h)$, so the claim holds by assumption.
\item $\sigma = \und$. In this case, $f'(\sigma) = \unit_\mathcal{W}(\und)$, which means that $f'(\sigma)$ is related to all configurations of size $\one$ according to $\overline R$. Now, since $\sup(A)=\one$, it must be that $|g'(\tau)| \le \one$ and so there is some $u$ such that $|g'(\tau)| + u = \one$. Now, let $m_{\sigma,\tau} = g'(\tau)+ u\cdot\unit_{\mathcal W}(\lightning)$, so clearly $g'(\tau) = \prune(m_{\sigma,\tau})$ and $(f'(\sigma), m_{\sigma,\tau})\in \overline R$.
%
\item In this final case, $\tau$ cannot have the form $\inj_\ok$, since only $\ok$ and $\und$ states are related to $\ok$ states according to $R$, and we have already handled both of those cases. This means that $\sigma$ must also not be an $\ok$ state, since $\ok$ states are only related to other $\ok$ states. Therefore, $f'(\sigma) = \unit_{\mathcal W}(\sigma)$ and $f'(\tau) = \unit_{\mathcal W}(\tau)$. By \Cref{lem:unitrel}, we conclude that $(\unit_{\mathcal W}(\sigma), \unit_{\mathcal W}(\tau)) \in \overline R$.
\end{itemize}
Given this, we know that for each $\sigma$ and $\tau$, there must be some $m_{\sigma,\tau}$ and $m'_{\sigma,\tau}$ such that $g'(\tau) = \prune(m_{\sigma,\tau})$ and:
\[
f'(\sigma) = \lambda \sigma'.\smashoperator{\sum_{\tau'\in\supp(m_{\sigma,\tau})}} m'_{\sigma,\tau}(\sigma', \tau')
\qquad
m_{\sigma,\tau} = \lambda \tau'.\smashoperator{\sum_{\sigma'\in\supp(f'(\sigma))}} m'_{\sigma,\tau}(\sigma', \tau')
\]
Since $(m_1, m_2)\in\overline R$, we know that there is an $m$ such that $m_1 = \lambda\sigma.{\sum_{\tau\in\supp(m_2)}} m(\sigma, \tau)$ and $m_2 = \lambda\tau.{\sum_{\sigma\in\supp(m_1)}} m(\sigma, \tau)$
Now let:
\[
m'(\sigma', \tau') = \smashoperator[l]{\sum_{\sigma\in\supp(m_1)}} \smashoperator[r]{\sum_{\tau\in\supp(m_2)}} m(\sigma,\tau)\cdot m'_{\sigma,\tau}(\sigma', \tau')
\qquad\qquad
m'_2(\tau') = \smashoperator{\sum_{\sigma' \in \supp(\bind(m_1, f))}} m'(\sigma', \tau')
\]
Now, we show that:
\begin{align*}
\lambda\sigma'.\smashoperator{\sum_{\tau'\in\supp(m'_2)}} m'(\sigma', \tau')
&= \lambda\sigma'.\smashoperator[l]{\sum_{\tau'\in\supp(m'_2)}} \sum_{\sigma\in\supp(m_1)}\smashoperator[r]{\sum_{\tau\in\supp(m_2)}} m(\sigma,\tau)\cdot m'_{\sigma,\tau}(\sigma', \tau')\\
&= \lambda\sigma'. \smashoperator[l]{\sum_{\sigma\in\supp(m_1)}} \smashoperator[r]{\sum_{\tau\in\supp(m_2)}} m(\sigma,\tau)\cdot \smashoperator{\sum_{\tau'\in\supp(m'_2)}}m'_{\sigma,\tau}(\sigma', \tau')\\
\intertext{
Now, $\supp(m'_2)$ is all those $\tau'$ such that $m'(\sigma', \tau')\neq\zero$ for some $\sigma'\in\supp(\bind(m_1, f))$, which also means that $m'_{\sigma,\tau}(\sigma',\tau')\neq\zero$.
Since the outer sum is over $\sigma\in\supp(m_1)$,  the terms we are summing over ($m'_{\sigma,\tau}(\sigma',\tau')$) will be $\zero$ unless $\sigma' \in \supp(f'(\sigma))$.
So, the last sum is equivalent to summing over $\tau'$ such that there is a $\sigma' \in \supp(f'(\sigma))$ such that $m'_{\sigma,\tau}(\sigma',\tau')\neq\zero$, which is exactly $\supp(m_{\sigma,\tau})$.
}
&= \lambda\sigma'. \smashoperator[l]{\sum_{\sigma\in\supp(m_1)}}\smashoperator[r]{\sum_{\tau\in\supp(m_2)}} m(\sigma,\tau)\cdot (\smashoperator{\sum_{\tau'\in\supp(m_{\sigma,\tau})}}m'_{\sigma,\tau}(\sigma', \tau'))\\
&= \lambda\sigma'. \smashoperator{\sum_{\sigma\in\supp(m_1)}}m_1(\sigma)\cdot f'(\sigma)(\sigma')
= \bind(m_1, f)
\end{align*}
So $(\bind(m_1,f), m'_2) \in \overline R$. It now just remains to prove that $\bind(m_2,g) = \prune(m'_2)$:
\begin{align*}
\prune(m'_2) &= \prune(\lambda \tau'.\smashoperator{\sum_{\sigma' \in \supp(\bind(m_1, f))}} m'(\sigma', \tau')) \\
&= \prune( \lambda \tau'. \smashoperator[l]{\sum_{\sigma' \in \supp(\bind(m_1, f))}}
\sum_{\tau\in\supp(m_2)} \smashoperator[r]{\sum_{\sigma\in\supp(m_1)}} m(\sigma,\tau)\cdot m'_{\sigma,\tau}(\sigma', \tau'))\\
&= \prune( \lambda \tau'. 
\smashoperator[l]{\sum_{\tau\in\supp(m_2)}} \smashoperator[r]{\sum_{\sigma\in\supp(m_1)}} m(\sigma,\tau)\cdot \;\; \smashoperator{\sum_{\sigma' \in \supp(\bind(m_1, f))}}m'_{\sigma,\tau}(\sigma', \tau')) \\
&= \prune( \lambda \tau'. 
\smashoperator[l]{\sum_{\tau\in\supp(m_2)}} \smashoperator[r]{\sum_{\sigma\in\supp(m_1)}} m(\sigma,\tau)\cdot \;\; \smashoperator{\sum_{\sigma' \in \supp(f'(\sigma))}}m'_{\sigma,\tau}(\sigma', \tau')) \\
&= \prune( \lambda \tau'. 
\smashoperator[l]{\sum_{\tau\in\supp(m_2)}} \smashoperator[r]{\sum_{\sigma\in\supp(m_1)}} m(\sigma,\tau)\cdot m_{\sigma,\tau}(\tau')) \\
&= \lambda \tau'. 
\smashoperator[l]{\sum_{\tau\in\supp(m_2)}} \smashoperator[r]{\sum_{\sigma\in\supp(m_1)}} m(\sigma,\tau)\cdot \prune(m_{\sigma,\tau}(\tau')) \\
&= \lambda \tau'. 
\smashoperator[l]{\sum_{\tau\in\supp(m_2)}} \smashoperator[r]{\sum_{\sigma\in\supp(m_1)}} m(\sigma,\tau)\cdot g'(\tau)(\tau') \\
&= \lambda \tau'. 
\smashoperator{\sum_{\tau\in\supp(m_2)}} m_2(\tau) \cdot g'(\tau)(\tau') 
= \bind(m_2, g)
\end{align*}

\end{proof}


\begin{lemma}\label{lem:update}
Let $R = \mathsf{Rep} \circ \fr'(F,X,\af,\af')$.
If $(\inj_\ok(s[X\mapsto n], h), \inj_\ok(s, h\uplus h')) \in R$ and $(\inj_\ok(s_1,h_1), \inj_\ok(s_2,h_2))\in R$ whenever $h(\de{e}(s)) \in\mathsf{Val}$, then
\[
(\mathsf{update}(s[X\mapsto n],h,\de{e}(s[X\mapsto n]),s_1, h_1), \mathsf{update}(s, h\uplus h', \de{e}(s), s_2, h_2)) \in \overline R
\]
\end{lemma}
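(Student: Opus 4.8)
The plan is to reduce the whole statement to a comparison of single states. Since program expressions cannot mention the logical variable $X$, we have $\de{e}(s[X\mapsto n]) = \de{e}(s)$; write $\ell$ for this common address. Consequently both applications of $\mathsf{update}$ evaluate their guards against the \emph{same} $\ell$, and each possible branch of $\mathsf{update}$ returns $\unit_{\mathcal W}$ of a single state (an $\ok$ state, an $\er$ state, or $\und$). By \Cref{lem:unitrel}, to place a pair of such unit functions in $\overline R$ it suffices to place the underlying pair of states in $R$. So in every case below I only need to exhibit one membership in $R$, never reason about $\overline R$ directly.

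First I would unfold the composite $R = \mathsf{Rep}\circ\fr'$ applied to the first hypothesis. Because $\inj_\ok(s[X\mapsto n],h)$ is not $\und$, the only $\fr'$-witness has the form $(\inj_\ok(s[X\mapsto n],h),\inj_\ok(s,h\uplus h''))$ with $(s,h'')\vDash F$ and the allocator condition, and since $\mathsf{Rep}$ fixes non-$\und$ states we must have $h'' = h'$. This recovers exactly the witnessing data $(s,h')\vDash F$ plus the allocator agreement. The key observation is that $\fr'$ relates $\inj_\epsilon$ to $\inj_\epsilon$ for \emph{every} tag $\epsilon$ using the \emph{same} witness data (which is independent of $\epsilon$); hence the identical data also gives $(\inj_\er(s[X\mapsto n],h),\inj_\er(s,h\uplus h'))\in R$. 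This $\er$-version is what the error branches of $\mathsf{update}$ will consume.

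The core is then a case split on the status of $\ell$ in $h$, which governs the left-hand update. If $\ell=\mathsf{null}$, both sides take the error branch, and the $\er$-version above closes the goal. If $\ell\neq\mathsf{null}$ and $h(\ell)\in\mathsf{Val}$, then disjointness of $h$ and $h'$ forces $(h\uplus h')(\ell)=h(\ell)\in\mathsf{Val}$, so both sides succeed; here the side condition $h(\de{e}(s))\in\mathsf{Val}$ of the second hypothesis is satisfied, yielding $(\inj_\ok(s_1,h_1),\inj_\ok(s_2,h_2))\in R$ directly. If $\ell\neq\mathsf{null}$ and $h(\ell)=\bot$, disjointness again gives $(h\uplus h')(\ell)=\bot$, both sides error, and the $\er$-version applies.

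The main obstacle is the remaining case, $\ell\notin\dom(h)$, where the left update returns $\und$ while the right update may succeed, error, or itself be undefined, according to whether $\ell\in\dom(h')$ and what $h'$ stores there. Here I would exploit the precise shape of $R$: under $\fr'$ the state $\und$ is related only to $\und$, but $\mathsf{Rep}$ then relates $\und$ to \emph{every} element of $\st\cup\{\lightning\}$, so $(\und,\tau)\in R$ for all such $\tau$. Whatever single state the right-hand update produces is therefore $R$-related to $\und$, and \Cref{lem:unitrel} discharges the case. This is exactly where the design of $R$ as the composite $\mathsf{Rep}\circ\fr'$ earns its keep, absorbing the phenomenon that framing in $h'$ can turn a previously undefined dereference into a defined (or divergent) outcome without breaking the lifted relation.
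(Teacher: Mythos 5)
Your proposal is correct and takes essentially the same route as the paper's proof: the same preliminary observation that $\de{e}(s[X\mapsto n])=\de{e}(s)$ since logical variables cannot occur in program expressions, the same reduction of $\overline R$-membership to $R$-membership of single states via \Cref{lem:unitrel}, the same use of the fact that $\fr'$ relates $\er$ states with the identical witness data as $\ok$ states, and the same appeal to $\mathsf{Rep}$ relating $\und$ to everything in the $\ell\notin\dom(h)$ case. If anything you are slightly more careful than the paper, which silently folds the $\ell=\mathsf{null}$ subcase (where $\mathsf{update}$ returns $\mathsf{error}$ rather than $\und$) into its $\ell\notin\dom(h)$ case, whereas you treat it explicitly with the $\er$-version of the hypothesis.
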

\begin{proof}
Let $\ell = \de{e}(s) = \de{e}(s[X\mapsto n])$ (since $X$ cannot affect the program expression $e$). We complete the proof by case analysis:
\begin{itemize}
\item $h(\ell) \in \mathsf{Val}$. It must also be the case that $(h \uplus h')(\ell) \in \mathsf{Val}$, since $h'$ is disjoint from $h$.
So, it just remains to show that $(\unit(s_1, h_1), \unit(s_2, h_2)) \in \overline R$, which follows from \Cref{lem:unitrel}.

\item $h(\ell) = \bot$. By a similar argument to the previous case, it must be that $(h\uplus h')(\ell) = \bot$ too. So, we just need to show that $(\mathsf{error}(s[X\mapsto n],h), \mathsf{error}(s, h\uplus h')) \in \overline R$. We know that $(\inj_\er(s[X\mapsto n],h), \inj_\er(s[X\mapsto n], h\uplus h'))\in R$ since $R$ treats $\ok$ and $\er$ the same, so the claim follows by \Cref{lem:unitrel}.

\item $\ell\notin\mathsf{dom}(h)$. So, $\mathsf{update}(s[X\mapsto n],h,\de{e}(s),s_1,h_1) = \unit_{\mathcal W}(\und)$, and since $R$ relates $\und$ to all states, $(\mathsf{update}(s,h,\de{e}(s),s_1,h_1), m) \in\overline R$ for all $m$, which means that $\mathsf{update}(s, h\uplus h', \de{e}(s), s_2, h_2)$ is related trivially.
\end{itemize}
\end{proof}

\begin{restatable}[The Frame Property]{lemma}{frameprop}\label{lem:frameprop}
Let $R = \mathsf{Rep} \circ \fr'(F,X,\af,\af')$, so $R\subseteq \st\times(\st\cup\{\lightning\})$.
For any program $C$ such that $\mathsf{mod}(C) \cap \mathsf{fv}(F) = \emptyset$:
\[
\forall (\inj_\ok(s, h), \inj_\ok(s', h')) \in R. \quad
\exists m.\quad
\de{C}_\af(s',h') = \prune(m)
\quad\text{and}\quad
(\de{C}_{\af'}(s,h), m) \in \overline R
\]
\end{restatable}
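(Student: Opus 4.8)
The plan is to prove the statement by structural induction on $C$, fixing at the outset an arbitrary related pair $(\inj_\ok(s,h),\inj_\ok(s',h'))\in R$. Since $\mathsf{Rep}$ relates $\ok$-states only to themselves, membership of an $\ok$-pair in $R=\mathsf{Rep}\circ\fr'(F,X,\af,\af')$ collapses to membership in $\fr'$, and unpacking that definition gives $s=s'[X\mapsto n]$ for some $n$, $h'=h\uplus h''$ with $(s',h'')\vDash F$, and $\af'(s_0,h_0)=\af(s_0[X\mapsto s'(X)],h_0\uplus h'')$ for every $s_0$ with $s_0(X)=n$. Instantiating this last clause at $(s,h)$ gives the identity $\af'(s,h)=\af(s',h')$ — the whole reason for introducing $\af'$: the small heap under $\af'$ allocates exactly the cells that the large heap allocates under $\af$. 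I will reuse three consequences of the hypotheses throughout: since $\mathsf{mod}(C)\cap\mathsf{fv}(F)=\emptyset$, assigning to a modified program variable preserves $(s',-)\vDash F$; since $X$ is a fresh logical variable disjoint from program variables, updating it or assigning to $x\neq X$ preserves $\fr'$-relatedness; and since $X\notin\mathsf{fv}(e)$ for any program expression, $\de e(s)=\de e(s')$.

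The atomic and non-recursive cases reduce to the supporting lemmas, taking $m$ to be the ($\lightning$-free) large-side result wherever no $\und$ can arise. For $\skp$, $\assume e$, and $\mathsf{error}()$ I invoke \Cref{lem:unitrel} (and \Cref{lem:scalerel} for the weight in $\assume e$, using $\de e(s)=\de e(s')$), with relatedness of the result states immediate from the unpacking above and the fact that $\fr'$ treats $\er$ exactly like $\ok$. The case $x\coloneqq e$ is the same, noting the result stores $s[x\mapsto\de e(s)]$ and $s'[x\mapsto\de e(s')]$ stay $\fr'$-related because $x\neq X$, $x\notin\mathsf{fv}(F)$ and $\de e(s)=\de e(s')$. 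The three memory operations have precisely the shape handled by \Cref{lem:update}: I feed it the small- and large-side success states and check they are $\fr'$-related — using that the touched cell, when present in $h$, lies in the $h$-part of $h'=h\uplus h''$ so the frame $h''$ is untouched — and the lemma discharges the fault and $\und$ branches. For $x\coloneqq\mathsf{alloc}()$ I use $\af'(s,h)=\af(s',h')$: both denotations are a $\bind_{\mathcal W}$ over the \emph{same} weighting function, each allocated $(\ell,v)$ produces $\fr'$-related successor states (here $\ell\notin\dom(h')$ forces a fresh cell and leaves $h''$ intact), so a witness over $R$ built directly from $\af(s',h')$ proves the claim with $m=\de{x\coloneqq\mathsf{alloc}()}_\af(s',h')$. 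A procedure call unfolds to its body with formals substituted and follows from the induction hypothesis on the body (the substitution touches only variables in $\mathsf{mod}(f)$, disjoint from $\mathsf{fv}(F)$); if recursion is permitted this case inherits the fixed-point machinery used for loops below.

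The two compound constructs are where relation lifting does the work. For $C_1+C_2$ I take $m=m_1+m_2$ from the two induction hypotheses; \Cref{lem:plusrel} gives relatedness of the sums and $\prune$ distributes over $+$, so $\prune(m)=\de{C_1+C_2}_\af(s',h')$. For $C_1\fatsemi C_2$ I instantiate \Cref{lem:sequencing} with $f=\de{C_2}_{\af'}$ and $g=\de{C_2}_\af$: its pointwise hypothesis is exactly the (universally quantified) induction hypothesis for $C_2$, its relation hypothesis $(m_1,m_2)\in\overline R$ is the induction hypothesis for $C_1$, and its conclusion yields the required $m$ once $\bind(\prune(m_2),g)$ and $\bind(m_1,f)$ are rewritten to $\de{C_1\fatsemi C_2}_\af(s',h')$ and $\de{C_1\fatsemi C_2}_{\af'}(s,h)$. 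Both side-conditions hold because $\mathsf{mod}(C_i)\subseteq\mathsf{mod}(C_1\fatsemi C_2)$.

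I expect the while loop to be the main obstacle, handled by an inner induction along the Kleene chain. Writing $G_n$, $G'_n$ for the $n$-th approximants of $F_{\langle C,e,\af\rangle}$ and $F_{\langle C,e,\af'\rangle}$, I would show by induction on $n$ that every related $\ok$-pair has a witness $m_n$ with $G_n(s',h')=\prune(m_n)$ and $(G'_n(s,h),m_n)\in\overline R$: the base case uses \Cref{lem:zerorel}, and the step branches on the guard (identical on both sides since $X\notin\mathsf{fv}(e)$), using \Cref{lem:unitrel} when it is false and, when true, feeding the structural induction hypothesis on the body and the inner hypothesis into \Cref{lem:sequencing}, exactly as in the $\fatsemi$ case. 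The hard part will be passing to the limit: \Cref{lem:suprel} requires $(m_n)_n$ to be a directed chain aligned with $(G'_n(s,h))_n$, so I must confirm the witnesses are monotone — either by checking that the witness built in the proof of \Cref{lem:sequencing} is given by a fixed, continuous formula (hence monotone in its arguments) or by choosing the $m_n$ canonically and verifying monotonicity directly. Granting the chain, \Cref{lem:suprel} gives $(\de{\whl eC}_{\af'}(s,h),\sup_n m_n)\in\overline R$, and $\prune$ is a pointwise projection, hence Scott-continuous, so $\prune(\sup_n m_n)=\sup_n G_n(s',h')=\de{\whl eC}_\af(s',h')$; taking $m=\sup_n m_n$ closes the case.
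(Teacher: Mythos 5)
Your proposal is correct and follows essentially the same route as the paper's own proof: a structural induction using the same unpacking of $\fr'$ (including the key identity $\af'(s,h)=\af(s',h')$), the same supporting lemmas (\Cref{lem:unitrel,lem:scalerel,lem:plusrel,lem:zerorel,lem:update,lem:sequencing,lem:suprel}), and the same inner Kleene induction for \textsf{while}. The limit-passing subtlety you flag is real, and the paper resolves it exactly as you anticipate: since $\prune(m_n)$ is pinned to the $n$-th approximant, each witness has the form $m_n = F^n_{\langle C',e,\af\rangle}(\bot)(s',h') + a_n\cdot\unit(\lightning)$, and the paper argues the $\lightning$-weights $a_n$ grow monotonically along the unrolling, yielding the chain needed for \Cref{lem:suprel}.
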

\begin{proof} By induction on the structure of the program $C$.
\begin{itemize}
\item $C=\skp$. In this case, $\de{C}_{\af'}(s,h) = \unit(s,h)$ and $\de{C}_{\af}(s',h') = \unit(s',h')$, so the claim follows from \Cref{lem:unitrel}.

\item $C= C_1\fatsemi C_2$. By definition, we know that $\de{C}_{\af'}(s,h) = \bind(\de{C_1}_{\af'}(s,h), \de{C_2}_{\af'})$ and $\de{C}_\af(s',h') = \bind(\de{C_1}_\af(s',h'), \de{C_2}_\af)$.
By the induction hypothesis, we know there is some $m$ such that $\de{C_1}_\af(s',h') = \prune(m)$ and $(\de{C_1}_{\af'}(s,h), m) \in \overline R$.
Therefore by the induction hypothesis and \Cref{lem:sequencing}, we can conclude that there is some $m'$ such that $\bind(\de{C_1}_\af(s',h'), \de{C_2}_{\af}) = \prune(m')$ and $(\bind(\de{C_1}_{\af'}(s,h), \de{C_1}_{\af'}), m') \in \overline R$, which completes the proof.


\item $C = C_1 + C_2$. We know that $\de{C_1 + C_2}_{\af'}(s,h) = \de{C_1}_{\af'}(s,h) + \de{C_2}_{\af'}(s,h)$ and that $\de{C_1 + C_2}_{\af}(s',h') = \de{C_1}_{\af}(s',h') + \de{C_2}_{\af}(s',h')$.
By the induction hypothesis, for each $i\in\{1,2\}$ we get that there is some $m_i$ such that $\de{C_i}_{\af}(s',h') = \prune(m_i)$ and $(\de{C_i}_{\af'}(s,h), m_i) \in \overline R$. Using \Cref{lem:plusrel} we can conclude that $(\de{C_1+C_2}_{\af'}(s,h), m_1 + m_2) \in \overline R$. It now only remains to show that:
\begin{align*}
\prune(m_1 + m_2)
&= \prune(m_1) + \prune(m_2) \\
&= \de{C_1}_\af(s',h') + \de{C_2}_\af(s',h') \\
&= \de{C_1+C_2}_\af(s',h')
\end{align*}

\end{itemize}
We now move on to the cases involving state. 
Since $(\inj_\ok(s,h), \inj_\ok(s', h')) \in R$, then there must be some $n$ and $h''$ such that $s = s'[X \mapsto n]$, $h' = h \uplus h''$, $(s', h'') \vDash F$, and for any $s_0$ and $h_0$, if $s_0(X) = n$, then $\af'(s_0, h_0) = \af(s_0[X\mapsto s'(X)], h_0 \uplus h'')$.
Additionally, many of the cases have a single outcome, so by \Cref{lem:unitrel} it suffices to show that $(\sigma, \tau)\in R$ where $\de{C}_{\af'}(s,h) = \unit_{\mathcal{W}}(\sigma)$ and $\de{C}_{\af}(s',h') = \unit_{\mathcal{W}}(\tau)$ in those cases. 
\begin{itemize}


\item $C = \assume{e}$. Since $s$ and $s'$ only differ in the value of $X$, which cannot affect the value of $e$, then $\de{e}(s') = \de{e}(s)$. This means that both programs weight the computation by the same amount, let this weight be $a=\de{e}(s') = \de{e}(s)$. So, we get that $\de{\assume e}_{\af'}(s,h) = a\cdot \unit(s,h)$ and $\de{\assume e}_{\af}(s', h') = a\cdot\unit(s', h')$. Since $(\inj_\ok(s,h), \inj_\ok(s',h')) \in R$, then by \Cref{lem:unitrel} $(\unit(s,h), \unit(s',h')) \in \overline R$, so by \Cref{lem:scalerel} $(a\cdot \unit(s,h), a\cdot\unit(s', h')) \in \overline R$.

\item $C = \whl e{C'}$. First, we will show that there exists $m$ such that $F_{\langle C',e, \af\rangle}^n(\bot)(s',h') = \prune(m)$ and $(F_{\langle C',e,\af'\rangle}^n(\bot)(s,h), m) \in\overline R$ for any $(\inj_\ok(s,h), \inj_\ok(s',h'))\in R$. The proof is by induction on $n$. If $n=0$, then the claim holds using \Cref{lem:zerorel}:
\[
F^0_{\langle C',e,\af\rangle}(\bot)(s',h') =\bot(s',h') = \zero = \bot(s,h) = F^0_{\langle C',e, \af'\rangle}(\bot)(s,h) 
\]
Now suppose the claim holds for $n$, we will show that it also holds for $n+1$:
\begin{align*}
F_{\langle C',e, \af\rangle}^{n+1}(\bot)(s',h') &= \left\{\begin{array}{ll}
\bind(\de{C'}_\af(s',h'), F_{\langle C',e, \af\rangle}^n(\bot)) & \text{if}~ \de{e}(s') = \one \\
\unit(s',h') & \text{if}~\de{e}(s') = \zero
\end{array}\right.\\
\intertext{and}
F_{\langle C',e,\af'\rangle}^{n+1}(\bot)(s,h) &= \left\{\begin{array}{ll}
\bind(\de{C'}_{\af'}(s,h), F_{\langle C',e,\af'\rangle}^n(\bot)) & \text{if}~ \de{e}(s) = \one \\
\unit(s,h) & \text{if}~\de{e}(s) = \zero
\end{array}\right.
\end{align*}
Note that as we showed in the previous case for assume, $\de{e}(s) = \de{e}(s')$, so both executions will take the same path.
If $\de{e}(s') = \de{e}(s) = \one$, then the claim holds by \Cref{lem:sequencing} and the induction hypothesis. In the second case, it holds by \Cref{lem:unitrel}.

Now, by the definition of $\prune$, this also means that for any $n$, there exists $a_n$ such that:
\[
(F^n_{\langle C',e, \af'\rangle}(\bot)(s,h), F^n_{\langle C',e, \af\rangle}(\bot)(s',h') + a_n\cdot\unit(\lightning)) \in \overline R 
\]
Recall that $\lightning$ represents the nonterminating traces, and as we continue to unroll the loop, the weight of nontermination can only increase, so the $a_n$ must increase monotonically and therefore $F^n_{\langle C',e,\af\rangle}(\bot)(s',h') + a_n\cdot\unit(\bot)$ is a chain, so
by \Cref{lem:suprel} we know that:
\[
(\sup_{n\in\mathbb{N}}F^n_{\langle C',e\rangle}(\bot)(s,h), \sup_{n\in\mathbb N}F^n_{\langle C',e\rangle}(\bot)(s',h') + a_n\cdot\unit(\lightning)) \in \overline R
\]
And we can therefore conclude that there exists $m$ such that $\de{\whl e{C'}}(s',h') = \prune(m)$ and $(\de{\whl e{C'}}(s,h), m)\in \overline R$.

\item $C = (x\coloneqq e)$. We know the following:
\begin{align*}
\de{C}_{\af'}(s'[X\mapsto n],h)
&= \unit(s'[X\mapsto n, x \mapsto \de{e}(s'[X\mapsto n])], h) \\
&= \unit(s'[x \mapsto \de{e}(s')][X\mapsto n], h) \\
\de{C}_{\af}(s',h\uplus h'') &= \unit(s'[x \mapsto \de{e}(s')], h\uplus h'')
\end{align*}
Since $x\in\mathsf{mod}(C)$, then $x\notin\mathsf{fv}(F)$, so updating $x$ in $s$ will not affect the truth of $F$, therefore $(s'[x\mapsto\de{e}(s')], h\uplus h'')\vDash F$. In addition, since we did not modify the value of $X$, it is still true that $\af'(s_0, h_0) = \af(s_0[X\mapsto s'(X)], h_0 \uplus h'')$ for any $s_0, h_0$ with $s_0(X) = n$. So, we know that:
\[
(\inj_\ok(s'[x\mapsto \de{e}(s')][X\mapsto n], h), \inj_\ok(s[x\mapsto \de{e}(s')], h \uplus h''))\in\fr'(F, X, \af, \af')
\]
Putting this together along with the fact that $\repl$ is reflexive finishes the proof.

\item $C = (x \coloneqq\mathsf{alloc}())$. We know the following:
\begin{align*}
\de{C}_{\af'}(s'[X\mapsto n],h) &= \bind_{\mathcal W}(\af'(s'[X\mapsto n],h), \lambda(\ell,v).\unit(s'[X\mapsto n][x\mapsto\ell], h[\ell\mapsto v])) \\
&= \bind_{\mathcal W}(\af(s',h \uplus h''), \lambda(\ell, v).\unit(s'[x\mapsto\ell][X\mapsto n], h[\ell\mapsto v])) \\
\de{C}_{\af}(s',h\uplus h'') &= \bind_{\mathcal W}(\af(s', h\uplus h''), \lambda(\ell, v).\unit(s'[x\mapsto\ell], (h\uplus h'')[\ell\mapsto v])) \\
&= \bind_{\mathcal W}(\af(s', h\uplus h''), \lambda(\ell, v).\unit(s'[x\mapsto\ell], h[\ell\mapsto v] \uplus h''))
\end{align*}
So it is clear that $(\de{C}_{\af'}(s,h), \de{C}_{\af}(s,h\uplus h')) \in \overline R$ since the two weighting functions are identical except that $\de{C}_{\af}(s,h\uplus h'')$ has $h''$ added at every state and just as in the previous case, we did not update $X$, so the relationship between $\af$ and $\af'$ holds as well.

\item $C = \mathsf{free}(e)$. Using \Cref{lem:update}, we only need to show that if $h(\de{e}(s')) \in\mathsf{Val}$, then:
\[
(\inj_\ok(s'[X\mapsto n], h[\de{e}(s)\mapsto \bot]), \inj_\ok(s', (h\uplus h'')[\de{e}(s') \mapsto \bot])) \in \fr'(F,X,\af,\af') \subseteq R
\]
If $h(\de{e}(s')) \in\mathsf{Val}$, then $\de{e}(s') \in \mathsf{dom}(h)$, and since $h''$ is disjoint from $h$, then $\de{e}(s')\notin\mathsf{dom}(h'')$, so
$(h\uplus h'')[\de{e}(s') \mapsto \bot] = h[\de{e}(s') \mapsto \bot] \uplus h''$, and clearly:
\[
(\inj_\ok(s'[X\mapsto n], h[\de{e}(s') \mapsto \bot]), \inj_\ok(s', h[\de{e}(s') \mapsto \bot]) \uplus h'')) \in \fr'(F,X,\af,\af')
\]

\item $C = ([e_1]\leftarrow e_2)$. Using \Cref{lem:update}, we only need to show that if $h(\de{e_1}(s')) \in\mathsf{Val}$, then:
\begin{align*}
&(\inj_\ok(s'[X\mapsto n], h[\de{e_1}(s')\mapsto \de{e_2}(s')]), \inj_\ok(s', (h\uplus h'')[\de{e_1}(s') \mapsto \de{e_2}(s')])) \\
&\quad \in \fr'(F,X,\af,\af') \\
&\quad \subseteq R
\end{align*}
If $h(\de{e_1}(s')) \in\mathsf{Val}$, then $\de{e_1}(s') \in \mathsf{dom}(h)$, and since $h''$ is disjoint from $h$, then $\de{e_1}(s')\notin\mathsf{dom}(h'')$, so
$(h \uplus h'')[\de{e_1}(s') \mapsto \de{e_2}(s')] = h)[\de{e_1}(s') \mapsto \de{e_2}(s')] \uplus h''$. Now, clearly:
\[
(\inj_\ok(s'[X\mapsto n], h[\de{e_1}(s')\mapsto \de{e_2}(s')]), \inj_\ok(s', h[\de{e_1}(s') \mapsto \de{e_2}(s')] \uplus h'')) \in R
\]

\item $C = (x\leftarrow [e])$.
Using \Cref{lem:update}, we only need to show that if $h(\de{e}(s')) \in\mathsf{Val}$, then:
\[
(\inj_\ok(s[X\mapsto n][x\mapsto h(\de{e}(s')), h), \inj_\ok(s'[x \mapsto (h\uplus h'')(\de{e}(s')]), h\uplus h'')) \in R
\]
If $h(\de{e}(s')) \in\mathsf{Val}$, then $\de{e}(s') \in \mathsf{dom}(h)$, and since $h''$ is disjoint from $h$, then $\de{e}(s')\notin\mathsf{dom}(h'')$, so $s'[x \mapsto (h\uplus h'')(\de{e}(s'))] = s'[ x \mapsto h(\de{e}(s'))]$.

So, clearly $(\inj_\ok(s'[ x\mapsto h(\de{e}(s)][X\mapsto n]), h), \inj_\ok(s'[ x\mapsto h(\de{e}(s')) ]), h\uplus h'') \in R$.

\item $C = \mathsf{error}()$. By \Cref{lem:unitrel}, it suffices to show that $(\inj_\er(s'[X\mapsto n],h), \inj_\er(s', h\uplus h''))\in R$, which follows from the assumptions since $R$ treats $\ok$ and $\er$ states in the same way.

\item $C = f(\vec e)$. Let $C'$ be the body of $f$. By the same argument used in the assignment case, $(\inj_\ok(s'[X\mapsto n][\vec x \mapsto \de{\vec e}(s')], h),\inj_\ok(s'[\vec x\mapsto \de{\vec e}(s')]), h \uplus h'')) \in R$. So, the claim follows from the induction hypothesis.
\end{itemize}

\end{proof}

\framethm*
\begin{proof}
Let $R = \repl \circ \fr'(F,X,\af,\af')$.
Suppose $m\vDash \varphi\osep F$, take any $\af\in\mathsf{Alloc}$, and pick some $X\notin\mathsf{fv}(\varphi, \psi, F)$. Then by \Cref{lem:osepback,lem:frtofr} we know that there are $m_1$, $m'_1$, $m_2$, and $\af'$ such that $(m_1, m'_1) \in \overline{\fr'(F,X,\af,\af')}$ and $m = m_1' + m_2$ and $m_1 + m'_2 \vDash \varphi$ for any $m_2'$ such that $|m_2'| \le |m_2|$. So that means that $m_1 + |m_2| \cdot \unit(\und) \vDash\varphi$. We know that:
\[
\dem{C}{m_1 + |m_2| \cdot \unit(\und)}{\af'} = \dem{C}{m_1}{\af'} + |m_2|\cdot\unit(\und)
\]
So, therefore $\dem{C}{m_1}{\af'} + |m_2|\cdot\unit(\und) \vDash\psi$ since $\vDash\triple{\varphi}C{\psi}$.
Now, observe that $(m_1, m_1') \in \overline{\fr'(F,X,\af,\af')} \subseteq \overline R$ since $\repl$ is reflexive.
We also know that $(|m_2|\cdot \unit(\und), m_2) \in \overline R$ since $R$ permits $\und$ states to be remapped to anything. Therefore, using \Cref{lem:plusrel} we get that:
\[
(m_1 + |m_2|\cdot \unit(\und), m) = (m_1 + |m_2|\cdot \unit(\und), m_1'+m_2) \in \overline R
\]
So, using \Cref{lem:sequencing,lem:frameprop}, we know that there exists some $m'$ such that $\dem{C}m\af = \prune(m')$ and:
\[
(\dem{C}{m_1 + |m_2|\cdot\unit(\und)}{\af'}, m') \in \overline R \subseteq \overline\repl \circ \overline{\fr'(F, X, \af, \af')}
\]
Where the last step is by \Cref{lem:relcomp}.
All that remains now is to peel away the layers in the above expression. More concretely, we know that there is some $m''$ such that $(\dem{C}{m_1 + |m_2|\cdot\unit(\und)}{\af'}, m'')\in\overline{\fr'(F,X,\af,\af')}$ and $(m'', m')\in \overline\repl$. By \Cref{lem:osepforward2} $m''\vDash \psi\osep F$ and by \Cref{cor:replacement}, $\dem Cm\af\vDash \psi\osep F$.

\end{proof}

\section{Tri-Abduction}
\label{app:triab}

The full set of inference rules for the tri-abduction proof system is shown in \Cref{fig:triab-pf-full}. The abduction algorithm is given in \Cref{alg:abduce-par}.

\begin{algorithm}[t]
\caption{abduce-par($P$,$Q$)}
\label{alg:abduce-par}
\begin{algorithmic}[1]
  \If{either \textsc{Base-Emp}, \textsc{Base-True-L}, or \textsc{Base-True-R} apply}
    \State\Return anti-frame $M$ as indicated by that rule
  \Else
    \For{Each remaining row in \Cref{fig:triab-pf-full} from top to bottom}
      \State $\text{result} = \emptyset$
      \For{Each inference rule in the row of the form below}
        \[
        \inferrule{\trij{P'}{M'}{Q'} \\ R}{\trij PMQ}
        \]
        \If{The input parameters match $P$ and $Q$ in the inference rule and $R$ is true}
            \State $\text{result} = \text{result} \cup \{ M \mid M' \in \textsf{abduce-par}(P', Q') \}$
        \EndIf
      \EndFor
      \If{$\text{result} \neq\emptyset$}
        \State\Return\text{result}
      \EndIf
    \EndFor
  \State\Return$\emptyset$
  \EndIf
\end{algorithmic}
\end{algorithm}

\begin{figure}
  \footnotesize
  \begin{flushleft}\fbox{\textsf{Base Cases}}\end{flushleft}
  \[
    \inferrule* [rightstyle={\footnotesize \sc},right=Base-Emp]{
      \Pi \wedge \Pi' \not \vdash \fls
    }{
      \trij{\Pi\land\emp}{\Pi \wedge \Pi'\land \emp}{ \Pi'\land\emp}
    }
  \]
  \smallskip
  \[
    \inferrule* [rightstyle={\footnotesize \sc},right=Base-True-L]{
      \Pi \wedge \Pi'\land\Sigma' \not \vdash \fls
    }{
      \trij{\Pi\land\tru}{\Pi \wedge \Pi' \land\Sigma'}{ \Pi'\land\Sigma'}
    }
  \quad
    \inferrule* [rightstyle={\footnotesize \sc},right=Base-True-R]{
      \Pi \wedge \Pi'\wedge\Sigma \not \vdash \fls
    }{
      \trij{\Pi\wedge\Sigma}{\Pi \wedge \Pi'\wedge\Sigma}{\Pi'\wedge\tru}
    }
\]
  \smallskip
\begin{flushleft}\fbox{\textsf{Quantifier Elimination}}\end{flushleft}
    \[
    \inferrule* [rightstyle={\footnotesize \sc},right=Exists]{
      \trij{\Delta}M{\Delta'} \\ \vec X \cap (\mathsf{fv}(\Delta')\setminus \vec Y) = \emptyset \\ \vec Y\cap(\mathsf{fv}(\Delta) \setminus\vec X) = \emptyset}
      {\trij{\exists\vec X.\Delta}{\exists\vec X\vec Y.M}{\exists\vec Y.\Delta'}}
  \]
\begin{flushleft}\fbox{\textsf{Resource Matching}}\end{flushleft}
  \[
    \inferrule*[rightstyle={\footnotesize \sc},right=Ls-Start-L]{
      \trij{\Delta \sep \ls(e_3,e_2)}{M}{\Delta'}
    }{
      \trij{\Delta \sep \ls(e_1,e_2)}{M \sep e_1 \mapsto e_3}{\Delta'\sep e_1 \mapsto e_3}
    }
  \quad
    \inferrule* [rightstyle={\footnotesize \sc},right=Ls-Start-R]{
      \trij{\Delta}{M}{\Delta'\sep \ls(e_3,e_2)}
    }{
      \trij{\Delta \sep e_1 \mapsto e_3}{M \sep e_1 \mapsto e_3}{\Delta'\sep \ls(e_1,e_2)}
    }
  \]
  \smallskip
  \[
    \inferrule* [rightstyle={\footnotesize \sc},right=Match]{
      \trij{\Delta \wedge e_2 = e_3}M{\Delta' \wedge e_2 = e_3}
    }{
      \trij{\Delta \sep e_1 \mapsto e_2}{M \sep e_1 \mapsto e_2}{\Delta'\sep e_1 \mapsto e_3}
    }
  \]
  \smallskip
  \[
    \inferrule* [rightstyle={\footnotesize \sc},right=Ls-End-L]{
      \trij{\Delta \sep \ls(e_3,e_2)}{M}{\Delta'}
    }{
      \trij{\Delta \sep \ls(e_1,e_2)}{M \sep \ls(e_1,e_3)}{\Delta'\sep\ls(e_1,e_3)}
    }
  \quad
    \inferrule* [rightstyle={\footnotesize \sc},right=Ls-End-R]{
      \trij{\Delta}{M}{\ls(e_2,e_3) \sep \Delta'}
    }{
      \trij{\Delta \sep \ls(e_1,e_2)}{M \sep \ls(e_1,e_2)}{\Delta'\sep\ls(e_1,e_3)}
    }
  \]
  \smallskip
\begin{flushleft}\fbox{\textsf{Resource Adding}}\end{flushleft}
  \[
  \inferrule*[rightstyle={\footnotesize \sc},right=Missing-L]{
    \trij{\Delta}M{\Pi'\land(\Sigma'\sep\tru)} \\ \Pi'\land\Sigma' \sep B(e_1,e_2) \not\vdash\fls
  }{
    \trij{\Delta\sep B(e_1,e_2)}{M\sep B(e_1,e_2)}{\Pi'\land(\Sigma'\sep\tru)}
  }
  \]
  \smallskip
  \[
  \inferrule*[rightstyle={\footnotesize \sc},right=Missing-R]{
    \trij{\Pi\land(\Sigma\sep\tru)}M{\Delta'} \\ \Pi\land\Sigma \sep B(e_1,e_2) \not\vdash\fls
  }{
    \trij{\Pi\land(\Sigma\sep\tru)}{M\sep B(e_1,e_2)}{\Delta'\sep B(e_1,e_2)}
  }
  \]
    \smallskip
    \[
    \inferrule* [rightstyle={\footnotesize \sc},right=Emp-Ls-L]{
     \trij{\Delta\land e_1=e_2}M{\Delta' \land e_1= e_2}
    }{
      \trij{\Delta\sep\ls(e_1,e_2)}{M}{\Delta'}
    }
  \quad
    \inferrule* [rightstyle={\footnotesize \sc},right=Emp-Ls-R]{
     \trij{\Delta\land e_1=e_2}M{\Delta' \land e_1= e_2}
    }{
      \trij{\Delta}{M}{\Delta'\sep\ls(e_1,e_2)}
    }
  \]

\caption{Tri-abduction proof rules. Similarly to \citet{biab}, in the above we use $B(e_1,e_2)$ to represent either $\ls(e_1,e_2)$ or $e_1 \mapsto e_2$.
}
\label{fig:triab-pf-full}
\end{figure}

\begin{lemma}\label{lem:trij}
If $\trij PMQ$ is derivable, then $M\vDash P$ and $M\vDash Q$
\end{lemma}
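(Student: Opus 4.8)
The plan is to prove the claim by structural induction on the derivation of $\trij PMQ$ using the rules of \Cref{fig:triab-pf-full}. For each rule I would assume as the inductive hypothesis that the anti-frame in the premise judgement entails both of its inputs, and then discharge the two entailments $M\vDash P$ and $M\vDash Q$ asserted by the conclusion. Because every \textsc{-L} rule has a mirror \textsc{-R} rule obtained by swapping the roles of $P$ and $Q$, I would verify only one rule of each pair and appeal to symmetry for the other.

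The base cases \textsc{Base-Emp} and \textsc{Base-True-L/R} are immediate. In each, $M$ is the conjunction of the pure parts of $P$ and $Q$ together with one of their spatial parts, so any state satisfying $M$ satisfies each of $P$ and $Q$ directly, using that pure assertions are heap-independent and that the spatial $\tru$ holds in every heap. The side conditions of the form $\cdot\not\vdash\fls$ only witness that $M$ is satisfiable and are not needed for the entailments themselves.

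The resource-matching and resource-adding rules each reduce to one elementary fact about symbolic heaps, whose semantics is that of \citet{berdine2005symbolic}. For \textsc{Ls-Start-L} I would frame $e_1\mapsto e_3$ onto the inductive hypothesis and fold it into the head of the list via $e_1\mapsto e_3\sep\ls(e_3,e_2)\vDash\ls(e_1,e_2)$, which is just the inductive unfolding of $\ls$; for \textsc{Ls-End-L} I would instead use list-segment concatenation $\ls(e_1,e_3)\sep\ls(e_3,e_2)\vDash\ls(e_1,e_2)$; for \textsc{Match} I would observe that the premise forces $e_2=e_3$, so the framed cells $e_1\mapsto e_2$ and $e_1\mapsto e_3$ coincide; for \textsc{Missing-L} I would frame $B(e_1,e_2)$ and then reabsorb it on the $Q$-side using $(\Sigma'\sep\tru)\sep B(e_1,e_2)\vDash\Sigma'\sep\tru$, with $\Pi'$ preserved by purity; and for \textsc{Emp-Ls-L} I would use that the premise forces $e_1=e_2$, in which case $\ls(e_1,e_2)$ collapses to $\emp$ and framing it in changes nothing. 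Each of these is a one-line consequence of the points-to and list-segment semantics.

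The one case demanding care---and the step I expect to be the main obstacle---is \textsc{Exists}. Here the inductive hypothesis gives $M\vDash\Delta$ and $M\vDash\Delta'$, and I must establish $\exists\vec X\vec Y.M\vDash\exists\vec X.\Delta$ and $\exists\vec X\vec Y.M\vDash\exists\vec Y.\Delta'$; the subtlety is that the witnesses chosen for $\vec X\vec Y$ on the left must be made compatible with the differing quantifier prefixes on the right. I would argue as follows for the first entailment: given a state satisfying $\exists\vec X\vec Y.M$, extend its store with witnesses for $\vec X\vec Y$ making $M$, and hence (by the hypothesis) $\Delta$, true; the side condition $\vec Y\cap(\mathsf{fv}(\Delta)\setminus\vec X)=\emptyset$ guarantees that $\vec Y$ does not occur free in $\Delta$, so the $\vec Y$-witnesses can be forgotten, leaving a state satisfying $\exists\vec X.\Delta$. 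The dual side condition $\vec X\cap(\mathsf{fv}(\Delta')\setminus\vec Y)=\emptyset$ handles the $\Delta'$ entailment symmetrically. Verifying that these two side conditions are precisely what licenses moving witnesses across the quantifier prefixes is the crux; the overview's counterexample ($\exists X.X=Y$ against $X=1$) shows that dropping either condition makes the rule, and hence the lemma, fail.
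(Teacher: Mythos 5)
Your proof is correct and follows essentially the same route as the paper's: induction on the derivation with symmetry discharging the \textsc{-R} rules, the same base-case and list-segment entailments (folding via $e_1\mapsto e_3\sep\ls(e_3,e_2)\vDash\ls(e_1,e_2)$, concatenation, absorption into spatial $\tru$, and collapsing $\ls(e_1,e_2)$ under $e_1=e_2$), and the same witness-transfer argument for \textsc{Exists}. One small wording caveat: the side condition only guarantees that variables of $\vec Y\setminus\vec X$ are not free in $\Delta$ --- variables in $\vec X\cap\vec Y$ may be --- but your argument survives because those witnesses are simply retained under the $\exists\vec X$ prefix, which is exactly the overlap the paper handles explicitly by splitting the quantified variables into $\vec X' = \vec X\setminus\vec Z$, $\vec Y' = \vec Y\setminus\vec Z$, and $\vec Z=\vec X\cap\vec Y$.
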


\begin{proof}

The proof is by induction on the derivation of $\trij PMQ$.

\begin{enumerate} 

\pfcase{Base-Emp} 
We need to show that $\Pi\land\Pi'\land\emp\vDash \Pi\land\emp$ and $\Pi\land\Pi'\land\emp\vDash \Pi'\land\emp$, both hold by the semantics of logical conjunction.

\pfcase{Base-True-L}
We need to show that $\Pi\land\Pi'\land\Sigma'\vDash \Pi\land\tru$ and $\Pi\land\Pi'\land\Sigma'\vDash \Pi'\land\Sigma'$, both hold by the semantics of logical conjunction.
  
\pfcase{Base-True-R} This case is symmetric to \caseref{Base-True-L}.

\pfcase{Exists}
Here, we know that $M \vDash \Delta$ and $M \vDash \Delta'$. We also know that $\vec{X}$ is not free in $\Delta'$ with $\vec{Y}$ removed and vice versa. Now let:
\[
\vec Z = \vec X \cap \vec Y
\qquad
\vec X' = \vec X \setminus \vec Z
\qquad
\vec Y' = \vec Y \setminus \vec Z
\]
That is, $\vec Z$ is the variables occurring in both $\vec X$ and $\vec Y$, $\vec X'$ is the variables only occurring in $\vec X$ and $\vec Y'$ is the variables only occurring in $\vec Y$. This means that $\vec X'$, $\vec Y'$, and $\vec Z$ are disjoint.

We first show that $\exists \vec{X} \vec{Y}. M \vDash \exists \vec{X}. \Delta$. Suppose $(s,h) \vDash \exists \vec{X} \vec{Y}. M $. We know that $(s', h)\vDash M$ where $s' = s[\vec X' \mapsto \vec v_1][\vec Y' \mapsto \vec v_2][\vec Z \mapsto v_3]$ for some $\vec v_1, \vec v_2$ and $\vec v_3$.
Given that we know $M \vDash \Delta$, we now have  $(s',h) \vDash \Delta$ and therefore $(s[Y' \mapsto \vec v_2],h) \vDash \exists \vec{X}. \Delta$ (since $\vec X = \vec X' \cup \vec Z$). Now, given that $\vec Y \cap (\mathsf{fv}(\Delta)\setminus \vec X) = \emptyset$, we know that none of the variables in $\vec Y'$ are free in $\Delta$, and so we can remove them from the state to conclude that $(s,h)\vDash\exists \vec X.\Delta$.

It can also be shown that $\exists \vec X\vec Y.M \vDash \exists \vec Y.\Delta'$ by a symmetric argument.


\pfcase{Ls-Start-L} Here, we know $M \vDash \Delta\sep \ls(e_3,e_2)$ and $M \vDash \Delta'$. We now want to show $M \sep e_1 \mapsto e_3 \vDash \Delta \sep \ls(e_1,e_2)$ and $M \sep e_1\mapsto e_3 \vDash \Delta' \sep e_1\mapsto e_3$.

Suppose that $(s,h)\vDash M \sep e_1 \mapsto e_3$, and so $(s,h_1)\vDash M$ and $(s,h_2)\vDash e_1\mapsto e_3$ for some $h_1$ and $h_2$ such that $h=h_1\uplus h_2$. 

Since $M\vDash \Delta\sep \ls(e_3, e_2)$, we get that $(s,h_1)\vDash \Delta\sep\ls(e_3, e_2)$ and recombining, we get that $(s,h)\vDash \Delta\sep \ls(e_3, e_2)\sep e_1\mapsto e_3$. Now, $ \ls(e_3, e_2)\sep e_1\mapsto e_3 \vDash \exists X.e_1\mapsto X\sep \ls(X, e_2)$ and $\exists X.e_1\mapsto X\sep \ls(X, e_2)\vDash \ls(e_1, e_2)$, so we get that $(s,h)\vDash \Delta\sep \ls(e_1,e_2)$ and therefore $M\sep e_1\mapsto e_3\vDash \Delta\sep \ls(e_1,e_2)$.


We also know that $M \vDash \Delta'$ which means that because $(s,h_1) \vDash M$, $(s,h_1) \vDash \Delta'$. This means that $(s,h_1 \uplus h_2) \vDash \Delta'  \sep  e_1 \mapsto e_3$ and $h = h_1 \uplus h_2$, so we now have $(s, h) \vDash \Delta'  \sep  e_1 \mapsto e_3$, therefore $M\sep e_1\mapsto e_3 \vDash \Delta' \sep e_1\mapsto e_3$

\pfcase{Ls-Start-R} This case is symmetric to \caseref{Ls-Start-L}.

\pfcase{Match}
Here, we know that $M \vDash \Delta \land e_2 = e_3$ and $M \vDash \Delta' \land e_2 = e_3$. We want to show that $M \sep e_1 \mapsto e_2\vDash\Delta \sep e_1 \mapsto e_2$ and $M \sep e_1 \mapsto e_2\vDash\Delta' \sep e_1 \mapsto e_3$. 

Let us first show $M \sep e_1 \mapsto e_2 \vDash \Delta \sep e_1 \mapsto e_2$. Because we know that $M \vDash \Delta$ and $e_1 \mapsto e_2 \vDash e_1 \mapsto e_2$, it follows that $M \sep e_1 \mapsto e_2 \vDash \Delta \sep e_1 \mapsto e_2$.

Let us now show $M \sep e_1 \mapsto e_2 \vDash \Delta' \sep e_1 \mapsto e_3$. We know $M \vDash \Delta' \land e_2 = e_3$, now suppose that $(s,h) \vDash M \sep e_1 \mapsto e_2$, so $(s,h)\vDash \Delta'\land e_2=e_3$ as well. This means that $e_2 = e_3$ in state $s$, and therefore it must also be the case that $(s,h)\vDash e_1\mapsto e_3$. Given what else we know, we conclude that $(s,h)\vDash \Delta'\sep e_1\mapsto e_3$.
  
%
%

\pfcase{Ls-End-L} 
Here, we know that $M \vDash \Delta \sep \ls(e_3,e_2)$ and $M \vDash \Delta'$. We want to show that $M \sep \ls(e_1,e_3) \vDash\Delta \sep \ls(e_1,e_2)$ and $M \sep \ls(e_1,e_3)\vDash\Delta'\sep\ls(e_1,e_3)$. 

Let us first show $M \sep \ls(e_1,e_3) \vDash \Delta \sep \ls(e_1,e_2)$. 
First, we know that $M\vDash\Delta\sep\ls(e_3,e_2)$, and so $M\sep \ls(e_1,e_3)\vDash \Delta\sep\ls(e_3,e_2)\sep\ls(e_1,e_3)$. Clearly, it is also the case that $\ls(e_3,e_2)\sep\ls(e_1,e_3)\vDash \ls(e_1, e_2)$, and so combining these facts we get $M\sep \ls(e_1,e_3)\vDash \Delta\sep\ls(e_1,e_2)$.


Let us now show $M \sep \ls(e_1,e_3) \vDash \Delta'\sep\ls(e_1,e_3)$. We know that $M \vDash \Delta'$, so clearly $M\sep\ls(e_1,e_3)\vDash\Delta'\sep\ls(e_1,e_3)$.

\pfcase{Ls-End-R}
This case is symmetric to \caseref{Ls-End-L}.

\pfcase{Missing-L}
Here, we know that $M \vDash \Delta$ and $M \vDash \Pi' \land (\Sigma' \sep \tru)$. This means we also know that $M \vDash \Pi'$ and $M \vDash \Sigma' \sep \tru$ by semantic definition. 

Let us first show that $M \sep B(e_1,e_2) \vDash \Delta \sep B(e_1,e_2)$. Suppose that $(s,h)\vDash M\sep B(e_1,e_2)$.
We know that $(s,h_1)\vDash M$ and $(s,h_2)\vDash B(e_1,e_2)$ for some $h_1$ and $h_2$ such that $h = h_1 \uplus h_2$. Since $M \vDash \Delta$, then $(s,h_1) \vDash \Delta$. This means that $(s,h) \vDash \Delta \sep B(e_1,e_2)$, therefore $M \sep B(e_1,e_2) \vDash \Delta \sep B(e_1,e_2)$.

Let us now show that $M \sep B(e_1,e_2) \vDash \Pi' \sep (\Sigma' \sep \tru)$. Here, we know that $M \vDash \Pi' \sep (\Sigma' \sep \tru)$ and trivially $B(e_1,e_2) \vDash \tru$. This means $M \sep B(e_1,e_2) \vDash \Pi' \sep (\Sigma' \sep \tru) \sep \tru$; therefore, $M \sep B(e_1,e_2) \vDash \Pi' \sep (\Sigma' \sep \tru)$.

\pfcase{Missing-R}
This case is symmetric to \caseref{Missing-L}.

\pfcase{Emp-Ls-L}
We know that $M \vDash \Delta \land e_1 = e_2$ and $M \vDash \Delta' \land e_1 = e_2$. This means that $M \vDash \Delta'$, so the right side of the tri-abductive judgement is taken care of. 

Let us now show $M \vDash \Delta \sep \ls(e_1,e_2)$. We first establish that $e_1=e_2\land\emp \vDash\ls(e_1, e_2)$ by definition, since $\ls(e_1,e_2) \iff (\emp \land e_1=e_2) \vee \exists X. e_1 \mapsto X \sep \ls(X,e_2)$. Now, we know that $M\vDash \Delta\land e_1=e_2$, which also means that $M\vDash (\Delta \sep \emp) \land e_1=e_2$, or equivalently, $M\vDash (\Delta \land e_1= e_2) \sep (\emp \land e_1=e_2)$. Using $e_1=e_2\land\emp \vDash\ls(e_1, e_2)$, we get $M\vDash (\Delta \land e_1= e_2) \sep \ls(e_1,e_2)$, and by weakening we get $M\vDash \Delta \sep \ls(e_1,e_2)$

\pfcase{Emp-Ls-R} This case is symmetric to \caseref{Emp-Ls-L}

  
%
%
%
%

\end{enumerate}

\end{proof}

\triab*

\begin{proof}
In our tri-abduction algorithm, we call $\textsf{abduce-par}$ on $P \sep \tru$ and $Q \sep \tru$, so we know based on \Cref{lem:trij} that if $\trij {P \sep \tru}M{Q \sep \tru}$ is derivable, then $M\vDash P \sep \tru$ and $M\vDash Q \sep \tru$ since \textsf{abduce-par} operates by applying the inference in \Cref{fig:triab-pf}. The procedure for finding $F_1$ and $F_2$ follows that of \citet[\S5]{berdine2005symbolic} and so $M \vDash P \sep F_1$ and $M \vDash Q \sep F_2$ by \citet[Theorem 7]{berdine2005symbolic}.
\end{proof}

\section{Symbolic Execution}
\label{app:symexec}

\subsection{Renaming}

We first define the renaming procedure in \Cref{alg:rename}, which is identical to that of \citet[Fig. 4]{biabjacm} except that we additionally require $\vec e$ to be disjoint from $\vec x$. Renaming produces a new anti-frame $M_0$ which is similar to $M$ except that it is guaranteed not to mention any program variables and so it trivially meets the side condition of the frame rule. It additionally provides a vector of expressions $\vec e$ to be substituted for the free variables in the postcondition $\vec Y$ so as to match $M_0$.
\begin{algorithm}[t]
\caption{$\mathsf{rename}(\Delta, M, Q, \mathcal Q, \vec X, \vec x)$}\label{alg:rename}
\begin{algorithmic}
\State Let $\vec Y$ be the free logical variables of $Q$ and all the assertions in $\mathcal Q$.
\State Pick $\vec e$ disjoint from $\vec Y$ and $\vec x$ such that $\Delta\sep M \vDash \vec e = \vec Y$.
\State Pick $M'$ disjoint from $\vec X$, $\vec Y$, and $\mathsf{Var}$ such that $\Delta\sep M' \vDash \Delta \sep M[\vec e/\vec Y]$.
\State\Return $(\vec e, \vec Y, M')$
\end{algorithmic}
\end{algorithm}
Now, we recall the definitions of the following two procedures:
\[
\begin{array}{l}
\mathsf{biab}'(\exists \vec Z.\Delta, Q, \psi, \vec x) = \\
\quad
\big\{ (M', (\psi \osep \exists \vec Z\vec X.F[\vec X/\vec x])[\vec e/ \vec Y]) \\
\quad\mid (M, F) \in \mathsf{biab}(\Delta,Q) \\
\quad\quad (\vec e, \vec Y, M') = \mathsf{rename}(\Delta, M, Q, \{\psi\}, \vec Z) \big\}
\end{array}
\qquad
\begin{array}{l}
\mathsf{triab}'(P_1, P_2, \psi_1, \psi_2, \vec x) = \\
\quad
\big\{ (M', (\psi_1 \osep \exists \vec X.F_1[\vec X/\vec x])[\vec e/\vec Y], \\
\quad\phantom{\big\{ (M',\,} (\psi_2 \osep \exists \vec X.F_2[\vec X/\vec x])[\vec e/\vec Y]) \\
\quad\quad~\mid (M, F_1, F_2) \in \mathsf{triab}(P_1, P_2) \\
\quad\quad  (\vec e, \vec Y, M') = \mathsf{rename}(\emp, M, \{\psi_1, \psi_2\}, \emptyset) \big\}
\end{array}
\]
The $\mathsf{biab}'$ procedure is similar to \textsf{AbduceAndAdapt} from \citet[Fig. 4]{biabjacm}. Since the bi-abduction procedure does not support existentially quantified assertions on the left hand side, the existentials must be stripped and then re-added later (as is also done in \citet[Algorithm 4]{biabjacm}). The renaming step ensures that the anti-frame $M'$ is safe to use with the frame rule. We capture the motivation behind $\mathsf{biab}'$ using the following correctness lemma, stating that $\mathsf{biab}'$ produces a suitable frame and anti-frame so as to adapt a specification $\vDash\triple{\ok:Q}C\psi$ to use a different precondition $P$.

\begin{lemma}\label{lem:biab'}
For all $(M, \psi') \in \mathsf{biab}'(P, Q, \psi, \vec x)$, if $\vDash\triple{\ok:Q}C{\psi}$ and $\vec x = \mathsf{mod}(C)$, then
\[
\vDash\triple{\ok:P\sep M}C{\psi'}
\]
\end{lemma}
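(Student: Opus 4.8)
The plan is to derive the goal from the hypothesis $\vDash\triple{\ok:Q}C\psi$ by a single application of the frame rule (\Cref{thm:frame}), followed by the rule of consequence and a substitution argument that discharges the fresh logical variables introduced along the way. The three ingredients I will use are exactly the facts guaranteed by the two subroutines: from $(M,F)\in\mathsf{biab}(\Delta,Q)$ we have the bi-abduction entailment $\Delta\sep M\vDash Q\sep F$, and from $(\vec e,\vec Y,M')=\mathsf{rename}(\ldots)$ (\Cref{alg:rename}) we have $\Delta\sep M\vDash\vec e=\vec Y$, the renamed entailment $\Delta\sep M'\vDash\Delta\sep M[\vec e/\vec Y]$, and the crucial disjointness data: $\vec e$ is disjoint from $\vec Y$ and from $\vec x=\mathsf{mod}(C)$, $\vec X$ is fresh, and $M'$ mentions no program variable. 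The whole point of renaming is to arrange these disjointness conditions so that every subsequent step meets the side condition of the frame rule.

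First I would apply the frame rule to $\vDash\triple{\ok:Q}C\psi$ with the frame $F[\vec X/\vec x]$. Because $\vec X$ is fresh and $\vec x=\mathsf{mod}(C)$, the substituted frame contains no modified variable, so $\mathsf{mod}(C)\cap\mathsf{fv}(F[\vec X/\vec x])=\emptyset$ and \Cref{thm:frame} applies; unfolding $\osep$ on the basic assertion $\ok:Q$ gives $\vDash\triple{\ok:Q\sep F[\vec X/\vec x]}C{\psi\osep F[\vec X/\vec x]}$, a triple that holds for every valuation of $\vec X$. Next I would reconcile this precondition with $\Delta\sep M$: under the constraint $\vec X=\vec x$ the frame $F[\vec X/\vec x]$ is semantically equal to $F$, so bi-abduction yields $(\Delta\sep M)\land\vec X=\vec x\vDash Q\sep F[\vec X/\vec x]$, and consequence on the precondition gives a triple with precondition $\ok:(\Delta\sep M)\land\vec X=\vec x$. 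On the postcondition side I would weaken $F[\vec X/\vec x]$ to $\exists\vec X.F[\vec X/\vec x]$ (using that $\osep$ is monotone in its frame argument and that $\psi$ itself does not mention the fresh $\vec X$, so $\vec X$ disappears from the whole postcondition). The remaining manipulations are then the substitution $[\vec e/\vec Y]$---legal because $\vec e$ avoids $\mathsf{mod}(C)$, so the substituted logical variables are stable across the execution of $C$---and the final consequence step that replaces $(\Delta\sep M)[\vec e/\vec Y]$ by the renamed $\Delta\sep M'$ (via $\Delta\sep M'\vDash\Delta\sep M[\vec e/\vec Y]$) and re-introduces the existential $\vec Z$, yielding $\ok:(\exists\vec Z.\Delta)\sep M'=\ok:P\sep M'$ on the left and exactly $\psi'$ on the right.

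The main obstacle is the elimination of the fresh variables $\vec X$ from the precondition, and dually the soundness of the $[\vec e/\vec Y]$ substitution on a triple. This is genuinely not a pure consequence step: removing the conjunct $\vec X=\vec x$ requires running $C$ from an initial collection whose $\vec X$ values are arbitrary, yet the only lever we have is a triple about collections in which $\vec X$ is pinned to $\vec x$. Because allocators are permitted to depend on logical variables, one cannot simply overwrite $\vec X$ in each state and re-use the same allocator---distinct states differing only in $\vec X$ would be merged, and the allocator could no longer reproduce the original per-state behaviour. I expect this step to require the same device used in the proof of \Cref{thm:frame}: introduce a fresh logical variable indexing the states of the input collection to prevent merging, and build a modified allocator that reads this index to recover the allocation behaviour of the original run. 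Once this bookkeeping is in place, the structural parts of the argument (unfolding $\osep$, monotonicity, and the consequence steps) are routine, and the disjointness guarantees coming out of \Cref{alg:rename} are precisely what certify that each substitution and each frame application respects the modified-variable side condition.
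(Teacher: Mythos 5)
Your overall skeleton (frame rule, then consequence, then the entailments supplied by $\mathsf{biab}$ and \Cref{alg:rename}) matches the paper's, but you diverge at the decisive point, and that divergence is what manufactures your ``main obstacle.'' The paper's proof never frames in $F[\vec X/\vec x]$ with $\vec X$ free. It first weakens the bi-abduction entailment at the assertion level to $\Delta\sep M[\vec e/\vec Y]\vDash (Q\sep\exists\vec X.F[\vec X/\vec x])[\vec e/\vec Y]$, existentially quantifies $\vec Z$ on both sides, and only then applies \Cref{thm:frame} with the frame $\exists\vec Z\vec X.F[\vec X/\vec x]$. Since the frame rule holds for arbitrary semantic heap assertions (its $F$ ranges over $\bb{2}^{\mathcal S\times\mathcal H}$, which includes existentially quantified symbolic heaps), there is no need to frame a particular valuation of $\vec X$ and later discharge it: the quantifier is pushed inside the frame \emph{before} framing, the side condition holds because $\vec x$ has been substituted away, and everything downstream is the substitution $[\vec e/\vec Y]$ plus a single rule-of-consequence step using $P\sep M'\vDash(Q\sep\exists\vec Z\vec X.F[\vec X/\vec x])[\vec e/\vec Y]$. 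The elimination of the conjunct $\vec X=\vec x$ that you correctly identify as \emph{not} being a consequence step simply never arises.

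Your route is completable, and your diagnosis of what it costs is accurate: removing $\vec X=\vec x$ from a precondition is an auxiliary-variable elimination on OSL triples, and because allocators may read logical variables, its soundness requires the state-indexing/modified-allocator construction of \Cref{lem:frameprop} (a fresh index variable to prevent merging of states differing only in $\vec X$). But note two costs. First, no such elimination lemma exists in the paper; you would have to generalize the $\fr'$ machinery yourself, making your proof strictly heavier with no gain. Second, your final step---``re-introduces the existential $\vec Z$''---is mislabeled as consequence: passing from precondition $\ok:\Delta\sep M'$ to $\ok:\exists\vec Z.(\Delta\sep M')$ is exists-introduction in a precondition, and since $\exists\vec Z.(\Delta\sep M')\not\vDash\Delta\sep M'$ pointwise, it is a second instance of exactly the non-consequence elimination you flagged for $\vec X$, with the same allocator-dependence issue; the paper sidesteps it as well by quantifying $\vec Z$ on both sides of its entailments before framing. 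So: right ingredients and an honest identification of the hard step, but the hard step (needed twice, acknowledged once) is an artifact of your chosen route rather than of the lemma itself.
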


\begin{proof}
By definition, any element of $\mathsf{biab}'(P, Q, \psi,\vec x)$ (where $P = \exists \vec Z.\Delta$) must have the form $(M', (\psi \osep \exists \vec Z\vec X.F[\vec X/\vec x])[\vec e/ \vec Y])$ where $(\vec e, \vec Y, M') = \mathsf{rename}(\Delta, M, Q, \{\psi\}, \vec X)$ and $(M, F) \in \mathsf{biab}(\Delta,Q)$.
By the definition of rename, we know that:
\begin{align}
\Delta\sep M' &\vDash \Delta \sep M[\vec e/\vec Y] \nonumber\\
\intertext{Since $M'$ is assumed to be disjoint from $\vec Z$, then $\exists \vec Z.M'$ iff $M'$, so we can existentially quantify both sides to obtain:}
P\sep M' &\vDash \exists\vec Z.\Delta\sep M[\vec e/\vec Y] \label{eq:entailment-lhs} \\
\intertext{In addition, $(M,F)\in \mathsf{biab}(\Delta, Q)$, so we also know that:}
\Delta\sep M&\vDash Q\sep F \nonumber
\intertext{In \Cref{fig:sym_exec}, we assumed all the logical variables used are fresh, so $\Delta$ must be disjoint from $\vec Y$ (the free variables of $Q$ and $\psi$), and therefore $\Delta[\vec e/\vec Y] = \Delta$, so substituting both sides, we get:}
\Delta\sep M[\vec e/\vec Y] & \vDash (Q\sep F)[\vec e/\vec Y] \nonumber \\
\intertext{We also weaken the right hand side by replacing $\vec x$ with fresh existentially quantified variables in $F$.}
\Delta\sep M[\vec e/\vec Y]&\vDash  (Q\sep \exists\vec X.F[\vec X/\vec x])[\vec e/\vec Y] \nonumber
\intertext{Now, we can existentially quantify both sides of the entailment. Since logical variables are fresh, $\vec Z$ is disjoint from $Q$.}
\exists\vec Z.\Delta\sep M[\vec e/\vec Y]&\vDash  (Q\sep \exists\vec Z\vec X.F[\vec X/\vec x])[\vec e/\vec Y] \label{eq:entailment-rhs} \\
\intertext{And finally, we combine \Cref{eq:entailment-lhs,eq:entailment-rhs} to get:}
P\sep M' &\vDash (Q\sep \exists\vec Z\vec X.F[\vec X/\vec x])[\vec e/\vec Y] \label{eq:entailment-final}
\end{align}
Now, given that $\vDash\triple{\ok:Q}C{\psi}$, we can use the frame rule to get:
\[
\vDash\triple{\ok:Q\sep \exists\vec Z\vec X.F[\vec X/\vec x]}C{\psi\osep \exists\vec Z\vec X.F[\vec X/\vec x]}
\]
This is clearly valid, since $\vec x = \mathsf{mod}(C)$ has been removed from the assertion that we are framing in, therefore satisfying $\mathsf{mod}(C) \cap \mathsf{fv}(\exists\vec Z\vec X.F[\vec X/\vec x]) = \emptyset$.
We can also substitute $\vec e$ for $\vec Y$ in the pre- and postconditions since we assumed that $\vec e$ is disjoint from the program variables $\vec x$, and therefore $\vec e$ remains constant after executing $C$.
\[
\vDash\triple{\ok:(Q\sep \exists\vec Z\vec X.F[\vec X/\vec x])[\vec e/\vec Y]}C{(\psi\osep \exists\vec Z\vec X.F[\vec X/\vec x])[\vec e/\vec Y]}
\]
Finally, using the rule of consequence with \Cref{eq:entailment-final}, we strengthen the precondition to get:
\[
\vDash\triple{\ok:P\sep M'}C{(\psi\osep \exists\vec Z\vec X.F[\vec X/\vec x])[\vec e/\vec Y]}
\]

\end{proof}

The $\mathsf{triab}'$ procedure is similar, but it is fundamentally based on tri-abduction and is accordingly used for parallel composition instead of sequential composition. We include two separate proofs corresponding to the two ways in which tri-abduction is using during symbolic execution. The first (\Cref{lem:triab'1}) pertains to merging the anti-frames obtained by continuing to evaluate a single program $C$ after the control flow has already branched whereas the second (\Cref{lem:triab'2}) deals with merging the preconditions from two different program program branches, $C_1$ and $C_2$.

\begin{lemma}\label{lem:triab'1}
If $(M, \psi'_1, \psi'_2) \in \mathsf{triab}'(M_1, M_2, \psi_1, \psi_2, \vec x)$ and $\vDash\triple{\varphi_1\osep M_1}{C}{\psi_1}$ and $\vDash\triple{\varphi_2\osep M_2}{C}{\psi_2}$ and $\vec x = \mathsf{mod}(C)$, then $\vDash\triple{\varphi_1\osep M}{C}{\psi'_1}$ and $\vDash\triple{\varphi_2\osep M}{C}{\psi'_2}$.
\end{lemma}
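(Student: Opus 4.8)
The plan is to follow the template of \Cref{lem:biab'}, the difference being that we must adapt \emph{two} premise triples simultaneously using the single anti-frame produced by tri-abduction, rather than adapting one triple via bi-abduction. First I would unfold the definition of $\mathsf{triab}'$: membership $(M,\psi'_1,\psi'_2)\in\mathsf{triab}'(M_1,M_2,\psi_1,\psi_2,\vec x)$ supplies a raw tri-abduction witness $(M_0,F_1,F_2)\in\mathsf{triab}(M_1,M_2)$ together with a renaming $(\vec e,\vec Y,M)=\mathsf{rename}(\emp,M_0,\{\psi_1,\psi_2\},\emptyset,\vec x)$, and gives $\psi'_i=(\psi_i\osep\exists\vec X.F_i[\vec X/\vec x])[\vec e/\vec Y]$. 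From \Cref{thm:abduce} I extract the two entailments $M_0\vDash M_1\sep F_1$ and $M_0\vDash M_2\sep F_2$, and from the definition of $\mathsf{rename}$ (\Cref{alg:rename}) the facts that $\vec e$ is disjoint from $\vec Y$ and $\vec x$, that $M_0\vDash\vec e=\vec Y$, and that $M\vDash M_0[\vec e/\vec Y]$ with $M$ free of program variables and of $\vec Y$.

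Since the two conjuncts are symmetric, I would prove only $\vDash\triple{\varphi_1\osep M}{C}{\psi'_1}$. Writing $G_i=\exists\vec X.F_i[\vec X/\vec x]$, I first weaken the tri-abduction entailment to $M_0\vDash M_1\sep G_1$ and then substitute to obtain $M\vDash M_0[\vec e/\vec Y]\vDash(M_1\sep G_1)[\vec e/\vec Y]$. Next I apply the frame rule (\Cref{thm:frame}) to the premise $\vDash\triple{\varphi_1\osep M_1}{C}{\psi_1}$ with frame $G_1$; the side condition $\mathsf{mod}(C)\cap\mathsf{fv}(G_1)=\emptyset$ holds precisely because the modified variables $\vec x$ have been abstracted by the fresh $\vec X$ under the existential. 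Using the auxiliary identity $(\varphi\osep A)\osep B=\varphi\osep(A\sep B)$, this yields $\vDash\triple{\varphi_1\osep(M_1\sep G_1)}{C}{\psi_1\osep G_1}$. I then invoke the logical-variable substitution lemma for triples---sound because $\vec e$ is disjoint from $\vec x=\mathsf{mod}(C)$, so each $e_i$ is constant across the execution of $C$---to rename $\vec Y$ to $\vec e$ throughout, producing a triple whose postcondition is exactly $\psi'_1=(\psi_1\osep G_1)[\vec e/\vec Y]$ and whose precondition is $(\varphi_1\osep(M_1\sep G_1))[\vec e/\vec Y]$. Finally I close with the rule of consequence, strengthening this precondition to $\varphi_1\osep M$: since substitution commutes with $\osep$ and $\osep$ is monotone in both arguments, the required entailment $\varphi_1\osep M\vDash\varphi_1[\vec e/\vec Y]\osep(M_1\sep G_1)[\vec e/\vec Y]$ reduces to $M\vDash(M_1\sep G_1)[\vec e/\vec Y]$ (already established) together with the claim that $\varphi_1$ is left fixed by the renaming.

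The main obstacle is exactly this last consequence step and the variable bookkeeping it rests on. On the one hand the single renamed anti-frame $M$ must simultaneously dominate both framed anti-frames, i.e.\ $M\vDash(M_1\sep G_1)[\vec e/\vec Y]$ \emph{and} $M\vDash(M_2\sep G_2)[\vec e/\vec Y]$; this symmetric, shared-anti-frame requirement is what distinguishes tri-abduction from two independent applications of bi-abduction and is where \Cref{thm:abduce} is used in an essential way. On the other hand I must justify that the renaming $[\vec e/\vec Y]$ leaves $\varphi_1$ (and $\varphi_2$) unchanged, which, just as for the corresponding step $\Delta[\vec e/\vec Y]=\Delta$ inside \Cref{lem:biab'}, relies on the freshness conventions of \Cref{fig:sym_exec} governing which logical variables the $\mathsf{rename}$ step is allowed to touch. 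Lining up these disjointness side conditions for both branches at once, while respecting the frame rule's program-variable side condition, is the delicate part; the remaining manipulations (weakening, the associativity and monotonicity of $\osep$, and the substitution lemma) are routine.
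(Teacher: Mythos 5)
Your proposal is correct and follows essentially the same route as the paper's own proof: unfold $\mathsf{triab}'$ into a raw tri-abduction witness plus the renaming step, chain $M \vDash M_0[\vec e/\vec Y]$ with the weakened entailments $M_0 \vDash M_i \sep \exists\vec X.F_i[\vec X/\vec x]$, apply the frame rule (side condition discharged by the fresh $\vec X$ abstracting $\vec x$), substitute $\vec e$ for $\vec Y$ using their disjointness from $\mathsf{mod}(C)$, and close with the rule of consequence via $\varphi_i[\vec e/\vec Y]=\varphi_i$. If anything, you are slightly more explicit than the paper, which leaves the identity $(\varphi\osep A)\osep B=\varphi\osep(A\sep B)$ and the monotonicity of $\osep$ in its second argument implicit in its frame-rule and consequence steps.
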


\begin{proof}
By definition, any element of $\mathsf{triab}'(M_1, M_2, \psi_1, \psi_2, \vec x)$ will have the form:
\[
\left(M', (\psi_1\osep \exists \vec X.F_1[\vec X/\vec x])[\vec e/\vec Y],(\psi_2\osep \exists \vec X.F_2[\vec X/\vec x])[\vec e/\vec Y]\right)
\]
Where $(\vec e, \vec Y, M') = \mathsf{rename}(\emp, M, \{\psi_1, \psi_2\}, \emptyset, \vec x)$ and $(M, F_1, F_2) \in \mathsf{triab}(P_1, P_2)$. From rename, we know that $M' \vDash M[\vec e/\vec Y]$ and from tri-abduction, we know that $M\vDash M_i\sep F_i$ for $i=1,2$, so $M'\vDash (M_i \sep F_i)[\vec e/\vec Y]$. We can weaken this by replacing $\vec x$ in $F_i$ with fresh existentially quantified variables to obtain $M'\vDash (M_i \sep \exists\vec X.F_i[\vec X/\vec x])[\vec e/\vec Y]$.
By assumption, we know that $\vDash\triple{\varphi_i\osep M_i}{C}{\psi_i}$ for $i=1,2$. So, using the frame rule, we get:
\[
\vDash\triple{\varphi_i\osep (M_i\sep \exists\vec X.F_i[\vec X/\vec x])}{C}{\psi_i\osep \exists\vec X.F_i[\vec X/\vec x]}
\]
This is valid since $\exists\vec X.F_i[\vec X/\vec x]$ is disjoint from $\vec x$ (the modified program variables) by construction.
We also assumed in \textsf{rename} that $\vec e$ is disjoint from $\vec x$, so we can substitute $\vec e$ for $\vec Y$ to get:
\[
\vDash\triple{(\varphi_i\osep (M_i\sep \exists\vec X.F_i[\vec X/\vec x]))[\vec e/\vec Y]}{C}{(\psi_i\osep \exists\vec X.F_i[\vec X/\vec x]i)[\vec e/\vec Y]}
\]
Note that $\varphi_i[\vec e/\vec Y] = \varphi_i$, since the logical variables $\vec Y$ are generated freshly, independent of $\varphi_i$, as was mentioned in \Cref{fig:sym_exec}. So, using the rule of consequence we get:
\[
\vDash\triple{\varphi_i\osep M}{C}{(\psi_i\osep \exists\vec X.F_i[\vec X/\vec x])[\vec e/\vec Y]}
\]
\end{proof}

\begin{lemma}\label{lem:triab'2}
If $(M, \psi'_1, \psi'_2) \in \mathsf{triab}'(P_1, P_2, \psi_1, \psi_2, \vec x)$ and $\vDash\triple{\ok:P_1}{C_1}{\psi_1}$ and $\vDash\triple{\ok:P_2}{C_2}{\psi_2}$ and $\vec x = \mathsf{mod}(C_1,C_2)$, then $\vDash\triple{\ok:M}{C_1}{\psi'_1}$ and $\vDash\triple{\ok:M}{C_2}{\psi'_2}$.
\end{lemma}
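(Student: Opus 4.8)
The plan is to prove this in close parallel with \Cref{lem:triab'1}, since the only structural change is that the two branches here belong to \emph{different} programs $C_1$ and $C_2$, and the abduced preconditions $P_1,P_2$ play exactly the role that the anti-frames $M_1,M_2$ played there. Following the definition of $\mathsf{triab}'$, I would first unfold the membership hypothesis: there must be a raw witness $(M_0,F_1,F_2)\in\mathsf{triab}(P_1,P_2)$ together with a renaming $(\vec e,\vec Y,M)=\mathsf{rename}(\emp,M_0,\{\psi_1,\psi_2\},\emptyset,\vec x)$ such that $\psi_i'=(\psi_i\osep\exists\vec X.F_i[\vec X/\vec x])[\vec e/\vec Y]$. (To keep the two anti-frames distinct I write $M_0$ for the witness returned by $\mathsf{triab}$ and reserve $M$ for the renamed anti-frame that \Cref{alg:rename} produces and that appears in the lemma's conclusion.)

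The second step establishes the entailments to be fed into the rule of consequence. By soundness of tri-abduction (\Cref{thm:abduce}) I get $M_0\vDash P_i\sep F_i$ for $i\in\{1,2\}$, and from the contract of $\mathsf{rename}$ (with $\Delta=\emp$) I get $M\vDash M_0[\vec e/\vec Y]$; chaining these yields $M\vDash(P_i\sep F_i)[\vec e/\vec Y]$. Weakening $F_i$ by abstracting the modified variables, i.e.\ replacing $\vec x$ with fresh existentials $\vec X$, and using that the logical variables $\vec Y$ are fresh and hence disjoint from $P_i$, so that $P_i[\vec e/\vec Y]=P_i$, I arrive at $M\vDash P_i\sep(\exists\vec X.F_i[\vec X/\vec x])[\vec e/\vec Y]$.

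Third, I would transform the two given triples. Starting from $\vDash\triple{\ok:P_i}{C_i}{\psi_i}$, I apply the frame rule (\Cref{thm:frame}) with frame $\exists\vec X.F_i[\vec X/\vec x]$; its side condition holds because $\vec x=\mathsf{mod}(C_1,C_2)$ covers $\mathsf{mod}(C_i)$ and $\vec x$ has been abstracted out of the frame, so $\mathsf{mod}(C_i)\cap\mathsf{fv}(\exists\vec X.F_i[\vec X/\vec x])=\emptyset$. Using the definitional identity $(\ok:P_i)\osep G=\ok:(P_i\sep G)$ for basic assertions, the framed precondition is exactly $\ok:(P_i\sep\exists\vec X.F_i[\vec X/\vec x])$. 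I then substitute $\vec e$ for $\vec Y$ throughout, which is valid because $\vec e$ is chosen disjoint from the program variables $\vec x$ in \Cref{alg:rename}, so $\vec e$ is unchanged by executing $C_i$; finally I strengthen the precondition via the entailment from the second step to conclude $\vDash\triple{\ok:M}{C_i}{\psi_i'}$ for each $i$, which is the claim.

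I do not expect a genuine obstacle here: the argument is essentially the one-program-per-branch specialization of \Cref{lem:triab'1}, and the residual difficulties are bookkeeping rather than conceptual. The one point deserving real care is that the frame-rule side condition must be discharged \emph{separately for each} $C_i$; this is precisely why $\mathsf{triab}'$ is handed $\mathsf{mod}(C_1,C_2)$ rather than $\mathsf{mod}(C_i)$, so that a single abstracted frame is simultaneously safe for both branches. The other subtlety is keeping the freshness accounting straight, namely that $\vec Y\cap\mathsf{fv}(P_i)=\emptyset$ (giving $P_i[\vec e/\vec Y]=P_i$) and that $\vec e$ is disjoint from $\vec x$ (so the postcondition substitution commutes with execution); both are guaranteed by the freshness convention of \Cref{fig:sym_exec} and the disjointness requirements of \Cref{alg:rename}.
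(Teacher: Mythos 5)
Your argument is, modulo notation, the paper's own proof: unfold $\mathsf{triab}'$ into a witness $(M_0,F_1,F_2)\in\mathsf{triab}(P_1,P_2)$ plus a renaming, chain $M_0\vDash P_i\sep F_i$ (\Cref{thm:abduce}) with $M\vDash M_0[\vec e/\vec Y]$ to get $M\vDash (P_i\sep F_i)[\vec e/\vec Y]$, weaken $F_i$ to $\exists\vec X.F_i[\vec X/\vec x]$, apply the frame rule to $\vDash\triple{\ok:P_i}{C_i}{\psi_i}$ (the side condition holding precisely because $\vec x=\mathsf{mod}(C_1,C_2)\supseteq\mathsf{mod}(C_i)$ has been abstracted out of the frame), substitute $\vec e$ for $\vec Y$ in both pre- and postcondition (sound because \Cref{alg:rename} picks $\vec e$ disjoint from $\vec x$), and finish with the rule of consequence. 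The structure, the key lemmas invoked, and the two subtleties you flag are exactly those of the paper's proof.

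The one misstep is your claim that $\vec Y\cap\mathsf{fv}(P_i)=\emptyset$ and hence $P_i[\vec e/\vec Y]=P_i$. By \Cref{alg:rename}, $\vec Y$ collects the free logical variables of $\psi_1$ and $\psi_2$, and since $(P_i,\psi_i)$ come from the \emph{same} branch summary, those variables typically do occur in $P_i$: for instance $\dea{x\coloneqq e}(T)=\{(x=X\land\emp,\ \ok:x=e[X/x]\land\emp)\}$ shares $X$ between pre- and postcondition. The analogous disjointness invoked in \Cref{lem:biab'} is legitimate only because there $\Delta$ and $Q,\psi$ come from \emph{different} commands, each with freshly generated variables per \Cref{fig:sym_exec}; that argument does not transfer to the branching case. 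Fortunately the claim is not load-bearing: the paper keeps the substitution on the entire entailment, $M\vDash(P_i\sep\exists\vec X.F_i[\vec X/\vec x])[\vec e/\vec Y]$, and since you substitute $\vec e/\vec Y$ into the framed triple's precondition as well as its postcondition, the rule of consequence applies verbatim without ever needing $P_i[\vec e/\vec Y]=P_i$. Drop that simplification and your proof coincides with the paper's.
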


\begin{proof}

By definition, any element of $\mathsf{triab}'(P_1, P_2, \psi_1, \psi_2, \vec x)$ will have the form:
\[
\left(M', (\psi_1\osep \exists \vec X.F_1[\vec X/\vec x])[\vec e/\vec Y],(\psi_2\osep \exists \vec X.F_2[\vec X/\vec x])[\vec e/\vec Y]\right)
\]
Where $(\vec e, \vec Y, M') = \mathsf{rename}(\emp, M, \{\psi_1, \psi_2\}, \emptyset, \vec x)$ and $(M, F_1, F_2) \in \mathsf{triab}(P_1, P_2)$. From rename, we know that $M' \vDash M[\vec e/\vec Y]$ and from tri-abduction, we know that $M\vDash P_i\sep F_i$ for $i=1,2$, so $M'\vDash (P_i \sep F_i)[\vec e/\vec Y]$. We can weaken this by replacing $\vec x$ in $F_i$ with fresh existentially quantified variables to obtain $M'\vDash (P_i \sep \exists\vec X.F_i[\vec X/\vec x])[\vec e/\vec Y]$

By the frame rule, we know that $\vDash\triple{\ok:P_i\sep \exists\vec X.F_i[\vec X/\vec x]}{C_i}{\psi_i\osep\vec X.F_i[\vec X/\vec x]}$ since $\vec X.F_i[\vec X/\vec x]$ must be disjoint from the modified program variables $\vec x$.
By substituting into both the pre and postconditions, we get $\vDash\triple{(\ok:P_i\sep \vec X.F_i[\vec X/\vec x])[\vec e/\vec Y]}{C_i}{(\psi_i\osep \vec X.F_i[\vec X/\vec x])[\vec e/\vec Y]}$ (this is valid since \textsf{rename} guarantees that $\vec e$ is disjoint from $\vec x$). Finally, we complete the proof by applying the rule of consequence with $M'\vDash (P_i \sep \vec X.F_i[\vec X/\vec x])[\vec e/\vec Y]$ to obtain:
\[
\vDash\triple{\ok:M'}{C_i}{(\psi_i\osep \vec X.F_i[\vec X/\vec x])[\vec e/\vec Y]}
\]
\end{proof}

\subsection{Sequencing Proof}
\sequencing*
\begin{proof}
By induction on the structure of $\varphi$.
\begin{itemize}
\item $\varphi=\top$. We need to show that $\vDash\triple{\top\osep \emp}C{\top}$ holds. This triple is clearly valid since any triple with the postcondition $\top$ is trivially true.
\item $\varphi = \varphi_1\bowtie \varphi_2$ where $\mathord{\bowtie} \in \{ \vee, \oplus \}$. We need to show:
\[
\vDash\triple{(\varphi_1\bowtie\varphi_2)\osep M}{C}{\psi'_1 \bowtie \psi'_2}
\]
Where $(M,\psi'_1,\psi'_2) \in \mathsf{triab}'(M_1,M_2,\psi_1,\psi_2, \vec x)$ and $(M_i,\psi_i)\in \mathsf{seq}(\varphi_i,S,\vec x)$ for each $i\in\{1,2\}$.
By the induction hypothesis, we know that $\vDash\triple{\varphi_i\osep M_i}C{\psi_i}$, so by \Cref{lem:triab'1} we get that $\vDash\triple{\varphi_i\osep M}C{\psi'_i}$.
We now complete the proof separately for the two logical operators:
\begin{itemize}
\item $\varphi = \varphi_1 \vee \varphi_2$. Suppose that $m\vDash (\varphi_1\vee\varphi_2) \osep M$, so $m\vDash\varphi_i\osep M$ for some $i\in\{1,2\}$. Since $\vDash\triple{\varphi_i\osep M}{C}{\psi'_i}$, we know that $\dem{C}m\af\vDash \psi'_i$, and we can weaken this to conclude that $\dem{C}m\af\vDash \psi'_1\vee\psi'_2$.


\item $\varphi = \varphi_1 \oplus \varphi_2$. Suppose that $m\vDash (\varphi_1 \oplus \varphi_2)\osep M$, and so there are $m_1$ and $m_2$ such that $m = m_1 + m_2$ and $m_i\vDash \varphi_i\osep M$ for each $i$. Since $\vDash\triple{\varphi_i\osep M}{C}{\psi'_i}$, we know that $\dem{C}{m_i}\af\vDash \psi'_i$. We also know that $\dem{C}m\af = \dem{C}{m_1 + m_2}\af = \dem C{m_1}\af + \dem{C}{m_2}{\af}$, and so $\dem Cm\af\vDash \psi'_1\oplus \psi'_2$.

\end{itemize}

\item $\varphi = \wg{\varphi'}a$. We need to show that $\vDash\triple{\wg{\varphi'}a \osep M}C{\wg{\psi}a}$ where $(M, \psi) \in \mathsf{seq}(\varphi', S, \vec x)$. By the induction hypothesis, we know that $\vDash\triple{\varphi'\osep M}C{\psi}$. Now, suppose that $m\vDash \wg{\varphi'}a \osep M$. If $a=\zero$, then $m=\zero$ and therefore $\dem Cm\af\vDash \wg\psi\zero$. If not, then there is some $m'$ such that $m'\vDash \varphi' \osep M$ and $m = a\cdot m'$. So, $\dem C{m'}{\af}\vDash \psi'$. We also know that $\dem{C}m\af = \dem{C}{a\cdot m'}\af = a\cdot\dem{C}{m'}\af$, so by the semantics of $\wg{(-)}a$, $\dem Cm\af\vDash \wg\psi{a}$.

\item $\varphi = \ok:P$. We need to show that $\vDash\triple{\ok:P\sep M}C{\psi'}$ where $(M,\psi')\in\mathsf{biab}'(P,Q,\psi, \vec x)$ and $(Q,\psi) \in S$.
By assumption, we know that $\vDash\triple{\ok:Q}{C}{\psi}$. The remainder of the proof follows directly from \Cref{lem:biab'}.

\item $\varphi = \er:Q$. We need to show that $\vDash\triple{\er:Q}C{\er:Q}$. This trivially holds since any $m$ satisfying $\er:Q$ must consist only of $\inj_\er(s,h)$ states, and so $\dem Cm\af = m$.
\end{itemize}
\end{proof}

\subsection{Symbolic Execution Proofs}

\begin{lemma}\label{lem:soundness-loops}
Let:
\[\small
f(S) = \begin{array}{l}
\{ (\lnot e\land \emp, \ok:\lnot e \land \emp)\} ~\cup \\
 \{ (M_1\sep M_2 \land e, \psi) \mid (M_1,\varphi)\in\mathsf{seq}(\ok:e\land\emp, \dea{C}(T), \mathsf{mod}(C)), (M_2, \psi)\in\mathsf{seq}(\varphi, S, \mathsf{mod}(C)) \}
 \end{array}
\]
For any $n\in\mathbb{N}$ and $(P, \varphi) \in f^n(\emptyset)$, $\vDash\triple{\ok:P}{\whl eC}{\varphi}$.
\end{lemma}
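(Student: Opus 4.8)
The plan is to induct on $n$. The base case $n=0$ is vacuous, since $f^0(\emptyset)=\emptyset$ contains no summaries. Throughout I will use the hypothesis, inherited from the structural induction for \Cref{thm:soundness} (the body $C$ being a strict subterm of $\whl eC$), that $\vDash\triple{\ok:Q}{C}{\vartheta}$ for every $(Q,\vartheta)\in\dea C(T)$.

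For the inductive step, fix $(P,\varphi)\in f^{n+1}(\emptyset)=f(f^n(\emptyset))$; by the definition of $f$ there are two cases. When $(P,\varphi)=(\lnot e\land\emp,\ \ok:\lnot e\land\emp)$ I reason directly from the denotational semantics: every $m\vDash\ok:\lnot e\land\emp$ is supported on states $\inj_\ok(s,h)$ with $\de e(s)=\zero$, and the fixed-point equation for $\de{\whl eC}_\af$ gives $\de{\whl eC}_\af(s,h)=\unit(s,h)$ whenever $\de e(s)=\zero$, so $\dem{\whl eC}m\af=m$ still satisfies the postcondition.

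The substantive case is $(P,\varphi)=(M_1\sep M_2\land e,\ \psi)$, where $(M_1,\vartheta)\in\mathsf{seq}(\ok:e\land\emp,\dea C(T),\mathsf{mod}(C))$ and $(M_2,\psi)\in\mathsf{seq}(\vartheta,f^n(\emptyset),\mathsf{mod}(C))$. First I apply \Cref{lem:sequencing} to the inner call, discharging its premise with the body-soundness hypothesis, to obtain $\vDash\triple{(\ok:e\land\emp)\osep M_1}{C}{\vartheta}$; unfolding $\osep$ on the basic assertion rewrites the precondition as $\ok:M_1\land e$. Then I apply \Cref{lem:sequencing} again to the outer call, this time sequencing against the whole loop $\whl eC$ and discharging its premise with the induction hypothesis on $n$ applied to the summaries of $f^n(\emptyset)$ (note $\mathsf{mod}(\whl eC)=\mathsf{mod}(C)$); this yields $\vDash\triple{\vartheta\osep M_2}{\whl eC}{\psi}$. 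I next frame $M_2$ onto the first triple via \Cref{thm:frame}, whose side condition $\mathsf{mod}(C)\cap\mathsf{fv}(M_2)=\emptyset$ holds because the renaming step inside $\mathsf{biab}'$ and $\mathsf{triab}'$ guarantees that $\mathsf{seq}$ returns anti-frames free of program variables. This gives $\vDash\triple{\ok:M_1\sep M_2\land e}{C}{\vartheta\osep M_2}$, whose postcondition is exactly the precondition of the loop triple, so the OSL sequencing rule produces $\vDash\triple{\ok:M_1\sep M_2\land e}{C\fatsemi\whl eC}{\psi}$.

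It then remains to transfer this triple from $C\fatsemi\whl eC$ back to $\whl eC$. Since the precondition $\ok:M_1\sep M_2\land e$ forces $\de e(s)=\one$ on every $\ok$-state, the fixed-point equation gives $\de{\whl eC}_\af(s,h)=\bind(\de C_\af(s,h),\de{\whl eC}_\af)=\de{C\fatsemi\whl eC}_\af(s,h)$ pointwise, whence $\dem{\whl eC}m\af=\dem{C\fatsemi\whl eC}m\af\vDash\psi$. I expect the main obstacle to be bookkeeping rather than depth: aligning the two nested inductions—structural soundness for the body $C$ and the numerical induction on $n$ for the loop—so that the premises of the two \Cref{lem:sequencing} applications are correctly discharged, and checking the frame-rule side condition on $M_2$. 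The loop-unrolling identity itself is immediate from the fixed-point characterization of $\de{\whl eC}_\af$.
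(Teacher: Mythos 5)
Your proposal is correct and follows essentially the same route as the paper's proof: induction on $n$ with a vacuous base case, the skip-like argument for the $\lnot e$ summary, two applications of the sequencing lemma (inner call on the body, outer call on the loop via the induction hypothesis), framing $M_2$ with the side condition discharged by the renaming step, and the one-step unrolling identity $\de{\whl eC}_\af = \de{C\fatsemi \whl eC}_\af$ on states satisfying $e$. Your extra bookkeeping (making the body-soundness hypothesis from the structural induction explicit, and noting $\mathsf{mod}(\whl eC)=\mathsf{mod}(C)$) only spells out steps the paper leaves implicit.
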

\begin{proof}
By induction on $n$. Suppose $n=0$, then $f^0(\emptyset) = \emptyset$, so the claim vacuously holds. Now, suppose the claim holds for $n$, we will show it holds for $n+1$. First, observe that:
\begin{align*}
f^{n+1}(\emptyset) &= f(f^n(\emptyset)) \\
&= \{ (\lnot e\land \emp, \ok:\lnot e \land \emp)\} ~\cup \\
& \quad
\arraycolsep=0pt
\begin{array}{ll}
\{ (M_1\sep M_2 \land e, \psi) \mid \;\;& (M_1,\varphi)\in\mathsf{seq}(\ok:e\land\emp, \dea{C}(T), \mathsf{mod}(C)) \\
& (M_2, \psi)\in\mathsf{seq}(\varphi, f^n(\emptyset), \mathsf{mod}(C)) \}
\end{array}
\end{align*}
So any $(P,\varphi)\in f^{n+1}(\emptyset)$ comes from one of the two sets in the above union. Suppose it is in the first, so we need to show that $\vDash\triple{\ok:\lnot e\land\emp}{\whl eC}{\ok:\lnot e \land\emp}$. This is clearly true, since the loop does not execute in states where $\lnot e$ holds and therefore the whole command is equivalent to $\skp$.

Now suppose we are in the second case, so the element has the form $(M_1\sep M_2\land e, \psi)$ where $(M_1,\varphi) \in\mathsf{seq}(\ok:e\land\emp, \dea{C}(T), \mathsf{mod}(C))$ and $(M_2,\psi)\in\mathsf{seq}(\varphi, f^n(\emptyset), \mathsf{mod}(C))$. By \Cref{lem:sequencing}, we know that $\vDash\triple{\ok:M_1\land e}{C}{\varphi}$ and by \Cref{lem:sequencing} and the induction hypothesis, we get $\vDash\triple{\varphi\osep M_2}{\whl eC}{\psi}$. Using the frame rule, we also get $\vDash\triple{\ok:M_1\sep M_2\land e}{C}{\varphi\osep M_2}$, and so we can sequence the previous specifications to get $\vDash\triple{\ok:M_1\sep M_2\land e}{C\fatsemi \whl eC}{\psi}$. Now, since the precondition stipulates that $e$ is true, the loop must run for at least one iteration, so for any $m\vDash (\ok:M_1\sep M_2\land e)$, $\dem{C\fatsemi \whl eC}m{\af} = \dem{\whl eC}m{\af}$, and so $\vDash\triple{\ok:M_1\sep M_2\land e}{\whl eC}{\psi}$.
\end{proof}

\begin{lemma}\label{lem:lifting}
If for every $(s,h)\vDash P$ and $\af\in\mathsf{Alloc}$, there exists $s'$ and $t'$ such that $\de{C}_\af(s,h) = \unit_{\mathcal W}(\inj_\epsilon(s',t'))$ and $(s',h')\vDash Q$, then $\vDash\triple{\ok:P}C{\epsilon:Q}$
\end{lemma}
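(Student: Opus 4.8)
The plan is to unfold the definition of OSL triple validity and reduce everything to a direct computation with the outcome monad. By the triple semantics, $\vDash\triple{\ok:P}C{\epsilon:Q}$ holds exactly when, for every $m\in\mathcal W_{\mathcal A}(\st)$ and every $\af\in\mathsf{Alloc}_{\mathcal A}$, the implication $m\vDash\ok:P \implies \dem Cm\af\vDash\epsilon:Q$ is valid. So I would fix an arbitrary $m\vDash\ok:P$ and an arbitrary $\af$, and aim to show $\dem Cm\af\vDash\epsilon:Q$. The assumption $m\vDash\ok:P$, read off from the semantics of basic assertions in \Cref{fig:oasem}, tells me two things: $|m|=\one$, and every $\sigma\in\supp(m)$ has the form $\inj_\ok(s,h)$ with $(s,h)\in P$.

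Next I would apply the hypothesis pointwise. For each support element $\sigma=\inj_\ok(s,h)$ of $m$, the pair $(s,h)$ satisfies $P$, so the hypothesis supplies $s',t'$ with $\de C_\af(s,h)=\unit_{\mathcal W}(\inj_\epsilon(s',t'))$ and $(s',t')\in Q$. Since $\dem Cm\af=\bind(m,\de C_\af)$ and every state in $\supp(m)$ is an $\inj_\ok$ state, the error-monad bind of \Cref{def:ermonad} collapses onto its $\ok$ branch: it coincides with $\bind_{\mathcal W}(m,g)$, where $g(\inj_\ok(s,h))=\de C_\af(s,h)=\unit_{\mathcal W}(\inj_\epsilon(s',t'))$. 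Thus $\dem Cm\af=\sum_{\sigma\in\supp(m)} m(\sigma)\cdot\unit_{\mathcal W}(\inj_\epsilon(s'_\sigma,t'_\sigma))$, a weighted sum of unit masses placed on $\epsilon$-states whose underlying $(s'_\sigma,t'_\sigma)$ all lie in $Q$.

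It then remains to check the two clauses of $m\vDash\epsilon:Q$ from \Cref{fig:oasem}. For the support condition, every element of $\supp(\dem Cm\af)$ is one of the $\inj_\epsilon(s'_\sigma,t'_\sigma)$ with $(s'_\sigma,t'_\sigma)\in Q$ (distinct $\sigma$ may yield the same output state, but this only merges weights and does not introduce any state outside $Q$), so this clause holds. For the mass condition I would invoke the identity $|\bind(m,g)|=\sum_{\sigma\in\supp(m)} m(\sigma)\cdot|g(\sigma)|$ established in the proof of \Cref{lem:bindtotal}; since each $|g(\sigma)|=|\unit_{\mathcal W}(\inj_\epsilon(s'_\sigma,t'_\sigma))|=\one$, this gives $|\dem Cm\af|=\sum_{\sigma\in\supp(m)} m(\sigma)=|m|=\one$. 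Both clauses verified, we conclude $\dem Cm\af\vDash\epsilon:Q$, and hence the triple is valid.

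I do not expect a genuine obstacle here; the argument is a routine unfolding once the error-monad bind is seen to restrict to its $\ok$ branch. The only step requiring care is the mass calculation $|\dem Cm\af|=\one$, which must be justified via the mass-preservation computation of \Cref{lem:bindtotal} rather than asserted directly, together with the harmless observation that the map $\sigma\mapsto\inj_\epsilon(s'_\sigma,t'_\sigma)$ need not be injective on $\supp(m)$.
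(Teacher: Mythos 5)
Your proposal is correct and follows essentially the same route as the paper's proof: unfold $m\vDash\ok:P$ into the mass condition $|m|=\one$ plus the shape of $\supp(m)$, apply the hypothesis pointwise to each support state, and conclude that $\dem Cm\af$ consists only of $\inj_\epsilon$-states in $Q$ with total mass $\one$. The only difference is presentational — you spell out the mass-preservation step via the computation in \Cref{lem:bindtotal} and note the harmless non-injectivity of the state map, where the paper simply asserts that ``each $(s,h)$ does not change the mass of the distribution.''
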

\begin{proof}
Suppose that $m\vDash\ok:P$. That means that $|m| = \one$ and all elements of $\supp(m)$ have the form $\inj_\ok(s,h)$ where $(s,h)\vDash P$. By assumption, we know that $\de{C}_\af(s,h) =\unit_{\mathcal W}(\inj_\epsilon(s',t'))$ such that $(s',t')\vDash Q$. This means that every element of $\dem{C}m{\af}$ must have the form $\inj_\epsilon(s',t')$ where $(s',t')\vDash Q$ and also $|\dem{C}m{\af}| = \one$ since each $(s,h)$ does not change the mass of the distribution, so $\dem{C}m{\af}\vDash \epsilon:Q$.
\end{proof}

\soundness*

\begin{proof}
By induction on the structure of the program $C$.
\begin{itemize}
\item $C = \skp$. We need to show that $\vDash\triple{\ok:\emp}\skp{\ok:\emp}$, which is trivially true.
\item $C = C_1\fatsemi C_2$. By definition, any element of $\dea{C_1\fatsemi C_2}(T)$ must have the form $(P\sep M, \psi)$ where $(P, \varphi) \in \dea{C_1}(T)$ and $(M, \psi)\in\mathsf{seq}(\varphi, \dea{C_2}(T), \mathsf{mod}(C_2))$.
By the induction hypothesis, we know that $\vDash\triple{\ok:P}{C_1}{\varphi}$ and by \Cref{lem:sequencing} we know that $\vDash\triple{\varphi\osep M}{C_2}{\psi}$. Using the frame rule, we get that $\vDash\triple{\ok:P\sep M}{C_1}{\varphi\osep M}$ (given the renaming step used in \textsf{seq}, $M$ contains no program variables, so it must obey the side condition of the frame rule). Finally, we can join the two specifications to conclude that $\vDash\triple{\ok:P\sep M}{C_1\fatsemi C_2}{\psi}$.


\item $C = C_1 + C_2$. Any element of $\dea{C_1+ C_2}(T)$ must have the form $(M, \psi'_1 \oplus \psi'_2)$ where $(M,\psi'_1,\psi'_2)\in\mathsf{triab}'(M_1, M_2, \psi_1, \psi_2, \mathsf{mod}(C_1, C_2))$ and $(M_1, \psi_1)\in\dea{C_1}(T)$ and $(M_2, \psi_2)\in\dea{C_2}(T)$. By the induction hypothesis, we know that $\vDash\triple{\ok:M_i}{C_i}{\psi_i}$ for $i=1,2$. By \Cref{lem:triab'2} we know that $\vDash\triple{\ok:M}{C_i}{\psi'_i}$. Now, we show that $\vDash\triple{M}{C_1+ C_2}{\psi'_1\oplus\psi'_2}$: suppose $m\vDash M$. By definition, $\dem{C_1+ C_2}m{\af} = \dem{C_1}m{\af} + \dem{C_2}m{\af}$. Now, using what we obtained from \Cref{lem:triab'2}, we know that since $m\vDash M$, $\dem{C_i}m{\af}\vDash \psi'_i$ for each $i\in\{1,2\}$. Combining these two, we get that $\dem{C_1}m{\af} + \dem{C_2}m{\af} \vDash \psi'_1 \oplus \psi'_2$.

\item $C = \assume b$. Any element of $\dea{\assume b}(T)$ must have the form $(b\land\emp, \ok:b\land \emp)$ or $(\lnot b\land \emp, \wg\top\zero)$.
In the first case, we need to show $\vDash\triple{\ok:b\land \emp}{\assume b}{\ok:b\land\emp}$. Suppose $m\vDash \ok:b\land\emp$, then it's easy to see that $\dem{\assume b}m\af = m$, so the triple is valid. In the second case, we must show $\vDash\triple{\ok:\lnot b\land \emp}{\assume b}{\wg\top\zero}$. Suppose $m\vDash \ok:\lnot b\land \emp$, so $\dem{\assume b}m\af = \zero$, and $\zero\vDash\wg\top\zero$.

\item $C = \assume a$. Any element of $\dea{\assume a}(T)$ must have the form $(\emp, \wg{(\ok:\emp)}a)$, so we need to show $\vDash\triple{\ok:\emp}{\assume a}{\wg{(\ok:\emp)}a}$. Suppose $m\vDash\ok:\emp$, so we know that $\dem{\assume a}m\af = a\cdot m$, therefore $\dem{\assume a}m\af\vDash \wg{(\ok:\emp)}a$.


\item $C = \whl eC$. By the Kleene fixed point theorem, $\dea{\whl eC}(T) = \bigcup_{n\in\mathbb{N}}f^n(\emptyset)$ where $f(S)$ is defined as in \Cref{lem:soundness-loops}. So, any $(P,\varphi)\in \dea{\whl eC}(T)$ must also be an element of $f^n(\emptyset)$ for some $n$. We complete the proof by applying \Cref{lem:soundness-loops}.

\end{itemize}
The remaining cases are for primitive instructions, most of which are \emph{pure}, meaning that each program state maps to a single outcome according to the program semantics. In these cases, it suffices to show that if $(P,\varphi)\in \dea{c}(T)$, then $\de{c}_\af(s,h)\vDash\varphi$ for all $(s,h)\vDash P$ by \Cref{lem:lifting}.
\begin{itemize}
\item $C = (x\coloneqq e)$. Suppose that $(s,h)\vDash \ok:x=X\land\emp$, so $s(x) = s(X)$ and $h = \emptyset$. Now, $\de{x\coloneqq e}_\af(s,h) = \unit(s[x \mapsto \de{e}(s)], h)$, so let $s' = s[x \mapsto \de{e}(s)]$.
Clearly, $\de{e}(s) = \de{e[X/x]}(s)$ since $s(x) = s(X)$. It must also be the case that $\de{e[X/x]}(s) = \de{e[X/x]}(s')$ since $s$ and $s'$ differ only in the values of $x$, and $x$ does not appear in $e[X/x]$. So, $s'(x) = \de{e}(s) = \de{e[X/x]}(s) = \de{e[X/x]}(s')$, and therefore $(s',h)\vDash \ok:x=e[X/x]\land\emp$. The remainder of the proof follows by \Cref{lem:lifting}.

\item $C = (x\coloneqq\mathsf{alloc}())$. We need to show that $\vDash\triple{\ok:\emp\land x=X}{x\coloneqq\mathsf{alloc}()}{\ok:\exists Y.x\mapsto Y}$. Suppose that $m\vDash\ok:\emp\land x=X$, so each state in $m$ has the form $\inj_\ok(s,h)$ where $(s,h)\vDash \emp\land x=X$, so $s(x) =s(X)$ and $h = \emptyset$. We know that $\de{x\coloneqq\mathsf{alloc}}_\af(s,h) = \bind_{\mathcal W}(\mathsf{alloc}(s,h), \lambda(\ell,v). \unit(s[x\mapsto \ell], h[\ell\mapsto v]))$ where $\ell \notin \mathsf{dom}(h)$. Let $s' = s[x\mapsto \ell]$ and $h' = h[\ell\mapsto v]$. Clearly, $h'(\de{x}(s')) = h'(\ell) = v$, so $(s',h')\vDash\exists Y.x\mapsto Y$. Since this is true for all end states, and since alloc does not alter the total mass of the distribution, then $\dem{x\coloneqq\mathsf{alloc}()}m{\af}\vDash \ok:\exists Y.x\mapsto Y$.

\item $C = \mathsf{free}(x)$. There are three cases since specifications for free can take on multiple forms. In the first case, we need to show that $\vDash\triple{\ok:e\mapsto X}{\mathsf{free}(e)}{\ok:e\not\mapsto}$. Suppose $(s,h)\vDash e\mapsto X$, so $h(\de{e}(s)) = s(X)$. We also know that $\de{\mathsf{free}(e)}_\af(s,h) = \unit(s, h[\de{e}(s) \mapsto \bot])$, and since $(s, h[\de{e}(s) \mapsto \bot]) \vDash e\not\mapsto$, the claim follows by \Cref{lem:lifting}.

In the other cases, we need to show that:
\[
\vDash\triple{\ok:e\not\mapsto}{\mathsf{free}(e)}{\er:e\not\mapsto}
\quad\text{and}\quad
\vDash\triple{\ok:e=\mathsf{null}}{\mathsf{free}(e)}{\er:e=\mathsf{null}}
\]
Suppose $(s,h)\vDash e\not\mapsto$ and so $h(\de{e}(s)) = \bot$. Clearly, $\de{\mathsf{free}(e)}_\af(s,h) = \mathsf{error}(s,h)$, so by \Cref{lem:lifting} the claim holds. The case where $e = \mathsf{null}$ is nearly identical.

\item $C = [e_1]\leftarrow e_2$. There are three cases, first we must show that $\vDash\triple{\ok:e_1\mapsto X}{[e_1]\leftarrow e_2}{\ok:e_1\mapsto e_2}$. Suppose $(s,h)\vDash e_1\mapsto X$, so $h(\de{e_1}(s)) = s(X)$ and therefore that memory address is allocated since $s(X)\in\mathsf{Val}$. This means that $\de{[e_1]\leftarrow e_2}_\af(s,h) = \unit(s, h[\de{e_1}(s) \mapsto \de{e_2}(s)])$ and clearly $(s, h[\de{e_1}(s) \mapsto \de{e_2}(s)])\vDash e_1\mapsto e_2$ by definition, so the claim holds by \Cref{lem:lifting}.

In the remaining case, we must show that $\vDash\triple{\ok:e_1\not\mapsto}{[e_1]\leftarrow e_2}{\er: e_1\not\mapsto}$ and $\vDash\triple{\ok:e_1=\mathsf{null}}{[e_1]\leftarrow e_2}{\er: e_1=\mathsf{null}}$. The proof is similar to the second case for $\mathsf{free}(e)$.

\item $C = x\leftarrow [e]$. There are three cases, first we must show that $\vDash\triple{\ok:x=X\land e\mapsto Y}{x\leftarrow[e]}{\ok:x=Y\land e[X/x]\mapsto Y}$. Suppose that $(s,h)\vDash x=X\land e\mapsto Y$, so $s(x) = s(X)$ and $h(\de{e}(s)) = s(Y)$. We also know that $\de{x\leftarrow [e]}_\af(s,h) = \unit(s[x\mapsto h(\de{e}(s))], h)$. Let $s' = s[x\mapsto h(\de{e}(s))]$, as we showed in the $x\coloneqq e$ case, $\de{e}(s) = \de{e[X/x]}(s) = \de{e[X/x]}(s')$. So, this means that $h(\de{e[X/x]}(s')) = h(\de{e}(s)) = s(Y)$ and $s'(x) = \de{e}(s) = s(Y)$, so clearly $(s',h)\vDash x=Y\land e[X/x]\mapsto Y$ and the claim follows from \Cref{lem:lifting}.

In the remaining case, we need to show that $\vDash\triple{\ok:e\not\mapsto}{x\leftarrow [e]}{\er: e\not\mapsto}$ and  $\vDash\triple{\ok:e=\mathsf{null}}{x\leftarrow [e]}{\er: e=\mathsf{null}}$. The proof is similar to the second case for $\mathsf{free}(e)$.

\item $C = \mathsf{error}()$. We need to show that $\vDash\triple{\ok:\emp}{\mathsf{error}()}{\er:\emp}$. Suppose $(s,h)\vDash\emp$. Clearly, $\de{\mathsf{error}()}_\af(s,h) = \mathsf{error}(s,h)$, and so the claim holds by \Cref{lem:lifting}.

\item $C = f(\vec e)$. Any element of $\dea{f(\vec e)}(T)$ must have the form $(P\land \vec x = \vec X)$ where $(P, \varphi) \in \mathsf{seq}(\ok:\vec x = \vec e[X/x], T(f(\vec x)), \mathsf{mod}(f))$. By \Cref{lem:sequencing}, we get:
\[
\vDash\triple{\ok:P \land \vec x = \vec e[X/x]}{f(\vec x)}{\varphi}
\]
Now, we need to show that $\vDash\triple{\ok:P\land\vec x=\vec X}{f(\vec e)}{\varphi}$.
Suppose that $m\vDash\ok:P\land \vec x= \vec X$. Let $m'$ be obtained by taking every state $\inj_\ok(s,h) \in \supp(m)$ and modifying it to be $\inj_\ok(s[\vec x\mapsto \de{\vec e}(s)], h)$. By a similar argument to the $x\coloneqq e$ case, we know that $m'\vDash\ok: \vec x=\vec e[\vec X/\vec x]$. We know $P$ is disjoint from the program variables by the definition of $\mathsf{seq}$, so $m'\vDash \ok:P$ as well, since $m\vDash \ok:P$ and the only difference between $m$ and $m'$ is updates to the program variables.
Let $C$ be the body of $f$ and note that $\dem{f(\vec x)}{m'}{\af} = \dem{C}{m'}{\af}$ since the initial modification of the program state is just updating variable values to themselves.
We know from $\vDash\triple{\ok:P \land \vec x = \vec e[X/x]}{f(\vec x)}{\varphi}$ that $\dem{C}{m'}{\af}\vDash\varphi$ and by definition, $\dem{f(\vec e)}m{\af} = \dem{C}{m'}{\af}$, so $\dem{f(\vec e)}{m}{\af}\vDash\varphi$.
\end{itemize}
In addition, we show that the two refinements for single-path computation and loops invariants are sound too:
\begin{itemize}

\item Single Path. Any element of $\dea{C_1+ C_2}(T)$ has one of two forms. In the first case, we need to show that $\vDash\triple{\ok:P}{C_1+ C_2}{\varphi \oplus\top}$ given that $\vDash\triple{\ok:P}{C_1}{\varphi}$. Suppose that $m\vDash \ok:P$. By our assumption, we know that $\dem{C_1}m{\af}\vDash\varphi$. Now, $\dem{C_1+ C_2}m{\af} = \dem{C_1}m{\af} + \dem{C_2}m{\af}$ and clearly $\dem{C_2}m{\af}\vDash\top$, so $\dem{C_1+ C_2}m{\af}\vDash \varphi\oplus\top$. The second case is symmetrical, using the fact that $+$ is commutative.

\item Loop invariants. We need to show that $\vDash\triple{\ok:I}{\whl eC}{(\ok: I\land\lnot e)\vee(\top)_\zero}$ given that $\vDash\triple{\ok:I\land e}C{\ok:I}$. Note that this case is only valid for deterministic or nondeterministic programs (not probabilistic ones). Suppose $m\vDash \ok:I$, so every state in $\supp(m)$ has the form $\inj_\ok(s,h)$ where $(s,h)\vDash I$. By assumption, we know that every execution of the loop body will preserve the truth of $I$, so either all the states in $\de{\whl eC}_\af(s,h)$ must satisfy $I$ and $\lnot e$, or there are no terminating states. In other words, $\de{\whl eC}_\af(s,h)\vDash (\ok:I\land\lnot e)\vee\wg\top\zero$. In the deterministic case, we are done since there can only be a single start state. In the nondeterministic case, each start state $(s,h)$ leads to a set of end states satisfying $(\ok:I\land\lnot e)\vee\wg\top\zero$, then the union of all these states will also satisfy $(\ok:I\land\lnot e)\vee\wg\top\zero$.
\end{itemize}

\end{proof}
\fi

\end{document}
\endinput